    \numberwithin{equation}{section}
\def\eqalign#1{\null\vcenter{\def\\{\cr}\openup\jot\m@th
  \ialign{\strut$\displaystyle{##}$\hfil&$\displaystyle{{}##}$\hfil
      \crcr#1\crcr}}\,}
\newcommand{\be}{\begin{equation}}
\newcommand{\ee}{\end{equation}}
\newcommand{\wt}{\widetilde}
\newcommand{\wh}{\widehat}
\newcommand{\de}{\delta}
\newcommand{\al}{\alpha}
\newcommand{\bt}{\beta}
\newcommand{\Ga}{\Gamma}
\newcommand{\si}{\sigma}
\newcommand{\om}{\omega}
\newcommand{\lb}{\lambda}
\renewcommand{\th}{\theta}
\newcommand{\ep}{\varepsilon }
\newcommand{\eit}{e^{it}}
    \def\e{{\epsilon}}
    \def\Re{{\rm Re \,}}
    \def\Im{{\rm Im \,}}
    \def\bigO{{\cal O}}
    \def\Res{{\rm Res}}
    \def\P2n{{\rm P}_{{\rm II}}^{(n)}}
    \newtheorem{theorem}{Theorem}[section]
    \newtheorem{lemma}[theorem]{Lemma}
    \newtheorem{proposition}[theorem]{Proposition}
    \newtheorem{Definition}[theorem]{Definition}
    \newtheorem{Remark}[theorem]{Remark}
    \newenvironment{remark}{\begin{Remark}\rm}{\end{Remark}}
    \newtheorem{Example}[theorem]{Example}
    \newtheorem{Assumptions}[theorem]{Assumptions}
    \newenvironment{proof}%
    {\rm \trivlist \item[\hskip \labelsep{\bf Proof. }]}%
    {\hspace*{\fill}$\Box$\endtrivlist}
    {\rm \trivlist \item[\hskip \labelsep{\bf Proof}]}%
    {\hspace*{\fill}$\Box$\endtrivlist}
\begin{document}
\title{Toeplitz determinants with merging singularities}
\author{T. Claeys\footnote{Institut de Recherche en Math\'ematique et Physique,  Universit\'e
catholique de Louvain, Chemin du Cyclotron 2, B-1348
Louvain-La-Neuve, BELGIUM}
\ and I. Krasovsky\footnote{Department of Mathematics, Imperial College London,
London SW7 2AZ, United  Kingdom
}}
\maketitle

\begin{abstract}
We study asymptotic behavior for the determinants of $n\times n$ Toeplitz matrices corresponding to symbols with two Fisher-Hartwig singularities at the distance $2t\ge0$ from each other on the unit circle.
We obtain large $n$ asymptotics which are uniform for $0<t<t_0$ where $t_0$ is fixed. They describe the transition as $t\to 0$ between the asymptotic regimes of 2 singularities and 1 singularity. The asymptotics involve a particular solution to the Painlev\'e V equation. We obtain small and large argument expansions 
of this solution. As applications of our results we prove a conjecture of Dyson on the largest occupation number in the ground state of a one-dimensional Bose gas, and a conjecture of Fyodorov and Keating on the second moment of powers of the characteristic polynomials of random matrices. 
\end{abstract}

\tableofcontents 

\section{Introduction}

Consider Toeplitz matrices 
\begin{equation}\label{Toeplitz matrix}
T_n(f_t)=(f_{t,j-k})_{j,k=0}^{n-1}, \qquad
f_{t,j}=\frac{1}{2\pi}\int_{0}^{2\pi}f_t(e^{i\theta})e^{-ij\theta}d\theta,
\end{equation}
where the complex-valued symbol $f_t(z)$ depends on a parameter $t$
and has the form
\be\label{symbol}
f_t(z)=e^{V(z)} z^{\bt_1+\bt_2}
\prod_{j=1}^2  |z-z_j|^{2\al_j}g_{z_j,\bt_j}(z)z_j^{-\bt_j},\qquad z=e^{i\th},\qquad
\theta\in[0,2\pi),
\ee
where $z_1=e^{it}$, $z_2=e^{i(2\pi-t)}$, $0<t<\pi$,
\begin{eqnarray}
&g_{z_j,\bt_j}(z)=
\begin{cases}
e^{i\pi\bt_j}& 0\le\arg z<\arg z_j\cr
e^{-i\pi\bt_j}& \arg z_j\le\arg z<2\pi
\end{cases},&\label{g}\\
&\Re\al_j>-1/2,\quad \bt_j\in\mathbb{C},\quad j=1,2.&
\end{eqnarray}
The condition on $\al_j$ ensures
integrability of $f_t$. We assume $V$ to be analytic in a neighborhood of the unit circle, with the Laurent series \[V(z)=\sum_{k=-\infty}^{\infty}V_k z^k,\qquad V_k=\frac{1}{2\pi}\int_0^{2\pi}V(e^{i\theta})e^{-ik\theta}d\theta.\]
The function $e^{V(z)}$ allows the standard Wiener-Hopf decomposition:
\be\label{WH}
e^{V(z)}=b_+(z)b_0b_-(z),\quad b_+(z)=e^{\sum_{k=1}^{\infty}V_kz^k},\quad b_0=e^{V_0},\quad
b_-(z)= e^{\sum_{k=-\infty}^{-1}V_kz^k}.
\ee

The function $f_t(z)$ is a standard form of a symbol with 2 Fisher-Hartwig (FH) singularities at
the points $z_1=e^{it}$ and $z_2=e^{i(2\pi-t)}$. The parameters $\alpha_1$ (at
$z_1$) and $\alpha_2$ (at $z_2$) describe power- or root-type
singularities, $\beta_1$ and $\beta_2$ describe jump
discontinuities. We always assume below that the both singular points are genuine, i.e., that either $\al_j\neq 0$ or $\bt_j\neq 0$,
$j=1,2$.  

We are interested in the large $n$ behavior of the Toeplitz determinant
\be\label{Toeplitz}
D_n(f_t)=\det T_n(f_t)
\ee
when the distance between the singularities is small, i.e. $t$ is small.

A study of asymptotics of Toeplitz determinants as $n\to\infty$ was initiated by Szeg\H o in 1915 for symbols without singularities. Singular symbols, however, appear naturally in applications such as exactly solvable models (most notably, the two-dimensional Ising model), random matrices, etc.
The large $n$ behavior of Toeplitz determinants with several FH singularities has been studied by many authors under various assumptions on $V$ and the values of the parameters $\alpha_j$ and $\beta_j$, see, e.g.,\ \cite{FH, W, Basor, Basor2, BS, Ehr, DIK2, DIK22}. 
A recent historical account on Toeplitz determinants is given in \cite{DIK-hist}.

If the weight has two singularities as in (\ref{symbol}), the asymptotics of $D_n(f_t)$ are described as follows.
Following \cite{DIK2}, define
\[
|||\beta|||\equiv |||(\beta_1,\beta_2)|||\equiv|\Re(\beta_1-\beta_2)|.
\]
Let first
\begin{equation}
|||\beta|||<1,
\end{equation}
and assume that $\alpha_j\pm \beta_j\neq -1, -2, \ldots$ for $j=1,2$ (we 
always assume this ``nondegeneracy'' condition throughout this paper).
Then, the asymptotics of $D_n$ as
$n\to\infty$ for a fixed $t>0$ are given by (\cite{Ehr}; see \cite{DIK22} for the estimate of the error term)
\begin{equation}\label{as FH2}
\ln D_n(f_t)=nV_0+(\alpha_1^2+\alpha_2^2-\beta_1^2-\beta_2^2)\ln n + E(V,\alpha_1,\alpha_2,\beta_1,\beta_2,t)+
o(1),
\end{equation}
where
\begin{multline}\label{EV}E(V,\alpha_1,\alpha_2,\beta_1,\beta_2,t)=
\sum_{k=1}^\infty kV_k
V_{-k}+2(\beta_1\beta_2-\alpha_1\alpha_2)\ln |2\sin t|+i(\pi-2t)(\alpha_1\beta_2-\alpha_2\beta_1)\\
-\al_1(V(z_1)-V_0)+\bt_1\ln\frac{b_+(z_1)}{b_-(z_1)}
-\al_2(V(z_2)-V_0)+\bt_2\ln\frac{b_+(z_2)}{b_-(z_2)}\\
+
\ln\frac{G(1+\alpha_1+\beta_1)G(1+\alpha_1-\beta_1)G(1+\alpha_2+\beta_2)G(1+\alpha_2-\beta_2)}{G(1+2\alpha_1)G(1+2\alpha_2)},
\end{multline}
and where $G(z)$ is Barnes' $G$-function (it is an entire function; it satisfies the difference equation $G(z+1)=\Gamma(z)G(z)$ in terms of the Euler Gamma function, with the condition $G(1)=1$;  its zeros are 
$z=0,-1,-2,\dots$).
The error term in (\ref{as FH2}) is $o(1)=\bigO(n^{-1+|||\bt|||})$.
One of the results in the present paper is an extension of the validity of (\ref{as FH2})
(with the corresponding change in the error term estimate): see Theorem \ref{theorem: extension} below.

The case $|||\bt|||\ge 1$ reduces either to $|||\bt|||<1$ or to $|||\bt|||=1$ as follows.

If $|||\beta|||\geq 1$ and $|||\beta|||$ is not an odd integer, we can choose $k\in\mathbb Z$
such that $|||\beta'|||\equiv|||(\beta_1+k,\beta_2-k)|||<1$, where $\bt_1'=\bt_1+k$, $\bt_2'=\bt_2-k$.
 The change of variable $\beta\mapsto \beta'$ leaves the symbol $f$ invariant, except for multiplication by the constant factor $e^{-2ikt}$:
\begin{equation}
\label{symbol2}
f_t(e^{i\theta})\equiv f_t(e^{i\th};\al_1,\al_2,\bt_1,\bt_2)=
e^{2ikt}f_t(e^{i\th};\al_1,\al_2,\bt_1',\bt_2').
\end{equation}
Since $|||\bt'|||<1$, the formula
(\ref{as FH2}) can now be used for the symbol in the r.h.s. of (\ref{symbol2}), 
if $\al_j\pm\bt_j'\neq -1,-2,\dots$, to compute the asymptotics for $D_n(f_t)$:
 \[
 D_n(f_t)=e^{n(V_0+2ikt)}n^{\alpha_1^2+\alpha_2^2-\beta_1'^2-\beta_2'^2}e^{E(V,\alpha_1,\alpha_2,\beta_1',\beta_2')}(1+\bigO(n^{-1+|||\bt'|||})),\qquad n\to\infty.
 \]
If $|||\beta|||\ge 1$ is an odd integer, there exists $k\in\mathbb Z$ such that 
$|||\beta'|||\equiv|||(\beta_1+k,\beta_2-k)|||=1$. Let $(\bt_1'',\bt_2'')=(\bt_1'+\ell,\bt_2'-\ell)$, where
$\ell=1$ if $\Re\bt_1'<\Re\bt_2'$, and $\ell=-1$ if $\Re\bt_1'>\Re\bt_2'$. 
Then $|||\bt''|||=1$ and we have \cite{DIK2} if $\al_j\pm\bt_j',\al_j\pm\bt_j''\neq -1,-2,\dots$:
\begin{multline}\label{as FH2b1}
D_n(f_t)= e^{n(V_0+2ikt)} [
n^{\alpha_1^2+\alpha_2^2-\beta_1'^2-\beta_2'^2}e^{E(V,\alpha_1,\alpha_2,\beta_1',\beta_2')}\\+
e^{2in\ell t}n^{\alpha_1^2+\alpha_2^2-\beta_1''^2-\beta_2''^2}e^{E(V,\alpha_1,\alpha_2,\beta_1'',\beta_2'')}](1+\bigO(n^{-1})),\qquad n\to\infty.
\end{multline}
Note that $\Re\{\beta_1'^2+\beta_2'^2\}=\Re\{\beta_1''^2+\beta_2''^2\}$, 
and therefore the 2 terms in (\ref{as FH2b1}) are of the same order.
As with (\ref{as FH2}), in this paper we also extend the validity of (\ref{as FH2b1}): see the discussion following Theorem \ref{theorem: Toeplitz 3} below.

If we let $t$ decrease towards $0$, the symbol (\ref{symbol}) reduces to
\begin{equation}
f_0(z)=e^{V(z)}|z-1|^{2(\alpha_1+\alpha_2)}z^{\beta_1+\beta_2}e^{-i\pi(\beta_2+\beta_1)},
\end{equation}
which has only one FH singularity at $1$, with
parameters $\alpha_1+\alpha_2$ and $\beta_1+\beta_2$. Then the
asymptotics of $D_n$ as $n\to\infty$ are given by (for $\Re (\al_1+\al_2)>-1/2$, $\al_1+\al_2\pm(\bt_1+\bt_2)\neq -1,-2,\dots$)
\begin{multline}\label{as FH1}
\ln D_n(f_0)=nV_0+\left((\alpha_1+\alpha_2)^2-(\beta_1+\beta_2)^2\right)\ln n \\
+\sum_{k=1}^\infty kV_k
V_{-k}-(\alpha_1+\alpha_2)(V(1)-V_0)
+(\beta_1+\beta_2)\ln\frac{b_+(1)}{b_-(1)}
\\+\ln\frac{G(1+\alpha_1+\alpha_2+\beta_1+\beta_2)G(1+\alpha_1+\alpha_2-\beta_1-\beta_2)}{G(1+2\alpha_1+2\alpha_2)}+\bigO(n^{-1}).
\end{multline}
Comparing (\ref{as FH2}) with (\ref{as FH1}), we observe that the terms proportional to $\ln n$ do not match unless $\alpha_1\alpha_2=\beta_1\beta_2$. Moreover we see that $E(V,\alpha_1,\alpha_2,\beta_1,\beta_2,t)$ is unbounded as $t\to 0$. These observations indicate that as $n\to\infty$ and at the same time $t\to 0$, a transition occurs in the asymptotic behavior of the Toeplitz determinants, and so the asymptotic expansion (\ref{as FH2}) is not uniformly valid for small values of $t$.
Our goal is to describe this transition between (\ref{as FH2}) and (\ref{as FH1}). We will obtain an asymptotic expansion for $D_n(f_t)$ as $n\to\infty$ uniform for
$0<t<t_0$, where $t_0>0$ is fixed and sufficiently small. In particular, this describes the double scaling limit where $n\to\infty$ and simultaneously $t\to 0$.
In the scaling $t=\frac{s}{-2in}$, where $s\in -i\mathbb R^+$ is fixed and $n\to\infty$, we will prove that $D_n(f_t)$ can be expressed asymptotically in terms of a
particular solution $\si(s)$ of the Painlev\'e V equation. Using the expansion of $\si(s)$ in the limits $s\to 0$ and $s\to -i\infty$, we recover the large $n$ asymptotics of $D_n(f_t)$ for $t=0$ and $t$ fixed, respectively.

For the transition we consider in this paper, the first observation of the appearance of a Painlev\'e V solution in the particular case of the Toeplitz determinant with symbol $f_{t/2}(z)=|z-e^{it/2}||z-e^{-it/2}|$ and a study of these objects was in
the work of Lenard, Schultz, Dyson, Tracy and Vaidya, and Jimbo, Miwa, M\^{o}ri, Sato in 1960'th-1970'th  
on the one-dimensional gas of impenetrable bosons. We discuss this in more detail later on in introduction (see the text around (\ref{Ldet})).

Thus, the present paper describes the transition between the two different types of FH asymptotics: one for Toeplitz determinants corresponding to symbols with $2$ FH singularities,
and the other for symbols with $1$ FH singularity formed by the original ones merging together along the unit circle.
This work is closely related to \cite{CIK2}, where the transition was described between the non-singular case
(Strong Szeg\H o asymptotics) and the case with one FH singularity. The transition in that case was also described by a solution to the  Painlev\'e V equation, but with different parameters and different asymptotic behavior.
Critical transitions for Toeplitz determinants have also been studied in \cite{BDJ, MTW, T, WMTB}.

\subsubsection*{Statement of results}

Before stating our results on Toeplitz determinants, we first describe the relevant Painlev\'e V transcendents.
Consider the $\sigma$-form of the Painlev\'e V equation \cite[Formula (2.8)]{ForresterWitte}
\begin{equation}\label{sigmaPV}
s^2\sigma_{ss}^2=\left(\sigma-s\sigma_s+2\sigma_s^2\right)^2
-4(\sigma_s-\theta_1)(\sigma_s-\theta_2)
(\sigma_s-\theta_3)(\sigma_s-\theta_4),
\end{equation}
where the parameters $\theta_1,\theta_2, \theta_3, \theta_4$ are given by
\begin{align}
&\theta_1=-\al_1+\frac{\beta_1+\beta_2}{2},\qquad \theta_2=\al_1+\frac{\beta_1+\beta_2}{2},\label{theta12}\\
&\theta_3=\al_2-\frac{\beta_1+\beta_2}{2},\qquad \theta_4=-\al_2-\frac{\beta_1+\beta_2}{2}.\label{theta34}
\end{align}

\begin{theorem}\label{theorem: PV}
Let $\alpha_1,\alpha_2, \beta_1, \beta_2\in\mathbb C$ be such that
\be\label{gc1}
\Re\alpha_1,\Re\alpha_2>-\frac{1}{2},\qquad \Re(\al_1+\al_2)>-\frac{1}{2},
\qquad |||\beta|||<1,
\ee
 and assume that
\begin{equation}\label{gc2}
\alpha_1\pm \beta_1,\quad \alpha_2\pm\beta_2,\quad \alpha_1+\alpha_2\pm\beta_1\pm\beta_2 \notin \{-1,-2,-3, \ldots\}.
\end{equation}
If $2(\alpha_1+\alpha_2)\notin \mathbb N\cup\{0\}$, there exists a solution $\sigma(s)$ of equation (\ref{sigmaPV})
with the following asymptotic behavior as $|s|\to 0$ along the negative imaginary axis:
\begin{multline}
\sigma(s)=2\al_1\al_2-{\frac{1}{2}}(\bt_1+\bt_2)^2
-\frac{(\al_1-\al_2)(\bt_1+\bt_2)}{2(\al_1+\al_2)}s
+\tau_0|s|^{1+2(\al_1+\al_2)}\\
+\bigO(|s|^2)+\bigO(|s|^{2+4(\al_1+\al_2)}),\qquad s\to -i0_+,\label{as0}
\end{multline}
where
{\small \begin{multline}
\tau_0=-\frac{\Ga(1+\al_1+\al_2+\bt_1+\bt_2)\Ga(1+\al_1+\al_2-\bt_1-\bt_2)}{2\pi\Ga(1+2(\al_1+\al_2))^2}\frac{\Ga(1+2\al_1)\Ga(1+2\al_2)}
{\Ga(2+2(\al_1+\al_2))}\\
\times \left[
e^{i\pi(\al_1-\al_2)}\frac{\sin\pi(\al_1+\al_2+\bt_1+\bt_2)}{\sin 2\pi(\al_1+\al_2)}
+e^{-i\pi(\al_1-\al_2)}\frac{\sin\pi(\al_1+\al_2-\bt_1-\bt_2)}{\sin 2\pi(\al_1+\al_2)}
-e^{i\pi(\bt_1-\bt_2)}\right],
\end{multline}
}
and with the following asymptotic behavior as $|s|\to\infty$ along the negative imaginary axis: 
\be
\sigma(s)
=\frac{\beta_2-\beta_1}{2}s-\frac{1}{2}(\beta_1-\beta_2)^2
\pm\frac{s\gamma(s)}{1+\gamma(s)}+\bigO(|s|^{-1+|||\beta|||}),\qquad s\to -i\infty,\label{asinfty}
\ee
where ``+'' is taken for $\Re(\bt_1-\bt_2)\ge 0$, ``-'' for  $\Re(\bt_1-\bt_2)< 0$, and
\be
\gamma(s)=\begin{cases}
\left|s\right|^{2(-1+\beta_1-\beta_2)}e^{-i|s|}e^{i\pi(\alpha_1+\alpha_2)}\frac{\Gamma(1+\alpha_1-\beta_1)\Gamma(1+\alpha_2+\beta_2)}{\Gamma(\alpha_1+\beta_1)\Gamma(\alpha_2-\beta_2)},& \Re(\beta_1-\beta_2)\geq 0,\\
\left|s\right|^{2(-1+\beta_2-\beta_1)}e^{i|s|}e^{-i\pi(\alpha_1+\alpha_2)}\frac{\Gamma(1+\alpha_2-\beta_2)\Gamma(1+\alpha_1+\beta_1)}{\Gamma(\alpha_2+\beta_2)\Gamma(\alpha_1-\beta_1)},& \Re(\beta_1-\beta_2)< 0.
\end{cases}
\ee
If $2(\alpha_1+\alpha_2)\in \mathbb N\cup\{0\}$, there exists a solution $\sigma(s)$ of equation (\ref{sigmaPV}) satisfying
\begin{equation}
\sigma(s)=2\al_1\al_2-{\frac{1}{2}}(\bt_1+\bt_2)^2+\bigO(|s\ln |s||),\qquad\mbox{ $s\to -i0_+$},\label{as0half}
\end{equation}
and satisfying (\ref{asinfty}) as $s\to -i\infty$.\\
Moreover, if $\alpha_1,\alpha_2\in\mathbb R$ and $\beta_1, \beta_2\in i\mathbb R$, there exists a solution $\sigma(s)$ which is real and free of poles for $s\in -i\mathbb R^+$, and which has the asymptotics
(\ref{as0}) (or (\ref{as0half}) if $2(\alpha_1+\alpha_2)\in\mathbb N\cup\{0\}$) and (\ref{asinfty}).
\end{theorem}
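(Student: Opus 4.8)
The plan is to obtain the Painlev\'e V transcendent $\si(s)$ not by analyzing the $\si$-form equation (\ref{sigmaPV}) directly, but via a Riemann--Hilbert (RH) problem whose solution $\Psi(\lambda;s)$ encodes a $2\times2$ Lax pair for Painlev\'e V. Concretely, I would set up an RH problem on a contour in the $\lambda$-plane with jumps built from the FH data $\al_1,\al_2,\bt_1,\bt_2$: two singular points (say $\lambda=\pm1$ or $\lambda=0,1$ depending on normalization) carrying monodromy exponents $\al_j\pm\bt_j$, the essential singularity at $\lambda=\infty$ carrying the factor $e^{\mp i\lambda s/2}$, and a local exponent at one further point encoding $\bt_1+\bt_2$. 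The compatibility of the $\lambda$-equation with the $s$-deformation of this RH problem yields a Painlev\'e V equation in Jimbo--Miwa form; the associated Hamiltonian / $\tau$-function quantity $\si(s)=s\frac{d}{ds}\ln\tau(s)+\text{(linear correction)}$ then satisfies exactly (\ref{sigmaPV}) with $\th_1,\dots,\th_4$ as in (\ref{theta12})--(\ref{theta34}). This is the same family of RH problems that arises in the local parametrix analysis of the Toeplitz problem, so the bridge to the determinant results later in the paper is automatic.

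The next step is the two asymptotic regimes. For $s\to-i\infty$ I would perform a Deift--Zhou steepest-descent analysis of the RH problem: the oscillatory factor $e^{\mp i\lambda s/2}$ with $s\in-i\mathbb R^+$ becomes a genuine exponential decay after choosing the right orientation, so the two singular points $\lambda=0,1$ decouple, and the RH problem is approximated by a product of two confluent-hypergeometric (or Bessel-type) parametrices at those points, glued by a global parametrix. The leading behavior of $\si$ then comes from the $O(1/\lambda)$ term of the outer parametrix plus the contributions of the two local models; the ``splitting'' according to the sign of $\Re(\bt_1-\bt_2)$ and the precise constant $\ga(s)$ in (\ref{asinfty}) come from tracking which local factor dominates (the interference term of order $|s|^{\pm2(\bt_1-\bt_2)}e^{\mp i|s|}$) and computing the connection constants of the confluent-hypergeometric model explicitly via its known asymptotics and the reflection formula for $\Ga$. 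For $s\to-i0_+$ the two singular points and the essential singularity at $\infty$ merge into a single confluent singularity: I would rescale $\lambda$, show that the RH problem limits to the model RH problem for a Bessel-type function with parameter $2(\al_1+\al_2)$ (this is exactly the one-FH-singularity local parametrix), and read off (\ref{as0}) with the explicit $\tau_0$ from the subleading term of that model. The fractional power $|s|^{1+2(\al_1+\al_2)}$ and the logarithmic correction when $2(\al_1+\al_2)\in\mathbb N\cup\{0\}$ are precisely the resonant/nonresonant dichotomy of the Bessel model, so (\ref{as0half}) follows from the same analysis in the degenerate case.

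Finally, for the reality and pole-freeness claim when $\al_j\in\mathbb R$ and $\bt_j\in i\mathbb R$: here I would invoke a symmetry of the RH problem. With these parameter restrictions the jump matrices satisfy a Schwarz-type reflection symmetry across the imaginary $s$-axis (combined with a conjugation symmetry in $\lambda$), which forces $\si(\bar s)=\overline{\si(s)}$ hence $\si$ real on $-i\mathbb R^+$; and solvability of the RH problem for every $s\in-i\mathbb R^+$ — equivalent to absence of poles of $\si$ — follows from a vanishing-lemma argument (a contradiction RH problem with zero jump and decay at infinity has only the trivial solution, using positivity of the relevant quadratic form under the reality assumptions). The main obstacle I anticipate is not the steepest-descent bookkeeping but the accurate computation of all the explicit constants, especially $\tau_0$: one must match the $O(1/\lambda)$ entry of the full solution through three nested approximations (local model $\to$ outer parametrix $\to$ rescaled global object) and correctly combine the three $\Ga$-function connection factors, the $e^{\pm i\pi(\al_1-\al_2)}$ and $e^{i\pi(\bt_1-\bt_2)}$ phases, and the $\sin$-ratios — any sign or branch error there propagates into the final formula. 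A secondary subtlety is justifying uniformity of the error estimates $\bigO(|s|^2)$, $\bigO(|s|^{2+4(\al_1+\al_2)})$ and $\bigO(|s|^{-1+|||\bt|||})$ as the parameters approach the excluded ``nondegeneracy'' loci, which requires keeping the parametrix constructions away from the poles of the $\Ga$-functions in (\ref{gc2}).
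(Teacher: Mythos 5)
Your outline largely tracks the paper's own approach: a $2\times 2$ RH problem encoding a Painlev\'e V Lax pair, Deift--Zhou steepest descent with confluent-hypergeometric local parametrices and an elementary global parametrix for $s\to -i\infty$, a vanishing-lemma argument for solvability when $\alpha_j\in\mathbb R$, $\beta_j\in i\mathbb R$, and a symmetry argument for reality. One caveat on the large-$s$ side: for $|||\beta|||$ close to $1$, the naive Neumann-series expansion of the small-norm error $R$ produces terms of order $|s|^{k(-1+|||\beta|||)}$ that do not decay, and the paper handles this with an explicit further factorization $R=\widehat R\,X$ where $X$ is solved in closed form; your phrase about ``tracking which local factor dominates'' gestures at the phenomenon but does not supply the mechanism, and without it the error estimate in (\ref{asinfty}) is not reached.

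The substantive gap is in the $s\to -i0_+$ regime. After the rescaling, the one-FH-singularity confluent-hypergeometric model with merged parameters $\alpha_1+\alpha_2$, $\beta_1+\beta_2$ plays only the role of the \emph{outer} parametrix; by itself it has no memory of the individual data $(\alpha_1,\beta_1)$, $(\alpha_2,\beta_2)$ and so cannot produce the fractional-power correction $\tau_0\,|s|^{1+2(\alpha_1+\alpha_2)}$. The paper constructs a genuine local parametrix $P_0$ in a fixed disk $\mathcal U_0$, which in the rescaled variable still contains \emph{two} singular points separated by $O(|s|)$, built out of the Gauss hypergeometric function $F(1,1+2\alpha_1,2+2(\alpha_1+\alpha_2),s/\lambda)$ (see (\ref{hyp1})--(\ref{P})); the constant $\tau_0$ then arises from the jump of the small-norm quotient $H=\widehat\Psi P_0^{-1}$ on $\partial\mathcal U_0$, and the $\Gamma$- and $\sin$-factors in $\tau_0$ are exactly those of the connection constant $c_0$ in (\ref{c0}), which is fixed by matching this hypergeometric parametrix to the local monodromy at the two separated points. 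Reading $\tau_0$ ``from the subleading term'' of the one-FH outer model, as you propose, therefore cannot work. Relatedly, the logarithmic correction in (\ref{as0half}) when $2(\alpha_1+\alpha_2)\in\mathbb N\cup\{0\}$ is the resonance degeneration of this hypergeometric parametrix (a $\ln\lambda$ in the local expansion), not a feature of a Bessel model; there are no Bessel functions in the construction, and the one-FH parametrix you invoke is the confluent hypergeometric model $M^{(\alpha,\beta)}$, not a Bessel one.
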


\begin{remark}
We will construct solutions $\sigma(s)$ satisfying the above properties in terms of a Riemann-Hilbert (RH) problem which depends on the parameters $\alpha_1,\alpha_2,\beta_1,\beta_2$ and on $s$. The solutions constructed in this way will be the ones appearing in the asymptotic expansion for the Toeplitz determinants $D_n(f_t)$. However, we do not prove that there is only one solution $\sigma$ which satisfies the properties given in Theorem \ref{theorem: PV}.
\end{remark}

\begin{remark}
Equation (\ref{sigmaPV}) depends, through $\theta_1,\ldots, \theta_4$, on the three independent parameters $\alpha_1,\alpha_2$, and $\beta_1+\beta_2$. On the other hand, the solutions described in the above theorem depend not only on the sum $\beta_1+\beta_2$, but also on $\beta_1$ and $\beta_2$ independently. This means that, given $\alpha_1,\alpha_2,$ and $\beta_1+\beta_2$, the asymptotics (\ref{asinfty}) and (\ref{as0}), (\ref{as0half})  specify a one-parameter family of solutions to the same differential equation (\ref{sigmaPV}).
\end{remark}

\begin{remark}
The function $\si(s)$ has a branching point at zero (any other singularities of $\si(s)$ are poles) and is defined on the plane with a cut from zero to infinity.
The assumption in the theorem that $s$ is on the negative imaginary axis is not essential: it is adopted for simplicity and in view of the application in Theorem \ref{theorem: Toeplitz} below. A simple modification 
of the proof shows that the asymptotics (\ref{as0}), (\ref{asinfty}), (\ref{as0half}) hold along a path from 
zero to infinity in a neighborhood of the negative imaginary axis. This fact is used in Theorem \ref{theorem: Toeplitz 2} below.
\end{remark}

We now state the result about Toeplitz determinants for the case $\al_j,i\bt_j\in\mathbb R$.

\begin{theorem}\label{theorem: Toeplitz} Let $\alpha_1,\alpha_2,\al_1+\al_2>-1/2$
and $\beta_1,\beta_2\in i\mathbb R$.
Let $D_n(f_t)$ be the Toeplitz determinant (\ref{Toeplitz}) corresponding to the symbol (\ref{symbol}).
The following asymptotic expansion holds as $n\to\infty$ with the error term 
uniform for $t\in (0,t_0)$, where $t_0$ is sufficiently small:
\begin{multline}\label{asT}
\ln D_n(f_t)=\ln D_n(f_0)+int(\beta_2-\beta_1)+
\int_{0}^{-2int}\frac{1}{s}\left(\sigma(s)-2\alpha_1\alpha_2+\frac{1}{2}(\beta_1+\beta_2)^2\right)ds\\
+2\left(\beta_1\beta_2-\alpha_1\alpha_2\right)\ln\frac{\sin t}{t}
+2it(\alpha_2\beta_1-\alpha_1\beta_2)
-\alpha_1(V(e^{it})-V(1))\\-\alpha_2(V(e^{-it})-V(1))
+\beta_1\ln\frac{b_+(e^{it})b_-(1)}{b_-(e^{it})b_+(1)}
+\beta_2\ln\frac{b_+(e^{-it})b_-(1)}{b_-(e^{-it})b_+(1)}+o(1),
\end{multline}
where the function $\sigma(s)$ satisfies the conditions of Theorem \ref{theorem: PV}: it solves equation (\ref{sigmaPV}), has the asymptotics (\ref{as0}) if $2(\alpha_1+\alpha_2)\notin\mathbb N\cup\{0\}$
((\ref{as0half}) otherwise) and (\ref{asinfty}), and has no poles for $s\in -i\mathbb R^+$. Here $\ln D_n(f_0)$ is given by (\ref{as FH1}).
\end{theorem}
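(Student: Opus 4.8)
The plan is to express $D_n(f_t)$ through a Riemann--Hilbert (RH) problem for $2\times 2$ matrix-valued functions built from the orthogonal polynomials on the unit circle with weight $f_t$, and then to perform a Deift--Zhou steepest-descent analysis in which the two FH singularities at $z_1=e^{it}$ and $z_2=e^{-it}$ are treated together through a single local parametrix on a small disc around $z=1$. First I would use the standard formula expressing $\ln D_n(f_t)$ as an integral (in $t$, or equivalently in the scaled variable $s=-2int$) of a logarithmic derivative which, by the RH machinery, equals a residue/trace of the solution of the model RH problem built at the merging point. Concretely, differentiating $\ln D_n(f_t)$ with respect to $t$ and using the differential identity for Toeplitz determinants (the analogue of the Its--Krasovsky--type identity, expressing $\partial_t\ln D_n$ through the RH solution at the singular points) should produce, after the asymptotic analysis, precisely $\tfrac1s\big(\sigma(s)-2\alpha_1\alpha_2+\tfrac12(\beta_1+\beta_2)^2\big)\,\tfrac{ds}{dt}$ plus the elementary $t$-dependent terms visible in \eqref{asT}. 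Integrating from $t=0$ (where $D_n(f_0)$ is known from \eqref{as FH1}) to the given $t$ then yields the claimed formula, the integration constant being fixed by matching the $s\to-i0_+$ behavior \eqref{as0}/\eqref{as0half} of $\sigma$ against \eqref{as FH1}.

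The key steps, in order, are: (i) set up the RH problem $Y$ for the orthogonal polynomials and relate $D_n(f_t)$ to $Y$ (e.g.\ via $Y_{11}$ and the norming constants); (ii) perform the transformations $Y\to T\to S$ (normalization at infinity using the Szeg\H{o} function of $e^{V}$ and the factor $z^{\beta_1+\beta_2}\prod|z-z_j|^{2\alpha_j}$, followed by opening of lenses) so that away from $z=1$ the jumps are exponentially close to the identity; (iii) construct the global parametrix away from a disc $U$ of fixed small radius around $z=1$, and construct the local parametrix inside $U$ in terms of a model RH problem $\Psi(\zeta;s)$ with confluent-hypergeometric/Painlev\'e~V structure, where $\zeta=n(\theta-0)$ roughly and $s=-2int$; this $\Psi$ is exactly the RH problem of Theorem~\ref{theorem: PV} whose isomonodromic $\tau$-function logarithmic derivative is $\sigma(s)$ — for $t$ fixed the two FH singularities are resolved separately inside $U$ (recovering the two-parametrix picture of \cite{DIK2}), while for $t=\bigO(1/n)$ they merge into the single $\Psi(\zeta;s)$, and uniformity in $t\in(0,t_0)$ comes from the uniform construction of $\Psi$ across the whole ray $s\in-i\mathbb R^+$; (iv) prove the matching condition on $\partial U$ and conclude $S=R(I+o(1))\times(\text{parametrices})$ with $R\to I$ uniformly; (v) feed this back into the differential identity in $t$ and into the identity computing $\partial_t\ln\det\Psi$-type quantities, identifying the result with $\sigma(s)$ via the Jimbo--Miwa--Ueno $\tau$-function; (vi) integrate in $t$ and match at $t=0$.

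I expect the main obstacle to be \emph{uniformity of the local parametrix across the entire transition regime} — i.e.\ constructing $\Psi(\zeta;s)$ and controlling its behavior with error estimates that are uniform as $s$ ranges over all of $-i\mathbb R^+$, from $s\to-i0_+$ (where the two singularities are at distance $\gg 1/n$ and the parametrix must reduce to the product of two confluent hypergeometric parametrices of \cite{DIK2}, \cite{CIK2}) to $s\to-i\infty$ (where the parametrix degenerates and one must recover the two-singularity asymptotics \eqref{as FH2}). This requires: solvability of the model RH problem for all $s\in-i\mathbb R^+$ (no spurious poles — here the reality assumptions $\alpha_j\in\mathbb R$, $\beta_j\in i\mathbb R$ are used, together with a vanishing-lemma argument), the uniform matching estimate on $\partial U$ which will be $\bigO(1/n)$ for $|s|$ bounded below but deteriorate as $|s|\to\infty$ unless one enlarges the disc or splits the region, and careful bookkeeping of the branch of $|s|^{1+2(\alpha_1+\alpha_2)}$ and of logarithms so that the integrated formula \eqref{asT} has the stated single-valued meaning on the ray. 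A secondary technical point is controlling the $\bigO(n^{-1+|||\beta|||})$-type error terms uniformly; since $|||\beta|||<1$ and the $\beta_j$ are purely imaginary this is harmless, but it must be tracked through the $t$-integration. Once $\Psi$ is under control and identified with the RH problem of Theorem~\ref{theorem: PV}, the remaining computations — extracting $\sigma(s)$ from $\partial_s\ln\tau$, and matching at $t=0$ using \eqref{as0} — are routine.
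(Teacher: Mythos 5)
Your plan coincides with the paper's proof in all its essentials: the RH problem for the orthogonal polynomials $Y$ of (\ref{def Y}), the differential identity for $\frac{d}{dt}\ln D_n(f_t)$ in terms of $Y$ at $z_1,z_2$ (Proposition \ref{prop21}), the steepest-descent transformations $Y\to T\to S$, the model problem $\Psi(\zeta;s)$ of Section \ref{secPsi} with $s=-2int$ as local parametrix near $1$, the vanishing-lemma solvability argument (Lemma \ref{vanishing lemma}) for $\alpha_j\in\mathbb R$, $\beta_j\in i\mathbb R$, and the final $t$-integration with the constant fixed by the small-$s$ behavior (\ref{as0})/(\ref{as0half}). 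You also correctly flag the real difficulty, namely uniformity as $|s|=2nt$ ranges over the whole ray; but you present the remedy (``enlarge the disc or split the region'') as an open option, whereas the paper commits to a specific split: for $0<t\le\omega(n)/n$ it uses the single $\Psi$-parametrix (\ref{def P}) in a fixed disc $\mathcal U$, while for $\omega(n)/n<t<t_0$ it abandons $\Psi$ and builds two separate confluent-hypergeometric parametrices $\wt P_1,\wt P_2$ in shrinking neighborhoods of $z_1,z_2$ (Section \ref{sec75}), matching the two expansions in the overlap via the large-$s$ asymptotics of Section \ref{sec large s}. A second, smaller difference is that you would identify $\sigma$ via the Jimbo--Miwa--Ueno $\tau$-function; the paper instead feeds the differential identity directly through the explicit Lax-pair relations of Propositions \ref{prop F1} and \ref{prop F2}, which express the $F_j^{-1}\sigma_3F_j$ and $F_j^{-1}F_{j,\zeta}$ data appearing in (\ref{differentialidentity}) in terms of $\sigma$ and $\sigma_s$. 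These are differences of execution, not of strategy, and neither affects the validity of your outline.
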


\begin{remark}
The integral in (\ref{asT}) is well-defined by (\ref{as0}), (\ref{as0half}), and by the fact that $\sigma$ has no poles on the interval of integration.
\end{remark}

\begin{remark}\label{remark: degenerate}
If $\alpha_1=\alpha_2=\beta_1=\beta_2=\frac{1}{2}$, the function $\sigma(s)$ is identically zero, as we show in Section \ref{section: degenerate}. Note that in this case, the parameters $\theta_1,\ldots, \theta_4$ in the Painlev\'e equation (\ref{sigmaPV}) are given by $\theta_1=\theta_3=0$, $\theta_2=1$, 
$\theta_4=-1$, and it is easily verified that $\sigma(s)=0$ solves (\ref{sigmaPV}), and that it satisfies the asymptotic conditions (\ref{as0half}) and (\ref{asinfty}). 
Although $\beta_1,\beta_2\notin i\mathbb R$ in this case, the asymptotic expansion (\ref{asT}) holds and becomes elementary. 
\end{remark}

An extension of the previous theorem to the generic case $|||\beta|||<1$ is the following.

\begin{theorem}\label{theorem: Toeplitz 2}
Let $\Re\alpha_1,\Re\alpha_2,\Re(\al_1+\al_2)>-1/2$,  $|||\beta|||<1$,
and
\[
\alpha_1\pm \beta_1, \alpha_2\pm\beta_2, \alpha_1+\alpha_2\pm\beta_1\pm\beta_2 \notin \{-1,-2,-3, \ldots\}.
\]
Let $D_n(f_t)$ be the Toeplitz determinant (\ref{Toeplitz}) corresponding to the symbol (\ref{symbol}).
There exists a finite set $\Omega=\{s_1,\dots,s_\ell\}\subset -i\mathbb R^+=(0,-i\infty)$ (with $\ell=\ell(\al_1,\al_2,\bt_1,\bt_2)$)
such that the asymptotic expansion (\ref{asT}) holds uniformly for $t\in (0,t_0)$, where $t_0$ is sufficiently small, provided
$-2int$ is bounded away from $\Omega$. The path of integration in the integral on the r.h.s.
of (\ref{asT}) is chosen in the complex $s$-plane to avoid the points of $\Omega$.
The function $\si(s)$ is a solution to (\ref{sigmaPV})
with the asymptotics (\ref{as0}) (or (\ref{as0half}) if $2(\alpha_1+\alpha_2)\in\mathbb N\cup\{0\}$) and (\ref{asinfty}).
\end{theorem}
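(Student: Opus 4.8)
The plan is to rerun the proof of Theorem~\ref{theorem: Toeplitz}, keeping track of the single place where the hypotheses $\alpha_j\in\mathbb R$, $\beta_j\in i\mathbb R$ were actually used. That proof goes through a differential identity expressing $\partial_t\ln D_n(f_t)$ in terms of the solution of the Riemann--Hilbert (RH) problem for the polynomials orthogonal on the unit circle with weight $f_t$, then a nonlinear steepest descent analysis uniform for $t\in(0,t_0)$, and finally an integration in $t$ from $0$ (where $\ln D_n(f_0)$ is given by (\ref{as FH1})) up to the value of interest. In that analysis the reality of the parameters was used \emph{only} to run a vanishing lemma guaranteeing, for every $s\in-i\mathbb R^+$, the solvability of the model (``Painlev\'e V'') RH problem $\Psi(\cdot\,;s)$ that provides the local parametrix at the merging singularities $z_1=e^{it}$, $z_2=e^{-it}$, with $s=-2int$; here $\sigma$ is read off from $\Psi$ exactly as in the proof of Theorem~\ref{theorem: PV}. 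For general $\alpha_j,\beta_j$ the vanishing lemma need not hold, and the set $\Omega$ of exceptional $s$ is the obstruction to control.

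First I would show $\Omega=\{s_1,\dots,s_\ell\}$ is a finite subset of $(0,-i\infty)$. Since the jump matrices of $\Psi(\cdot\,;s)$ depend holomorphically on $s$ in a neighborhood of the ray, the set of $s$ at which $\Psi$ fails to be solvable is the zero set of a holomorphic function and hence has no accumulation point there. The construction behind Theorem~\ref{theorem: PV} shows in addition that $\Psi(\cdot\,;s)$ \emph{is} solvable for $|s|$ small (which produces (\ref{as0}), (\ref{as0half})) and for $|s|$ large (which produces (\ref{asinfty})), so $\Omega$ lies in a compact subinterval of the ray and is finite. Moreover, on any region $\{s\in(0,-i\infty):\mathrm{dist}(s,\Omega)\ge\delta\}$ together with its two asymptotic ends, $\Psi^{-1}$ exists and $\|\Psi(\zeta;s)\|$, $\|\Psi(\zeta;s)^{-1}\|$ are bounded by a constant $C(\delta)$ uniformly in $\zeta$ on the relevant contour; this is precisely the input the matching estimate for the local parametrix requires, and it follows from the explicit $s\to 0$ and $s\to-i\infty$ asymptotics of $\Psi$ together with compactness in between.

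With the local parametrix available, the rest of the steepest descent is unchanged: the global parametrix involves only the branch-cut functions built from $z^{\alpha_j}$, $z^{\beta_j}$ and is insensitive to the reality of the exponents, and the concluding small-norm estimate yields all error terms uniformly for $t\in(0,t_0)$ with $-2int$ at distance $\ge\delta$ from $\Omega$. Integrating the resulting asymptotics of the differential identity in $t$ --- along a path in the complex $t$-plane whose image under $t\mapsto-2int$ stays in a neighborhood of the negative imaginary axis but avoids $\Omega$, which is legitimate because $D_n(f_t)$ is analytic and nonvanishing in $t$ off a discrete set clustering near $\Omega/(-2in)$, and because by the remark following Theorem~\ref{theorem: PV} the asymptotics of $\sigma$ persist along such deformed paths --- reproduces (\ref{asT}), the $s$-integral being taken along the image of this path and being well-defined thanks to (\ref{as0})/(\ref{as0half}).

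The main obstacle is the uniformity of the error bounds at the two boundary regimes. As $-2int\to\Omega$ both $\Psi$ and $D_n(f_t)$ degenerate, so one must check that the detour of the $t$-path around $\Omega$ costs only $o(1)$; and as $t\to0$ the Painlev\'e local parametrix degenerates to the one-singularity (confluent hypergeometric) parametrix, so one must interpolate between the small-$s$ expansion (\ref{as0})/(\ref{as0half}) of $\sigma$ and the generic regime of the analysis while keeping the $o(1)$ in the differential identity integrable down to $t=0$. Both are handled by matching the steepest descent output to the fixed-$t$ asymptotics (\ref{as FH2}), in the extended form of Theorem~\ref{theorem: extension}, at the upper end of the $t$-range and to (\ref{as FH1}) at $t=0$, and by using the explicit behaviour of $\sigma$ near $s=0$ and near its poles to control the integral; making all these matchings uniform is the technical heart of the argument.
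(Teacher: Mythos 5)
Your proposal follows essentially the same route as the paper. The paper's proof of Theorem~\ref{theorem: Toeplitz 2} is obtained in Sections~\ref{section: RHP Y}--\ref{section: as Toep} by running the same steepest-descent analysis and differential-identity integration as for Theorem~\ref{theorem: Toeplitz}, with the solvability of the $\Psi$-RH problem now guaranteed only outside a finite exceptional set $\Omega$; your identification of where the reality hypotheses enter (the vanishing lemma of Section~\ref{secsolvability}), your argument that $\Omega$ is a finite subset of $-i\mathbb R^+$ via discreteness plus solvability for $|s|$ small and $|s|$ large, and your handling of the $s$-contour deformation around $\Omega$ all match the paper's treatment, including the remark after Theorem~\ref{theorem: PV} that the $\sigma$-asymptotics persist along nearby paths. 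One small point of divergence: the paper does not ``match the steepest descent output to (\ref{as FH2}) at the upper end of the $t$-range'' --- it integrates the differential identity from $t=0$ outward with base point $\ln D_n(f_0)$ from (\ref{as FH1}), and consistency with (\ref{as FH2}) at fixed $t$ is a corollary (yielding the integral identity (\ref{intidentity})), not a step in the proof; but this is a matter of presentation rather than a gap, and the ``technical heart'' you flag (uniformity of the error estimates across the $nt=O(1)$, $nt\to\infty$ regimes, with the Proposition~\ref{prop diff id F} error term integrable down to $t=0$) is indeed where the work lies, carried out in the paper via the two parametrix constructions of Sections~\ref{sec74} and~\ref{sec75} and the estimates leading to (\ref{A est}).
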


\begin{remark}
The set $\Omega$ is the set of points where the Riemann-Hilbert problem associated to $\si(s)$ is not solvable. 
$\Omega$ contains the poles of $\si(s)$. 
A pole $s_j$ corresponds to a
zero in the asymptotics of the determinant $D_n(f_t)$ for $t_j=is_j/(2n)$.
Different choices of the integration contour in (\ref{asT}) correspond to different branches of $\ln D_n$.
For $\al_j, i\bt_j\in\mathbb R$, we show in Section \ref{secsolvability} that $\Omega$ has no points on the half-line $-i\mathbb R^+$, and hence a simpler formulation of the result in Theorem \ref{theorem: Toeplitz}. 
\end{remark}

\begin{remark}
An estimate for the error term in (\ref{asT}) for both theorems is given in the Proposition \ref{prop diff id F} below.
\end{remark}

If $t\to 0$ sufficiently fast so that also $nt\to 0$, we immediately obtain (\ref{as FH1})
from (\ref{asT}).
Let us check that we also recover (\ref{as FH2}) from  (\ref{asT}) when $t$ is fixed, and so $nt\to\infty$.
Note first that it follows from the asymptotics for $\sigma$ that, given (\ref{gc1}), (\ref{gc2}),
\begin{align}
&\int_{0}^{-2int}\frac{1}{s}\left(\sigma(s)-2\alpha_1\alpha_2+\frac{1}{2}(\beta_1+\beta_2)^2\right)ds\nonumber \\
&\qquad\quad =-int(\beta_2-\beta_1)-\left(\frac{1}{2}(\bt_1-\bt_2)^2+\sigma(0)\right)\ln(2nt) +\bigO(1)\nonumber \\
&\qquad\quad =
-int(\beta_2-\beta_1)-2(\al_1\al_2-\bt_1\bt_2)\ln(2nt)+\bigO(1),\qquad nt\to \infty.
\label{Omega infty}
\end{align}
Substituting this expression into the right hand side of (\ref{asT}), we
obtain the terms with $n$ and with $\ln n$ in (\ref{as FH2}). Equality of the constant in $n$ terms in both expressions for $t$ fixed
gives the following integral identity for $\sigma(s)$:
{\small \begin{multline}
\lim_{T\to +\infty}\left(\int_0^{-iT}\frac{1}{s}\left(\sigma(s)-2\alpha_1\alpha_2+\frac{1}{2}(\beta_1+\beta_2)^2\right)ds +\frac{iT}{2}(\beta_2-\beta_1)+2(\alpha_1\alpha_2-\beta_1\beta_2)\ln T\right)\\
\qquad =i\pi(\alpha_1\beta_2-\alpha_2\beta_1)-\ln\frac{G(1+\alpha_1+\alpha_1+\beta_1+\beta_2)G(1+\alpha_1+\alpha_1-\beta_1-\beta_2)}{G(1+2\alpha_1+2\alpha_2)}
\\+\ln\frac{G(1+\alpha_1+\beta_1)G(1+\alpha_1-\beta_1)G(1+\alpha_2+\beta_2)G(1+\alpha_2-\beta_2)}{G(1+2\alpha_1)G(1+2\alpha_2)}.
\label{intidentity}\end{multline}
}This identity is a deep result which contains global information about $\sigma$. We believe that it is of independent interest in the study of Painlev\'e transcendents.

The following result extends the expansion (\ref{as FH2}), known for fixed singularities $z_1$, $z_2$ independent of $n$, to the case where the two singularities approach each other at a sufficiently slow rate as $n\to\infty$.

\begin{theorem}\label{theorem: extension}
Let $\Re\alpha_1,\Re\alpha_2 > -1/2$,  $|||\beta|||<1$,
and
$\alpha_1\pm \beta_1$, $\alpha_2\pm\beta_2 \notin \{-1,-2,\ldots\}$.
Let
$D_n(f_t)$ be the Toeplitz determinant (\ref{Toeplitz}) corresponding to the symbol (\ref{symbol});
$\om(x)$ be any positive, smooth for large $x$ function such that $\om(n)\to\infty$,
$\om(n)/n\to 0$  as $n\to\infty$. 

Then the expansion (\ref{as FH2}) holds as $n\to\infty$
for $\om(n)/n\le t<t_0$ with $t_0$ sufficiently small.
The error term in (\ref{as FH2})
in this case is $o(1)=\bigO(\om(n)^{-1+|||\bt|||})$, uniformly in $t$.

Moreover, there exist positive constants $n_0$, $s_0$, $c_0$, depending only on 
$\al_j$, $\bt_j$, $j=1,2$, and $V(z)$, such that
the expansion (\ref{as FH2}) holds for $n>n_0$ with the error term $o(1)$ replaced by a function
$\upsilon(s)$, $s=nt$, satisfying the estimate $|\upsilon(s)|<c_0 s^{-1+|||\bt|||}$ for $s\ge s_0$, $t<t_0$.
\end{theorem}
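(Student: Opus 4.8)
\medskip
\noindent\textbf{Proof proposal.} The plan is to run the Deift--Its--Krasovsky Riemann--Hilbert steepest descent analysis for the Toeplitz determinant $D_n(f_t)$ with the two Fisher--Hartwig singularities at $z_1=e^{it}$ and $z_2=e^{-it}$, keeping explicit track of the dependence of every error term on $t$ as $t\to 0$ and using throughout that $nt\ge\om(n)\to\infty$. For $t$ in a compact subinterval of $(0,t_0)$ the expansion (\ref{as FH2}) is the known result (\cite{Ehr}, with the error estimate of \cite{DIK22}), and the underlying analysis is already uniform there; the genuinely new regime is $\om(n)/n\le t\to 0$. In that regime the natural scaled variable $s=-2int$ has $|s|=2nt\ge 2\om(n)$, so for $n$ large it lies beyond the finite exceptional set $\Om$ of Theorem \ref{theorem: Toeplitz 2}; equivalently, the model Riemann--Hilbert problem attached to $\si$ is solvable there and, for large $|s|$, degenerates into a pair of confluent hypergeometric (CHF) parametrices centered at $z_1$ and $z_2$. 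One therefore runs the analysis with two \emph{separate} local parametrices that are never merged, so that the resonance conditions on $\al_1+\al_2\pm\bt_1\pm\bt_2$ and the constraint $\Re(\al_1+\al_2)>-1/2$ never intervene --- this is why they do not appear in the hypotheses.

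Concretely, I would proceed as follows. First, pass from $D_n(f_t)$ to the $2\times2$ Riemann--Hilbert problem $Y$ for orthogonal polynomials on the unit circle and perform the standard transformations $Y\mapsto T\mapsto S$ (normalization at infinity, opening of lenses along the unit circle). Second, build the global parametrix from the Szeg\H o function $\mathcal D(z)$ of $f_t$; the $t$-dependent quantities it produces --- the boundary values $\mathcal D(z_j)$, the branch factors, and the term $2(\bt_1\bt_2-\al_1\al_2)\ln|2\sin t|$ --- are exactly those occurring in (\ref{EV}), and their logarithmic growth as $t\to0$ is thereby accounted for. Third, in disks $D(z_j,r)$ with $r=ct$ for a small fixed $c>0$ (disjoint for $t$ small), insert the CHF parametrix for a single Fisher--Hartwig singularity, absorbing the locally analytic factors $e^{V}$, $|z-z_{3-j}|^{2\al_{3-j}}$ and the constant $z_{3-j}^{-\bt_{3-j}}g_{z_{3-j},\bt_{3-j}}$ into its analytic prefactor; this is legitimate uniformly in $t$ because $r/|z_1-z_2|=\bigO(c)$, so these factors and their derivatives stay bounded with the correct powers of $t$. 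Fourth, estimate $R:=S\,P^{-1}$: on $\partial D(z_j,r)$ the CHF asymptotics give a mismatch $I+\bigO(1/(nr))=I+\bigO(1/\om(n))$ together with an interaction contribution from the neighbouring singularity of order $(nt)^{-1+|||\bt|||}$ --- the familiar resonance of two Fisher--Hartwig singularities at distance $2\sin t$ --- while the lens boundaries contribute exponentially small terms; hence, by the small-norm theorem, $R=I+\bigO((nt)^{-1+|||\bt|||})$ uniformly. Fifth, unwinding the transformations, $\ln D_n(f_t)$ equals the sum of the global-parametrix contribution and the CHF constants (values of Barnes' $G$), which reassemble exactly into the right-hand side of (\ref{as FH2})--(\ref{EV}), plus $\ln\det(I+\text{small})=\bigO((nt)^{-1+|||\bt|||})=\bigO(\om(n)^{-1+|||\bt|||})$. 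For the ``moreover'' statement, one observes that the dominant part of $R-I$, and hence of the error, depends on $n$ and $t$ only through $s=nt$ to leading order (it is a boundary integral of the CHF mismatch, governed by $n\sin t\sim s$); since the CHF asymptotics are valid once $s\ge s_0$ and the two disks are disjoint once $t<t_0$, this yields $|\upsilon(s)|<c_0 s^{-1+|||\bt|||}$ for $n>n_0$, with $n_0,s_0,c_0$ depending only on $\al_j,\bt_j$ and $V$.

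The main obstacle is the uniformity of the local analysis as the singularities collide: the disks $D(z_j,r)$ must shrink at rate $r\asymp t$, so the generic $\bigO(1/(nr))$ boundary mismatch is only $\bigO(1/\om(n))$ --- just barely small --- and, within $D(z_j,r)$, the ``other'' singularity $z_{3-j}$ sits at distance $\asymp r$, so the factor $|z-z_{3-j}|^{2\al_{3-j}}$ has order-one relative variation. One must therefore expand both the Szeg\H o function and the CHF asymptotics to one further order and carefully track the resulting powers of $t$ in order to check that the CHF parametrix still matches the global parametrix on $\partial D(z_j,r)$ to the claimed order with $t$-independent constants. A secondary point is to verify that the logarithmic divergences of $\mathcal D(z)$ at $z_1,z_2$ as $t\to0$ are reproduced verbatim by the explicit $\ln|2\sin t|$, $\ln(b_+/b_-)(z_j)$ and $V(z_j)$ terms in (\ref{EV}), so that they do not leak into $\upsilon$. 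Once these points are settled, the result follows from the standard small-norm argument for $R$.
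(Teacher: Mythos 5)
Your setup is right --- the RH problem $Y$ for orthogonal polynomials, opening lenses, Szeg\H o-function global parametrix, two \emph{separate} confluent-hypergeometric local parametrices in disks of radius $\asymp t$ around $z_1,z_2$, and the observation that since the singularities never merge, the conditions on $\alpha_1+\alpha_2\pm\beta_1\pm\beta_2$ and $\Re(\alpha_1+\alpha_2)$ never enter. This matches the paper's Section \ref{sec75}. The leading error estimate $R=I+\bigO((nt)^{-1+|||\beta|||})$ is (\ref{as R2}). But your step (5) has a real gap: ``unwinding the transformations'' does not produce $\ln D_n(f_t)$. The solution $Y(z;n,t)$ encodes the degree-$n$ polynomial $\phi_n$ and the single coefficient $\chi_n=\sqrt{D_n/D_{n+1}}$, not the determinant itself, and there is no ``$\ln\det(I+\text{small})$'' object in this picture from which the Barnes-$G$ constants fall out. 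To extract $D_n$ you need either a telescoping product $\prod_{j=0}^{n-1}\chi_j^{-2}$ (requiring uniformity in the degree $j$, a much stronger statement) or a differential identity in a deformation parameter.

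The paper takes the latter route and in a particularly economical form: the parameter is $t$ itself. Proposition \ref{prop21} gives the exact identity $\frac{1}{i}\frac{d}{dt}\ln D_n(f_t)=\sum_k(-1)^k\bigl[n(\alpha_k+\beta_k)-2\alpha_k z_k(Y^{-1}Y')_{22}(z_k)\bigr]$, Proposition \ref{prop diff id F} (formulas (\ref{di large t})--(\ref{S2})) expresses its asymptotics through the shrinking-disk parametrices, and Remark \ref{extension} then simply integrates from $t$ to $t_0$ and plugs in the \emph{already known} formula (\ref{as FH2}) at the fixed endpoint $t=t_0$. The known result at $t_0$ is used as the boundary condition, so one never has to re-derive the Barnes-$G$ constants; one only has to show the derivative is uniformly well approximated. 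Your plan omits both the differential identity and the boundary condition, which is where the proof actually lives.

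A secondary point you gloss over: the estimate $R=I+\bigO((nt)^{-1+|||\beta|||})$ is not by itself strong enough to control the integrated error when $|||\beta|||$ is close to $1$, because the Neumann iterates contribute powers $(nt)^{-k(1-|||\beta|||)}$ whose integrals over $t$ degrade. The paper circumvents this by factoring the small-norm solution as $\Upsilon=\widehat\Upsilon Z$ (with $Z$ an explicitly solvable part carrying the $(nt)^{-1+|||\beta|||}$ terms) and exploiting the oscillation $e^{\pm int}$ in $Z$ so that the offending contributions integrate to $\bigO(\om(n)^{-2+2|||\beta|||})$; see (\ref{UZ})--(\ref{A2est}). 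Your proposal does flag the need to ``expand to one further order,'' but the mechanism that saves the day is oscillatory cancellation after integration in $t$, which only makes sense once the differential identity is in place.
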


To complete the analysis of the nondegenerate (by which we mean that the condition (\ref{gc2}) holds) situation it remains to
consider the case $|||\bt|||=1$. We have

\begin{theorem}\label{theorem: Toeplitz 3}
Let $\Re\alpha_1,\Re\alpha_2,\Re(\al_1+\al_2)>-1/2$,
and assume (\ref{gc2}).
Let $|||\bt|||=0$, and denote $\bt_1^-=\bt_1$, $\bt_2^-=\bt_2-1$,
$f^{-}_t=f_t(z;\al_1,\al_2,\bt^-_1,\bt^-_2)$, $f_t=f_t(z;\al_1,\al_2,\bt_1,\bt_2)$.
Then $|||\bt^-|||=1$.
There exists a sufficiently large $C_0$
such that the following asymptotic expansion holds outside the set $\Omega$ of Theorem
\ref{theorem: Toeplitz 2}:
\begin{multline}\label{asDb1}
D_{n-1}(f^{-}_t)=e^{-i(n-1)t}b_0^{-1}D_n(f_t)\\
\times
\begin{cases}
-r(-2int)\frac{b_-(1)}{b_+(1)}
 t\left(\frac{nt}{\sin t}\right)^{2(\bt_1+\bt_2)}
 e^{i\pi(-\al_1+3\bt_1+\al_2+\bt_2)}(1+\bigO(t)),& 0<t\le C_0/n \cr
 n^{2\bt_1-1}z_1^{-n+1}\frac{b_-(z_1)}{b_+(z_1)}
\frac{\Gamma(1+\al_1-\bt_1)}{\Gamma(\al_1+\bt_1)}e^{i(\pi-2t)\al_2}(2\sin t)^{-2\bt_2}\cr
\qquad\qquad\times(1+\bigO((nt)^{-1}))\cr
\qquad +n^{2\bt_2-1}z_2^{-n+1}\frac{b_-(z_2)}{b_+(z_2)}
\frac{\Gamma(1+\al_2-\bt_2)}{\Gamma(\al_2+\bt_2)}e^{i(-\pi+2t)\al_1}(2\sin t)^{-2\bt_1}\cr
\qquad\qquad\times
(1+\bigO((nt)^{-1})),& C_0/n<t<t_0,
 \end{cases}
 \end{multline}
as $n\to\infty$,
with the error term uniform for $-2int$ bounded away from $\Omega$.
Here the asymptotics for $D_n(f_t)$ are given by Theorem \ref{theorem: Toeplitz 2}, and $r(s)$ is a Painlev\'e V function defined in Section
\ref{section Lax}. In particular, $r(s)$ is related to $\si(s)$ by (\ref{identity r}), and has the
large-$s$ asymptotics (\ref{as r}) and the small-$s$ asymptotics (\ref{r small s}).
\end{theorem}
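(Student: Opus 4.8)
The plan is to derive \eqref{asDb1} by relating the determinant $D_{n-1}(f^-_t)$ to $D_n(f_t)$ through an identity for Toeplitz determinants that differ by one size and by a shift $\beta_2\mapsto\beta_2-1$, and then to read off the right-hand side from the Riemann–Hilbert (RH) analysis already performed for $D_n(f_t)$ in the proof of Theorem~\ref{theorem: Toeplitz 2}. Concretely, the symbol $f^-_t$ has $|||\bt^-|||=1$, so $f^-_t$ is exactly a symbol of the type to which the nonuniform expansion \eqref{as FH2b1} applies for fixed $t$; the RH problem for the orthogonal polynomials with weight $f_t$ on the unit circle contains, among its entries, the objects that compute $D_{n-1}$ with the shifted parameters. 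The first step is therefore to express the ratio $D_{n-1}(f^-_t)/D_n(f_t)$ in terms of a single entry (or a fixed combination of entries) of the solution $Y(z)$ of the master RH problem at $z=0$ or $z=\infty$ — this is the standard ``one-step'' relation $D_{n-1}(f^-)=c\,D_n(f)\cdot(\text{RH data})$, with $c$ an explicit elementary prefactor accounting for the $b_0$, the $z_1,z_2$ normalizations in \eqref{symbol}, and the $e^{-i(n-1)t}$ factor.

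The second step is the asymptotic evaluation of that RH data. This splits according to the size of $t$ relative to $1/n$, exactly as in Theorem~\ref{theorem: Toeplitz 2}. For $C_0/n<t<t_0$ the two singularities $z_1,z_2$ are well separated on the scale of the RH analysis, and one uses the outer parametrix together with the two local (confluent hypergeometric) parametrices at $z_1$ and $z_2$; matching these produces precisely the two-term structure with $z_1^{-n+1}$ and $z_2^{-n+1}$, the Gamma-function ratios $\Gamma(1+\al_j-\bt_j)/\Gamma(\al_j+\bt_j)$, the factors $(2\sin t)^{-2\bt_k}$, and the $b_\pm(z_j)$ corrections — this is the merging-singularities analog of the computation behind \eqref{as FH2b1}, and the error is $\bigO((nt)^{-1})$. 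For $0<t\le C_0/n$ the two singularities sit inside a single shrinking disk and are resolved by the Painlevé~V parametrix built in Section~\ref{section Lax}; here the relevant RH datum is expressed through the function $r(s)$ with $s=-2int$, giving the first branch of \eqref{asDb1}, with the explicit prefactor $t(nt/\sin t)^{2(\bt_1+\bt_2)}e^{i\pi(-\al_1+3\bt_1+\al_2+\bt_2)}$ emerging from the normalization constants of that parametrix (the $b_\pm(1)$ ratio coming from the analytic factor $e^V$, and the power of $t$ and phase from the jump structure of the model problem at the scaled variable). One must then check that the two branches agree to leading order in the overlap region $t\asymp 1/n$, using the large-$s$ asymptotics \eqref{as r} of $r(s)$ to match the $r(-2int)$-branch against the two-term branch; the consistency of \eqref{identity r} with the known expansion of $\si(s)$ guarantees this.

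The uniformity statement — that the error is uniform for $-2int$ bounded away from $\Omega$ — is inherited directly from the uniformity already established in Theorem~\ref{theorem: Toeplitz 2}: the set $\Omega$ is exactly where the Painlevé~V RH problem (equivalently, the small-disk parametrix) fails to be invertible, and away from it the parametrix is bounded with bounded inverse uniformly in $s=-2int$ on compact subsets of $(0,-i\infty)\setminus\Omega$ and, near $s=0$, by the $s\to 0$ behavior encoded in \eqref{r small s}. Finally, one records the small-$t$ expansion: plugging \eqref{r small s} into the first branch and using $\sin t=t(1+\bigO(t^2))$ collapses the prefactor to $-r(-2int)\tfrac{b_-(1)}{b_+(1)}t\,e^{i\pi(-\al_1+3\bt_1+\al_2+\bt_2)}(1+\bigO(t))$ as written.

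The main obstacle I expect is bookkeeping of the constant prefactors: tracking every phase $e^{i\pi(\cdots)}$, every factor of $z_j^{-n+1}$, every $b_\pm$ ratio, and every $(2\sin t)$ power through the one-step determinant identity and then through the two distinct parametrix normalizations, and verifying that the $t\le C_0/n$ branch and the $t>C_0/n$ branch are genuinely the same function in the overlap. The RH asymptotics themselves are routine given Section~\ref{section Lax} and the proof of Theorem~\ref{theorem: Toeplitz 2}; it is the exact matching of elementary factors (and the correct branch of each logarithm/power) that requires care.
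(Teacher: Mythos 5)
Your proposal matches the paper's proof in structure and all essential steps: the key identity you invoke is, in the paper's notation, $D_{n-1}(f^-_t)=z_2^{n-1}\,\widehat\phi_{n-1}(0)\,\chi_{n-1}\,D_n(f_t)$ (Theorem~1.18 of \cite{DIK2}), and the paper likewise evaluates $\widehat\phi_{n-1}(0)\chi_{n-1}=-\lim_{z\to\infty}z^{-n+1}Y_{21}(z)$ from the RH solution $\Upsilon$, using the Painlev\'e parametrix and $r(s)=\Psi_{1,12}$ for $0<t\le C_0/n$ and the two confluent hypergeometric parametrices for $C_0/n<t<t_0$, then matching the two regimes via the large-$s$ expansion~(\ref{as r}) of $r$. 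This is the same approach; the only content you left implicit is the precise bookkeeping of $\widehat E(1)$, $\widehat E_1(z_1)$, $\widehat E_2(z_2)$, which you correctly flagged as the delicate part.
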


\begin{remark}
The large-$s$ expansion for $r(s)$ implies, by Remark \ref{2beta1} below,
that the 2 parts of the asymptotics (\ref{asDb1}) coincide in a neighborhood of the boundary
$t=C_0/n$. Thus (\ref{asDb1}) is a complete analogue of (\ref{asT}).
\end{remark}

\begin{remark}\label{remark: degenerate 2}
If $\alpha_1=\alpha_2=\beta_1=\beta_2=\frac{1}{2}$, the function $r(s)$ is elementary,
namely (as we obtain in Section \ref{sec sp case}),
\[
r(s)= -\frac{\sin nt}{(nt)^2},\qquad s=-2int.
\]
In Section \ref{sec sp case} we then show that, in this case, the part of (\ref{asDb1}) for
$C_0/n<t<t_0$ holds uniformly for the whole range $0<t<t_0$. This fact was used
in \cite{DIK3} to analyze the eigenvalues of the Toeplitz matrix $T_n(g)$ with a smooth
real-valued symbol $g$ in a small neighbourhood of the edge of the spectrum.
The rest of the eigenvalues of $T_n(g)$ were analyzed in \cite{DIK3}
using the results of \cite{DIK2} for FH singularities at a nonzero distance from each other.
\end{remark}

Finally, let us verify that (\ref{asDb1}) reduces to (\ref{as FH1}) as $|s|\to 0$, and to (\ref{as FH2b1})
as $|s|\to\infty$. In the former case, we substitute the small $s$ asymptotics (\ref{r small s}) of $r(s)$
and the formula (\ref{as FH1}) for $D_n(f_t)$ into (\ref{asDb1}) and obtain by a straightforward calculation which uses the property $G(z+1)=\Gamma(z)G(z)$
of the Barnes G-function that $D_{n-1}(f_t^{-})$
is given by  (\ref{as FH1}) with $n$ replaced by $n-1$ and with $\bt_2$ replaced by $\bt_2^{-}$.
Consider now $|s|\to\infty$. We have $\bt'=\bt^{-}$, and $\bt_1''=\bt_1^{-}-1=\bt_1-1$, 
$\bt_2''=\bt_2^{-}+1=\bt_2$.
Using the expansion (\ref{as FH2}) for $D_n(f_t)$ and the second part of  (\ref{asDb1}), we obtain
(\ref{as FH2b1}) for $D_{n-1}(f_t^{-})$.
In particular, this extends the validity of 
(\ref{as FH2b1}) (with the appropriately changed estimate for the error term): cf. Theorem \ref{theorem: extension} above.

\subsubsection*{Applications}
In view of the applications we discuss below, consider a special case $V(z)=0$, $\bt_1=\bt_2=0$, $\al_1=\al_2\equiv\alpha\in\mathbb R$. We first prove the following.

\begin{theorem}\label{theorem: integral}
Let
\be\label{f-sp}
f_t(z)=|z-e^{it}|^{2\al}|z-e^{-it}|^{2\al},\qquad \al>-\frac{1}{4},\quad t\in\mathbb R.
\ee
Let $0<t_1<\pi$.
Then, as $n\to\infty$,
\be\label{intall}
\int_0^{t_1}D_n(f_t)dt=
\begin{cases}
C_1(t_1,\al) n^{2\al^2}(1+o(1))& \mbox{if}\quad 2\al^2<1,\cr
C_2 n\ln n(1+o(1))& \mbox{if}\quad 2\al^2=1,\cr
\int_0^{\om(n)/n}D_n(f_t)dt(1+o(1))=
C_3(\al) n^{4\al^2-1}(1+o(1))& \mbox{if}\quad 2\al^2>1.
\end{cases}
\ee
Here $\om(x)$ is any positive, smooth for large $x$ function such that $\om(n)\to\infty$,
$\om(n)/n\to 0$ as $n\to\infty$; $C_1$, $C_2$, $C_3$ are positive constants
\begin{align}
&C_1(t_1,\al)=\frac{G(1+\al)^4}{2^{2\al^2}G(1+2\al)^2}\int_0^{t_1}(\sin t)^{-2\al^2} dt,\\
&C_2=\frac{G(1+\frac{1}{\sqrt{2}})^4}{2G(1+\sqrt{2})^2},\\
&C_3(\al)=\frac{G(1+2\al)^2}{G(1+4\al)}\left[
\int_0^1\exp\left\{\int_0^{-2iu}\frac{\si(s)-2\al^2}{s}ds\right\}du\right.\nonumber\\ 
&\qquad\quad\left.+
\exp\left\{\int_0^{-2i}\frac{\si(s)-2\al^2}{s}ds\right\}\int_1^\infty 
\exp\left\{\int_{-2i}^{-2iu}\frac{\si(s)}{s}ds\right\}u^{-2\al^2}du\right],
\end{align}
where $\si(s)$ (real-valued for $s\in -i\mathbb R_+$) is the solution to the Painlev\'e V equation appearing
in (\ref{asT}) with the parameters $\al_1=\al_2=\al$, $\bt_1=\bt_2=0$.  
\end{theorem}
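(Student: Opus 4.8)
The plan is to insert into $\int_0^{t_1}D_n(f_t)\,dt$ the large-$n$ asymptotics of $D_n(f_t)$ — the uniform expansion of Theorem~\ref{theorem: Toeplitz} near $t=0$ and the classical Fisher--Hartwig expansion~(\ref{as FH2}) for $t$ bounded away from $0$ — and then integrate. Specialising Theorem~\ref{theorem: Toeplitz} and~(\ref{as FH1}) to $V\equiv 0$, $\bt_1=\bt_2=0$, $\al_1=\al_2=\al$, one gets, uniformly for $t\in(0,t_0)$,
\[
D_n(f_t)=n^{4\al^2}\,\frac{G(1+2\al)^2}{G(1+4\al)}\;h(nt)\Bigl(\tfrac{\sin t}{t}\Bigr)^{-2\al^2}(1+o(1)),
\]
where $h(u):=\exp\{\int_0^{-2iu}\tfrac{\si(s)-2\al^2}{s}ds\}$ and $\si$ is the real, pole-free (on $-i\mathbb R^+$) solution of Theorem~\ref{theorem: PV} with these parameters, so $h>0$; and, since~(\ref{as FH2}) holds uniformly for $t$ in compact subsets of $(0,\pi)$ \cite{DIK2,DIK22}, $D_n(f_t)=n^{2\al^2}\frac{G(1+\al)^4}{G(1+2\al)^2}(2\sin t)^{-2\al^2}(1+o(1))$ there.

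The second step is to describe $h(u)$ at both ends. It is continuous and positive on $(0,\infty)$ since $\si$ has no poles there. As $u\to 0_+$, the expansion~(\ref{as0}) (or~(\ref{as0half}) when $4\al\in\mathbb N\cup\{0\}$) shows that $(\si(s)-2\al^2)/s$ is integrable at $s=0$ (this is where $\al>-1/4$ enters), whence $h(u)\to 1$. As $u\to\infty$, the expansion~(\ref{asinfty}) with $\bt_1=\bt_2=0$ gives $\si(s)=\bigO(|s|^{-1})$, so $\int_0^{-2iu}\tfrac{\si(s)-2\al^2}{s}ds=-2\al^2\ln u+\ln K+\bigO(u^{-1})$ and $h(u)=Ku^{-2\al^2}(1+\bigO(u^{-1}))$, where $\ln K=\lim_{u\to\infty}\bigl(\int_0^{-2iu}\tfrac{\si(s)-2\al^2}{s}ds+2\al^2\ln u\bigr)$. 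The value of $K$ is supplied by the integral identity~(\ref{intidentity}): with $\al_1=\al_2=\al$, $\bt_1=\bt_2=0$ it determines $\ln K$ in closed form, equivalently
\[
\frac{G(1+2\al)^2}{G(1+4\al)}\,K=2^{-2\al^2}\,\frac{G(1+\al)^4}{G(1+2\al)^2}.
\]
Consequently $h(u)\le C\min(1,u^{-2\al^2})$ on $(0,\infty)$, and $\int_0^\infty h(u)\,du<\infty$ if and only if $2\al^2>1$.

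Now split $\int_0^{t_1}D_n(f_t)\,dt=\int_0^{t_0}+\int_{t_0}^{t_1}$. The outer piece equals $n^{2\al^2}\frac{G(1+\al)^4}{G(1+2\al)^2}\int_{t_0}^{t_1}(2\sin t)^{-2\al^2}dt\,(1+o(1))$. In the inner piece substitute $t=u/n$: it equals $n^{4\al^2-1}\frac{G(1+2\al)^2}{G(1+4\al)}I_n(1+o(1))$ with $I_n:=\int_0^{nt_0}h(u)\bigl(\tfrac{\sin(u/n)}{u/n}\bigr)^{-2\al^2}du$. If $2\al^2\le 1$, using $u^{-2\al^2}\bigl(\tfrac{\sin(u/n)}{u/n}\bigr)^{-2\al^2}=n^{-2\al^2}(\sin(u/n))^{-2\al^2}$ and $h(u)=Ku^{-2\al^2}(1+\bigO(u^{-1}))$ one finds, after $v=u/n$, that $I_n=Kn^{1-2\al^2}\int_{1/n}^{t_0}(\sin v)^{-2\al^2}dv+\bigO(1)$. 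For $2\al^2<1$ this integral converges as $n\to\infty$; adding the outer piece and using the value of $K$ together with $(2\sin t)^{-2\al^2}=2^{-2\al^2}(\sin t)^{-2\al^2}$, the inner and outer contributions recombine into $\tfrac{G(1+\al)^4}{2^{2\al^2}G(1+2\al)^2}\int_0^{t_1}(\sin t)^{-2\al^2}dt\cdot n^{2\al^2}$, i.e.\ $C_1(t_1,\al)n^{2\al^2}(1+o(1))$. For $2\al^2=1$, $\int_{1/n}^{t_0}(\sin v)^{-1}dv=\ln n+\bigO(1)$, so the inner piece is $n\cdot\tfrac{G(1+2\al)^2}{G(1+4\al)}K\ln n\,(1+o(1))=C_2\,n\ln n\,(1+o(1))$ and the outer piece is $\bigO(n)=o(n\ln n)$. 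If $2\al^2>1$, I would instead use dominated convergence: $h(u)\bigl(\tfrac{\sin(u/n)}{u/n}\bigr)^{-2\al^2}\le C\min(1,u^{-2\al^2})\in L^1(0,\infty)$ and the integrand converges pointwise, so $I_n\to\int_0^\infty h(u)du$; writing $\int_0^\infty=\int_0^1+\int_1^\infty$ and, for $u\ge 1$, $h(u)=\exp\{\int_0^{-2i}\tfrac{\si(s)-2\al^2}{s}ds\}\exp\{\int_{-2i}^{-2iu}\tfrac{\si(s)}{s}ds\}u^{-2\al^2}$ reproduces $\tfrac{G(1+4\al)}{G(1+2\al)^2}C_3(\al)$, hence the inner piece is $C_3(\al)n^{4\al^2-1}(1+o(1))$, while the outer piece is $\bigO(n^{2\al^2})=o(n^{4\al^2-1})$. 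The same computation with the upper limit $nt_0$ replaced by $\om(n)$ (using $\om(n)\to\infty$, $\om(n)/n\to 0$) gives $\int_0^{\om(n)/n}D_n(f_t)dt=C_3(\al)n^{4\al^2-1}(1+o(1))$ as well.

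The main obstacle is the identification of the constant $K$: this is exactly the integral identity~(\ref{intidentity}), flagged in the paper as a deep global statement about $\si$, and it is what converts the $\si$-dependent limit into the Barnes-$G$ expression above, thereby producing $C_2$ and $C_3$ in closed form and making the inner and outer parts dovetail in the subcritical regime. The remaining points needing care are the uniformity of the two expansions and the two interchange-of-limit steps — integrability of $(\si(s)-2\al^2)/s$ at $s=0$, which uses $\al>-1/4$, and the dominated-convergence bound at $u=\infty$ in the supercritical regime, which uses $2\al^2>1$.
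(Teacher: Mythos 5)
Your proof is correct and reaches the same constants by a genuinely different route. You split $\int_0^{t_1}$ at a \emph{fixed} $t_0$ (two regions), whereas the paper splits at the \emph{moving} points $1/n$ and $\omega(n)/n$ (three regions). The paper's outer region $[\omega(n)/n,t_1]$ shrinks towards $0$, which forces them to invoke Theorem~\ref{theorem: extension} to extend the validity of (\ref{as FH2}) there; your outer region $[t_0,t_1]$ is a fixed compact set away from $0$, so the classical Fisher--Hartwig uniformity from \cite{DIK2,DIK22} suffices and Theorem~\ref{theorem: extension} is not needed. In exchange, you lean on the integral identity (\ref{intidentity}) in all three regimes to pin down the constant $K$ and make the inner and outer contributions dovetail, whereas the paper only needs (\ref{intidentity}) in the borderline case $2\alpha^2=1$: in the subcritical case their intermediate and innermost pieces are of strictly lower order, $n^{4\alpha^2-1}$ and $o(n^{2\alpha^2})$, so no constant matching is required. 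Your approach is more unified --- all three cases follow from a single analysis of $I_n=\int_0^{nt_0}h(u)\bigl(\tfrac{\sin(u/n)}{u/n}\bigr)^{-2\alpha^2}du$ --- and the supporting estimates are correctly sourced and used: integrability of $(\sigma(s)-2\alpha^2)/s$ at $s=0$ from (\ref{as0})/(\ref{as0half}) (this is where $\alpha>-1/4$ enters), the decay $h(u)=Ku^{-2\alpha^2}(1+O(u^{-1}))$ with $K$ identified via (\ref{intidentity}), and the dominated-convergence bound for $2\alpha^2>1$. One minor point worth noting: the $O(1)$ error bound you claim for $I_n$ after subtracting $Ku^{-2\alpha^2}$ uses $2\alpha^2>0$ (it becomes $O(\ln n)$ when $\alpha=0$), but $\alpha=0$ is the trivial case $D_n\equiv1$, so nothing is lost.
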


\begin{remark}
In the case $2\al^2<1$,
the leading order asymptotic term for the integral comes from the expansion (\ref{as FH2}), i.e., from the integration outside of
a contracting neighborhood $[0, \om(n)/n)$, whereas in the case  $2\al^2>1$, the leading order asymptotic term 
comes from the integration over this neighborhood. 
\end{remark}

\begin{proof}
Note first, as follows from (\ref{asT}) and the positivity of $D_n(f_t)$ for real-valued symbols $f_t$,
that $\si(s)$ is real-valued for $s\in -i\mathbb R_+$ with our choice (\ref{f-sp}) of $f_t$.
Moreover, we have by Theorem \ref{theorem: PV},
\begin{align}
&\si(s)=\bigO(|s|^{-1}),&& s\to -i\infty,\\ &\si(s)=2\al^2+\bigO(|s|^{1+4\al})+\bigO(|s\ln|s||),&& s\to-i0.
\end{align}

We divide the interval of integration $t\in (0,t_1)$ into 3 regions, $0<nt\le 1$, $1<nt\le\om(n)$, $\om(n)/n<t\le t_1$.

For $0<nt\le 1$ (and, in general, for $0<t<t_0$), we obtain from Theorem \ref{theorem: Toeplitz} setting
$\al_1=\al_2=\al$, $\bt_1=\bt_2=0$, $V(z)=0$ in (\ref{asT}) and (\ref{as FH1}):
\be\label{begint1}
\ln D_n(f_t)= 4\al^2\ln n +\ln\frac{G(1+2\al)^2}{G(1+4\al)}+
\int_{0}^{-2int}\frac{\sigma(s)-2\al^2}{s}ds
-2\al^2\ln\frac{\sin t}{t}+o(1),
\ee
as $n\to\infty$.
Note that both $\int_{0}^{-2int}(\sigma(s)-2\al^2)\frac{ds}{s}$ and $\ln\frac{\sin t}{t}$ are uniformly bounded for 
$0<nt\le 1$.

For $1<nt\le \om(n)$, we write the above formula in the form:
\begin{multline}\label{begint2}
\ln D_n(f_t)= 2\al^2\ln n +\ln\frac{G(1+2\al)^2}{G(1+4\al)}+\int_{0}^{-2i}\frac{\sigma(s)-2\al^2}{s}ds+\\
\int_{-2i}^{-2int}\frac{\sigma(s)}{s}ds
-2\al^2\ln\sin t+o(1),
\end{multline}
as $n\to\infty$,
and note that $\int_{-2i}^{-2int}\frac{\sigma(s)}{s}ds$ is uniformly bounded for $1<nt\le\om(n)$.

For $\om(n)/n<t\le t_1$, by Theorem \ref{theorem: extension},
we can use the expansion (\ref{as FH2}) for $\ln D_n(f_t)$.

We are now ready to compute the integral. First, using (\ref{begint1}), replacing $(\sin t)/t$ by $1$ 
to the leading order, and changing the integration variable $t=u/n$, we obtain
\be\label{int1}
\int_0^{1/n} D_n(f_t)dt=
n^{4\al^2-1}\frac{G(1+2\al)^2}{G(1+4\al)}
\int_0^1\exp\left\{\int_0^{-2iu}\frac{\si(s)-2\al^2}{s}ds\right\}du(1+o(1)).
\ee
Next, using (\ref{as FH2}) and Theorem \ref{theorem: extension} (uniformity of the error term
in the interval $(t_0,t_1)$ follows from the analysis in \cite{DIK22}), we obtain
\be\label{int3}
\int_{\om(n)/n}^{t_1} D_n(f_t)dt=
n^{2\al^2}\frac{G(1+\al)^4}{2^{2\al^2}G(1+2\al)^2}
\int_{\om(n)/n}^{t_1}(\sin t)^{-2\al^2} dt(1+o(1)).
\ee
Finally, by (\ref{begint2}),
\begin{multline}\label{int2}
\int_{1/n}^{\om(n)/n} D_n(f_t)dt=
n^{2\al^2}\frac{G(1+2\al)^2}{G(1+4\al)}\\
\times \ \exp\left\{\int_0^{-2i}\frac{\si(s)-2\al^2}{s}ds\right\}
\int_{1/n}^{\om(n)/n}\psi(t)t^{-2\al^2}dt\ (1+o(1)),
\end{multline}
where
\[
\psi(t)=\exp\left\{\int_{-2i}^{-2int}\frac{\si(s)}{s}ds\right\}\left(\frac{\sin t}{t}\right)^{-2\al^2}
\]
is bounded and bounded away from zero, uniformly for $1/n<t<t_0$. 
The integration region $(\frac{1}{n},\frac{\om(n)}{n})$ is the most interesting one.
If $2\al^2<1$,  the rightmost integral in (\ref{int2}) converges at zero, and we can write (\ref{int2}) as follows:
\be\label{int22}
\int_{1/n}^{\om(n)/n} D_n(f_t)dt=o(n^{2\al^2}),\qquad 2\al^2<1.
\ee
This formula together with (\ref{int1}) and (\ref{int3}) proves the theorem in the case of $2\al^2<1$.
We see that the contributions of (\ref{int1}) and (\ref{int22}) are only subleading.

If $2\al^2>1$, the rightmost integral in (\ref{int2}) does not converge at zero. We write
\be\label{t-u}
\int_{1/n}^{\om(n)/n}\psi(t)t^{-2\al^2}dt=n^{2\al^2-1}\int_{1}^{\om(n)}\psi(u/n)u^{-2\al^2}du,
\ee
where the integral in the r.h.s. converges at infinity.
We have
\begin{multline}
\int_{1}^{\om(n)}\psi(u/n)u^{-2\al^2}du=\int_{1}^{\om(n)}\exp\left\{\int_{-2i}^{-2iu}\frac{\si(s)}{s}ds\right\}u^{-2\al^2}du
(1+\bigO([\om(n)/n]^{2}))\\=\int_{1}^{\infty}\exp\left\{\int_{-2i}^{-2iu}\frac{\si(s)}{s}ds\right\}u^{-2\al^2}du
(1+o(1)).
\end{multline}
Substituting this into (\ref{t-u}), and that into (\ref{int2}), and adding the contribution of (\ref{int1}), we obtain (\ref{intall}) for $2\al^2>1$:
a simple analysis of (\ref{int3}) shows that it gives only a subleading in $n$ contribution.

If $2\alpha^2=1$, the integral (\ref{int1}) is
\be\label{int1bis}
\int_0^{1/n} D_n(f_t)dt=\bigO(n).
\ee
We rewrite the integral (\ref{int3}) for $2\alpha^2=1$ (by adding and subtracting $1/t$ in the integral on 
the r.h.s.) as follows:
\be\label{int3bis}
\int_{\om(n)/n}^{t_1} D_n(f_t)dt=
n\frac{G(1+\frac{1}{\sqrt{2}})^4}{2G(1+\sqrt{2})^2}
\left(\int_{0}^{t_1}\left(\frac{1}{\sin t}-\frac{1}{t}\right) dt+\ln\frac{nt_1}{\omega(n)}\right)(1+o(1)).
\ee
For $2\alpha^2=1$, the integral in the r.h.s. of (\ref{t-u}) does not converge at infinity. We then add and subtract
from the integrand $u^{-1}\exp\{\int_{-2i}^{-i\infty}\si(s)s^{-1}ds\}$. Substituting the result into (\ref{int2}) and using the identity (\ref{intidentity}), we obtain
\begin{multline}\label{int2bis}
\int_{1/n}^{\om(n)/n} D_n(f_t)dt=
n\frac{G(1+\frac{1}{\sqrt{2}})^4}{2G(1+\sqrt{2})^2}\left[\ln \om(n)\right.\\
\left.+\int_1^{\infty}\left(\exp\left\{-\int_{-2iu}^{-i\infty}\frac{\si(s)}{s}ds\right\}-1\right)\frac{du}{u}\right](1+o(1)).
\end{multline}
Adding (\ref{int1bis}), (\ref{int2bis}), and (\ref{int3bis}) together, we obtain the statement of the theorem for 
the case $2\alpha^2=1$. (Note that the contribution of the terms of order $n\log\om(n)$ cancels in the sum.) 
This completes the proof of (\ref{intall}).
\end{proof}

Theorem \ref{theorem: integral} is relevant for some problems in random matrices, number theory, and statistical physics.

\medskip

Let $p_n(\th)=\det(U-e^{i\th}I)$ be the characteristic polynomial of an $n\times n$ matrix $U$
from the circular unitary ensemble of random matrices, i.e., the distribution of $U$ is given
by the Haar measure on the unitary group.
There exists a large body of conjectural
evidence (see \cite{FyodorovKeating} and references therein, see also \cite{K-hankel}) which relates $p_n(\th)$ for large $n$ to the Riemann $\zeta$-function on the critical line $\zeta(1/2+ix)$. First was an observation of Keating and Snaith \cite{KS} that the averages $\int_0^T |\zeta(1/2+ix)|^{2\al} dx$ behave for large $T$ in a similar way as the expectation 
\be\label{m1}
\mathbb E\left\{ |p_n(\th)|^{2\al}\right\}
\ee 
does for large $n$ (note that it does not depend on $\th$
due to the rotational symmetry). It follows by the Heine-type multi-integral representation
for Toeplitz determinants
\be\label{Heine}
D_n(f)=
\frac{1}{n!} \int^{2\pi}_0 \dots \int^{2\pi}_0 \prod_{0\le j < k \le n-1}
\left|e^{i\th_j} - e^{i\th_k} \right|^2 \prod^{n-1}_{j=0} f \left(e^{i\th_j}\right)
\,\frac{d\th_j}{2\pi}
\ee
that the expectation (\ref{m1}) is the Toeplitz determinant $D_n(f)$
with symbol $f(z)=|z-1|^{2\al}$, and therefore its large $n$ behavior is given by a particular
case of (\ref{as FH1}).  
The observation of  Keating and Snaith enabled them to make a remarkably
detailed conjecture on the large $T$ asymptotics of the averages of the $\zeta$-function,
in particular, to predict the appearance of Barnes' G-function in the formula.

In a similar vein, it is argued in \cite{FyodorovKeating} that maximal values of $|\zeta(1/2+ix)|$ over an interval of the critical line, a classical problem, are related to the distribution of large values of $|p_n(\th)|$.
The characteristic polynomial $p_n(\th)$ also models \cite{FyodorovKeating} extreme properties
of the Gaussian free field and the $1/f$-noise, and is related
to the question of the so-called freezing transition in statistical models.
In this connection, one would like to estimate the moments (cf. (\ref{m1})) \cite[formula (67)]{FyodorovKeating}: 
\[
M_k=\int_0^L d\th_1 \cdots\int_0^L d\th_k \mathbb E \left\{ |p_n(\th_1)|^{2\al}\cdots |p_n(\th_k)|^{2\al}\right\},\qquad k=2,3,\dots.
\]
The expectation inside the integrals is the Toeplitz determinant $D_n(f)$ with symbol
$f(z)=\prod_{j=1}^k |z-e^{i\th_j}|^{2\al}$. Let $k=2$ and fix $L>0$. Then, 
using Theorem \ref{theorem: integral}  and the invariance of the
determinant with respect to rotations of the circle, we immediately obtain that the second moment
$M_2=\bigO(n^{2\al^2})$ for $2\al^2<1$, and  $M_2=\bigO(n^{4\al^2-1})$ for $2\al^2>1$. This 
behavior of $M_2$ was conjectured by Fyodorov and Keating \cite{FyodorovKeating}. 
Note that there are further interesting 
conjectures in \cite{FyodorovKeating} about the asymptotics of a general moment $M_k$ and about the 
distribution of large values of $|p_n(\th)|$, but we do not address them here.  

\medskip

Another application of  Theorem \ref{theorem: integral} is to the problem of Bose-Einstein condensation.
Consider the one-dimensional gas of impenetrable bosons.  This particle system was introduced
by Girardeau \cite{Gir} in 1960. It is one of a very few many-body systems which allow exact analysis without
any approximations. (It is a limiting case of the one-dimensional Bose gas with $\delta$-function interactions,
a model introduced later by Lieb and Liniger \cite{LL}).
Namely, consider the following system of $n\ge 2$ particles in one dimension in a box of size $L$: the wavefunction $\psi(x_1,\dots,x_n)$ obeys the free-particle Schr\"odinger equation, $\psi$ satisfies the periodic boundary conditions with period $L$, 
$\psi$ is symmetric with respect to interchange of particles, $\psi=0$ whenever two particle coordinates coincide. Then the wavefunction of the ground state of the system is the following:
\[
\psi_0(x_1,\dots,x_n)=(n! L^n)^{-1/2}\prod_{1\le j<k\le n}|e^{2\pi ix_j/L}-e^{2\pi ix_k/L}|.
\]
The one-particle reduced density matrix is given by
\be\label{Lrho}
\rho^{(n,L)}(x-y)=n\int_0^Ldx_1\cdots\int_0^Ldx_{n-1}\psi_0(x_1,\dots,x_{n-1},x)\psi_0(x_1,\dots,x_{n-1},y).
\ee

Let $R_n(t)$ be defined by the identity
\be\label{LdefR}
\rho^{(n,L)}(\xi)=\frac{1}{L}R_n\left(\frac{2\pi\xi}{L}\right).
\ee
It follows from (\ref{Lrho})
and the Heine representation (\ref{Heine}) that $R_n(t)$ is the Toeplitz determinant 
\be\label{Ldet}
R_n(t)=D_{n-1}(f_{t/2}),\qquad f_{t/2}(z)=|z-e^{it/2}||z-e^{-it/2}|.
\ee
This fact was first noticed by Lenard \cite{Len1} in 1963. He used it to prove the absence of Bose-Einstein condensation in the ground state (and thus to confirm a result of Schultz \cite{Schultz} who showed the absence of condensation by another method)
as follows.

The Fourier coefficient of $\rho^{(n,L)}(\xi)$
\[
\rho_k=\int_0^L \rho^{(n,L)}(\xi) e^{-2\pi i k\xi/L} d\xi,\qquad k=0,\pm 1,\dots,
\]
is the expectation value of the number of particles with momentum $2\pi k/L$.
According to the criterion of Penrose and Onsager, there is no Bose-Einstein condensation if
the largest eigenvalue of the density matrix is less than of order $n$ as $n=L\to\infty$.
Since $\rho^{(n,n)}(x-y)$ is translationally invariant, its eigenvalues are the Fourier coefficients $\rho_k$.
The largest eigenvalue is $\rho_0$. Thus, by (\ref{LdefR}), (\ref{Ldet}), Lenard had to evaluate the integral
\be\label{Lrho0}
\rho_0=\frac{1}{2\pi}\int_0^{2\pi}R_n(t)dt=
\frac{1}{2\pi}\int_0^{2\pi}D_{n-1}(f_{t/2})dt,\qquad  f_{t/2}(z)=|z-e^{it/2}||z-e^{-it/2}|.
\ee
At that time, in 1963, even (\ref{as FH2}) was not known. 
However, Szeg\H o obtained the bound (see \cite{DIK-hist} for a historical account):
\[
R_n(t)<\left|\frac{en}{\sin(t/2)}\right|^{1/2}.
\]
Substituting this into (\ref{Lrho0}), Lenard observed that $\rho_0=\bigO(n^{1/2})$, which implies, 
in particular, that there is no Bose-Einstein condensation in the ground state.

The question of precise large $n$ asymptotics of the largest (zero-momentum) occupation 
number $\rho_0$ was addressed by Dyson \cite{Dy3}. Using a Coulomb gas interpretation
of $D_{n-1}(f_{t/2})$ and physical arguments (see \cite{DIK-hist} for details), Dyson conjectured that
\be\label{LDy}
\rho_0=C_D n^{1/2}(1+o(1)),\qquad C_D=\left({e\over\pi}\right)^{1/2}2^{-5/6}A^{-6}
\Gamma\left({1\over 4}\right)^2,
\ee
where $A=e^{\frac{1}{12}} e^{-\zeta'(-1)}$ is Glaisher's constant.

We are now in a position to verify this conjecture. Indeed, it follows from (\ref{Lrho0}), from Theorem \ref{theorem: integral} with $\al=1/2$ ($2\al^2=2(1/2)^2=1/2<1$), and from well-known formulae for Barnes' 
G-function that
\begin{multline}
\rho_0=\frac{1}{2\pi}\int_0^{2\pi}D_{n-1}(f_{t/2})dt=
\frac{2}{\pi}\int_0^{\pi/2}D_{n-1}(f_{t})dt\\=
\frac{\sqrt{2n}}{\pi}G(1+1/2)^4\int_0^{\pi/2}(\sin t)^{-1/2} dt (1+o(1))=
\frac{\sqrt{n\pi}}{2}\Gamma(1/4)^2 G(1/2)^4(1+o(1))\\=
\left({en\over\pi}\right)^{1/2}2^{-5/6}A^{-6}
\Gamma\left({1\over 4}\right)^2 (1+o(1)),
\end{multline}
which proves Dyson's conjecture (\ref{LDy}).

For further related conjectures, which can now be approached by similar methods, see 
the work of Forrester, Frankel, Garoni, and Witte \cite{FFGW}.

In \cite{Schultz}, Schultz made his conclusion about the absence of Bose-Einstein condensation by
relating the system to the $XY$ spin-1/2 chain, another exactly solvable model. Manipulating resulting
formulae, Lenard \cite{Len1} obtained an expression for the double-scaling limit
$\rho(\xi)=\lim_{n\to\infty}\rho^{(n,n)}(\xi)$ in terms of a Fredholm minor with a sine-kernel. 
The small and large $\xi$ 
behaviors of $\rho(\xi)$ were analyzed by Vaidya and Tracy in \cite{VaiTr} by identifying $\rho(\xi)$
with a 2-point correlation function of the $XY$ spin chain. Jimbo, Miwa, M\^{o}ri, and Sato \cite{JMMS} then
showed that  $\rho(\xi)$ satisfies Painlev\'e V. Let us compare the formulae of \cite{JMMS} 
with our results. We obtain from (\ref{LdefR}), (\ref{Ldet}), using (\ref{asT}) with $V\equiv 0$,
$\al_1=\al_2=1/2$, $\bt_1=\bt_2=0$, that
\[
 \rho(\xi)=\lim_{n\to\infty}\rho^{(n,n)}(\xi)=\exp\left\{\int_0^{\pi\xi}\widehat\si(x)\frac{dx}{x}\right\},\qquad
 \widehat\si(x)=\si(-2ix)-1/2,
\]
and we have that $\widehat\si(x)=x{d\over dx}\ln\rho(x/\pi)$.
Now it is easy to verify that (\ref{sigmaPV}) for $\al_1=\al_2=1/2$, $\bt_1=\bt_2=0$, reduces
to \cite[Formula (2.22) with changed notation: $\si\to\widehat\si$]{JMMS}, and that the small and large $s$ expansions of Theorem \ref{theorem: PV}
in the case $\al_1=\al_2=1/2$, $\bt_1=\bt_2=0$
are consistent with \cite[Formula (2.23)]{JMMS} and the expansions of \cite{VaiTr}. Note that the latter are much more detailed than Theorem \ref{theorem: PV} in this case.
Theorem \ref{theorem: Toeplitz} completes the analysis of the scaling limit for $\rho^{(n,n)}(\xi)$ 
((\ref{LdefR}), (\ref{Ldet})) in \cite{Len1,Schultz,VaiTr,JMMS} by providing {\it uniform} asymptotics for $0<t<t_0$.

\subsubsection*{Outline}

In Section \ref{secY}, we relate the Toeplitz determinants $D_n(f_t)$ to a system of polynomials orthogonal 
with weight $f_t(z)$ on the unit circle and characterize those polynomials in terms of a Riemann-Hilbert (RH) problem. We obtain an identity for $\frac{d}{dt}\ln D_n(f_t)$ in terms of the solution $Y$ to the RH problem for the orthogonal polynomials. 

In Section \ref{secPsi}, we characterize the Painlev\'e V transcendent $\sigma(s)$ in terms of a RH problem, and we show that this RH problem is solvable for certain values of the parameters. In Section \ref{secaux}, we state an auxiliary RH problem for the confluent hypergeometric function; we use this RH problem in Section \ref{sec large s} and Section \ref{sec s0} to obtain large $s$ and small $s$ asymptotics, respectively, for the Painlev\'e RH problem and for the Painlev\'e function $\sigma$. This gives the proof of Theorem \ref{theorem: PV}. The asymptotics for large $s$ are built out of 3 parametrices: a global one given in terms of elementary functions and 2 local ones around $z_k$, in terms of confluent hypergeometric functions. 
The asymptotics for small $s$ are built out of 2 parametrices: the global one, in terms of a 
confluent hypergeometric function, and the local one, in terms of a hypergeometric function (we have 2 singularities in the same neighborhood in this case). 

In Section \ref{section: RHP Y}, we solve the RH problem for the orthogonal polynomials for large $n$ uniformly for $0<t<t_0$, and in Section \ref{section: as Toep}, we use this solution to prove Theorem \ref{theorem: Toeplitz} and Theorem \ref{theorem: Toeplitz 2}. In Section \ref{section: beta1}, we prove Theorem \ref{theorem: Toeplitz 3}.

The outline of the present paper is similar to that of \cite{CIK2}. However, all the details are different, 
and the analysis here is considerably more involved. The main reasons for this are as follows: 
(a) the transition studied here is between 2 different power-law behaviors of the determinant
whereas in \cite{CIK2} it was between an exponential and a power-law behavior; (b) 
the ``interaction'' of 2 jump-type singularities on the unit circle leads to larger error terms to control,
especially in the case of $|||\bt|||\ge 1/2$. One of the consequences of (a) is that we have to construct different local parametrices near $z=1$ for the orthogonal polynomial RH problem: for $t$ less than of order $1/n$, and for $t$ larger
than that, see Section \ref{sec75}.

\section{Riemann-Hilbert (RH) problem for a system of orthogonal polynomials and a
differential identity.}\label{secY}

\subsection{RH problem for orthogonal polynomials}\label{secY1}
Assume that $D_n(f_t)\neq 0$, $D_{n+1}(f_t)\neq 0$.  Let $\sqrt{z}>0$ for $z>0$.
Define the polynomials $\phi_n$, $\widehat\phi_n$ by the formulae:
\begin{equation}\label{def phi}
\phi_n(z)=\frac{1}{\sqrt{D_n(f_t)D_{n+1}(f_t)}}\left|\begin{matrix}f_{t,0}&f_{t,-1}&\ldots&f_{t,-n}\\f_{t,1}&f_{t,0}&\ldots&f_{t,-n+1}\\\vdots&\vdots&&\vdots\\f_{t,n-1}&f_{t,n-2}&\ldots&f_{t,-1}\\
1&z&\ldots &z^n\end{matrix}\right|=\chi_nz^n+\ldots,
\end{equation}
where the leading coefficient $\chi_n$ is given by \begin{equation}\chi_n=\sqrt{\frac{D_n(f_t)}{D_{n+1}(f_t)}},\end{equation}
and
\begin{equation}\label{def hatphi}
\widehat\phi_n(z)=\frac{1}{\sqrt{D_n(f_t)D_{n+1}(f_t)}}\left|\begin{matrix}f_{t,0}&f_{t,-1}&\ldots&f_{t,-n+1}&1\\f_{t,1}&f_{t,0}&\ldots&f_{t,-n+2}&z\\\vdots&\vdots&&\vdots&\vdots\\
f_{t,n}&f_{t,n-1}&\ldots &f_{t,1}&z^n\end{matrix}\right|=\chi_nz^n+\ldots,
\end{equation}
i.e., up to a constant, $\phi_n(z)$ is the determinant of a Toeplitz matrix with the last row replaced by the monomials $1,\ldots, z^n$, and $\widehat\phi_n(z)$ is the determinant of a Toeplitz matrix with the last column replaced by the monomials $1,\ldots, z^n$. 

If $f_t(e^{i\theta})$ is positive (or $V(e^{i\th})$ is real-valued and $\al_k,i\bt_k\in\mathbb R$, $k=1,2$) 
then as follows, e.g., from the integral representation for a Toeplitz determinant, $D_n(f_t)\neq 0$ for 
all $n\in\mathbb N$, so that $\phi_n(z)$, $\widehat\phi_n(z)$ are defined for all $n$.

The above polynomials satisfy the orthogonality relations
\be
    \frac{1}{2\pi}\int_{C}\phi_n(z)z^{-k}f_t(z)\frac{dz}{iz}=\chi_n^{-1}\delta_{nk},\quad
     \frac{1}{2\pi}\int_{C}\widehat\phi_n(z^{-1})z^k f_t(z)\frac{dz}{iz}=\chi_n^{-1}\delta_{nk},
\ee     
for $k=0,1,\dots, n$,
where $C$ denotes the unit circle oriented counterclockwise.

If $D_n(f_t)$, $D_{n-1}(f_t)$, and $D_{n+1}(f_t)$ are different from zero, then (as first observed by Fokas, Its, and Kitaev \cite{FIK} for orthogonal polynomials on the real line
(see, e.g., \cite{Dbook})), the matrix-valued function $Y(z;n,t)$ given by
\begin{equation}\label{def Y}
    Y(z)=
    \begin{pmatrix}
    \chi_n^{-1}\phi_n(z)&\chi_n^{-1}\int_{C}\frac{\phi_n(\xi)}{\xi-z}\frac{f_t(\xi)d\xi}{2\pi
i\xi^n}\\
    -\chi_{n-1}z^{n-1}\widehat\phi_{n-1}(z^{-1})&-\chi_{n-1}\int_{C}\frac{\widehat\phi_{n-1}(\xi^{-1})}{\xi-z}
    \frac{f_t(\xi)d\xi}{2\pi i\xi}
    \end{pmatrix}
    \end{equation}
is the unique solution of the following Riemann-Hilbert problem:
\subsubsection*{RH problem for $Y$}
    \begin{itemize}
    \item[(a)] $Y:\mathbb C \setminus C \to \mathbb C^{2\times 2}$ is analytic.
    \item[(b)] Let  $z_1=e^{it}$, $z_2=e^{i(2\pi-t)}$.
The continuous boundary values of $Y$ from the inside, $Y_+$, and from the outside, $Y_-$, of the unit circle exist 
on $C\setminus \{z_1,z_2\}$, and are related by the jump condition
\[
Y_+(z)=Y_-(z)
                \begin{pmatrix}
                    1 & z^{-n}f_t(z) \\
                    0 & 1
                \end{pmatrix},
                \qquad  \mbox{for}\; z \in C\setminus\{z_1,z_2\}.
\]
    \item[(c)] $Y(z)=\left(I+\bigO(1/z)\right)z^{n\si_3}$,\qquad $\si_3=
                \begin{pmatrix}
                    1 & 0 \\
                    0 & -1
                \end{pmatrix}$,
                \qquad  as $z\to \infty $.
    \item[(d)] As $z\to z_k$, $z\in\mathbb C\setminus C$, $k=1,2$, we have
   \[Y(z)=\begin{pmatrix}\bigO(1)&\bigO(1)+\bigO(|z-z_k|^{2\alpha_k})\\\bigO(1)&\bigO(1)+\bigO(|z-z_k|^{2\alpha_k})\end{pmatrix},\qquad \mbox{if}\quad \al_k\neq 0,
\]
and
\[Y(z)=\begin{pmatrix}\bigO(1)&\bigO(|\ln|z-z_k||)\\\bigO(1)&\bigO(|\ln|z-z_k||)\end{pmatrix},\qquad \mbox{if}\quad \al_k=0.
\]
    \end{itemize}

The uniqueness of the solution and the identity $\det Y(z)\equiv 1$ are standard facts which easily
follow from the RH problem and Liouville's theorem.

In the next section \ref{diffid}, we will show that $\frac{d}{dt}\ln D_n(f_t)$ can be expressed exactly in terms of the RH solution $Y$ 
for all $n$ (see (\ref{differentialidentity}) below). In Section \ref{section: RHP Y}, we will solve
this RH problem asymptotically for large $n$. In Section \ref{section: as Toep}, we then substitute
these asymptotics into  (\ref{differentialidentity}), and integrate over $t$, which produces (\ref{asT}).

\subsection{Differential identity}\label{diffid}
In this section, we will express $\frac{d}{dt}\ln D_n(f_t)$ in terms of the entries of the solution $Y$ of the above RH
problem. 

In order to be able to derive a differential identity in the case where the symbol $f$ is unbounded, i.e.\ if $\Re\alpha_k < 0$, we need to use the notion of a {\em regularized} integral over the unit circle.
Let $F$ be an analytic function in a neighborhood of the unit circle, and let $f$ be the symbol defined by (\ref{symbol}), with $-\frac{1}{2}<\Re\alpha_k<0$.
Then 
\begin{equation}\label{reg int1}
\int_C\frac{F(z)f(z)}{z-\zeta}dz=c(f,F)(\zeta-z_k)^{2\alpha_k}+\bigO(1),\qquad \zeta\to z_k.
\end{equation}
We define the regularized integral for $\zeta$ near $z_k$ by the expression:
\begin{equation}\label{reg int2}
\int_C^{(r)}\frac{F(z)f(z)}{z-\zeta}dz\equiv \int_C\frac{F(z)f(z)}{z-\zeta}dz-c(f,F)(\zeta-z_k)^{2\alpha_k}.
\end{equation}
This object is bounded (although not analytic) in a complex neighborhood of $z_k$.
From the analysis of similar integrals in \cite{K-hankel}, \cite{DIK22}, it follows that
\begin{multline}\label{regul1}
 \int_C^{(r)} \frac{F(z)f(z)}{z-z_k}dz\\
=\lim_{\ep\to 0}\left[\int_{C\setminus C_\ep} \frac{F(z)f(z)}{z-z_k}dz -
\frac{F(z_k)}{2\al_k}\{f(z_ke^{-i\ep})-f(z_ke^{i\ep})\}\right],
\end{multline}
where
\[
C_{\ep}=\cup_{k=1,2}\{z\in C: |\arg z-\arg z_k|<\ep\}.
\]

We set
$\wt Y(z)=Y(z)$ in a neighborhood of $z_k$ if $\Re\al_k>0$. If $\Re\al_k<0$, the second column 
of $Y$ has an expansion at $z_k$ containing a growing term of order $(z-z_k)^{2\al_k}$; we set $\wt Y_{j1}=Y_{j1}$ for $j=1,2$, and
$\wt Y_{j2}=Y_{j2}-c_j(z-z_k)^{2\alpha_k}$ with $c_j$ such that $\wt Y$ is bounded in a neighborhood of $z_k$. This is the same as replacing the integrals in the definition of $Y$, see (\ref{def Y}), by their regularized versions. With this definition of $\wt Y$, we have the following.

\begin{proposition}\label{prop21}
Let $t>0$ and $n\in\mathbb N$. Suppose that the RH problem for $Y(z;n,t)$ is solvable.
Then $D_n(f_t)\neq 0$, and the following differential identity holds for $\alpha_k\neq 0$, $k=1,2$:
\be\label{differentialidentity}
\begin{aligned}
{\frac{1}{i}}\frac{d}{dt}\ln D_n(f_t)=
\sum_{k=1}^2 (-1)^k\left[
n(\al_k+\bt_k)
-2\al_k z_k \left({dY^{-1}\over dz}\wt Y\right)_{22}(z_k)\right],\\ z_1=\eit,\quad z_2=e^{i(2\pi-t)},
\end{aligned}
\ee
where $\left({dY^{-1}\over dz}\wt Y\right)_{22}(z_k)=\lim_{z\to z_k}\left({dY^{-1}\over dz}\wt Y\right)_{22}(z)$
with $z\to z_k$ non-tangentially to the unit circle. 
\end{proposition}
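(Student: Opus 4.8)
The plan is to derive the differential identity by differentiating the Toeplitz determinant with respect to $t$ using a chain of well-known facts. First I would recall that $\ln D_n(f_t) = \sum_{j=0}^{n-1}\ln\chi_j^{-2}$ is expressible through the leading coefficients of the orthogonal polynomials, but a cleaner route is to differentiate directly. Since $D_n(f_t) = \det(f_{t,j-k})_{j,k=0}^{n-1}$, we have $\frac{d}{dt}\ln D_n(f_t) = \tr\left[T_n(f_t)^{-1} \frac{d}{dt}T_n(f_t)\right]$, and the entries of $\frac{d}{dt}T_n(f_t)$ are the Fourier coefficients of $\partial_t f_t$. The key observation is that $\partial_t f_t$ has a relatively simple structure: the $t$-dependence enters only through the locations $z_1 = e^{it}$, $z_2 = e^{i(2\pi-t)}$ of the singularities and the factor $g_{z_j,\beta_j}$. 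Using the Heine representation (\ref{Heine}) is an alternative but messier bookkeeping device; I would instead express the trace via the reproducing kernel $K_n(z,w)$ of the orthogonal polynomials, which gives $\frac{d}{dt}\ln D_n(f_t) = \frac{1}{2\pi}\int_C K_n(z,z)\,(\partial_t \ln f_t)(z)\,f_t(z)\,\frac{dz}{iz}$ after accounting for the fact that $\partial_t f_t / f_t$ is meromorphic with only a jump/singular structure at $z_1,z_2$.

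The main technical step is to evaluate this integral. Because $\partial_t \log f_t$ is singular precisely at $z_1$ and $z_2$ (it has a simple-pole-like and logarithmic-derivative structure there coming from $|z-z_j|^{2\alpha_j}$ and $g_{z_j,\beta_j}$), the integral localizes to contributions near $z_1$ and $z_2$. I would compute $\partial_t\log f_t$ explicitly: differentiating $|z-z_j|^{2\alpha_j}$ with respect to $t$ produces terms of the form $\alpha_j\left(\frac{\dot z_j}{z-z_j} + \overline{\cdots}\right)$, and differentiating the piecewise-constant $g$ and the $z^{\beta_1+\beta_2}z_j^{-\beta_j}$ factors produces delta-like contributions at the jump points. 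Collecting these, the integral against $K_n(z,z)f_t(z)$ reduces — by a residue/limit computation near $z_k$ — to boundary values of the Christoffel–Darboux kernel at $z_k$. The Christoffel–Darboux kernel is in turn expressible through $Y$: one has the standard identity $K_n(z,w) = \frac{1}{2\pi i (z-w)}\begin{pmatrix}0 & 1\end{pmatrix}Y_+^{-1}(w)Y_+(z)\begin{pmatrix}1\\0\end{pmatrix}$ (up to the weight factors), so the localized contributions turn into the entries $\left(\frac{dY^{-1}}{dz}\wt Y\right)_{22}(z_k)$ once one takes the confluent $z\to w\to z_k$ limit. The terms $n(\alpha_k+\beta_k)$ arise from the $z^{-n}$ and $z^{\beta_1+\beta_2}$ prefactors in the symbol — specifically from differentiating $z_j^{-n}$-type normalizations and the $e^{it}$ phases in the jump, which contribute a polynomial-in-$n$ piece that is not captured by the kernel term.

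The regularization enters exactly here: when $\Re\alpha_k < 0$ the kernel $K_n(z,z)$ blows up like $|z-z_k|^{2\Re\alpha_k}$ near $z_k$, so the naive integral diverges; replacing $Y$ by $\wt Y$ (equivalently, using the regularized integral $\int_C^{(r)}$ defined in (\ref{reg int2})) subtracts off precisely the non-integrable $(z-z_k)^{2\alpha_k}$ tail, and formula (\ref{regul1}) shows this regularization is the natural one compatible with the $\ep\to 0$ limit of the symbol. I would verify that the subtraction constant $c_j$ in the definition of $\wt Y$ matches the coefficient $c(f,F)$ in (\ref{reg int1}) for the relevant $F$, so that the regularized kernel evaluation is well-defined and yields exactly $\left(\frac{dY^{-1}}{dz}\wt Y\right)_{22}(z_k)$. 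Finally, the claim $D_n(f_t)\neq 0$ follows because solvability of the RH problem for $Y$ forces $\chi_n$ to be finite and nonzero (from condition (c) and $\det Y \equiv 1$), and $\chi_n^{-2} = D_{n+1}/D_n$ together with a downward induction / the three-term structure gives $D_n(f_t)\neq 0$.

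\textbf{Expected main obstacle.} The hard part will be the careful bookkeeping of the boundary and singular contributions in the integral $\int_C K_n(z,z)(\partial_t\log f_t)f_t\,\frac{dz}{iz}$: one must track the non-tangential limits from inside and outside the circle at the two points $z_1,z_2$, handle the branch of $z^{\beta_1+\beta_2}$ and the jump of $g$ consistently with the orientation, and confirm that all logarithmic/power divergences cancel against the regularization, leaving only the clean expression (\ref{differentialidentity}). Getting the signs $(-1)^k$ and the coefficient $-2\alpha_k z_k$ right, and correctly extracting the $n(\alpha_k+\beta_k)$ term from the combination of the $z^{-n}$ factor and the symbol's phase, is where the real care is needed; the rest is a standard application of the Fokas–Its–Kitaev correspondence between $Y$ and the orthogonal polynomials.
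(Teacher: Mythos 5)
Your high-level strategy — write $\frac{d}{dt}\ln D_n = \tr\left[T_n^{-1}\dot T_n\right]$, express the trace as an integral of $\partial_t f_t$ against the reproducing kernel, compute $\partial_t\log f_t$ explicitly, localize at $z_1,z_2$, and regularize for $\Re\al_k<0$ — is conceptually sound and is \emph{not} the route the paper takes. The paper instead starts from a pre-packaged identity (3.5) of \cite{DIK22} that already expresses $\partial_t\ln D_n$ through the Wronskian $\phi_n\,\partial_z\wh\phi_n(z^{-1})-\wh\phi_n(z^{-1})\,\partial_z\phi_n$ plus $2n\,\partial_t\chi_n/\chi_n$, moves the $t$-derivative onto $f$ in that integral (handling non-integrable $\partial_t f$ by the regularized integral $\int_C^{(r)}$ and formula (\ref{regul1})), computes $\partial_t f = i\sum_k(-1)^k\bigl(\al_k+\bt_k+\tfrac{2\al_k z_k}{z-z_k}\bigr)f$, and then reduces the resulting integrals $I_{1,k},I_{2,k},I_{3,k}$ to the $Y$-entries at $z_k$ using orthogonality and the recurrence relation (\ref{rr}) — no Christoffel--Darboux kernel appears. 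Your route would reach the same destination (it effectively re-derives (3.5) from scratch via the trace and CD kernel), but two details in your sketch are off. First, the CD kernel formula you quote, $K_n(z,w)=\frac{1}{2\pi i(z-w)}\bigl(0\;\;1\bigr)Y_+^{-1}(w)Y_+(z)\bigl(\begin{smallmatrix}1\\0\end{smallmatrix}\bigr)$, is the resolvent-type formula from the OPRL/Hankel setting, not the correct expression for the biorthogonal OPUC kernel $\sum_{j<n}\phi_j(z)\wh\phi_j(w^{-1})$; the OPUC CD identity involves reciprocal polynomials and a $1-zw$ denominator, and relating it to $Y$ takes a non-trivial further step that plays the structural role the paper delegates to the recurrence relation (\ref{rr}). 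Second, your attribution of the $n(\al_k+\bt_k)$ term to "differentiating $z_j^{-n}$-type normalizations" is wrong: the symbol $f_t$ has no $z^{-n}$ factor, and those terms actually come from $\frac{1}{2\pi}\int_C K_n(z,z)f_t(z)\,\frac{dz}{iz}=n$ hit by the constant $(\al_k+\bt_k)$ piece of $\partial_t\log f_t$; i.e., they are \emph{inside} the kernel integral, not an extraneous normalization. These are not fatal — the approach works — but they indicate you have not traced the computation far enough to be sure the signs, the factor $-2\al_k z_k$, and the subtraction constants in $\wt Y$ (which must match $c(f,F)$ in (\ref{reg int1})) come out right; that bookkeeping, which you flag as the main obstacle, is indeed where the proof lives.
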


\begin{proof}
Solvability of the RH problem at $t$ (i.e., the fact that $D_n\neq 0$, $D_{n\pm1}\neq 0$) implies the solvability in a neighborhood of $t$, and hence the existence (and by (\ref{def phi}), 
(\ref{def hatphi}), differentiability in $t$) of the corresponding orthogonal polynomials. 
We start with the identity 
(3.5) of \cite{DIK22}, which, as is easy to see from the arguments in  \cite{DIK22}, holds for any parameter
$t$ of the polynomials with respect to which they are differentiable:\footnote{In \cite{DIK22}, (\ref{mid}) was derived under a stronger assumption that $D_k\neq 0$, $k=1,2,\dots,n+1$. However, a simple continuity argument shows that (\ref{mid}) holds true if only
$D_n,D_{n\pm 1}\neq 0$.}
\be\label{mid}
{\partial \over\partial t}\ln D_n(f(z))=
2n {{\partial\chi_n\over\partial t}\over \chi_n}+
\frac{1}{2\pi}\int_0^{2\pi} 
{\partial \over\partial t}
\left(\phi_n(z){d\wh\phi_n(z^{-1})\over dz}
-\wh\phi_n(z^{-1}){d\phi_n(z)\over dz}\right)z f(z) d\th,
\ee
where $z=e^{i\th}$, $f\equiv f_t$.
We would like to move the differentiation in the integral over to $f$ noting that
\[
I\equiv \frac{1}{2\pi}\int_0^{2\pi} 
\left(\phi_n(z){d\wh\phi_n(z^{-1})\over dz}
-\wh\phi_n(z^{-1}){d\phi_n(z)\over dz}\right)z f(z) d\th=-2n
\]
by orthogonality, and therefore ${\partial I\over\partial t}=0$. However, because of the case 
$-1/2<\Re\al_k\le0$ for which ${\partial f\over\partial t}$ is not integrable, care needs to be taken.

 As before, let
\[
C_{\ep}=\cup_{k=1,2}\{z\in C: |\arg z-\arg z_k|<\ep\},
\]
and assume that $F(z)$ and ${\partial F(z)\over\partial t}$ are analytic
functions in a neighborhood of the unit circle $C$.
Then
\be\label{di1}
\int_C {\partial F(z)\over\partial t}f(z)dz=\int_{C\setminus C_\ep} {\partial F(z)\over\partial t}f(z)dz+
\bigO(\ep^{2\al_1+1})+\bigO(\ep^{2\al_2+1}),\qquad \e\to 0.
\ee
Note that ($z_1=e^{it}$, $z_2=e^{i(2\pi-t)}$)
\be\label{di2}
\begin{aligned}
{\partial\over\partial t}\int_{C\setminus C_\ep} F(z)f(z)dz=
\int_{C\setminus C_\ep} {\partial F(z)\over\partial t}f(z)dz+
\int_{C\setminus C_\ep} F(z) {\partial f(z)\over\partial t}dz\\
+i\sum_{k=1}^{2}(-1)^{k+1}z_kF(z_k)\{f(z_ke^{-i\ep})-f(z_ke^{i\ep})\}+
\bigO(\ep^{2\al_1+1})+\bigO(\ep^{2\al_2+1})
\end{aligned}
\ee
as $\e\to 0$.
On the other hand,
\be\label{di3}
{\partial\over\partial t}\int_{C\setminus C_\ep} F(z)f(z)dz=
{\partial\over\partial t}\int_{C} F(z)f(z)dz+\bigO(\ep^{2\al_1+1})+\bigO(\ep^{2\al_2+1})
\ee
as $\e\to 0$
by estimation of the integral of $Ff$ over $C_\ep$.

Combining (\ref{di1}), (\ref{di2}), and (\ref{di3}), we can write
\begin{align}
&\int_C {\partial F(z)\over\partial t}f(z)dz={\partial\over\partial t}\int_{C} F(z)f(z)dz-G,\label{di4}\\
&G=\lim_{\ep\to 0}\left[
\int_{C\setminus C_\ep} F(z){\partial f(z)\over\partial t}dz+
i\sum_{k=1}^{2}(-1)^{k+1}z_kF(z_k)\{f(z_ke^{-i\ep})-f(z_ke^{i\ep})\}\right].
\end{align}

Let us now compute ${\partial f(z)\over\partial t}$. Since
$|z-z_k|^{2\al_k}=|2\sin\frac{\th+(-1)^k t}{2}|^{2\al_k}$, we have
\[
{\partial\over\partial t}\ln|z-z_k|^{2\al_k}=(-1)^k\al_k\cot\frac{\th+(-1)^k t}{2}=
i(-1)^k\al_k\frac{z+z_k}{z-z_k}.
\]
Therefore,
\be
{\partial f(z)\over\partial t}=\sum_{k=1}^2(-1)^k\left(\al_k\frac{z+z_k}{z-z_k}+\bt_k\right)if(z)=
\sum_{k=1}^2(-1)^k\left(\al_k+\bt_k+\frac{2\al_kz_k}{z-z_k}\right)if.
\ee
So we can write
\be\label{Gdi}
\begin{aligned}
G&= i\sum_{k=1}^2(-1)^k\left( (\al_k+\bt_k) \int_{C} F(z)f(z)dz\right.\\
&\left. +
z_k\lim_{\ep\to 0}\left[ 2\al_k \int_{C\setminus C_\ep} \frac{F(z)f(z)}{z-z_k}dz -
F(z_k)\{f(z_ke^{-i\ep})-f(z_ke^{i\ep})\}\right]\right).
\end{aligned}
\ee
The limit in the last line is exactly $2\alpha_k$ times the regularized integral (\ref{reg int1}) evaluated at $z_k$, by (\ref{regul1}).

By (\ref{di4}), (\ref{Gdi}), and (\ref{regul1}) with $F(z)=(\phi_n(z){d\wh\phi_n(z^{-1})\over dz}
-\wh\phi_n(z^{-1}){d\phi_n(z)\over dz})$, we obtain from (\ref{mid}),
\begin{align}\label{mid2}
{\partial \over\partial t}\ln D_n(f(z))&=
2n {{\partial\chi_n\over\partial t}\over \chi_n}+
i\sum_{k=1}^2(-1)^k(2n (\al_k+\bt_k)-
2\al_k(I_{1,k}-I_{2,k})),\\
I_{1,k}&=\frac{1}{2\pi i}\int_C^{(r)}\frac{\phi_n(z){d\over dz}\wh\phi_n(z^{-1})}{z-z_k}z_k f(z) dz,\notag\\
I_{2,k}&=\frac{1}{2\pi i}\int_C^{(r)}\frac{\wh\phi_n(z^{-1}){d\over dz}\phi_n(z)}{z-z_k}z_k f(z) dz.\notag
\end{align}

Let us simplify $I_{1,k}$. Adding and subtracting ${d\over dz}\wh\phi_n(z^{-1})|_{z=z_k}$ from the numerator
of the integrand, and observing that
\[
\frac{{d\over dz}\wh\phi_n(z^{-1})-{d\over dz}\wh\phi_n(z^{-1})|_{z=z_k}}{z^{-1}-z_k^{-1}}
\]
is a polynomial in $z^{-1}$ of degree $n$ with the leading coefficient $-n\chi_n$,
we obtain by orthogonality that
\begin{multline}\label{In1}
I_{1,k}=n+{d\over dz}\wh\phi_n(z^{-1})|_{z=z_k}\frac{1}{2\pi}\int_C^{(r)}\frac{\phi_n(z)}{z-z_k}(z-z_k+z_k)
z_k f(z)d\th\\
=n+z_k^2{d\over dz}\wh\phi_n(z^{-1})|_{z=z_k}\frac{1}{2\pi}\int_C^{(r)}\frac{\phi_n(z)}{z-z_k}
f(z)(z^{n-1}-z^{n-1}_k+z^{n-1}_k)z^{-(n-1)}d\th\\
=n+ z_k^{n+1}\chi_n {d\over dz}\wh\phi_n(z^{-1})|_{z=z_k}\wt Y_{12}(z_k).
\end{multline}

Before a similar simplification of $I_{2,k}$, it is convenient first to use the following
recurrence relation (see, e.g., (2.4) in \cite{DIK2}):
\be\label{rr}
\chi_n \wh\phi_n(z^{-1})=\chi_{n-1}z^{-1}\wh\phi_{n-1}(z^{-1})+\wh\phi_{n}(0)z^{-n}\phi_n(z).
\ee
Substituting this into $I_{2,k}$, and then arguing in a similar way as for $I_{1,k}$, we obtain:
\be\label{In2}
I_{2,k}=z_k{d\over dz}\phi_n(z)|_{z=z_k}\wh\phi_n(0)\wt Y_{12}(z_k)-z_k{d Y_{11}\over dz}(z_k)\wt Y_{22}(z_k).
\ee
Applying (\ref{rr}) once again to the corresponding term in (\ref{In1}) and subtracting (\ref{In2}),
we obtain:
\begin{multline}\label{In12}
I_{1,k}-I_{2,k}=
n+z_k{d Y_{11}\over dz}(z_k)\wt Y_{22}(z_k)\\+\left(n Y_{21}(z_k) -n\chi_n\wh\phi_n(0)Y_{11}(z_k)
-z_k{d Y_{21}\over dz}(z_k)\right)\wt Y_{12}(z_k).
\end{multline}

Furthermore,
\begin{multline}\label{chichi}
 2{{\partial\chi_n\over\partial t}\over \chi_n}=
 \frac{1}{2\pi i}\int_C f(z){\partial\over\partial t}(\phi_n(z)\wh\phi_n(z^{-1}))\frac{dz}{z}=
 -\frac{1}{2\pi i}\int_C^{(r)} \phi_n(z)\wh\phi_n(z^{-1}){\partial f(z)\over\partial t}\frac{dz}{z}\\
 =
-i\sum_{k=1}^2(-1)^k [\al_k+\bt_k+2\al_k I_{3,k}],
\end{multline}
where
\be
I_{3,k}\equiv  \frac{1}{2\pi i}\int_C^{(r)} \frac{\phi_n(z)\wh\phi_n(z^{-1})}{z-z_k}z_k f(z)\frac{dz}{z}
\ee
can be analyzed as the other such integrals above, in this case first adding and subtracting
$\wh\phi_n(z_k^{-1})$ from the numerator of the integrand. We then obtain
\[
I_{3,k}=-1+\chi_n\wh\phi_n(z_k^{-1})z_k^n Y_{12}(z_k).
\]
Using again (\ref{rr}) gives
\be\label{In3}
I_{3,k}=-1-Y_{21}(z_k)\wt Y_{12}(z_k)+\chi_n\wh\phi_n(0)Y_{11}(z_k)\wt Y_{12}(z_k).
\ee
Substituting (\ref{In3}) into (\ref{chichi}), and the latter with (\ref{In12}) into (\ref{mid2}), we finally obtain
the differential identity (\ref{differentialidentity}) as $\det Y(z)=1$.

\end{proof}

\begin{remark}\label{remark-di} 
A differential identity for ${d \over d t}\ln D_n(f(z))$ in the case when one of (or both) $\al_k$'s 
is zero is now also easy to obtain. Either one can derive it directly using (\ref{Gdi}), or one can observe that both the left and the right-hand side of (\ref{differentialidentity}) are continuous in $\alpha_k$, so that the differential identity for $\alpha_k=0$ is obtained from (\ref{differentialidentity}) by letting $\alpha_k\to 0$.
%
\end{remark}

\section{Model RH problem}\label{secPsi}

In this section, we state a RH problem which will be used afterwards
to construct the local parametrix $P$ near $1$ in the asymptotic
analysis of the RH problem for the orthogonal polynomials. We will prove the solvability of this model RH problem for certain values of the parameters, obtain asymptotics for it for large and small values of a parameter $s$ in the problem, and relate the problem to the $\sigma$-form of the fifth Painlev\'e equation.
We assume here that $\Re\alpha_1,\Re\alpha_2>-1/2$.

\subsection{Formulation of the problem}

\begin{figure}[t]
\begin{center}
    \setlength{\unitlength}{0.8truemm}
    \begin{picture}(110,95)(-10,-7.5)
    \put(50,45){\thicklines\circle*{.8}}
    \put(50,60){\thicklines\circle*{.8}}
    \put(50,30){\thicklines\circle*{.8}}
    \put(51,56){\small $+i$}
    \put(51,31){\small $-i$}
    \put(50,60){\thicklines\circle*{.8}}
    \put(50,30){\thicklines\circle*{.8}}
    \put(69,60){\thicklines\vector(1,0){.0001}}
    \put(69,30){\thicklines\vector(1,0){.0001}}
    \put(50,60){\line(1,1){25}}
    \put(50,30){\line(1,-1){25}}
    \put(50,60){\line(-1,1){25}}
    \put(50,30){\line(-1,-1){25}}
    \put(50,30){\line(1,0){35}}
    \put(50,60){\line(1,0){35}}
    \put(50,30){\line(0,1){30}}
    \put(65,75){\thicklines\vector(1,1){.0001}}
    \put(65,15){\thicklines\vector(-1,1){.0001}}
    \put(50,39){\thicklines\vector(0,1){.0001}}
    \put(50,54){\thicklines\vector(0,1){.0001}}
    \put(35,75){\thicklines\vector(-1,1){.0001}}
    \put(35,15){\thicklines\vector(1,1){.0001}}

    \put(74,88){\small $\begin{pmatrix}1&e^{2\pi i(\alpha_1-\beta_1)}\\0&1\end{pmatrix}$}
    \put(-14,88){\small $\begin{pmatrix}1&0\\-e^{-2\pi i(\alpha_1-\beta_1)}&1\end{pmatrix}$}
\put(86,29){\small $e^{2\pi i\beta_2\sigma_3}$} \put(86,59){\small
$e^{2\pi i\beta_1\sigma_3}$}
\put(31,45){\small $\begin{pmatrix}0&1\\-1&1\end{pmatrix}$}
    \put(-11,0){\small $\begin{pmatrix}1&0\\-e^{2\pi i(\alpha_2-\beta_2)}&1\end{pmatrix}$}
    \put(74,0){\small $\begin{pmatrix}1&e^{-2\pi i(\alpha_2-\beta_2)}\\0&1\end{pmatrix}$}
\put(100,43){VI}
    \put(8,43){III}
    \put(88,71){I}
    \put(47,7){IV}
    \put(47,75){II}
    \put(88,16){V}
    \end{picture}
    \caption{The jump contour and jump matrices for $\Psi$.}
    \label{figure: Gamma}
\end{center}
\end{figure}
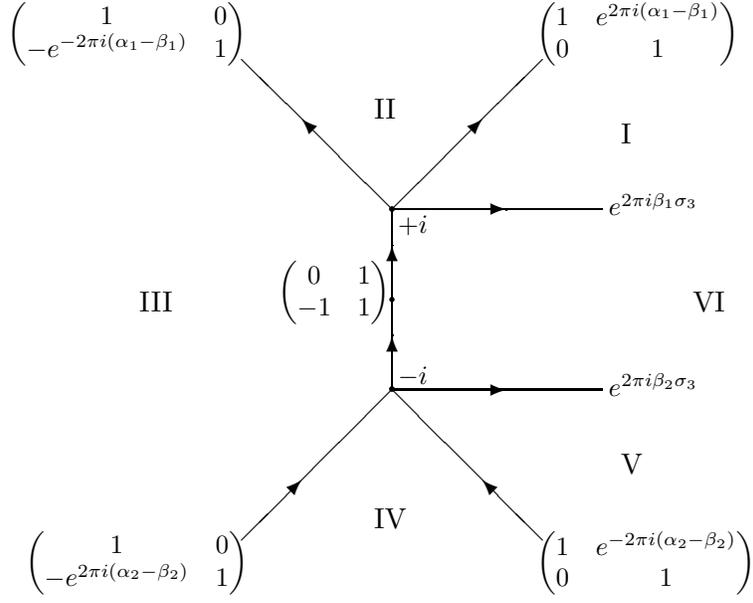

\subsubsection*{RH problem for $\Psi$}
\begin{itemize}
    \item[(a)] $\Psi:\mathbb C\setminus \Gamma \to \mathbb C^{2\times 2}$ is analytic, where
    \begin{align*}&\Gamma=\cup_{k=1}^7\Gamma_k,&& \Gamma_1=i+e^{\frac{i\pi}{4}}\mathbb R^+,
    &&&\Gamma_2=i+e^{\frac{3i\pi}{4}}\mathbb R^+,\\&\Gamma_3=-i+e^{\frac{5i\pi}{4}}\mathbb R^+,
    &&\Gamma_4=-i+e^{\frac{7i\pi}{4}}\mathbb R^+,&&&\Gamma_5=-i+\mathbb R^+,\\&\Gamma_6=i+\mathbb R^+,
    &&\Gamma_7=[-i,i],\end{align*}
    with the orientation chosen as in Figure \ref{figure: Gamma} (the ``-'' side of a contour line is the right-hand side of it).
    \item[(b)] $\Psi$ satisfies the jump conditions
    \begin{equation}\label{jump Psi}\Psi_+(\zeta)=\Psi_-(\zeta)J_k,\qquad
    \zeta\in\Gamma_k,\end{equation}  where
                \begin{align}
                &J_1=\begin{pmatrix}1&e^{2\pi i(\alpha_1-\beta_1)}\\0&1\end{pmatrix},
                &&J_2=\begin{pmatrix}1&0\\-e^{-2\pi i(\alpha_1-\beta_1)}&1\end{pmatrix},\\
                &J_3=\begin{pmatrix}1&0\\-e^{2\pi i(\alpha_2-\beta_2)}&1\end{pmatrix},
                &&J_4=\begin{pmatrix}1&e^{-2\pi i(\alpha_2-\beta_2)}\\0&1\end{pmatrix},\\
                &J_5=e^{2\pi i\beta_2\sigma_3},
                &&J_6=e^{2\pi i\beta_1\sigma_3},\\
                &J_7=\begin{pmatrix}0&1\\-1&1\end{pmatrix}.
                \end{align}
    \item[(c)] We have in all regions:
    \be\label{Psi as}
    \Psi(\zeta)=\left(I+\frac{\Psi_1}{\zeta}+\frac{\Psi_2}{\zeta^2}+\bigO(\zeta^{-3})\right)
    P^{(\infty)}(\zeta)e^{-\frac{is}{4}\zeta\sigma_3} \qquad  \mbox{ as $\zeta\to \infty$,}
    \ee
    where
    \be\label{Pinfty}
    P^{(\infty)}(\zeta)=
    \left(\frac{is}{2}\right)^{-(\beta_1+\beta_2)\sigma_3}
    (\zeta-i)^{-\beta_1\sigma_3}(\zeta+i)^{-\beta_2\sigma_3}
     \ee
                with the branches corresponding to the
                arguments between $0$ and $2\pi$, and where $s\in -i\mathbb R^+$.
    \end{itemize}
    The RH solution $\Psi=\Psi(\zeta;s)$ depends on the complex variable $\zeta$ but also on the complex parameter $s$. We will be concerned with the case where $s\in -i\mathbb R^+$ or $s$ in a small neighborhood of the negative imaginary axis.
    Without additional conditions on the behavior of $\Psi$
    near the points $\pm i$, the RH problem will not
    have a unique solution.
    If $2\alpha_1\notin\mathbb N\cup\{0\}$, $\Re\al_1>-1/2$,
     define $F_1(\zeta)$ by the equations
    \begin{equation}\label{Psi i}\Psi(\zeta;s)=F_1(\zeta;s)(\zeta-i)^{\alpha_1\sigma_3}G_j,\qquad \zeta\in\mbox{ region $j$},\end{equation}
where $j$ takes the values $j=I, II, III, VI$, and where $(\zeta-i)^{\alpha_1\sigma_3}$ is taken with the branch cut on $i+e^{\frac{3\pi i}{4}}\mathbb R^+$, with the argument of $\zeta-i$ between $-5\pi/4$ and $3\pi/4$.  The matrices $G_j$ are
piecewise constant matrices consistent with the jump relations; they
are given by
\begin{align}
&\label{G}G_{III}=\begin{pmatrix}1&g\\0&1\end{pmatrix},\qquad g=-\frac{1}{2
i\sin(2\pi\alpha_1)}(e^{2\pi i\alpha_1}-e^{-2\pi
i\beta_1}), \\&\label{G2}
G_{VI}=G_{III}J_7^{-1}=\begin{pmatrix}1+g&-1\\1&0\end{pmatrix},\\
&\label{G3} G_{I}=G_{VI}J_6,\qquad\qquad  G_{II}=G_{I}J_1.
\end{align}
It is straightforward to verify that $F_1$ has no jumps in a vicinity of $i$, and it is thus meromorphic in a neighborhood of $i$, with possibly an isolated singularity at $i$.

\medskip

Similarly, for $\zeta$ near $-i$, if $2\alpha_2\notin\mathbb N\cup\{0\}$, $\Re\al_2>-1/2$, we define $F_2$ by the equations
    \begin{equation}\label{Psi -i}\Psi(\zeta;s)=F_2(\zeta;s)(\zeta+i)^{\alpha_2\sigma_3}H_j,\qquad \zeta\in\mbox{ region $j$,}\end{equation}
where $(\zeta+i)^{\alpha_2\sigma_3}$ is defined with the branch cut on $-i+e^{\frac{5\pi i}{4}}\mathbb R^+$, with the argument of $\zeta+i$ between $-3\pi/4$ and $5\pi/4$,
and where $H_j$, $j=III, IV, V, VI$ is a
piecewise constant matrix:
\begin{align}
&\label{H}H_{III}=\begin{pmatrix}1&h\\0&1\end{pmatrix},\qquad h=-\frac{1}{2
i\sin(2\pi\alpha_2)}(e^{2\pi i\beta_2}-e^{-2\pi
i\alpha_2})
 \\&\label{H2}
H_{IV}=H_{III}J_3^{-1},
\qquad\qquad H_{V}=H_{IV}J_4^{-1},\qquad\qquad H_{VI}=H_{V}J_5.
\end{align}
Similarly as at $i$, one shows using the jump conditions for $\Psi$ that $F_2$ is meromorphic near $-i$ with a possible singularity at $-i$.

\medskip

If $2\alpha_1\in\mathbb N\cup\{0\}$, the constant $g$ and the matrices $G_j$ are ill-defined, and we need a different definition of $F_1$:
\begin{equation}\label{Psi i 2}\Psi(\zeta;s)=F_1(\zeta;s)(\zeta-i)^{\alpha_1\si_3}
\begin{pmatrix}1&g_{int}\ln(\zeta-i)\\0& 1
\end{pmatrix}G_j,\qquad \zeta\in\mbox{ region $j$,}\end{equation}
where
\be\label{gint}
g_{int}=\frac{e^{-2\pi i\beta_1}-e^{2\pi i\alpha_1}}{2\pi ie^{2\pi i\alpha_1}},
\ee
and $G_{III}=I$,
and the other $G_j$'s are defined as above by applying the appropriate jump conditions. 
Thus defined, $F_1$ has no jumps in a neighborhood of $i$.
Similarly,
if $2\alpha_2\in\mathbb N\cup\{0\}$, we define $F_2$ by the expression:
\begin{equation}\label{Psi -i 2}\Psi(\zeta;s)=F_2(\zeta;s)(\zeta+i)^{\alpha_2\sigma_3}\begin{pmatrix}1&\frac{e^{-2\pi i\alpha_2}-e^{2\pi i\beta_2}}{2\pi ie^{-2\pi i\alpha_2}}\ln(\zeta+i)\\0&1\end{pmatrix}H_j,\qquad \zeta\in\mbox{ region $j$,}\end{equation}
with $H_{III}=I$,
and the other $H_j$'s expressed via $H_{III}$ as in (\ref{H2}). Then $F_2$ has no jumps near $-i$.

\medskip
We are now ready to set an additional RH condition for $\Psi$ in order to ensure uniqueness of the solution.
We complement the RH conditions (a)-(c) with:
\subsubsection*{RH problem for $\Psi$ - extra condition}
\begin{itemize}
\item[(d)] The functions $F_1$ and $F_2$ given in (\ref{Psi i}), (\ref{Psi i 2}) and (\ref{Psi -i}), (\ref{Psi -i 2}) are analytic functions of $\zeta$ at $i$ and $-i$, respectively.
\end{itemize}

Given complex parameters $s,\alpha_1,\alpha_2,\beta_1,\beta_2$, the uniqueness of the function $\Psi$ which satisfies RH conditions (a)-(d) can be proved using standard arguments in the following way.
If $\Psi$ satisfies the RH conditions (a) and (b), it is straightforward to show that $\det\Psi$ is a meromorphic function in $\zeta$, with possibly isolated singularities at $\pm i$. By condition (d), the singularities of $\det\Psi$ are removable, and $\det\Psi$ is an entire function, which tends to $1$ at infinity by condition (c). Thus $\det\Psi$ is identically equal to $1$ by the Liouville theorem, and $\Psi(\zeta)$ is invertible for every $\zeta$. Now, assuming that there are two solutions $\Psi$ and $\widetilde\Psi$ satisfying (a)-(d), one shows in a similar way that $\widetilde\Psi \Psi^{-1}=I$.

Existence of a RH solution $\Psi$ is a much more subtle issue.
If $\alpha_1,\alpha_2,\alpha_1+\alpha_2>-1/2$ are real and $\beta_1,\beta_2\in i\mathbb R$, we will prove later on that the RH problem is solvable for any $s\in -i\mathbb R^+$.
In the more general case where $\Re\alpha_1, \Re\alpha_2,\Re(\alpha_1+\alpha_2)>-1/2$ and
$|||\beta|||<1$, we will analyze the RH problem asymptotically as $s\to -i\infty$ and as $s\to 0$. Our analysis will imply that the RH problem is solvable for $s\in -i\mathbb R^+$ and $|s|$ sufficiently small or $|s|$ sufficiently large, but it is possible that, given $\alpha_1,\alpha_2,\beta_1,\beta_2$, there is a finite number of values of $s\in -i\mathbb R^+$ for which the RH problem is not solvable.

\subsection{Special case $\alpha_1=\alpha_2=\beta_1=\beta_2=\frac{1}{2}$}\label{section: degenerate}
If $\alpha_1=\alpha_2=\beta_1=\beta_2=\frac{1}{2}$, the RH problem for $\Psi$ can be solved explicitly.
Let
\begin{equation}
L(\zeta)\equiv \left(\frac{is}{2}\right)^{-\sigma_3}(\zeta-i)^{-\frac{1}{2}\sigma_3}(\zeta+i)^{-\frac{1}{2}\sigma_3}e^{-\frac{is}{4}\zeta\sigma_3},
\end{equation}
with the branch cuts of the square roots $(\zeta\mp i)^{-\frac{1}{2}\sigma_3}$ along $\pm i+\mathbb R^+$ as in (\ref{Pinfty}), and let
\begin{equation}\label{def Psi deg}
\Psi(\zeta)=
\begin{pmatrix}1&-\frac{2i}{s^2}\left(\frac{e^{-\frac{s}{2}}}{\zeta +i}   - \frac{e^{\frac{s}{2}}}{\zeta -i}\right)\\0&1\end{pmatrix}L(\zeta)\times
\begin{cases}
I,&\mbox{ in regions II and IV,}\\
\begin{pmatrix}1&0\\-1&1\end{pmatrix}
,&\mbox{ in region III,}\\
\begin{pmatrix}1&-1\\0&1\end{pmatrix},&\mbox{ in regions I, V, and VI.}\\
\end{cases}
\end{equation}
It is straightforward to verify that $\Psi$ satisfies the RH conditions (a)-(c). To verify the extra condition (d), note that the terms with logarithms in (\ref{Psi i 2}) and (\ref{Psi -i 2}) vanish. Then it is easily verified that $F_1$ and $F_2$, given by (\ref{Psi i 2}) and (\ref{Psi -i 2}), are analytic near $\pm i$ by substituting (\ref{def Psi deg}).
We also see that the $1,1$ entry of the matrix $\Psi_1$ in (\ref{Psi as}) vanishes.
Using the formulae (\ref{C1}) and (\ref{sigma4b}) below, we obtain that $\sigma(s)=0$ in this case, as announced in Remark \ref{remark: degenerate}.

If $2\alpha_1, 2\alpha_2, 2\beta_1, 2\beta_2\in\mathbb N$ and $\alpha_1=\beta_1$, $\alpha_2=\beta_2$, the RH solution can also be constructed explicitly, but the function $L$ has to be modified in a straightforward way to satisfy (\ref{Psi as}). Furthermore, the upper-triangular matrix in (\ref{def Psi deg}) has to be modified in order to preserve the conditions (\ref{Psi i 2}) and (\ref{Psi -i 2}); it can have higher order poles at $\pm i$.

\subsection{Lax pair}\label{section Lax}
In this section, we assume that $s$ is such that the RH problem for $\Psi$ is solvable.
Let
\begin{equation}\label{def A B}
A=\left(\frac{d}{d\zeta}\Psi\right) \, \Psi^{-1},\qquad B=\left(\frac{d}{ds}\Psi\right) \, \Psi^{-1}.
\end{equation}
It follows from the RH conditions that $A$ is a rational function
with simple poles at $\pm i$ and bounded at infinity,
\begin{equation}\label{A}
A(\zeta;s)=A_\infty(s)
+\frac{A_1(s)}{\zeta-i}+\frac{A_2(s)}{\zeta+i}.
\end{equation}
and that
$B$ is a polynomial of degree $1$,
\begin{equation}\label{B}
B(\zeta;s)=B_1 \zeta + B_0(s).
\end{equation}
Note that $\Psi_1$ is traceless by (\ref{Psi as}) since $\det\Psi\equiv 1$, and
write the matrix $\Psi_1=\Psi_1(s)$ in (\ref{Psi as}) as
\begin{equation}\label{C1}
\Psi_1(s)=\begin{pmatrix}
q(s)&r(s)\\p(s)&-q(s)\end{pmatrix}.
\end{equation}
Substituting (\ref{Psi as}) into (\ref{def A B}) and (\ref{A}), one derives that
\begin{equation}A_\infty=-\frac{is}{4}\sigma_3,\end{equation} and that
\begin{equation}\label{sum}
A_1+A_2=\begin{pmatrix} -(\beta_1+\beta_2) &\frac{irs}{2}\\-\frac{ips}{2}&\beta_1+\beta_2
\end{pmatrix}.
\end{equation}
Expanding $A$ as $\zeta$ tends to infinity, we obtain that the coefficient of the
$\zeta^{-2}$-term is equal to $i(A_1-A_2)$ by (\ref{A}). Since this must be equal to the $\zeta^{-2}$-term of $\Psi_\zeta \Psi^{-1}$, we obtain by (\ref{Psi as}) the identity
\[A_1-A_2=i\left(\Psi_1+i(\beta_1-\beta_2)\sigma_3+(\beta_1+\beta_2)[\Psi_1,\sigma_3]
+\frac{is}{4}[\Psi_2,\sigma_3]-\frac{is}{4}[\Psi_1,\sigma_3]\Psi_1\right),\]
which gives
\begin{equation}\label{diff}
A_1-A_2=\begin{pmatrix}iq-(\beta_1-\beta_2)-\frac{srp}{2}&ir-2i(\beta_1+\beta_2)r+\frac{sh}{2}+\frac{sqr}{2}\\
ip+2i(\beta_1+\beta_2)p-\frac{sj}{2}+\frac{sqp}{2}&-iq+(\beta_1-\beta_2)+\frac{srp}{2}\end{pmatrix},
\end{equation}
where $h=\Psi_{2,12}, j=\Psi_{2,21}$. By (\ref{sum}) and (\ref{diff}), we
obtain
\begin{align}&\label{A10}
A_1=\frac{1}{2}\begin{pmatrix}-2v-2\alpha_1&\frac{irs}{2}+ir-2i(\beta_1+\beta_2)r+\frac{sh}{2}+\frac{sqr}{2}\\
-\frac{is}{2}p+ip+2i(\beta_1+\beta_2)p-\frac{sj}{2}+\frac{sqp}{2}&2v+2\alpha_1,\end{pmatrix},\\
&\label{A20}A_2=-\frac{1}{2}\begin{pmatrix}-2v-2\alpha_1+2\beta_1+2\beta_2&-\frac{isr}{2}+ir-2i(\beta_1+\beta_2)r+\frac{sh}{2}+\frac{sqr}{2}\\
\frac{isp}{2}+ip+2i(\beta_1+\beta_2)p-\frac{sj}{2}+\frac{sqp}{2}&2v+2\alpha_1-2\beta_1-2\beta_2,\end{pmatrix},
\end{align}
where $v$ is given by
\begin{equation}\label{v}
v=\frac{-i}{2}q +\frac{s}{4}rp-\alpha_1+\beta_1.
\end{equation}
Now we can use (\ref{Psi i}) and (\ref{Psi -i}) to derive that
\begin{equation}
\label{det A12}\det A_1(s)=-\alpha_1^2,\qquad \det
A_2(s)=-\alpha_2^2.
\end{equation}
It follows that we can write $A_1$ and $A_2$ in the form
\begin{align}&\label{A1}
A_1=\begin{pmatrix}-v-\alpha_1&-uyv\\
\frac{v+2\alpha_1}{uy}&v+\alpha_1\end{pmatrix},\\
&\label{A2a}A_2=\begin{pmatrix}v+\alpha_1-\beta_1-\beta_2&y(v+\alpha_1-\alpha_2-\beta_1-\beta_2)\\
-\frac{v+\alpha_1+\alpha_2-\beta_1-\beta_2}{y}&-v-\alpha_1+\beta_1+\beta_2\end{pmatrix},
\end{align}
for some functions $u,y$ depending on $s$.

\medskip

For $B_1$ and $B_0$, we can again use (\ref{Psi as}) to derive
\begin{equation}
B_1=-\frac{i\sigma_3}{4},\qquad
B_0=\begin{pmatrix}-\frac{\beta_1+\beta_2}{s}&\frac{ir}{2}\\-\frac{ip}{2}&\frac{\beta_1+\beta_2}{s}\end{pmatrix}.\label{B0}
\end{equation}
The $1/\zeta$ term in the asymptotic expansion of $\frac{d}{ds}\Psi\, \Psi^{-1}$ at infinity
must vanish, and this implies
\begin{equation}\label{identities}
q_s=\frac{i}{2}rp,\qquad h=2ir_s-rq+4ir\frac{\beta_1+\beta_2}{s},\qquad j=-2ip_s+qp+4ip\frac{\beta_1+\beta_2}{s},
\end{equation}
so that by (\ref{v}),
\begin{equation}\label{v2}
v=-\frac{i}{2}(sq_s+q)-\alpha_1+\beta_1.
\end{equation}

 The compatibility condition of the linear
system $\Psi_\zeta=A\Psi$ with $\Psi_s=B\Psi$ gives
\begin{equation}\label{compAB}
A_s-B_\zeta+[A,B]=0,
\end{equation}
the vanishing of the term of $\bigO(1)$ as $\zeta\to\infty$ in (\ref{compAB}) gives an expression for the off-diagonal elements of $B_0$ in terms of $u$, $v$, and $y$: we have
\begin{equation}
B_0=\frac{1}{s}\begin{pmatrix}-\beta_1-\beta_2&-uvy+y(v+\alpha_1-\alpha_2-\beta_1-\beta_2)\\\frac{v+2\alpha_1}{uy}-\frac{1}{y}(v+\alpha_1+\alpha_2-\beta_1-\beta_2)&\beta_1+\beta_2\end{pmatrix}.\label{B0-2}
\end{equation}
Writing down the residues at $\pm i$ in (\ref{compAB}) and using (\ref{A1}), (\ref{A2a}), and (\ref{B0-2}), we obtain
\begin{align}
&su_s=su-2v(u-1)^2+(u-1)\left[u(-\alpha_1-\alpha_2+\beta_1+\beta_2)+3\alpha_1-\alpha_2-\beta_1-\beta_2\right]\label{systemu}\\
&\label{systemv}
sv_s=-\frac{1}{u}(v+2\alpha_1)(v+\alpha_1-\alpha_2-\beta_1-\beta_2)+uv(v+\alpha_1+\alpha_2-\beta_1-\beta_2)\\
&\label{systemy}
sy_s=y\left(\frac{1}{u}(v+2\alpha_1)-2v-2\alpha_1-\frac{s}{2}+uv\right).
\end{align}

The system (\ref{systemu})-(\ref{systemy}) is an alternative form of the Painlev\'e V equation:
in particular (\ref{systemu})-(\ref{systemv}) implies that $u$ solves the Painlev\'e V equation
\begin{equation}\label{PV}
u_{ss}=\left(\frac{1}{2u}+\frac{1}{u-1}\right)u_{s}^2-\frac{1}{s}u_s+\frac{(u-1)^2}{s^2}\left(a_1
u+\frac{a_2}{u}\right)+\frac{a_3 u}{s} +a_4\frac{u(u+1)}{u-1},
\end{equation}
with the parameters $a_j$ given by
\begin{align}&a_1=\frac{1}{2}(\alpha_1+\alpha_2-\beta_1-\beta_2)^2,&
a_2=-\frac{1}{2}(\alpha_1+\alpha_2+\beta_1+\beta_2)^2,\\
&a_3=1-2\alpha_1+2\alpha_2, &a_4=-\frac{1}{2}.\end{align}
For us, it is more convenient to relate (\ref{systemu})--(\ref{systemv}) to the so-called $\sigma$-form
of the fifth Painlev\'e equation. Let
\begin{equation}\label{sigma}
w(s)=\frac{i}{2}sq(s)+(\alpha_1- \beta_1)s.
\end{equation}
By (\ref{v2}), it follows that
\begin{equation}\label{dsigma}
w_s=-v.
\end{equation}
By (\ref{systemv}),
\begin{equation}\label{sigma1}
sw_{ss}=\frac{1}{u}(v+2\alpha_1)(v+\alpha_1-\alpha_2-\beta_1-\beta_2)-uv(v+\alpha_1+\alpha_2-\beta_1-\beta_2).
\end{equation}
We also have by (\ref{sigma}), (\ref{dsigma}), (\ref{v}), and
(\ref{B0}) that $w-sw_s=\frac{s^2}{4}rp$. Furthermore, $\frac{s^2}{4}rp$ can be expressed by (\ref{B0}) and (\ref{B0-2}) in terms of $u,v,y$. We obtain
\begin{eqnarray}
w-sw_s&=&\frac{s^2}{4}rp\nonumber\\
&=&(-uyv+y(v+\alpha_1-\alpha_2-\beta_1-\beta_2))(\frac{v+2\alpha_1}{uy}-\frac{1}{y}(v+\alpha_1+\alpha_2-\beta_1-\beta_2))\nonumber
\\&=&\left(uv(v+\alpha_1+\alpha_2-\beta_1-\beta_2)+\frac{v+2\alpha_1}{u}(v+\alpha_1-\alpha_2-\beta_1-\beta_2)\right)\nonumber\\&&\qquad\qquad
+\left(-2v^2
+2(\beta_1+\beta_2-2\alpha_1)v+\alpha_2^2-(\alpha_1-\beta_1-\beta_2)^2\right).\label{sigma2}
\end{eqnarray}
By (\ref{sigma1}) and (\ref{sigma2}),
\begin{multline}
s^2w_{ss}^2-\left(w-sw_s+2v^2
-2(\beta_1+\beta_2-2\alpha_1)v-\alpha_2^2+(\alpha_1-\beta_1-\beta_2)^2\right)^2\\
=-4v(v+2\alpha_1)(v+\alpha_1+\alpha_2-\beta_1-\beta_2)(v+\alpha_1-\alpha_2-\beta_1-\beta_2).
\end{multline}
Now we substitute (\ref{dsigma}), which gives
\begin{multline}\label{sitemp}
s^2w_{ss}^2=\left(w-sw_s+2w_s^2
+2(\beta_1+\beta_2-2\alpha_1)w_s-\alpha_2^2+(\alpha_1-\beta_1-\beta_2)^2\right)^2\\
-4w_s(w_s-2\alpha_1)(w_s-\alpha_1-\alpha_2+\beta_1+\beta_2)(w_s-\alpha_1+\alpha_2+\beta_1+\beta_2).
\end{multline}
We set \begin{eqnarray}\label{sigma4}
\sigma(s)&=&w(s)+\frac{\beta_1+\beta_2-2\alpha_1}{2}s  -\alpha_2^2-\alpha_1^2+\frac{1}{2}(\beta_1+\beta_2)^2\\
&=&\frac{i}{2}sq(s)-\frac{\beta_1-\beta_2}{2}s-\alpha_1^2-\alpha_2^2+\frac{1}{2}(\beta_1+\beta_2)^2.\label{sigma4b}
\end{eqnarray}
Then the equation (\ref{sitemp}) becomes
\begin{equation}
s^2\sigma_{ss}^2=\left(\sigma-s\sigma_s+2\sigma_s^2\right)^2
-4(\sigma_s-\theta_1)(\sigma_s-\theta_2)
(\sigma_s-\theta_3)(\sigma_s-\theta_4),\label{sigmaform}
\end{equation}
where
\begin{align}
&\theta_1=-\alpha_1+\frac{\beta_1+\beta_2}{2},&&\theta_2=\alpha_1+\frac{\beta_1+\beta_2}{2},\\
&\theta_3=\alpha_2-\frac{\beta_1+\beta_2}{2},
&&\theta_4=-\alpha_2-\frac{\beta_1+\beta_2}{2}.
\end{align}
Equation (\ref{sigmaform}) is the $\sigma$-form of the Painlev\'e V equation as given in \cite[Formula (2.8)]{ForresterWitte}.
The function $r$ defined in (\ref{C1}) can be expressed in terms of $\sigma$. Substituting (\ref{identities}) and (\ref{A10}) in the first equation of (\ref{det A12}), we obtain an identity for the logarithmic derivative of $r$ in terms of $\sigma$:
\begin{equation}\label{identity r}
\frac{r_s}{r}=-2\frac{\beta_1+\beta_2}{s}+\frac{-4 \alpha_1^2 + 4 \alpha_2^2 - 8 (\beta_1 + \beta_2) \sigma_s +
 s^3 \sigma_{ss}}{s^2 (-2 \alpha_1^2 - 2 \alpha_2^2 + (\beta_1 + \beta_2)^2 -
   2 \sigma + 2 s \sigma_s)}.
\end{equation}


The following two results relate the Painlev\'e transcendent $\sigma$ to the functions $F_1$ and $F_2$, defined in (\ref{Psi i}), (\ref{Psi -i}), (\ref{Psi i 2}), and (\ref{Psi -i 2}), evaluated at $\pm i$.
\begin{proposition}\label{prop F1}We have the identities
\begin{align}
&\label{id a1}\alpha_1\left(F_1(i;s)^{-1}\sigma_3F_1(i;s)\right)_{22}=-\sigma_s(s)+\frac{\beta_1+\beta_2}{2},\\
&\label{id a2}\alpha_2\left(F_2(-i;s)^{-1}\sigma_3F_2(-i;s)\right)_{22}=\sigma_s(s) +\frac{\beta_1+\beta_2}{2}.
\end{align}
\end{proposition}
\begin{proof}
If we use (\ref{Psi i}) and (\ref{Psi -i}) to compute $A=\Psi_\zeta\Psi^{-1}$ (as defined in (\ref{def A B})-(\ref{A})), we obtain that $A_1=\alpha_1F_1(i)\sigma_3F_1^{-1}(i)$, and that $A_2=\alpha_2F_2(-i)\sigma_3F_2^{-1}(-i)$. By (\ref{A1}), (\ref{A2a}), (\ref{dsigma}), and (\ref{sigma4}),
the $2,2$-entries of $A_1$ and $A_2$ give us the identities
\begin{align}
&\alpha_1\left(F_1(i)\sigma_3F_1^{-1}(i)\right)_{22}=-\sigma_s+\frac{\beta_1+\beta_2}{2}\\
&\alpha_2\left(F_2(-i)\sigma_3F_2^{-1}(-i)\right)_{22}=\sigma_s+\frac{\beta_1+\beta_2}{2}.\end{align}
Note further that $\left(F_j^{-1}\sigma_3F_j\right)_{22}=\left(F_j\sigma_3F_j^{-1}\right)_{22}$, which implies (\ref{id a1})-(\ref{id a2}).
\end{proof}

\begin{proposition}\label{prop F2}There exist complex constants $c_1, c_2$, which may depend on $\alpha_1,\alpha_2,\beta_1,\beta_2$, but not on $s$, such that
\begin{align}
&\label{id F1}\alpha_1\left(F_1(i;s)^{-1}F_{1,\zeta}(i;s)\right)_{22}=\frac{i}{4}\sigma(s)-\frac{i}{8}(\beta_1+\beta_2)s+c_1,\\
&\label{id F2}\alpha_2\left(F_2(-i;s)^{-1}F_{2,\zeta}(-i;s)\right)_{22}=-\frac{i}{4}\sigma(s) -\frac{i}{8}(\beta_1+\beta_2)s+c_2.
\end{align}
\end{proposition}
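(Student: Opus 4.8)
The plan is to differentiate the left-hand sides of (\ref{id F1})--(\ref{id F2}) in $s$, identify the result using Proposition~\ref{prop F1} and the $s$-equation of the Lax pair, and then integrate in $s$; the integration constants will be $c_1$ and $c_2$, which by construction depend only on $\alpha_1,\alpha_2,\beta_1,\beta_2$.

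First I would observe that all the factors multiplying $F_1$ on the right in (\ref{Psi i}) and (\ref{Psi i 2}) --- the power $(\zeta-i)^{\alpha_1\sigma_3}$, the piecewise constant matrices $G_j$, and in the resonant case $2\alpha_1\in\mathbb N\cup\{0\}$ also the triangular factor $\begin{pmatrix}1&g_{int}\ln(\zeta-i)\\0&1\end{pmatrix}$ --- are independent of $s$. Hence in each region near $i$ one has $\Psi_s=F_{1,s}(\zeta-i)^{\alpha_1\sigma_3}(\cdots)G_j$, and comparing with $\Psi_s=B\Psi$ from (\ref{def A B}) gives the region-independent identity $B(\zeta;s)=F_{1,s}(\zeta;s)\,F_1(\zeta;s)^{-1}$ in a neighborhood of $i$. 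Here $F_1$ is invertible because $\det\Psi\equiv1$ while all the remaining factors have unit determinant, so $\det F_1\equiv1$, and $F_1$ is analytic at $i$ by RH condition (d). Evaluating at $\zeta=i$ and then differentiating in $\zeta$ (legitimate because $\Psi$, and hence $F_1$, is jointly analytic in $(\zeta,s)$ near $(i,s)$, so $\partial_s\partial_\zeta=\partial_\zeta\partial_s$) yields
\[
F_{1,s}(i;s)=B(i;s)F_1(i;s),\qquad
F_{1,s\zeta}(i;s)=B_1F_1(i;s)+B(i;s)F_{1,\zeta}(i;s),
\]
where $B_1=-\tfrac{i}{4}\sigma_3$ by (\ref{B0}).

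Next I would compute
\[
\frac{d}{ds}\Big(F_1(i;s)^{-1}F_{1,\zeta}(i;s)\Big)
=-F_1(i)^{-1}F_{1,s}(i)F_1(i)^{-1}F_{1,\zeta}(i)+F_1(i)^{-1}F_{1,s\zeta}(i).
\]
Substituting the two identities above, the two terms proportional to $F_1(i)^{-1}B(i)F_{1,\zeta}(i)$ cancel, leaving $F_1(i;s)^{-1}B_1F_1(i;s)=-\tfrac{i}{4}F_1(i;s)^{-1}\sigma_3F_1(i;s)$. Taking the $(2,2)$-entry, multiplying by $\alpha_1$, and using $\big(F_1^{-1}\sigma_3F_1\big)_{22}=\big(F_1\sigma_3F_1^{-1}\big)_{22}$ together with (\ref{id a1}) gives
\[
\frac{d}{ds}\Big(\alpha_1\big(F_1(i;s)^{-1}F_{1,\zeta}(i;s)\big)_{22}\Big)
=-\frac{i}{4}\Big(-\sigma_s(s)+\frac{\beta_1+\beta_2}{2}\Big)
=\frac{i}{4}\sigma_s(s)-\frac{i}{8}(\beta_1+\beta_2).
\]
Integrating in $s$ produces (\ref{id F1}) with integration constant $c_1$. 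The identity (\ref{id F2}) follows by the identical argument near $-i$, with $F_2$ and the factors in (\ref{Psi -i}), (\ref{Psi -i 2}), using (\ref{id a2}) in place of (\ref{id a1}).

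Once the setup is fixed the computation is purely algebraic, so I do not expect a serious obstacle. The two points needing care are: checking that the right factors in (\ref{Psi i})--(\ref{Psi -i 2}) are genuinely $s$-independent, including in the resonant cases where logarithms appear (this is the only real bookkeeping); and the regularity input, namely that $\Psi$ depends analytically on $s$ and that $F_1$, $F_2$ are analytic and invertible at $\pm i$, which is precisely where RH condition (d) and $\det\Psi\equiv1$ enter.
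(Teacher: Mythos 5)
Your proposal is correct and follows essentially the same route as the paper's own proof. The paper packages the key computation in the Taylor expansion notation $F_j(\zeta;s)=F_j^{(0)}(s)\bigl(I+F_j^{(1)}(s)(\zeta\mp i)+\bigO((\zeta\mp i)^2)\bigr)$ and reads off $F_{j,s}^{(1)}=(F_j^{(0)})^{-1}B_1F_j^{(0)}$ from the relation $F_{j,s}=BF_j$, whereas you obtain the identical statement $\frac{d}{ds}\bigl(F_1(i)^{-1}F_{1,\zeta}(i)\bigr)=F_1(i)^{-1}B_1F_1(i)$ by a direct product-rule computation with the cancellation of the $B(i)$-terms; from there both arguments invoke $B_1=-\tfrac{i}{4}\sigma_3$, Proposition~\ref{prop F1}, and integrate in $s$.
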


\begin{proof}
By (\ref{def A B}) and (\ref{Psi i})-(\ref{Psi -i}), we have $F_{j,s}=BF_j$, $j=1,2$.
Let us expand $F_j(\zeta;s)$ as $\zeta\to \pm i$ as
\begin{equation}\label{F Taylor}
F_j(\zeta;s)=F_j^{(0)}(s)\left(I+F_j^{(1)}(s)(\zeta\mp i)+\bigO(\zeta\mp i)^2\right).
\end{equation}
Substituting this into $F_{j,s}=BF_j$, we obtain by (\ref{B}),
\begin{align}
&\label{Fx}F_{j,s}^{(0)}=(B_0\pm iB_1)F_j^{(0)}, \\
&\label{Fx1}F_{j,s}^{(1)}=\left(F_j^{(0)}\right)^{-1}B_1F_j^{(0)}=-\frac{i}{4}\left(F_j(\pm i;s)^{-1}\sigma_3F_j(\pm i;s)\right).
\end{align}
Here $j=1$ corresponds to the upper symbol in $\pm$ or $\mp$, and $j=2$ to the lower one.
Also by (\ref{F Taylor}), we have $F_j^{-1}(\pm i)F_{j,\zeta}(\pm i)=F_j^{(1)}$, which shows that
\begin{equation}
\alpha_j \left(F_j^{-1}(\pm i)F_{j,\zeta}(\pm i)\right)_{22,s}=-\frac{i}{4}(\mp\sigma_s+\frac{\beta_1+\beta_2}{2})
\end{equation}
by Proposition \ref{prop F1}. Integrating, we obtain (\ref{id F1})-(\ref{id F2}).
\end{proof}

\begin{remark}
To find the explicit expressions for the constants $c_1, c_2$, one can use either the
large $s$ or the small $s$ asymptotic solution of the $\Psi$-RH problem presented in the following sections.
In this way, we obtain
\be\label{c1c2}
c_1=-c_2=\frac{i}{8}(\bt_1+\bt_2)^2.
\ee
\end{remark}

\subsection{Solvability of the RH problem for $\alpha_{1,2}\in\mathbb R$, $\beta_{1,2}\in i\mathbb R$}\label{secsolvability}

From the general theory of the RH problems related to the Painlev\'e equations, it follows that the RH problem for $\Psi(\zeta;s)$, $s\neq 0$, is solvable except for certain isolated values of $s$. At those values, $\sigma(s)$ can have a pole. It will follow from our asymptotic analysis below that the RH problem for $\Psi$ is solvable for sufficiently large $|s|$, $s$ in a neighborhood of $-i\mathbb R^+$,
and therefore, there are at most a finite number of values for $s$ in a neighborhood of $-i\mathbb R^+$ where the RH problem is not solvable.

If $\alpha_1,\alpha_2>-1/2$ are real and $\beta_1,\beta_2$ are purely imaginary, we can prove that there are no poles on $-i\mathbb R^+$ and that the RH problem for $\Psi$ is solvable for all $s\in -i\mathbb R^+$. For this, we use the technique of a so-called vanishing lemma, which consists of proving that the homogeneous version of the RH problem for $\Psi$ has only the trivial zero solution. By \cite{Zhou} and \cite{FIKN, FokasMuganZhou, FokasZhou}, such a vanishing lemma is equivalent to the existence of a solution to the original, non-homogeneous, form of the RH problem. 

\begin{lemma}{\bf (Vanishing lemma)}\ Let $s\in -i\mathbb R^+$,
$i\beta_1,i\beta_2\in \mathbb R$ and
$\alpha_1,\alpha_2>-1/2$, and suppose that $\Psi_0$ satisfies the
conditions (a), (b), and (d) of the RH problem for $\Psi$ in Section \ref{secPsi}, with condition
(c) replaced by the homogeneous asymptotic condition
    \begin{equation}
    \Psi_0(\zeta)e^{\frac{|s|}{4}\zeta\sigma_3}=\bigO(\zeta^{-1})
    , \qquad \mbox{ as $\zeta\to\infty$}.
    \end{equation} Then $\Psi_0 \equiv 0$.\label{vanishing lemma}
        \end{lemma}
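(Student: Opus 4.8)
The strategy is the standard vanishing-lemma argument adapted to a RH problem with jumps on an unbounded contour and Fisher--Hartwig-type endpoint behavior at $\pm i$. First I would reduce the contour. Since $s\in -i\mathbb R^+$, write $s=-i|s|$, so that $e^{-\frac{is}{4}\zeta\sigma_3}=e^{-\frac{|s|}{4}\zeta\sigma_3}$ and the homogeneous condition reads $\Psi_0(\zeta)e^{\frac{|s|}{4}\zeta\sigma_3}=\bigO(\zeta^{-1})$. Define $H(\zeta)=\Psi_0(\zeta)e^{\frac{|s|}{4}\zeta\sigma_3}$, or rather work with the matrix-valued product $M(\zeta)=\Psi_0(\zeta)\,\sigma_1\,\overline{\Psi_0(\bar\zeta)}^{\,t}\,\sigma_1$ (or a variant with $\sigma_2$/identity in place of $\sigma_1$, chosen to match the symmetries of the jump matrices $J_k$ for the present parameter values $i\beta_j,\alpha_j$ real). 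The point of $M$ is that it is analytic across most of the rays $\Gamma_k$ — because the jump matrices, under the combined operation $\zeta\mapsto\bar\zeta$, Schwarz reflection, and conjugation by $\sigma_1$, transform in a way that makes the boundary values of $M$ from the two sides agree — and consequently $M$ extends to a function analytic off a bounded segment (here the interval $[-i,i]$ supporting $\Gamma_7$, plus possibly nothing else depending on the reflection bookkeeping). Next, using condition (d) — i.e. that $F_1,F_2$ are analytic at $\pm i$ — one checks that $\Psi_0$, and hence $M$, has at worst a mild (integrable) singularity at $\pm i$ coming from the $(\zeta\mp i)^{\alpha_k\sigma_3}$ factors with $\Re\alpha_k>-1/2$; such singularities are removable for the purpose of contour integration. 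The decay $H=\bigO(\zeta^{-1})$ at infinity, together with the explicit exponential $e^{\pm\frac{|s|}{4}\zeta}$ in $P^{(\infty)}$, gives that $M(\zeta)=\bigO(\zeta^{-2})$ as $\zeta\to\infty$.

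The analytic core of the argument is then a contour-integral positivity computation. Integrate $M(\zeta)$ (or the relevant scalar combination extracted from it) over a large circle; by the decay at infinity this integral vanishes, and by deforming inward and collapsing onto the remaining jump contour one obtains an identity of the form $\int \Psi_{0,+}(\zeta)\,(\text{Hermitian positive factor})\,\Psi_{0,+}(\zeta)^{*}\,d\zeta=0$ along $[-i,i]$ (and any surviving rays), where the middle factor is $J_7+J_7^{*}$ or an analogue, which for the present real/imaginary parameter choices is positive semidefinite — here the hypotheses $\alpha_1,\alpha_2>-1/2$, $i\beta_1,i\beta_2\in\mathbb R$ are exactly what is needed to make the relevant combination $J_k+J_k^{*}$ (after the appropriate conjugation and orientation) nonnegative. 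Standard in these arguments: one must verify that $J_7=\begin{pmatrix}0&1\\-1&1\end{pmatrix}$ contributes a positive-definite real part up to conjugation (its Hermitian part is $\begin{pmatrix}0&0\\0&1\end{pmatrix}$ after symmetrization, which is only semidefinite, so one extracts a single scalar and supplements it with the contributions of $\Gamma_5,\Gamma_6$ where $J_5=e^{2\pi i\beta_2\sigma_3}$, $J_6=e^{2\pi i\beta_1\sigma_3}$ are unitary for $i\beta_j\in\mathbb R$, hence contribute zero to the real part but control the off-diagonal). Concluding from $\int \Psi_{0,+}(\text{nonneg})\Psi_{0,+}^{*}=0$ that the corresponding column of $\Psi_{0,+}$ vanishes on the segment, one gets that $M$ is entire, $\bigO(\zeta^{-2})$, hence $M\equiv 0$; unwinding this forces $\Psi_0\equiv 0$.

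In more detail on the collapse step: set $L(\zeta)=\Psi_0(\zeta)\,\overline{\Psi_0(\bar\zeta)^{\,t}}$. Using the jump relations $\Psi_{0,+}=\Psi_{0,-}J_k$ on $\Gamma_k$ and the symmetry of the contour under $\zeta\mapsto\bar\zeta$ (the rays $\Gamma_1,\Gamma_2$ are the conjugates of $\Gamma_4,\Gamma_3$, and $\Gamma_5,\Gamma_6,\Gamma_7$ are conjugation-symmetric), one finds that across $\Gamma_1\cup\cdots\cup\Gamma_4$ the function $L$ is continuous, while across $\Gamma_5\cup\Gamma_6\cup\Gamma_7$ it has a jump $L_+-L_-=\Psi_{0,-}(J_k+\overline{J_k^{\,t}}-\text{something})\overline{\Psi_{0,-}^{\,t}}$ reducing, after careful orientation bookkeeping, to $\Psi_{0,+}\,(J_k^{-1}+\overline{(J_k^{-1})^{\,t}})\,\overline{\Psi_{0,+}^{\,t}}$ on the segment and to zero on $\Gamma_5,\Gamma_6$ (since those $J_k$ are unitary). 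Then $0=\oint_{|\zeta|=R}L\,\frac{d\zeta}{2\pi i}\to -\int_{-i}^{i}(L_+-L_-)\,\frac{d\zeta}{2\pi i}$ as $R\to\infty$, giving $\int_{-i}^{i}\Psi_{0,+}(\zeta)\,(J_7^{-1}+\overline{(J_7^{-1})^{\,t}})\,\overline{\Psi_{0,+}(\zeta)^{\,t}}\,d\zeta=0$ along the (purely imaginary, hence after parametrization real-measure) segment. One then takes the $(1,1)$-entry, observes positivity of the integrand, and deduces the vanishing of the first column of $\Psi_{0,+}$ on $[-i,i]$; analyticity and Liouville finish.

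\textbf{Main obstacle.} The delicate part is not the positivity itself but the endpoint analysis at $\pm i$: one must show that the collapse of the large circle onto the segment leaves no extra residual contribution at $\pm i$, i.e. that the mild algebraic growth $(\zeta\mp i)^{\alpha_k\sigma_3}$ allowed by RH condition (d) (and the logarithmic case when $2\alpha_k\in\mathbb N\cup\{0\}$) is integrable against $d\zeta$ and produces vanishing small-circle contributions — this is where $\Re\alpha_k>-1/2$ enters, and it requires invoking condition (d) to rule out the worse, non-integrable local behaviors that a generic solution of (a)--(c) could a priori have. A secondary subtlety is choosing the correct symmetry reduction (whether to conjugate $\Psi_0$ by $\sigma_1$, $\sigma_2$, or $I$ and whether to reflect in $\bar\zeta$ or $-\bar\zeta$) so that all seven jump matrices transform compatibly for the specific parameter constraints $\alpha_j$ real, $\beta_j$ imaginary; getting this wrong makes the "positive factor" indefinite and the argument collapses. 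Both obstacles are handled by the same techniques as in the references \cite{Zhou, FIKN, FokasMuganZhou, FokasZhou} and the analogous lemma in \cite{CIK2}, but the presence of two singular points $\pm i$ in the bounded part of the contour, rather than one, is what makes the present case more involved.
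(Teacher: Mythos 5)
Your overall strategy (Hermitian product, contour integral, positivity) is the right one and correctly locates the role of condition (d) and $\Re\alpha_k>-1/2$ in controlling the endpoints at $\pm i$. But there are two genuine gaps, both of which the paper closes in ways your proposal does not.

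\textbf{The symmetry reduction does not work as stated.} You try to argue that $M(\zeta)=\Psi_0(\zeta)\sigma_1\overline{\Psi_0(\bar\zeta)}^{\,t}\sigma_1$ (or a variant) is continuous across $\Gamma_1,\dots,\Gamma_4$ directly. Checking the jump on $\Gamma_1$ (where $\bar\zeta\in\Gamma_4$), continuity of $M$ requires $J_1=\sigma_1\,\overline{J_4^{\,t}}\,\sigma_1$, i.e.\ $e^{2\pi i(\alpha_1-\beta_1)}=e^{2\pi i(\alpha_2+\beta_2)}$ for real $\alpha_j$, imaginary $\beta_j$. This is a relation between the parameters at the two singular points, and nothing in the hypotheses forces it. The paper avoids the issue by \emph{first} absorbing the jump matrices and the exponential $e^{\frac{|s|}{4}\zeta\sigma_3}$ into a new unknown $N$ that has jumps only on the imaginary axis, and then takes $H(\zeta)=N(\zeta)N^*(-\bar\zeta)$: the reflection $\zeta\mapsto-\bar\zeta$ fixes the reduced jump contour $i\mathbb R$, so analyticity of $H$ off $i\mathbb R$ is automatic, with no parameter identities needed. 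Without that preliminary transformation your product is not analytic across the oblique rays, and the contour-collapse argument has no starting point.

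\textbf{The final deduction is invalid.} You conclude ``one gets that $M$ is entire, $\bigO(\zeta^{-2})$, hence $M\equiv 0$; unwinding this forces $\Psi_0\equiv 0$.'' This does not follow. The positivity argument, even done correctly, shows only that one particular column of the boundary value vanishes on the imaginary axis (in the paper, the second column of $N_-$, hence by the jump relations the first column of $N_+$). That information is consistent with $M\equiv 0$ without $\Psi_0$ being zero: if one column of $\Psi_0$ vanishes identically, the quadratic form $M$ vanishes identically regardless of the other column, since $\sigma_1$ is indefinite. Liouville applied to $M$ therefore gives you nothing new. The paper's proof has a non-trivial further step precisely here: from the vanishing of one column of $N_\pm$ on $i\mathbb R$ and the identity theorem, they isolate the surviving scalar data $g_j$, show it is bounded, analytic in a half-plane, and exponentially decaying as $\zeta\to\pm i\infty$, and then invoke Carlson's theorem (after a change of variable) to force $g_j\equiv 0$. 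This exponential-decay/Carlson argument is essential and is not replaceable by a Liouville argument on the quadratic form.

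A smaller error: you assert that $J_5=e^{2\pi i\beta_2\sigma_3}$, $J_6=e^{2\pi i\beta_1\sigma_3}$ are unitary for $i\beta_j\in\mathbb R$. They are not; for $\beta_j$ purely imaginary these are real, positive-definite diagonal matrices, i.e.\ Hermitian. That Hermitian (not unitary) structure is in fact what the paper uses: the contributions from the rays above $i$ and below $-i$ enter the positivity identity as positive-semidefinite matrices $\begin{pmatrix}0&0\\0&2e^{2\pi i\beta_j}\end{pmatrix}$, not as vanishing Hermitian parts.
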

\begin{proof}
        Suppose that $\Psi_0$ satisfies the above homogeneous RH conditions.
        Let
        \begin{align*}
        &N(\zeta)=\Psi_0(\zeta)J_1^{-1}J_6^{-1}e^{\frac{|s|}{4}\zeta\sigma_3},&&\mbox{
        in region II, $\Re \zeta>0$,}\\
                &N(\zeta)=\Psi_0(\zeta)J_2e^{\frac{|s|}{4}\zeta\sigma_3},&&\mbox{
        in region II, $\Re \zeta<0$,}\\
                &N(\zeta)=\Psi_0(\zeta)J_3e^{\frac{|s|}{4}\zeta\sigma_3},&&\mbox{
        in region IV, $\Re \zeta<0$,}\\
                &N(\zeta)=\Psi_0(\zeta)J_4^{-1}J_5e^{\frac{|s|}{4}\zeta\sigma_3},&&\mbox{
        in region IV, $\Re \zeta>0$,}\\
        &N(\zeta)=\Psi_0(\zeta)J_5e^{\frac{|s|}{4}\zeta\sigma_3},&&\mbox{
        in region V,}\\
                &N(\zeta)=\Psi_0(\zeta)J_6^{-1}e^{\frac{|s|}{4}\zeta\sigma_3},&&\mbox{
        in region I,}\\
                &N(\zeta)=\Psi_0(\zeta)e^{\frac{|s|}{4}\zeta\sigma_3},&&\mbox{
                in regions III and VI.}
        \end{align*}
Then $N$ satisfies the following RH problem
\subsubsection*{RH problem for $N$}
        \begin{itemize}
        \item[(a)] $N$ is analytic in $\mathbb C\setminus i\mathbb R$.
        \item[(b)] On $i\mathbb R$, $N$ satisfies the jump conditions
        \begin{align}\label{RHP N: b1}
        &N_+(\zeta)=N_-(\zeta)e^{-\frac{|s|}{4}\zeta\sigma_3}J_7e^{\frac{|s|}{4}\zeta\sigma_3},
        &&\mbox{  $\zeta\in(-i,i)$},\\
        &\label{RHP N: b2}N_+(\zeta)=N_-(\zeta)e^{-\frac{|s|}{4}\zeta\sigma_3}J_5^{-1}J_4J_3e^{\frac{|s|}{4}\zeta\sigma_3},
        &&\mbox{  $\zeta\in(-i\infty,-i)$},\\
        &\label{RHP N: b3}N_+(\zeta)=N_-(\zeta)e^{-\frac{|s|}{4}\zeta\sigma_3}J_6J_1J_2e^{\frac{|s|}{4}\zeta\sigma_3},
        &&\mbox{  $\zeta\in(+i, +i\infty)$}.
        \end{align}
        \item[(c)] For a fixed $s\in -i\mathbb R^+$,
        \begin{equation}N(\zeta)=\bigO(\zeta^{-1}),\qquad\mbox{ as
$\zeta\to\infty$}.\end{equation} \end{itemize} Set
        \begin{equation}
        H(\zeta)=N(\zeta)N^*(-\overline{\zeta}).
        \end{equation}
        From the asymptotics for $N$, it follows that
$H(\zeta)=\bigO(\zeta^{-2})$ as $\zeta\to\infty$.
       $H$ is analytic in the left half plane $\Re\zeta<0$, and it has singularities at $\pm i$.
       Because of (\ref{Psi i}) and (\ref{Psi -i}), it follows that
       the singularities are weak enough for $H$ to be integrable along the imaginary line, as $\alpha_1,\alpha_2>-1/2$.
        Using Cauchy's theorem, we then have
        \begin{equation}\label{HHC}
        \int_{-i\infty}^{+i\infty}H_+(\zeta)d\zeta=0.
        \end{equation}
        Because of the jump conditions for $N$, this implies that
        \begin{multline}
        \int_{-i\infty}^{-i}N_-(\zeta)e^{-\frac{|s|}{4}\zeta\sigma_3}J_5^{-1}J_4J_3e^{\frac{|s|}{4}\zeta\sigma_3}
        N_-^*(\zeta)d\zeta
        +\int_{-i}^{i}N_-(\zeta)e^{-\frac{|s|}{4}\zeta\sigma_3}J_7e^{\frac{|s|}{4}\zeta\sigma_3}N_-^*(\zeta)d\zeta\\+
        \int_{i}^{+i\infty}N_-(\zeta)e^{-\frac{|s|}{4}\zeta\sigma_3}J_6J_1J_2e^{\frac{|s|}{4}\zeta\sigma_3}N_-^*(\zeta)d\zeta=0.
        \end{multline}
        Summing up this expression with its Hermitian conjugate and using the form of the jump matrices $J_k$, we
        obtain for $s\in -i\mathbb R^+$, for real $\alpha_1,\alpha_2$, and purely imaginary $\beta_1,\beta_2$,
\begin{multline}
        \int_{-i\infty}^{-i}N_-(\zeta)\begin{pmatrix}0&0\\0&2e^{2i\pi\beta_2}\end{pmatrix}
        N_-^*(\zeta)d\zeta
        +\int_{-i}^{i}N_-(\zeta)\begin{pmatrix}0&0\\0&2\end{pmatrix}N_-^*(\zeta)d\zeta\\+
        \int_{i}^{+i\infty}N_-(\zeta)\begin{pmatrix}0&0\\0&2e^{2i\pi\beta_1}\end{pmatrix}N_-^*(\zeta)d\zeta=0.
        \end{multline}
As $\beta_{1,2}\in i\mathbb R$, it follows that
 the second column of $N_-$
is identically zero on $i\mathbb R$. From the jump conditions
(\ref{RHP N: b1})--(\ref{RHP N: b3}), it then follows that the first
column of $N_+$ is zero on $i\mathbb R$ as well. From the identity
theorem, it follows that $N_{j2}(\zeta)=0$ for $\Re\zeta>0$, and
$N_{j1}(\zeta)=0$ for $\Re\zeta<0$.
        Let us now define
        \begin{equation}
        g_j(\zeta)=\begin{cases}
        N_{j2}(\zeta), &\mbox{ for  $\Re\zeta<0$},\\
        N_{j1}(\zeta), &\mbox{ for  $\Re\zeta>0$}.
        \end{cases}
        \end{equation}
       Then,
        $g_j$ is analytic in $\mathbb C\setminus i\mathbb R$.
        On $i\mathbb R$, $g$ has the following jump relations:
        \begin{equation}
        g_{j,+}(\zeta)=g_{j,-}(\zeta)\times
        \begin{cases}
        e^{-2\pi i\alpha_2}e^{-\frac{|s|}{2}\zeta}, &\mbox{  $\zeta\in(-i\infty,-i)$,}\\
        e^{2\pi i\alpha_1}e^{-\frac{|s|}{2}\zeta}, &\mbox{  $\zeta\in(+i,+i\infty)$,}\\
        e^{-\frac{|s|}{2}\zeta}, &\mbox{  $\zeta\in(-i,i)$.}
        \end{cases}
        \end{equation}
        Now we write $\widehat g_j$ for the analytic continuation of $g_j$ from the left
half plane to $\mathbb
C\setminus\{\zeta: \Re\zeta\geq 0, -1\leq \Im\zeta\leq 1\}$,
        \begin{equation}
        \widehat g_j(\zeta)=\begin{cases}
        g_j(\zeta), &\mbox{ $\Re\zeta<0$,}\\
        g_j(\zeta)e^{2\pi i\alpha_1}e^{-\frac{|s|}{2}\zeta}, &\mbox{  $\Re\zeta\geq 0$, $\Im\zeta>1$,}\\
        g_j(\zeta)e^{-2\pi i\alpha_2}e^{-\frac{|s|}{2}\zeta}, &\mbox{  $\Re\zeta\geq 0$, $\Im\zeta<-1$.}
        \end{cases}
        \end{equation}
        Set
        \begin{equation}
        h_j(\zeta)=\widehat g_j(-(\zeta+2)^{3/2}),
        \end{equation}
        where we choose $(\zeta+2)^{3/2}$ with the branch cut on $(-\infty,-2]$ and corresponding to arguments between $-\pi$ and $\pi$.
        It is easy to verify that $h_j$ is analytic and bounded for $\Re\zeta\geq
0$, and that
        $h_j(\zeta)=\bigO(e^{-\frac{|s|}{2}|\zeta|})$ for $\zeta\to\pm i\infty$. By Carlson's
theorem (see, e.g., \cite{Titch}), this implies that $h_j\equiv 0$ if $|s|>0$. It follows that
$g_j\equiv 0$ and consequently $N\equiv 0$ and $\Psi_0\equiv 0$, which
proves the lemma.
\end{proof}

\section{Auxiliary RH problem}\label{secaux}

We assume in this section that $\alpha\pm\beta\neq -1, -2, \ldots $, and $\Re\al>-1/2$.
In \cite[Section 4.2.1]{CIK2}, see also \cite{ItsKrasovsky, DIK2, MMS}, a
function $M=M^{(\alpha,\beta)}$ was constructed explicitly in terms
of the confluent hypergeometric function, which solves the following
RH problem.

\subsubsection*{RH problem for $M$}
    \begin{itemize}
    \item[(a)] $M:\mathbb C\setminus \left(e^{\pm\frac{\pi i}{4}}\mathbb R\cup
\mathbb R^+\right) \to \mathbb C^{2\times 2}$ is analytic,
    \item[(b)] $M$ has continuous boundary values on $e^{\pm\frac{\pi i}{4}}\mathbb
R\cup \mathbb R^+\setminus\left\{0\right\}$ related by the
conditions:
    \begin{align}
    &\label{RHP M0: b1}M_{+}(\lambda)=M_{-}(\lambda)\begin{pmatrix}1&e^{\pi
i(\alpha-\beta)}\\0&1\end{pmatrix}, &\mbox{ as $\lambda\in
e^{\frac{i\pi}{4}}\mathbb R^+$,}\\
    &M_{+}(\lambda)=M_{-}(\lambda)\begin{pmatrix}1&0\\-e^{-\pi
i(\alpha-\beta)}&1\end{pmatrix}, &\mbox{ as $\lambda\in
e^{\frac{3i\pi}{4}}\mathbb R^+$,}\\
    &M_{+}(\lambda)=M_{-}(\lambda)\begin{pmatrix}1&0\\e^{\pi
i(\alpha-\beta)}&1\end{pmatrix}, &\mbox{ as $\lambda\in
e^{\frac{5i\pi}{4}}\mathbb R^+$,}\\
    &M_{+}(\lambda)=M_{-}(\lambda)\begin{pmatrix}1&-e^{-\pi
i(\alpha-\beta)}\\0&1\end{pmatrix}, &\mbox{ as $\lambda\in
e^{\frac{7i\pi}{4}}\mathbb R^+$,}\\
    &\label{RHP M0: b5}M_{+}(\lambda)=M_{-}(\lambda)e^{2\pi i\beta\sigma_3}, &\mbox{
as $\lambda\in \mathbb R^+$},
    \end{align}
    where all the rays of the jump contour are oriented away from the origin, see Figure \ref{Figure: M}.
    \item[(c)] Furthermore, in all sectors,
\begin{equation}\label{RHP Ph: c}
M(\lambda)=\left(I+M_1\lambda^{-1}+\bigO(\lambda^{-2})\right)\lambda^{-\beta\sigma_3}e^{-\frac{1}{2}\lambda\sigma_3},\qquad
\mbox{ as $\lambda\to\infty$},
\end{equation}
where $0<\arg\lb<2\pi$, and
\begin{equation}\label{M1}
M_1=M_1^{(\al,\bt)}=
\begin{pmatrix}\alpha^2-\beta^2&-e^{-2\pi i\beta}\frac{\Gamma(1+\alpha-\beta)}{\Gamma(\alpha+\beta)}\\e^{2\pi i\beta}\frac{\Gamma(1+\alpha+\beta)}{\Gamma(\alpha-\beta)}&-\alpha^2+\beta^2\end{pmatrix}.
\end{equation}
    \end{itemize}
\begin{figure}[t]
\begin{center}
    \setlength{\unitlength}{0.8truemm}
    \begin{picture}(100,48.5)(0,2.5)

    \put(48,27){\small $0$}
    \put(50,25){\thicklines\circle*{.8}}

    \put(50,25){\line(1,1){25}}
    \put(50,25){\line(1,-1){25}}
    \put(50,25){\line(-1,1){25}}
    \put(50,25){\line(-1,-1){25}}
    \put(50,25){\line(1,0){45}}
    \put(69,25){\thicklines\vector(1,0){.0001}}
    \put(65,40){\thicklines\vector(1,1){.0001}}
    \put(65,10){\thicklines\vector(1,-1){.0001}}
    \put(35,40){\thicklines\vector(-1,1){.0001}}
    \put(35,10){\thicklines\vector(-1,-1){.0001}}

    \put(67,37){\small $\begin{pmatrix}1&e^{\pi i(\alpha-\beta)}\\0&1\end{pmatrix}$}
    \put(-4,40){\small $\begin{pmatrix}1&0\\-e^{-\pi i(\alpha-\beta)}&1\end{pmatrix}$}
    \put(85,27){\small $e^{2\pi i\beta\sigma_3}$}
    \put(2,8){\small $\begin{pmatrix}1&0\\e^{\pi i(\alpha-\beta)}&1\end{pmatrix}$}
    \put(67,11){\small $\begin{pmatrix}1&-e^{-\pi i(\alpha-\beta)}\\0&1\end{pmatrix}$}

    \put(110,35){1}
    \put(50,50){2}
    \put(0,25){3}
    \put(110,7){5}
    \put(50,-3){4}
    \end{picture}
    \caption{The jump contour and jump matrices for $M$.}
    \label{Figure: M}
\end{center}
\end{figure}
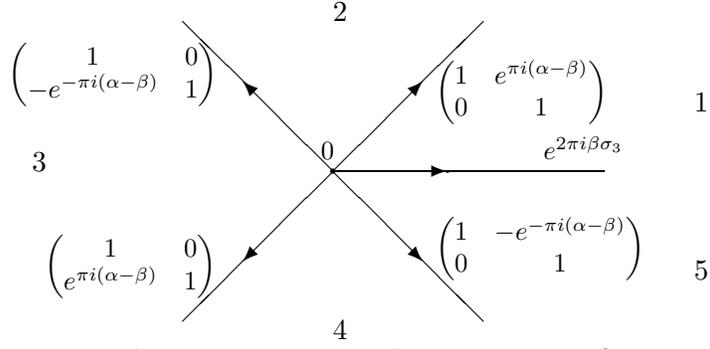
The function $M$ was used in \cite{CIK2} to construct the global parametrix for an analogue
of the $\Psi$-RH problem for small values of a parameter in the problem.
In the present paper, we will make use of $M$ twice: in the construction of the local parametrices at $\pm i$
in the study of the large $|s|$ asymptotics, and in the construction of the global parametrix for the small $|s|$ asymptotics.
In the latter case, we will need, in addition to the RH conditions, more precise information on the local behavior of $M$ at zero in the region between the lines
$e^{\frac{3i\pi}{4}}\mathbb R^{+}$ and $e^{\frac{5i\pi}{4}}\mathbb R^{+}$, which we call region 3.
Write $M\equiv M^{(3)}$
in this region.
It is known (see (\cite[Section 4.2.1]{CIK2}) that $M^{(3)}$ can be written in the form
\be\label{M3}
M^{(3)}(\lb)=L(\lb)\lb^{\al\si_3}\wt G_3,\qquad 2\al\neq 0,1,\dots,\quad \al\pm\bt\neq
-1,-2,\dots,
\ee
with the branch of $\lb^{\pm\al}$ chosen with $0<\arg\lb<2\pi$.
Here
\begin{multline}\label{ME}
L(\lb)=e^{-\lb/2}\left(\begin{matrix}
e^{-i\pi(\al+\bt)}\frac{\Gamma(1+\al-\bt)}{\Gamma(1+2\al)}\varphi(\al+\bt,1+2\al,\lb)
\cr
-e^{-i\pi(\al-\bt)}\frac{\Gamma(1+\al+\bt)}{\Gamma(1+2\al)}\varphi(1+\al+\bt,1+2\al,\lb)
\end{matrix}\right.\\
\left.\begin{matrix}
e^{i\pi(\al-\bt)}\frac{\Gamma(2\al)}{\Gamma(\al+\bt)}\varphi(-\al+\bt,1-2\al,\lb)\cr
e^{i\pi(\al+\bt)}\frac{\Gamma(2\al)}{\Gamma(\al-\bt)}\varphi(1-\al+\bt,1-2\al,\lb)
\end{matrix}\right)
\end{multline}
is an entire function, with
\begin{equation}\label{phidef}
\varphi(a,c;z) =
1+\sum_{n=1}^{\infty}\frac{a(a+1)\cdots(a+n-1)}{c(c+1)\cdots(c+n-1)}\frac{z^n}{n!},
\qquad c\neq 0,-1,-2,\dots ,
\end{equation} and $\wt G_3$ is the constant matrix
\be\label{MG3}
\wt G_3=\begin{pmatrix}
1 & \wt g \cr
0 & 1
\end{pmatrix},\qquad \wt g=\wt g(\al,\bt)=-\frac{\sin\pi(\alpha+\beta)}{\sin 2\pi\alpha}.
\ee
If $2\al$ is an integer, we have
\begin{align}
&M^{(3)}(\lambda)=\wt L(\lambda)\lb^{\al\si_3}
\begin{pmatrix}1& m(\lb)\\0& 1\end{pmatrix},\label{M3half}\\
&m(\lb)=\frac{(-1)^{2\al+1}}{\pi}\sin\pi(\alpha+\beta)\ln(\lb e^{-i\pi}),
&\mbox{ if }\quad 2\alpha=0,1,\dots,\label{mhalf}
\end{align}
where $\wt L(\lb)$ is analytic at zero, and the branch of the logarithm corresponds to the argument of $\lb$ between $0$ and $2\pi$.

\section{Asymptotics for $\Psi$ as $s\to -i\infty$}\label{sec large s}

We will perform a series of transformations of the RH problem for $\Psi(\zeta)$ in order to obtain a small norm RH problem for which we can derive asymptotics as $s\to -i\infty$.
The asymptotic solution will be built out of 3 parametrices: 2 local ones (at $z_1$ and $z_2$)
given in terms of the function $M$ of the previous section, and a global one at infinity in terms
of elementary functions. The asymptotic solution will be valid uniformly in $\zeta$.

We assume in this section that $s\in -i\mathbb R^+$, $|s|$ is large, $\Re\alpha_1,\Re\alpha_2>-1/2$, 
$\al_k\pm\bt_k\neq -1,-2,\dots$, $k=1,2$, 
and that $|||\beta|||<1$. The results of this section can be easily extended to $s$ in a neighborhood of
$-i\mathbb R^+$.

\subsection{Normalization of the problem and opening of lens}

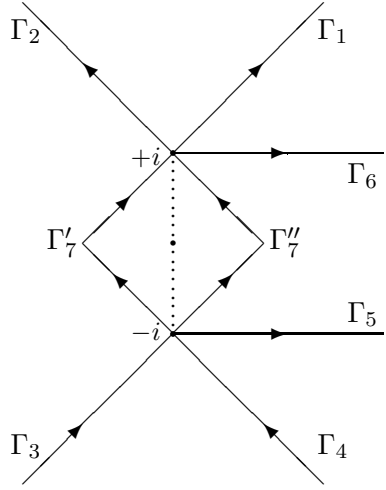
\begin{figure}[t]
\begin{center}
    \setlength{\unitlength}{0.8truemm}
    \begin{picture}(110,95)(-10,-7.5)
    \put(50,45){\thicklines\circle*{.8}}
    \put(50,60){\thicklines\circle*{.8}}
    \put(50,30){\thicklines\circle*{.8}}
    \put(43,58){\small $+i$}
    \put(43,29){\small $-i$}
    \put(50,60){\thicklines\circle*{.8}}
    \put(50,30){\thicklines\circle*{.8}}
    \put(69,60){\thicklines\vector(1,0){.0001}}
    \put(69,30){\thicklines\vector(1,0){.0001}}
    \put(50,60){\line(1,1){25}}
    \put(50,30){\line(1,-1){25}}
    \put(50,60){\line(-1,1){25}}
    \put(50,30){\line(-1,-1){25}}
    \put(50,30){\line(1,0){35}}
    \put(50,60){\line(1,0){35}}

\multiput(50,30)(0,1.5){20}{\circle*{0.1}}

\put(50,30){\line(1,1){15}}\put(50,30){\line(-1,1){15}}
\put(50,60){\line(1,-1){15}}\put(50,60){\line(-1,-1){15}}

%
\put(60,40){\thicklines\vector(1,1){.0001}}
    \put(57,53){\thicklines\vector(-1,1){.0001}}
\put(40,40){\thicklines\vector(-1,1){.0001}}
    \put(43,53){\thicklines\vector(1,1){.0001}}

\put(23,79){$\Gamma_2$} \put(74,79){$\Gamma_1$}
\put(23,10){$\Gamma_3$} \put(74,10){$\Gamma_4$}
\put(79,32){$\Gamma_5$} \put(79,55){$\Gamma_6$}
\put(29,44){$\Gamma_7'$} \put(66,44){$\Gamma_7''$}

    \put(65,75){\thicklines\vector(1,1){.0001}}
    \put(65,15){\thicklines\vector(-1,1){.0001}}

    \put(35,75){\thicklines\vector(-1,1){.0001}}
    \put(35,15){\thicklines\vector(1,1){.0001}}

 \end{picture}
    \caption{The jump contour for $U$.}
    \label{figure: lens}
\end{center}
\end{figure}
Consider the contour shown in Figure \ref{figure: lens}, and set
\begin{equation}\label{def U}
U(\zeta)=\begin{cases}\Psi(\zeta)e^{\frac{|s|}{4}\zeta \sigma_3},&\mbox{ outside the region delimited by $\Gamma_7'$ and $\Gamma_7''$,}\\
\Psi(\zeta)\begin{pmatrix}1&1\\0&1\end{pmatrix}e^{\frac{|s|}{4}\zeta \sigma_3},&\mbox{ in the right part of this region,}\\
\Psi(\zeta)\begin{pmatrix}1&0\\1&1\end{pmatrix}e^{\frac{|s|}{4}\zeta
\sigma_3},&\mbox{ in the left part of this region.} \end{cases}
\end{equation}
Then we have the following RH problem for $U$:

\subsubsection*{RH problem for $U$}
\begin{itemize}
    \item[(a)] $U:\mathbb C\setminus (\cup_{k=1}^6\Gamma_k\cup\Gamma_7'\cup\Gamma_7'') \to \mathbb C^{2\times 2}$ is
    analytic.
    \item[(b)] $U$ satisfies the jump conditions
    \begin{align}\label{jump U1}&U_+(\zeta)=U_-(\zeta)e^{-\frac{|s|}{4}\zeta \sigma_3}J_ke^{\frac{|s|}{4}\zeta \sigma_3},&
    \zeta\in\Gamma_k, k=1,\ldots, 6,\\
    &\label{jump U2}U_+(\zeta)=U_-(\zeta)e^{-\frac{|s|}{4}\zeta \sigma_3}\begin{pmatrix}1&0\\-1&1\end{pmatrix}
    e^{\frac{|s|}{4}\zeta \sigma_3},&
    \zeta\in\Gamma_7',\\
    &\label{jump U3}U_+(\zeta)=U_-(\zeta)e^{-\frac{|s|}{4}\zeta \sigma_3}\begin{pmatrix}1&1\\0&1\end{pmatrix}
    e^{\frac{|s|}{4}\zeta \sigma_3},&
    \zeta\in\Gamma_7''.
    \end{align}
    \item[(c)] We have
    \begin{multline}\label{U as}
    U(\zeta)=\left(I+\frac{\Psi_1}{\zeta}+\frac{\Psi_2}{\zeta^2}+\bigO(\zeta^{-3})\right)
    P^{(\infty)}(\zeta)
    \qquad  \mbox{ as $\zeta\to \infty$,}
    \end{multline}
    where $P^{(\infty)}(\zeta)$ is defined in (\ref{Pinfty}).
    \end{itemize}
The local behavior of $U$ near $\pm i$ can be deduced from the local behavior for $\Psi$ (see the condition (d) of the RH problem for $\Psi$) and (\ref{def U}).

\subsection{Global parametrix} As $s\to -i\infty$, the jump matrices
for $U$ tend to $I$ exponentially fast on $\Gamma_1, \Gamma_2,
\Gamma_3,\Gamma_4, \Gamma_7', \Gamma_7''$, except in the vicinity of
$\pm i$. Ignoring those parts of the contour, we are left with a RH
problem with jumps only on $\Gamma_5\cup\Gamma_6$, which can be
solved explicitly. It is indeed easily verified that $P^{(\infty)}(\zeta)$ given by (\ref{Pinfty})
(the choice of the branches of $(\zeta-i)^{-\beta_1\sigma_3}$ and $(\zeta+i)^{-\beta_2\sigma_3}$
is described following (\ref{Pinfty}))
satisfies the same jump condition as $U$ on $\Gamma_5\cup\Gamma_6$.
In addition, $U(\zeta)P^{(\infty)}(\zeta)^{-1}$ tends to $I$ as $\zeta\to\infty$.

\subsection{Local parametrices}\label{section: local infty}

Let $\mathcal U_1$ and  $\mathcal U_2$ be fixed nonintersecting open discs centered at $i$ and $-i$, respectively.

\subsubsection{Construction of a local parametrix near $i$}
Set
\begin{equation}\label{def tilde M}
\widetilde M(\lambda)=M^{(\alpha_1,\beta_1)}(\lambda)e^{-\frac{\pi
i}{2}(\alpha_1-\beta_1)\sigma_3},
\end{equation}
in terms of the solution of the auxiliary RH-problem of Section \ref{secaux} with parameters
$\alpha=\alpha_1$, $\beta=\beta_1$,
and let $P_1$ be of the form
\begin{equation}\label{def P+}
P_1(\zeta)=E_1(\zeta)\widetilde M\left(\frac{|s|}{2}(\zeta-i)\right)e^{\frac{|s|}{4}\zeta
\sigma_3},\qquad \zeta\in \mathcal U_1.
\end{equation}
If $E_1$ is an analytic function in $\mathcal U_1$,
one verifies directly that $P_1$ satisfies the same jump
conditions as $U$ in $\mathcal U_1$, and the singularity of $U(\zeta)P_1^{-1}(\zeta)$ at $i$ is removable. Moreover, as $s\to -i\infty$, we have by (\ref{RHP Ph: c}), (\ref{def tilde M}), and (\ref{def P+}) that
\begin{equation}
P_1(\zeta)P^{(\infty)}(\zeta)^{-1}=
E_1(\zeta)(I+\bigO(|s|^{-1}))e^{\frac{i|s|}{4}\sigma_3}\left(\frac{|s|}{2}\right)^{\beta_2\sigma_3}e^{-\frac{\pi
i}{2}(\alpha_1-\beta_1)\sigma_3}(\zeta+i)^{\beta_2\sigma_3},\label{matching}
\end{equation}
for $\zeta\in \partial \mathcal U_1$.
Set
\begin{equation}\label{def E1}
E_1(\zeta)=(\zeta+i)^{-\beta_2\sigma_3}\left(\frac{|s|}{2}\right)^{-\beta_2\sigma_3}e^{-\frac{i|s|}{4}\sigma_3}e^{\frac{\pi
i}{2}(\alpha_1-\beta_1)\sigma_3}.
\end{equation}
This function is analytic in $\mathcal U_1$,
and by (\ref{RHP Ph: c})--(\ref{M1}),
\begin{multline}\label{matching 2}
P_1(\zeta)P^{(\infty)}(\zeta)^{-1}=\left(\frac{|s|}{2}\right)^{-\beta_2\sigma_3}\left(I+\frac{2}{|s|}Q_1(\zeta)+\bigO(|s|^{-2})\right)\left(\frac{|s|}{2}\right)^{\beta_2\sigma_3},\\\qquad s\to -i\infty,\qquad
\zeta\in\partial \mathcal U_1,
\end{multline}
with
\begin{multline}\label{Q1}
Q_1(\zeta)=\frac{1}{\zeta -i}(\zeta+i)^{-\beta_2\sigma_3}e^{-\frac{i|s|}{4}\sigma_3}\\
\times \ \begin{pmatrix}\alpha_1^2-\beta_1^2&-e^{i\pi(\alpha_1-3\beta_1)}\frac{\Gamma(1+\alpha_1-\beta_1)}{\Gamma(\alpha_1+\beta_1)}\\
e^{-i\pi(\alpha_1-3\beta_1)}\frac{\Gamma(1+\alpha_1+\beta_1)}{\Gamma(\alpha_1-\beta_1)}&-\alpha_1^2+\beta_1^2\end{pmatrix}
e^{\frac{i|s|}{4}\sigma_3}(\zeta+i)^{\beta_2\sigma_3}.
\end{multline}

\subsubsection{Construction of a local parametrix near $-i$}
Similarly, we set
\begin{equation}
\widehat M(\lambda)=M^{(\alpha_2,\beta_2)}(\lambda)e^{\frac{\pi
i}{2}(\alpha_2-\beta_2)\sigma_3},
\end{equation} and
\begin{equation}
P_2(\zeta)=E_2(\zeta)\widehat M\left(\frac{|s|}{2}(\zeta+i)\right)e^{\frac{|s|}{4}\zeta
\sigma_3},\qquad \zeta\in \mathcal U_2.
\end{equation}
If $E_2$ is analytic in $\mathcal U_2$, then it is straightforward to verify that $P_2$ has the same jump conditions as $U$ has near $-i$, and the singularity of $U(\zeta)P_1^{-1}(\zeta)$ at $i$ is removable.
Set
\begin{equation}\label{def E3}
E_2(\zeta)=(\zeta-i)^{-\beta_1\sigma_3}\left(\frac{|s|}{2}\right)^{-\beta_1\sigma_3}e^{\frac{i|s|}{4}\sigma_3}e^{-\frac{\pi i}{2}(\alpha_2-\beta_2)\sigma_3}.
\end{equation}
We have that
\begin{multline}\label{matching 3}
P_2(\zeta)P^{(\infty)}(\zeta)^{-1}=\left(\frac{|s|}{2}\right)^{-\beta_1\sigma_3}\left(I+\frac{2}{|s|}Q_2(\zeta)+\bigO(|s|^{-2})\right)\left(\frac{|s|}{2}\right)^{\beta_1\sigma_3},\\\qquad s\to -i\infty,\qquad
\zeta\in\partial \mathcal U_2,
\end{multline}
with
\begin{multline}\label{Q2}
Q_2(\zeta)=\frac{1}{\zeta +i}(\zeta-i)^{-\beta_1\sigma_3}e^{\frac{i|s|}{4}\sigma_3}\\
\times \ \begin{pmatrix}\alpha_2^2-\beta_2^2&-e^{-i\pi(\alpha_2+\beta_2)}\frac{\Gamma(1+\alpha_2-\beta_2)}{\Gamma(\alpha_2+\beta_2)}\\
e^{i\pi(\alpha_2+\beta_2)}\frac{\Gamma(1+\alpha_2+\beta_2)}{\Gamma(\alpha_2-\beta_2)}&-\alpha_2^2+\beta_2^2\end{pmatrix}
e^{-\frac{i|s|}{4}\sigma_3}(\zeta-i)^{\beta_1\sigma_3}.
\end{multline}

\subsection{Solution of the  RH problem and the asymptotics of $\si(s)$ for large $s$}

Using the global parametrix $P^{(\infty)}$ and the local parametrices $P_1$ and $P_2$, we define a function $\widetilde R$ as follows:
\begin{equation}\label{def tilde R}
\widetilde R(\zeta)=\begin{cases}U(\zeta)P_1(\zeta)^{-1},&\zeta\in \mathcal U_1,\\
U(\zeta)P_2(\zeta)^{-1},&\zeta\in \mathcal U_2,\\
U(\zeta)P^{(\infty)}(\zeta)^{-1},&\zeta\in \mathbb C
\setminus\left(\overline{\mathcal U_1}\cup \overline{\mathcal U_2}\right).\end{cases}
\end{equation}
Next, define $R$ by the expression:
\begin{equation}\label{def R}
R(\zeta)=\left(\frac{|s|}{2}\right)^{\frac{\beta_1+\beta_2}{2}\sigma_3}\widetilde R(\zeta)\left(\frac{|s|}{2}\right)^{-\frac{\beta_1+\beta_2}{2}\sigma_3}.
\end{equation}

Then $R$ has no jumps inside $\mathcal U_1$ and $\mathcal U_2$, and $R$ is analytic in $\mathbb C\setminus\Gamma_R$, where $\Gamma_R=\partial \mathcal U_1\cup\partial \mathcal U_2\cup [(\Gamma_1\cup\Gamma_2\cup\Gamma_3\cup\Gamma_4\cup\Gamma_7'\cup\Gamma_7'')\setminus \left(\overline{\mathcal U_1}\cup \overline{\mathcal U_2}\right)]$.
 Except on $\partial \mathcal U_1\cup\partial \mathcal U_2$, $R$ has exponentially small jumps in $s$ on the contour as $|s|\to\infty$. We choose the {\it clockwise} orientation for $\partial \mathcal U_1$ and $\partial \mathcal U_2$. We have the following:
\subsubsection*{RH problem for $R$}
\begin{itemize}
    \item[(a)] $R:\mathbb C\setminus \Gamma_R\to \mathbb C^{2\times 2}$ is
    analytic.
    \item[(b)] $R$ satisfies the jump conditions
    \begin{align}&\label{jump R1}R_+(\zeta)=R_-(\zeta)(I+\bigO(e^{-c|s|})),\qquad
    \zeta\in\Gamma\setminus (\partial \mathcal U_1\cup \partial \mathcal U_2),\\
&\label{jump R2}R_+(\zeta)=R_-(\zeta)(I+
\frac{2}{|s|}
\Delta_1(\zeta;s)+
\bigO(|s|^{-2+ |||\beta|||})),\qquad
    \zeta\in\partial \mathcal U_1,\\
    &\label{jump R3}R_+(\zeta)=R_-(\zeta)(I+\frac{2}{|s|}\Delta_2(\zeta;s)+\bigO(|s|^{-2+|||\beta|||})),\qquad
    \zeta\in\partial \mathcal U_2,
    \end{align}
    where
\begin{align}
&\label{Delta1}\Delta_1(\zeta;s)\equiv\Delta_1(\zeta)=\left(\frac{|s|}{2}\right)^{\frac{\beta_1-\beta_2}{2}\sigma_3}Q_1(\zeta)\left(\frac{|s|}{2}\right)^{-\frac{\beta_1-\beta_2}{2}\sigma_3},\\
&\label{Delta2}\Delta_2(\zeta;s)\equiv\Delta_2(\zeta)=\left(\frac{|s|}{2}\right)^{-\frac{\beta_1-\beta_2}{2}\sigma_3}Q_2(\zeta)\left(\frac{|s|}{2}\right)^{\frac{\beta_1-\beta_2}{2}\sigma_3}.
\end{align}

    \item[(c)] As $\zeta\to\infty$, we have
    \begin{equation}\label{R as+}
    R(\zeta)=I+\bigO(\zeta^{-1}).
    \end{equation}
    \end{itemize}

If $|||\beta|||<1$, $\frac{1}{|s|}\Delta_j(\zeta)=\bigO(|s|^{-1+|||\beta|||})=o(1)$, all the jumps are close to the identity matrix, and this is a small-norm RH problem. Following the general theory for small-norm RH problems, we can conclude that the RH problem for $R$ is solvable for $|s|$ sufficiently large, and that
\begin{equation}\label{as R-1}R(\zeta)=I+\bigO(|s|^{-1+|||\beta|||}/(|\zeta|+1)),
\end{equation} uniformly for $\zeta\in\mathbb C\setminus\Gamma_R$ as $s\to -i\infty$.
Formula (\ref{as R-1}) is sufficient to obtain the leading (linear) term in the large-$s$ expansion of the Painlev\'e solution $\sigma$ using (\ref{C1}) and (\ref{sigma4b}).
However, for our analysis of the determinant $D_n(f_t)$, we need to compute the asymptotics of $\si(s)$ up to the terms decreasing with $s$. This requires a more detailed analysis, because the standard expansion of $R$
contains terms with powers of $|s|^{-1+|||\bt|||}$, and to obtain $\si$, we would need to multiply these series by $s$, thus obtaining
the asymptotic expansion of $\si(s)$ with terms of order $|s|^{-k+1+k|||\bt|||}$, $k=1,2\dots$. So the closer $|||\bt|||$ is to 1,
the more terms in this expansion we would have to compute before we encounter the terms decreasing with $s$.

\medskip

For definiteness, let us assume $\Re(\beta_1-\beta_2)\geq 0$. The case $\Re(\beta_1-\beta_2)< 0$ can be treated similarly.
Following \cite{DIK22}, where a similar situation arose, we proceed as follows.
We have
\begin{align}\label{Delta12}
&\frac{2}{|s|}\Delta_1(\zeta;s)=h_1(\zeta;s)\sigma_+ + \bigO(|s|^{-1}),
&\frac{2}{|s|}\Delta_2(\zeta;s)=h_2(\zeta;s)\sigma_- + \bigO(|s|^{-1}),
\end{align}
as $s\to -i\infty$, where
\be
\sigma_+=\begin{pmatrix}0&1\\0&0\end{pmatrix},\qquad \sigma_-=\begin{pmatrix}0&0\\1&0\end{pmatrix},
\ee
and
\begin{align}
&h_1(\zeta;s)=-\frac{1}{\zeta -i}(\zeta+i)^{-2\beta_2}e^{-\frac{i|s|}{2}}\left(\frac{|s|}{2}\right)^{-1+\beta_1-\beta_2}
e^{i\pi(\alpha_1-3\beta_1)}\frac{\Gamma(1+\alpha_1-\beta_1)}{\Gamma(\alpha_1+\beta_1)},\\
&h_2(\zeta;s)=\frac{1}{\zeta +i}(\zeta-i)^{2\beta_1}e^{-\frac{i|s|}{2}}\left(\frac{|s|}{2}\right)^{-1+\beta_1-\beta_2}
e^{i\pi(\alpha_2+\beta_2)}\frac{\Gamma(1+\alpha_2+\beta_2)}{\Gamma(\alpha_2-\beta_2)}.
\end{align}
Write $R$ in the form
\begin{equation}\label{XR}
R(\zeta)=\widehat R(\zeta)X(\zeta),
\end{equation}
where $X$ is a solution to the following RH problem, which is the RH problem for $R$, but
all the jump matrix elements of order less than the highest in $|s|$, except $1$ on the diagonal, are set to zero.

\subsubsection*{RH problem for $X$}
\begin{itemize}
    \item[(a)] $X:\mathbb C\setminus \left(\partial \mathcal U_1\cup \partial\mathcal U_2\right)\to \mathbb C^{2\times 2}$ is
    analytic.
    \item[(b)] $X$ satisfies the jump conditions
    \begin{align}
&\label{jump hatX2}X_+(\zeta)=X_-(\zeta)(I+h_1(\zeta)\sigma_+),\qquad
    \zeta\in\partial \mathcal U_1,\\
    &\label{jump hatX3}X_+(\zeta)=X_-(\zeta)(I+h_2(\zeta)\sigma_-),\qquad
    \zeta\in\partial \mathcal U_2.
    \end{align}
    \item[(c)] As $\zeta\to\infty$, we have
    \begin{equation}\label{hatX as+}
    X(\zeta)=I+\bigO(\zeta^{-1}).
    \end{equation}
    \end{itemize}
This RH problem can be solved explicitly as follows.
We look for the solution $X$ in the form
\begin{align}
&\label{hat X1}X(\zeta)=I+\frac{1}{\zeta-i}W_1 + \frac{1}{\zeta+i}W_2, &\zeta\in \mathbb C\setminus \overline{\left(\mathcal U_1\cup \mathcal U_2\right)},\\
&\label{hat X2}X(\zeta)=\left(I+\frac{1}{\zeta-i}W_1 + \frac{1}{\zeta+i}W_2\right)(I-h_1(\zeta)\sigma_+), &\zeta\in \mathcal U_1,\\
&\label{hat X3}X(\zeta)=\left(I+\frac{1}{\zeta-i}W_1 + \frac{1}{\zeta+i}W_2\right)(I-h_2(\zeta)\sigma_-), &\zeta\in \mathcal U_2,
\end{align}
where the $2\times 2$ constant in $\zeta$ matrices $W_1$ and $W_2$ will now be determined.
By the analyticity of $X$ at $\pm i$, the singular terms on the right hand side of (\ref{hat X2}) and (\ref{hat X3}) must vanish. The vanishing of the term with $(\zeta- i)^{-2}$ in $\mathcal U_1$ and with $(\zeta+i)^{-2}$ in $\mathcal U_2$ gives
\begin{align}
&\label{W111}W_{1,11}=W_{1,21}=0,\\
&\label{W212}W_{2,12}=W_{2,22}=0.
\end{align}
Similarly, the vanishing of the terms with $(\zeta\mp i)^{-1}$ gives
\begin{align}
&\label{W112}W_{1,12}=\widehat h_1(i)\left(1+\frac{W_{2,11}}{2i}\right),\\
&\label{W122}W_{1,22}=\frac{1}{2i}\widehat h_1(i)W_{2,21},\\
&\label{W221}W_{2,21}=\widehat h_2(-i)\left(1-\frac{W_{1,22}}{2i}\right),\\
&\label{W211}W_{2,11}=-\frac{1}{2i}\widehat h_2(-i)W_{1,12},
\end{align}
where
\begin{equation}\widehat h_1(\zeta)=(\zeta-i)h_1(\zeta),\qquad \widehat h_2(\zeta)=(\zeta+i)h_2(\zeta).
\end{equation}
This system of equations is solved explicitly, in particular,
\begin{align}
&W_{2,11}=-2i\frac{\gamma(s)}{1+\gamma(s)},\label{W2112}\\ &\gamma(s)=-\frac{1}{4}\widehat h_1(i)\widehat h_2(-i)=
\left|s\right|^{2(-1+\beta_1-\beta_2)}e^{-i|s|}e^{i\pi(\alpha_1+\alpha_2)}\label{Wextra}\\
& \label{gamma}\hspace{6cm} \times \quad\frac{\Gamma(1+\alpha_1-\beta_1)\Gamma(1+\alpha_2+\beta_2)}{\Gamma(\alpha_1+\beta_1)\Gamma(\alpha_2-\beta_2)}.
\end{align}

Let us now derive the RH conditions for $\widehat R$, defined in (\ref{XR}). Using the jump conditions for $R$ and $X$, the form (\ref{hat X1}) of $X(z)$, and the fact that $\sigma_\pm$ is nilpotent, we obtain
\begin{multline}
\widehat R_+= R_+ X_+^{-1}=R_-\left(I+h_1(\zeta)\sigma_++\frac{2}{|s|}
\widetilde\Delta_1(\zeta)+\bigO(|s|^{-2+|||\beta|||})\right)X_+^{-1}\\
=\widehat R_-\left(I+\frac{2}{|s|}\widetilde\Delta_1(\zeta)+\bigO(|s|^{-2+|||\beta|||})\right),
\end{multline}
on $\partial \mathcal U_1$, where
\begin{equation}\label{tilde Delta1}
\frac{2}{|s|}\widetilde\Delta_1(\zeta)=\frac{2}{|s|}\Delta_1(\zeta)-h_1(\zeta)\sigma_+,
\end{equation}
and similarly
on $\partial \mathcal U_2$, with $h_2$ instead of $h_1$ and $\widetilde\Delta_2$ instead of $\widetilde\Delta_1$, where
\begin{equation}\label{tilde Delta2}
\frac{2}{|s|}\widetilde\Delta_2(\zeta)=\frac{2}{|s|}\Delta_2(\zeta)-h_2(\zeta)\sigma_-.
\end{equation}

Thus, we have the following RH problem for $\widehat R$:
\subsubsection*{RH problem for $\widehat R$}
\begin{itemize}
    \item[(a)] $\widehat R:\mathbb C\setminus \Gamma_R$ is
    analytic.
    \item[(b)] $\widehat R$ satisfies the jump conditions
    \begin{equation}
    \widehat R_+(\zeta)=\widehat R_-(\zeta)\left(I+\frac{2}{|s|}\widetilde\Delta_j(\zeta)+\bigO(|s|^{-2+|||\beta|||})\right),\qquad s\to -i\infty,
    \end{equation}
    for $\zeta\in\partial\mathcal U_j$, $j=1,2$.
    \item[(c)] As $\zeta\to\infty$, we have
    \begin{equation}\label{X as}
    \widehat R(\zeta)=I+\widehat R_1\zeta^{-1}+\bigO(\zeta^{-2}).
    \end{equation}
    \end{itemize}
This is a small-norm RH problem, and as before, we can conclude that it is solvable for $|s|$ sufficiently large, and since $\wt\Delta_j/|s|=\bigO(|s|^{-1})$,
\be\label{whRas}
\widehat R(\zeta)=I+\bigO\left(\frac{1}{|s|(|\zeta|+1)}\right)
\ee
uniformly for $\zeta\in\mathbb C\setminus\Gamma_R$ as $|s|\to\infty$.
More precisely,
\begin{equation}\label{X as2}
\widehat R(\zeta)=I+\frac{1}{\pi i|s|}\sum_{j=1}^2\int_{\partial\mathcal U_j}\frac{\widetilde\Delta_j(\xi)}{\xi-\zeta}d\xi+\bigO(|s|^{-2+|||\beta|||}/(|\zeta|+1)),
\end{equation}
and $\widehat R_1$ in (\ref{X as}) is given by
\begin{equation}\label{X1}
\widehat R_1=\frac{2}{|s|}\left(\Res(\widetilde\Delta_1;i)+\Res(\widetilde\Delta_2;-i)\right)+\bigO(|s|^{-2+|||\beta|||}).
\end{equation}
By inverting the transformation $U\mapsto R$ and using (\ref{XR}),
we obtain
\begin{equation}
U(\zeta)=\left(\frac{|s|}{2}\right)^{-\frac{\beta_1+\beta_2}{2}\sigma_3}\widehat R(\zeta)X(\zeta)\left(\frac{|s|}{2}\right)^{\frac{\beta_1+\beta_2}{2}\sigma_3}P^{(\infty)}(\zeta),\qquad \zeta\in\mathbb C\setminus (\overline{\mathcal U_1}\cup\overline{\mathcal U_2}).
\end{equation}
Substituting the explicit formula (\ref{hat X1}) for $X$ and the asymptotic expansion (\ref{X as}), we obtain
\begin{multline}
U(\zeta)=\left(\frac{|s|}{2}\right)^{-\frac{\beta_1+\beta_2}{2}\sigma_3}\left(I+\frac{\widehat R_1}{\zeta}+\bigO(\zeta^{-2})\right)\left(I+\frac{W_1+W_2}{\zeta}+\bigO(\zeta^{-2})\right)\\ \times \left(\frac{|s|}{2}\right)^{\frac{\beta_1+\beta_2}{2}\sigma_3}P^{(\infty)}(\zeta),\qquad \zeta\in\mathbb C\setminus (\overline{\mathcal U_1}\cup\overline{\mathcal U_2}),
\end{multline}
as $\zeta\to\infty$.
Comparing this with (\ref{U as}), we conclude that
\begin{equation}\label{555}
\Psi_{1}=\left(\frac{|s|}{2}\right)^{-\frac{\beta_1+\beta_2}{2}\sigma_3}\left(\widehat R_1+W_1+W_2\right)\left(\frac{|s|}{2}\right)^{\frac{\beta_1+\beta_2}{2}\sigma_3},
\end{equation}
and therefore by (\ref{W111}), (\ref{Wextra}), and (\ref{X1}),
\begin{equation}
q(s)=\Psi_{1,11}=\frac{2}{|s|}\left(\Res(\widetilde\Delta_{1,11};i)+\Res(\widetilde\Delta_{2,11};-i)\right)+\frac{2}{i}\frac{\gamma(s)}{1+\gamma(s)}+\bigO(|s|^{-2+|||\beta|||}),
\end{equation}
as $|s|\to\infty$.
By (\ref{tilde Delta1})--(\ref{tilde Delta2}), (\ref{Delta1})--(\ref{Delta2}), we obtain
\begin{equation}
q(s)=\frac{2}{|s|}\left(\alpha_1^2+\alpha_2^2-\beta_1^2-\beta_2^2
\right)+\frac{2}{i}\frac{\gamma(s)}{1+\gamma(s)}+\bigO(|s|^{-2+|||\beta|||}).
\end{equation}
By (\ref{sigma4b}), we find
\begin{equation}\label{sigma at infty}
\sigma(s)=\frac{\beta_2-\beta_1}{2}s-\frac{1}{2}(\beta_1-\beta_2)^2
+s\frac{\gamma(s)}{1+\gamma(s)}+\bigO(|s|^{-1+|||\beta|||}),\qquad s\to -i\infty.
\end{equation}
Recall that $\gamma(s)$ is given by (\ref{gamma}), and thus $s\frac{\gamma}{1+\gamma}$ is of order $|s|^{-1+2|||\beta|||}$.
This proves (\ref{asinfty}) if $\Re(\bt_1-\bt_2)\ge 0$. The case $\Re(\bt_1-\bt_2)< 0$
can be studied similarly.

\section{Asymptotics for $\Psi$ as $s\to -i 0_+$}\label{sec s0}
In this section we will construct an asymptotic solution for $\Psi(\zeta)$ as $s\to -i 0_+$
out of 2 parametrices: the local one (at $\zeta=0$) will be given in terms of the hypergeometric function,
and the global one at infinity, in terms of the confluent hypergeometric function. 
The asymptotic solution will be valid uniformly in $\zeta$. The results of this section can be easily extended to $s$ in a neighborhood of
$-i\mathbb R^+$.

We assume here that
$\Re\alpha_1,\Re\alpha_2,\Re(\al_1+\al_2)>-1/2$, 
$\al_k\pm\bt_k\neq -1,-2,\dots$, $k=1,2$, $(\al_1+\al_2)\pm(\bt_1+\bt_2)\neq -1,-2,\dots$.

\subsection{Modified RH-problem}
It is convenient to consider the following transformation of $\Psi$:
let \be\label{whPsi}
\widehat \Psi (\lb)=e^{-\frac{s}{4}\sigma_3}e^{-\frac{i\pi}{2}(\al_1-\bt_1-\al_2+\bt_2)\si_3}
\Psi\left(\frac{2}{|s|}\lb+i\right)e^{\frac{i\pi}{2}(\al_1-\bt_1-\al_2+\bt_2)\si_3}e^{-2\pi i\beta_2\sigma_3}
 \ee
  for $\Re\lambda>0$, $-|s|<\Im\lambda<0$, and
  \be\label{whPsi2}
\widehat\Psi(\lambda)=e^{-\frac{s}{4}\sigma_3}e^{-\frac{i\pi}{2}(\al_1-\bt_1-\al_2+\bt_2)\si_3}
\Psi\left(\frac{2}{|s|}\lb+i\right)e^{\frac{i\pi}{2}(\al_1-\bt_1-\al_2+\bt_2)\si_3}\ee  elsewhere.
Recall that $s\in -i\mathbb R^+$, and
note that the interval of the imaginary axis $[i,-i]$ in the $\zeta=\frac{2}{|s|}\lb+i$-variable
is mapped onto $[0,s]$ in the $\lb$-variable. The function $\widehat\Psi$ has jumps for $\lambda$ on $e^{\frac{i\pi}{4}}\mathbb R^+$, $e^{\frac{3i\pi}{4}}\mathbb R^+$, $(s,0)$, $s+e^{-\frac{3i\pi}{4}}\mathbb R^+$, and $s+e^{-\frac{i\pi}{4}}\mathbb R^+$. The $4$ semi-infinite jump rays can be deformed freely by analytic continuation of the RH solution $\widehat\Psi$ from one region to another, as long as they do not intersect and as long as they tend to infinity in a sufficiently narrow sector containing the original ray. It is convenient to deform the lines $s+e^{-\frac{3i\pi}{4}}\mathbb R^+$ and $s+e^{-\frac{i\pi}{4}}\mathbb R^+$ in such a way that they coincide with $e^{-\frac{3i\pi}{4}}\mathbb R^+$ and $e^{-\frac{i\pi}{4}}\mathbb R^+$, except in a fixed neighborhood $\mathcal U_0$ of $0$. We choose $\mathcal U_0$ so that it contains the interval $[0,s]$,
and a jump contour as indicated in Figure \ref{figure: whPsi}.

\begin{figure}[t]
\begin{center}
    \setlength{\unitlength}{0.8truemm}
    \begin{picture}(110,95)(-10,-7.5)
    \put(50,60){\thicklines\circle*{.8}}
    \put(50,40){\thicklines\circle*{.8}}
    \put(51,56){\small $0$}
    \put(51,41){\small $s$}
    \put(50,60){\thicklines\circle*{.8}}
    \put(69,60){\thicklines\vector(1,0){.0001}}
    \put(50,60){\line(1,1){25}}

    \put(70,35){\line(1,-1){25}}

    \put(50,60){\line(-1,1){25}}

    \put(30,35){\line(-1,-1){25}}
    \put(30,35){\line(4,1){20}}
    \put(70,35){\line(-4,1){20}}
    \put(50,60){\line(1,0){35}}
    \put(50,40){\line(0,1){20}}
    \put(65,75){\thicklines\vector(1,1){.0001}}

    \put(85,20){\thicklines\vector(-1,1){.0001}}

    \put(50,51){\thicklines\vector(0,1){.0001}}
    \put(35,75){\thicklines\vector(-1,1){.0001}}

    \put(15,20){\thicklines\vector(1,1){.0001}}

    \put(74,88){\small $\widehat J_1=\begin{pmatrix}1&e^{\pi i(\alpha_1+\alpha_2-\beta_1-\beta_2)}\\0&1\end{pmatrix}$}
    \put(-25,88){\small $\widehat J_2=\begin{pmatrix}1&0\\-e^{-\pi i(\alpha_1+\alpha_2-\beta_1-\beta_2)}&1\end{pmatrix}$}
 \put(86,59){\small
$\widehat J_5=e^{2\pi i(\beta_1+\beta_2)\sigma_3}$}
\put(51,48){\small $\widehat J_6=\begin{pmatrix}0&e^{-i\pi(\al_1-\bt_1-\al_2-\bt_2)}\\-e^{i\pi(\al_1-\bt_1-\al_2-\bt_2)}&e^{-2\pi i\beta_2}\end{pmatrix}$}
    \put(-22,3){\small $\widehat J_3=\begin{pmatrix}1&0\\-e^{\pi i(\alpha_1+\alpha_2-\beta_1-\beta_2)}&1\end{pmatrix}$}
    \put(74,3){\small $\widehat J_4=\begin{pmatrix}1&e^{-\pi i(\alpha_1+\alpha_2-\beta_1-\beta_2)}\\0&1\end{pmatrix}$}

    \put(-2,48){III}
    \put(88,71){I}
    \put(47,7){IV}
    \put(47,75){II}
    \put(92,31){V}
    \end{picture}
    \caption{The jump contour and jump matrices for $\widehat\Psi$.}
    \label{figure: whPsi}
\end{center}
\end{figure}
Then  $\widehat\Psi$ satisfies the following RH conditions.
\subsubsection*{RH problem for $\widehat\Psi$}
\begin{itemize}
    \item[(a)] $\widehat\Psi:\mathbb C\setminus \widehat\Gamma \to \mathbb C^{2\times 2}$ is analytic; $\widehat\Gamma$ is the contour depicted in Figure \ref{figure: whPsi}.
    \item[(b)] $\widehat\Psi$ satisfies the jump conditions
    \begin{equation}\label{jump whPsi}\widehat\Psi_+(\lambda)=\widehat\Psi_-(\lambda)\widehat J_k,\qquad
    \lambda\in\widehat\Gamma_k,\end{equation}  where $\widehat J_k$, $k=1,\ldots, 6$ are the matrices given in Figure \ref{figure: whPsi}, and $\widehat \Gamma_k$ are the corresponding parts of the contour.
    \item[(c)] As $\lambda\to\infty$, we have in all regions:
    \begin{equation}\label{whPsi as}
    \widehat\Psi(\lb)=\left(I+\frac{\widehat \Psi_1}{\lambda}+\bigO(\lambda^{-2})\right)
    \lambda^{-(\beta_1+\beta_2)\sigma_3}
        e^{-\frac{\lambda}{2}\sigma_3},    \end{equation}
                \end{itemize}
where $0<\arg\lb<2\pi$ and
\begin{equation}
\label{whC1}
\widehat \Psi_1=\frac{|s|}{2}\left(e^{-\frac{i\pi}{2}(\al_1-\bt_1-\al_2+\bt_2)\si_3}e^{-\frac{s}{4}\sigma_3}\Psi_1e^{\frac{s}{4}\sigma_3}e^{\frac{i\pi}{2}(\al_1-\bt_1-\al_2+\bt_2)\si_3}-2i\beta_2\sigma_3\right).
\end{equation}

\subsection{Global parametrix}
We note that $\widehat \Psi$ defined in (\ref{whPsi})--(\ref{whPsi2}) has the same condition
at infinity and the same jumps in $\mathbb C\setminus\mathcal U_0$ as the auxiliary
function $M^{(\al,\bt)}$ of Section \ref{secaux} with parameters $\al=\al_1+\al_2$, $\bt=\bt_1+\bt_2$.
The function $M(\lambda)\equiv M^{(\al_1+\al_2,\bt_1+\bt_2)}(\lb)$ will serve as a parametrix for $\widehat\Psi$ in $\mathbb C\setminus\mathcal U_0$.

\subsection{Local parametrix}
We look for a function $P_0$ satisfying the following conditions
\subsubsection*{RH problem for $P_0$}
\begin{itemize}
\item[(a)] $P_0:\mathcal U_0\setminus \widehat\Gamma\to \mathbb C^{2\times 2}$ is analytic,
\item[(b)] $P_0$ satisfies the same jump conditions as $\widehat\Psi$ for $\lambda\in\mathcal U_0\cap\widehat\Gamma$,
\item[(c)] for $\lambda\in\partial\mathcal U_0$, $P_0(\lambda)=M(\lambda)\left(I+o(1)\right)$ as $s\to -i 0_+$,
\item[(d)] $\widehat\Psi(\lambda)P_0(\lambda)^{-1}$ is analytic at $0$ and $s$.
\end{itemize}

Since $\widehat\Psi(\lambda)$ has 2 singular points inside $\mathcal U_0$, it makes sense to try to construct $P_0$ in terms of the hypergeometric functions, similarly to \cite{CIK2}.

Recall that for $c\neq 0,-1,-2,\dots$, the hypergeometric function is
represented by the standard series
\[
F(a,b,c,z)= 1+\sum_{n=1}^{\infty}\frac{a(a+1)\cdots(a+n-1)b(b+1)\cdots(b+n-1)}
{c(c+1)\cdots(c+n-1)}\frac{z^n}{n!},
\]
converging in the disk $|z|\le r<1$ of any radius $r<1$,
and is extended to an analytic function in the plane with the cut $[1,+\infty)$.
We choose the argument of $z$ between $0$ and $2\pi$.
The hypergeometric function we need here is the following:
\be\label{hyp1}
\widehat F(\lb;s)\equiv F\left(1,1+2\al_1,2+2(\al_1+\al_2),\frac{s}{\lb}\right).
\ee
This function of $\lb$ is thus defined 
on the plane with the cut on the interval of the imaginary axis $[0,s]=[0,e^{-i\pi/2}|s|]$.
(The jump of $\widehat F(\lb)$ on $[0,s]$ can be obtained using the transformation
of the hypergeometric function between the arguments $z$ and $1/z$.) 

We now let
\be\label{J0}
J(\lb;s)=-\frac{1}{\pi}
\frac{|s|^{1+2(\al_1+\al_2)}}{\lb}\frac{\Ga(1+2\al_1)\Ga(1+2\al_2)}{\Ga(2+2(\al_1+\al_2))}
\widehat F(\lb;s).
\ee
By a standard integral representation of the hypergeometric function, $J(\lb;s)$ can also be written as follows:
\be\label{J}
J(\lb;s)=\frac{1}{\pi i}\int_s^0\frac{|\xi|^{2\alpha_1}|\xi-s|^{2\alpha_2}}{\xi-\lambda}d\xi.
\ee
This representation implies that on the cut $[0,s]$ oriented upwards,
\be\label{Jjump}
J(\lb)_+=J(\lb)_- + 2|\lb|^{2\alpha_1}|\lb-s|^{2\alpha_2},\qquad \lb\in(0,s).
\ee

We are now ready to construct the parametrix $P_0$.
First, consider the case where $2(\alpha_1+\alpha_2)\notin\mathbb N\cup\{0\}$.
Then set 
\be\label{P}
P_0^{(3)}(\lb)=L(\lb)\begin{pmatrix}1&
c_0 J(\lb;s)\\0&1\end{pmatrix}
\lb^{\al_1\sigma_3}(\lb-s)^{\al_2\si_3}e^{2\pi i\al_1\si_3}\wt G_3,\qquad \lb\in \mathcal U_0,
\ee
with the cut of $\lb^{\al_1}$ along the line $\widehat\Gamma_2$,
the cut of $(\lb-s)^{\al_2}$  along the line $\widehat\Gamma_3$,
and their arguments $\frac{\pi}{2}\al_j$ on $i\mathbb R^{+}$ (i.e., with the same choice of branches in $\mathcal U_0$
as for $(\zeta-i)^{\al_1}$ in (\ref{Psi i}), and $(\zeta+i)^{\al_2}$ in (\ref{Psi -i})).
Here $L$ is given by (\ref{ME}), the matrix $\wt G_3$ by (\ref{MG3})
with $\al=\al_1+\al_2$, $\bt=\bt_1+\bt_2$, and
\begin{multline}\label{c0}
c_0={1\over 2 }e^{2\pi i(\al_1+\al_2)}\left[
e^{i\pi(\al_1-\al_2)}\frac{\sin\pi(\al_1+\al_2+\bt_1+\bt_2)}{\sin 2\pi(\al_1+\al_2)}\right.\\
\left.
+e^{-i\pi(\al_1-\al_2)}\frac{\sin\pi(\al_1+\al_2-\bt_1-\bt_2)}{\sin 2\pi(\al_1+\al_2)}
-e^{i\pi(\bt_1-\bt_2)}\right].
\end{multline}
With $P_0^{(3)}(\lb)$ given by (\ref{P}) set
\begin{equation}\label{def P00}
\begin{aligned}
P_0(\lambda)&=P_0^{(3)}(\lambda),\qquad\mbox{ in region III,}\\
P_0(\lambda)&=
P_0^{(3)}(\lambda)\wt G_3^{-1}e^{-2\pi i\al_1\si_3}\wt G_3\widehat J_2^{-1}\times
\begin{cases}
I,&\mbox{ in region II,}\\
\widehat J_1^{-1},&\mbox{ in region I,}\\
\widehat J_1^{-1}\widehat J_5^{-1},&\mbox{ in region V,}\\
\widehat J_1^{-1}\widehat J_5^{-1}\widehat J_4,&\mbox{ in region IV.}
\end{cases}
\end{aligned}
\end{equation}

We then have the following.

\begin{proposition}\label{prop local}
Let $2(\alpha_1+\alpha_2)\notin \mathbb N\cup\{0\}$. Then the function (\ref{def P00}) solves the RH problem for $P_0$.
\end{proposition}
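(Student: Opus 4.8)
The plan is to verify conditions (a)--(d) of the RH problem for $P_0$ one by one. Condition (a) is immediate from the construction: $L$ in (\ref{ME}) is entire, the factor $J(\lambda;s)$ in (\ref{J0})--(\ref{J}) is analytic off the cut $[0,s]$, the powers $\lambda^{\alpha_1\sigma_3}$ and $(\lambda-s)^{\alpha_2\sigma_3}$ are analytic off $\widehat\Gamma_2$ and $\widehat\Gamma_3$ respectively, and each matrix multiplying $P_0^{(3)}$ in (\ref{def P00}) is constant on its region; hence $P_0$ is analytic on $\mathcal U_0\setminus\widehat\Gamma$. For condition (b) I would first dispose of the ``free'' jumps: across $\widehat\Gamma_1$, $\widehat\Gamma_4$ and $\widehat\Gamma_5$ the factor $P_0^{(3)}$ is itself analytic, so the jump of $P_0$ there is produced solely by the change of the piecewise-constant multiplier in (\ref{def P00}), and by inspection this change equals exactly the prescribed $\widehat J_k$ (the intermediate factors telescope). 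What remains are the three jumps produced by the branch cuts sitting inside the formula (\ref{P}) for $P_0^{(3)}$: across $\widehat\Gamma_2$ (the cut of $\lambda^{\alpha_1\sigma_3}$), across $\widehat\Gamma_3$ (the cut of $(\lambda-s)^{\alpha_2\sigma_3}$), and across $[0,s]$ (the cut of $J$, with jump given by (\ref{Jjump})).

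Each of these three verifications reduces, after conjugating the branch jump back through the diagonal factors and using that $\widetilde G_3$ is unipotent upper triangular, to a single identity among constant matrices. For example, on $[0,s]$ only $J$ jumps, and combining (\ref{Jjump}) with the chosen arguments there ($\arg\lambda=-\pi/2$, $\arg(\lambda-s)=\pi/2$) one finds $P^{(3)}_{0,+}(\lambda)=P^{(3)}_{0,-}(\lambda)\left(\begin{smallmatrix}1&2c_0e^{-3\pi i\alpha_1-i\pi\alpha_2}\\0&1\end{smallmatrix}\right)$, so the required jump condition becomes an equality between $\left(\begin{smallmatrix}1&2c_0e^{-3\pi i\alpha_1-i\pi\alpha_2}\\0&1\end{smallmatrix}\right)$ and the appropriate product of $\widehat J_k$'s conjugated by $\widetilde G_3$ and $e^{2\pi i\alpha_1\sigma_3}$; substituting the explicit value of $c_0$ from (\ref{c0}) and of $\widetilde g$ from (\ref{MG3}) makes this a routine (if tedious) trigonometric check. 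The verifications on $\widehat\Gamma_2$ and $\widehat\Gamma_3$ are of the same type, and somewhat easier: the multipliers in (\ref{def P00}) beginning with $\widetilde G_3^{-1}e^{-2\pi i\alpha_1\sigma_3}\widetilde G_3$ were precisely designed to absorb the factor produced by crossing the cut of $\lambda^{\alpha_1\sigma_3}$, and similarly for $(\lambda-s)^{\alpha_2\sigma_3}$.

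For condition (c) I would use the integral representation (\ref{J}): on $\partial\mathcal U_0$ the variable $\lambda$ stays bounded away from $0$ and from $[0,s]$, so $J(\lambda;s)=\bigO(|s|^{1+2\Re(\alpha_1+\alpha_2)})=o(1)$ as $s\to-i0_+$ because $\Re(\alpha_1+\alpha_2)>-1/2$, while $(\lambda-s)^{\alpha_2\sigma_3}=\lambda^{\alpha_2\sigma_3}(I+o(1))$. Reconciling the branch of $\lambda^{\alpha_1}(\lambda-s)^{\alpha_2}e^{2\pi i\alpha_1}$ used in (\ref{P}) with the branch $0<\arg\lambda<2\pi$ of $\lambda^{\alpha_1+\alpha_2}$ in (\ref{M3}) then gives $P^{(3)}_0(\lambda)=M^{(3)}(\lambda)(I+o(1))$ on $\partial\mathcal U_0$ in region III; the same conclusion in the remaining regions follows from (\ref{def P00}) together with the region decomposition of $M\equiv M^{(\alpha_1+\alpha_2,\beta_1+\beta_2)}$ and the fact that the jump matrices of $M$ coincide with $\widehat J_1,\widehat J_2,\widehat J_5$. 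Finally, for condition (d) I would expand $J$ near $\lambda=0$ and near $\lambda=s$. Near $0$, using the connection formula for $\widehat F(\lambda;s)=F(1,1+2\alpha_1,2+2(\alpha_1+\alpha_2),s/\lambda)$ between the arguments $s/\lambda$ and $\lambda/s$ one gets $J(\lambda;s)=a(\lambda)+b(\lambda)\lambda^{2\alpha_1}$ with $a,b$ analytic at $0$; consequently $\left(\begin{smallmatrix}1&c_0J\\0&1\end{smallmatrix}\right)\lambda^{\alpha_1\sigma_3}=\left(\begin{smallmatrix}1&c_0a(\lambda)\\0&1\end{smallmatrix}\right)\lambda^{\alpha_1\sigma_3}\left(\begin{smallmatrix}1&c_0b(\lambda)\\0&1\end{smallmatrix}\right)$, and $P_0(\lambda)$ takes near $0$ the form $(\text{analytic invertible})\cdot\lambda^{\alpha_1\sigma_3}\cdot(\text{piecewise constant})$. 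Comparing the resulting piecewise-constant factor with the one $\widehat\Psi$ carries at $0$ according to (\ref{Psi i}) transported through (\ref{whPsi}), i.e. with $G_{III}$ from (\ref{G}) and its descendants, shows that $\widehat\Psi P_0^{-1}$ is bounded, hence analytic, at $0$; the analysis at $\lambda=s$ is identical with $\lambda^{\alpha_1}$ replaced by $(\lambda-s)^{\alpha_2}$, using the expansion of $\widehat F$ as $s/\lambda\to1$ and the matrix $H_{III}$ from (\ref{H}).

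The main obstacle is the bookkeeping of branches and constant matrices in conditions (b) and (d): one must track the arguments of $\lambda^{\alpha_1}$, $(\lambda-s)^{\alpha_2}$ and $\lambda^{\alpha_1+\alpha_2}$ consistently across all five regions of Figure \ref{figure: whPsi}, and then check that the explicit constant $c_0$ of (\ref{c0}) together with the hypergeometric connection coefficients reproduces exactly the jump matrices $\widehat J_k$ (for (b)) and the local monodromy data of $\widehat\Psi$ at the two singular points (for (d)) --- in particular that the $\widetilde G_3$-based constant (built from the parameters $\alpha_1+\alpha_2,\beta_1+\beta_2$) combines correctly with the $(\lambda-s)^{\alpha_2}$-factor and the $e^{2\pi i\alpha_1}$-factor to recover the $G_{III}$- and $H_{III}$-type constants (built from $\alpha_1,\beta_1$ and $\alpha_2,\beta_2$). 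By contrast, condition (c) is soft, the only essential input being $\Re(\alpha_1+\alpha_2)>-1/2$.
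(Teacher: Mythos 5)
Your proof follows essentially the same route as the paper's: verify (a) by construction, dispose of the free jumps on $\widehat\Gamma_1,\widehat\Gamma_4,\widehat\Gamma_5$, reduce the jumps across the two power-function cuts and across $[0,s]$ to constant-matrix identities (the one on $[0,s]$ being what fixes $c_0$), obtain the matching (c) from (\ref{J}) after reconciling the $\lambda^{\alpha_1}(\lambda-s)^{\alpha_2}e^{2\pi i\alpha_1}$ branch with the $\lambda^{\alpha_1+\alpha_2}$ branch in (\ref{M3}), and check removability (d) at $0$ and $s$ from the local behavior of $J$ together with (\ref{Psi i}), (\ref{Psi -i}). Your treatment of (d) via the hypergeometric connection formula for $\widehat F$ (splitting $J=a+b\lambda^{2\alpha_1}$) is a bit more explicit than the paper's proof, which only produces the estimate $\bigO(|\lambda|^{2\alpha_1})+\bigO(|\ln\lambda|)+\bigO(1)$ and invokes $\Re\alpha_1>-1/2$; but that more detailed decomposition is exactly the computation the paper carries out in the remark immediately after the proposition (where it is offered as an alternative way to pin down $c_0$), so the two arguments are the same mathematics.
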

\begin{proof}
Condition (a) of the RH problem for $P$ is satisfied by construction, as well as the jump relations on $\widehat \Gamma_1,\widehat \Gamma_4, \widehat \Gamma_5$.
The jump condition on $\widehat\Gamma_2$ follows from the definition of $P_0$ in regions II and III, and from the fact that $P_0^{(3)}$ has a jump on $\widehat\Gamma_2$ because of the branch cut of $\lb^{\al_1\sigma_3}$:
for $\lambda\in\widehat\Gamma_2$, \[
P_{0,-}(\lambda)^{-1}P_{0,+}(\lambda)=\widehat J_2 \left(
\wt G_3^{-1}e^{-2\pi i\al_1\si_3}\wt G_3\right)^{-1} \widetilde G_3^{-1}e^{-2\pi i\alpha_1\sigma_3}\widetilde G_3=\widehat J_2.
\]
On $\widehat\Gamma_3$, we have after a similar calculation using (\ref{P}), taking into account the branch cut of $(\lambda-s)^{\alpha_2\sigma_3}$, and using the value of $\wt g(\al_1+\al_2,\bt_1+\bt_2)$ in (\ref{MG3}),
\begin{multline}\notag
P_{0,-}(\lambda)^{-1}P_{0,+}(\lambda)=\widehat J_4^{-1}\widehat J_5\widehat J_1\widehat J_2 
\wt G_3^{-1}e^{2\pi i\alpha_1\sigma_3}\wt G_3
 P_{0,-}^{(3)}(\lambda)^{-1}P_{0,+}^{(3)}(\lambda)\\
=\widehat J_4^{-1}\widehat J_5\widehat J_1\widehat J_2 
\wt G_3^{-1}e^{2\pi i(\al_1+\al_2)\sigma_3}\wt G_3=
\widehat J_3.
\end{multline}
On the interval $\widehat\Gamma_6=[s,0]$, by (\ref{P}), (\ref{J}), and (\ref{c0}), we obtain
\begin{align*}
&P_{0,-}(\lambda)^{-1}P_{0,+}(\lambda)=\widehat J_5\widehat J_1\widehat J_2 
\wt G_3^{-1}e^{2\pi i\alpha_1\sigma_3}\wt G_3
 P_{0,-}^{(3)}(\lambda)^{-1}P_{0,+}^{(3)}(\lambda)\\
&\qquad =\begin{pmatrix}0&e^{\pi i(\alpha_1+\alpha_2+\beta_1+\beta_2)}\\-e^{-\pi i(\alpha_1+\alpha_2+\beta_1+\beta_2)}&e^{-2\pi i(\beta_1+\beta_2)}\end{pmatrix}
\wt G_3^{-1}e^{2\pi i\alpha_1\sigma_3}\wt G_3\\
&\hspace{9cm}\times \ \begin{pmatrix}1&2c_0e^{-\pi i(3\alpha_1+\alpha_2)}\\0&1\end{pmatrix}\\
&\qquad=\widehat J_6.
\end{align*}
In fact, it is the requirement that here $P_{0,-}(\lambda)^{-1}P_{0,+}(\lambda)=\widehat J_6$ which
fixes the value (\ref{c0}) of $c_0$.

Now we prove the matching condition $P_0(\lb)M(\lb)^{-1}=I+\bigO(|s|)$
on the boundary $(\partial\mathcal U_0)\cap\mbox{region III}$  as $s\to -i0_+$. Here we use
(\ref{M3}) with $\al=\al_1+\al_2$, $\bt=\bt_1+\bt_2$. The branch of $\lb^{\al_1+\al_2}\equiv
\lb_M^{\al_1+\al_2}$ in (\ref{M3})
was chosen with arguments between $0$ and $2\pi$. On the other hand, the branch of 
$\lb^{\al_1}\equiv \lb_P^{\al_1}$ in (\ref{P}) was chosen with arguments between $-5\pi/4$ and $3\pi/4$.
Therefore $\lb_M^{\al_1}= \lb_P^{\al_1} e^{2\pi i\al_1}$ for $\lb$ in region III.
Taking this into account, we obtain
\[
P(\lb)M(\lb)^{-1}=L(\lb)\begin{pmatrix}1&
c_0 J(\lb;s)\\0&1\end{pmatrix}
\lb^{-\al_2\sigma_3}(\lb-s)^{\al_2\si_3}L(\lb)^{-1}
\]
for $\lb\in (\partial\mathcal U_0)\cap\mbox{region III}$. Here we can (and do) choose 
the branch of $\lb^{-\al_2}$ with arguments between $-\pi/2$ and $-5\pi/2$, and that of $(\lb-s)^{\al_2}$ with arguments between $-\pi/2$ and $3\pi/2$.
Note that this expression can be extended to the whole plane as an analytic function outside the cut $[0,s]$.
As $s\to -i0_+$, we obtain uniformly on the boundary $\partial\mathcal U_0$:
\begin{equation}\label{asPN1}
P(\lb)M(\lb)^{-1}=I+\Delta_1(\lb)+\bigO(|s|^2)+\bigO(|s|^{2+4(\al_1+\al_2)})=I+o(1),\qquad\lb\in\partial\mathcal U_0,
\end{equation}
where
\begin{multline}\label{delta}\Delta_1(\lb)=
-\al_2\frac{s}{\lb}L(\lb)\si_3 L(\lb)^{-1}\\
 -\frac{c_0}{\pi\lb}|s|^{1+2(\al_1+\al_2)}\frac{\Ga(1+2\al_1)\Ga(1+2\al_2)}
{\Ga(2+2(\al_1+\al_2))}L(\lb)\begin{pmatrix}
0&1\\
0&0\end{pmatrix}L(\lb)^{-1}.
\end{multline}
Here we used (\ref{J}) and the connection between Beta and Gamma functions. Alternatively, we can
use (\ref{J0}). 

Finally, we need to prove condition (d) of the RH problem. Since $\widehat\Psi$ and $P_0$ have the same jump relations, the possible singularities of $\widehat\Psi(\lb)P_0(\lb)^{-1}$ at $0$ and $s$ are isolated, and it is easily verified by the construction of $P_0$ and the behavior of $\widehat\Psi$ at $0$ and $s$ that the singularities cannot be essential. Therefore, to conclude that the singularities are removable, it suffices to check that $\widehat\Psi(\lb)P_0(\lb)^{-1}$ is $o(|\lb|^{-1})$ as $\lambda \to 0$ in region III, and that it is $o(|\lb-s|^{-1})$ as $\lambda\to s$ in region III.
Let us consider the behavior of $\widehat\Psi(\lb)P_0(\lb)^{-1}$ as $\lb\to 0$ in region III.
From the conditions (\ref{Psi i}), (\ref{Psi i 2}), we obtain as $\lb\to 0$ in region III:
\be
\begin{aligned}\label{whPsi0}
\widehat\Psi(\lb)=&\widehat F_1(\lb) \lb^{\al_1\si_3}\begin{pmatrix}
1& \ell(\lb) e^{-i\pi(\al_1-\bt_1-\al_2+\bt_2)}\\
0& 1\end{pmatrix},\\ \widehat F_1(\lb)=&
e^{-\frac{s}{4}\si_3}e^{-i{\pi\over 2}(\al_1-\bt_1-\al_2+\bt_2)\si_3}F_1\left(\frac{2}{|s|}\lb+i\right)
e^{i{\pi\over 2}(\al_1-\bt_1-\al_2+\bt_2)\si_3}\left(\frac{2}{|s|}\right)^{\al_1\si_3},
\end{aligned}
\ee
where $\ell=g$, with $g$ given by (\ref{G}) for $2\al_1\neq 0,1,\dots$; and
$\ell=g_{int}\ln(2\lb/|s|)$, with $g_{int}$ given by (\ref{gint}) for $2\al_1= 0,1,\dots$.

Multiplying (\ref{whPsi0}) on the right by $P_0(\lb)^{-1}$ and substituting (\ref{P}), we obtain after a straightforward analysis of (\ref{J}) that
\begin{equation}
\widehat\Psi(\lb)P_0(\lb)^{-1}=\widehat F_1(\lb)\begin{pmatrix}1&\bigO(|\lambda|^{2\alpha_1})+\bigO(|\ln\lb |)+\bigO(1)&\\0&1
\end{pmatrix}
L(\lambda)^{-1},
\end{equation}
as $\lambda\to 0$ in region III.
Since $\widehat F_1$ and $L$ are analytic at $0$ and $\Re\alpha_1>-1/2$, this implies that $\widehat\Psi(\lb)P_0(\lb)^{-1}$ is analytic at $0$.
In a similar way we obtain
\begin{equation}
\widehat\Psi(\lb)P_0(\lb)^{-1}=\widehat F_2(\lb)\begin{pmatrix}1&\bigO(|\lambda|^{2\alpha_2})+\bigO(|\ln\lb |)+\bigO(1)&\\0&1
\end{pmatrix}
L(\lambda)^{-1},
\end{equation}
as $\lambda\to s$ in region III.
Hence $\widehat\Psi(\lb)P_0(\lb)^{-1}$ is analytic at $s$.

\end{proof}

\begin{remark}
The representation (\ref{J0}) allows us to obtain the full expansion of $P_0(\lb)$ near its singularities
without much effort (cf. \cite{CIK2}). 
The exact cancellation of singular parts of $P_0(\lb)$ and $\widehat\Psi(\lb)$
at $0$, $s$, in the expression $\widehat\Psi(\lb)P_0(\lb)^{-1}$ gives an alternative way to
fix the value (\ref{c0}) of the constant $c_0$. 
Consider, for example, the case of $\lb$ near $0$ and $2\al_1\neq 0,1,2\dots$. 
Then we can use the following standard transformation of the hypergeometric function:
\be\label{z to 1 over z}
\eqalign{
F(1,1+2\al_1,2+2(\al_1+\al_2),z)=
-\frac{\pi}{\sin 2\pi\al_1}\frac{\Ga(2+2(\al_1+\al_2))}{\Ga(1+2\al_1)\Ga(1+2\al_2)}\\
\times\left(e^{i\pi} z^{-1}\right)^{1+2\al_1}\left(1-\frac{1}{z}\right)^{2\al_2}\\
+\frac{1+2(\al_1+\al_2)}{2\al_1}e^{i\pi} z^{-1}F(1,-2(\al_1+\al_2),1-2\al_1,1/z),}
\ee
to write $J(\lb;s)$ in the form
\begin{align}\label{Jsa}
J(\lb;s)&=J_{sing}(\lb)+J_{an}(\lb),\qquad
J_{sing}(\lb)=
\frac{e^{i\pi(3\al_1-\al_2)}}{i\sin 2\pi\al_1}\lb^{2\al_1}(\lb-s)^{2\al_2},\\
J_{an}(\lb)&=
-\frac{1}{i\pi}|s|^{2(\al_1+\al_2)}\frac{\Ga(2\al_1)\Ga(1+2\al_2)}{\Ga(1+2(\al_1+\al_2))}
F\left(1,-2(\al_1+\al_2),1-2\al_1,\frac{\lb}{s}\right),
\end{align}
with the branch of $\lb^{\al_1}$ corresponding to the arguments between $-\pi/2$ and $-5\pi/2$,
and the branch of $(\lb-s)^{\al_2}$, to the arguments between $3\pi/2$ and $-\pi/2$.
Since $J_{an}(\lb)$ is analytic at $\lb=0$, this representation shows the form of the singularity
of $J(\lb)$ at zero. Note that in region III, the branches in (\ref{Jsa}) coincide with those
in (\ref{P}).
We now write $P_0$ as $\lb\to 0$ in region III in the form
\begin{multline}\label{P0}
P_0(\lb)=L(\lb)\begin{pmatrix}1&
c_0 J_{an}(\lb)\\0&1\end{pmatrix}
(\lb-s)^{\al_2\si_3}
\begin{pmatrix}1&
c_0 J_{sing}(\lb)(\lb-s)^{-2\al_2}\\0&1\end{pmatrix}\\
\times\lb^{\al_1\sigma_3}e^{2\pi i\al_1\si_3}
\begin{pmatrix}1&
\wt g(\al_1+\al_2,\bt_1+\bt_2)\\0&1\end{pmatrix},
\end{multline}
where $\wt g(\al,\bt)$ is given by (\ref{MG3}).

Comparing this expression with (\ref{whPsi0}), we see that the condition of analyticity
of  $\widehat\Psi(\lb)P_0(\lb)^{-1}$ at zero is the condition of vanishing of the term 
with $\lb^{2\al_1}$ in $\widehat\Psi(\lb)P_0(\lb)^{-1}$, which is
\begin{multline}\notag
g e^{-i\pi(\al_1-\bt_1-\al_2+\bt_2)}-
\wt g(\al_1+\al_2,\bt_1+\bt_2)\\
=c_0 J_{sing}(\lb)(\lb-s)^{-2\al_2}\lb^{-2\al_1}e^{-4\pi i\al_1}=
c_0\frac{e^{-i\pi(\al_1+\al_2)}}{i\sin 2\pi\al_1}.
\end{multline}
Solving this condition for $c_0$, we again obtain (\ref{c0}).
\end{remark}

We now construct the parametrix in the remaining case $2(\alpha_1+\alpha_2)\in \mathbb N\cup\{0\}$. 
(Note, in particular, that the constant $c_0$ in (\ref{c0}) is not defined in this case.)
Set
\be\label{Jhalf}
\wt J(\lb;s)=\frac{1}{2}\frac{\partial}{\partial\al_1}J(\lb;s)=
\frac{1}{\pi i}\int_s^0\frac{|\xi|^{2\alpha_1}|\xi-s|^{2\alpha_2}\ln |\xi|}{\xi-\lambda} d\xi,
\ee
and then set
\begin{equation}\label{Phalf}
\wt P_0^{(3)}(\lb)=\widetilde L(\lb)\begin{pmatrix}1&
e_1\wt J(\lb;s)+ e_2 J(\lb;s)\\0&1\end{pmatrix}
\lb^{\al_1\sigma_3}(\lb-s)^{\al_2\si_3}e^{2\pi i\al_1\si_3}\begin{pmatrix}1&
m(\lambda)\\0&1\end{pmatrix},
\end{equation}
where
\begin{align}
e_1&=\frac{i}{\pi}e^{-i\pi(\al_1+\al_2)}\sin\pi(\al_1+\al_2+\bt_1+\bt_2)\sin2\pi\al_1,\label{c1}\\
e_2&=\frac{1}{2}\left(i\pi c_1+ e^{i\pi(-2\al_1+\bt_1+\bt_2)}-(-1)^{2(\al_1+\al_2)}e^{i\pi(\bt_1-\bt_2)}\right),
\label{c2}
\end{align}
the matrix $\widetilde L$ and $m(\lambda)$ are as in (\ref{M3half}) and (\ref{mhalf}), respectively,
with $\al=\al_1+\al_2$, $\bt=\bt_1+\bt_2$.

With $\wt P_0^{(3)}(\lb)$ given by (\ref{Phalf}) set
\begin{equation}\label{def P00half}
\begin{aligned}
P_0(\lambda)&=\wt P_0^{(3)}(\lambda),\qquad\mbox{ in region III,}\\
P_0(\lambda)&=\wt P_0^{(3)}(\lambda)\begin{pmatrix}1&-m(\lb)\\ 0&1\end{pmatrix}
e^{-2\pi i\alpha_1\si_3}\begin{pmatrix}1&m(\lb)\\ 0&1\end{pmatrix}\widehat J_2^{-1}\\
&\times
\begin{cases}
\wh J_2^{-1},&\mbox{ in region II,}\\
\wh J_2^{-1}\wh J_1^{-1},&\mbox{ in region I,}\\
(-1)^{2(\al_1+\al_2)} \wh J_3^{-1},&\mbox{ in region IV,}\\
(-1)^{2(\al_1+\al_2)} \wh J_3^{-1}\wh J_4^{-1},&\mbox{ in region V.}
\end{cases}
\end{aligned}
\end{equation}
Then the jump conditions hold (the jump condition on $\widehat \Gamma_6$ fixes the values (\ref{c1}),
(\ref{c2})
of the constants $c_1$ and $c_2$). One verifies conditions (c) and (d) in a similar way as above.
In particular, we obtain
\begin{equation}\label{asPN1half}
P_0(\lb)M(\lb)^{-1}=I+\bigO(|s\ln |s||),\qquad\lb\in\partial\mathcal U_0,\qquad s\to -i0_+.
\end{equation}
Thus, we have
\begin{proposition}\label{prop local half}
Let $2(\alpha_1+\alpha_2)\in \mathbb N\cup\{0\}$. Then the function (\ref{def P00half}) solves the RH problem for $P_0$.
\end{proposition}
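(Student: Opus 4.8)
The plan is to verify the four RH conditions for $P_0$ given by (\ref{def P00half})--(\ref{Phalf}), (\ref{c1})--(\ref{c2}) along the same lines as the proof of Proposition~\ref{prop local}, the only genuinely new feature being the logarithms carried by $m(\lb)$ and by $\wt J(\lb;s)$. First I would dispose of condition (a): $\widetilde L$ is entire, $J(\lb;s)$ and $\wt J(\lb;s)$ are analytic in $\lb$ off the segment $[0,s]$ by the integral representations (\ref{J}) and (\ref{Jhalf}), and $\lb^{\al_1\si_3}$, $(\lb-s)^{\al_2\si_3}$, $m(\lb)$ are analytic off their branch cuts along $\widehat\Gamma_2$, $\widehat\Gamma_3$ and $[0,s]$ respectively, so $P_0$ is analytic on $\mathcal U_0\setminus\widehat\Gamma$. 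The jump relations on the rays $\widehat\Gamma_1$, $\widehat\Gamma_4$, $\widehat\Gamma_5$ are then immediate from the piecewise definition (\ref{def P00half}): across each of them $P_0$ acquires on the right exactly the constant matrix realizing the corresponding $\wh J_k$.

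Next I would treat the jumps on $\widehat\Gamma_2$ and $\widehat\Gamma_3$, which come from branch cuts exactly as in Proposition~\ref{prop local}. On $\widehat\Gamma_2$ the factor $\lb^{\al_1\si_3}$ in $\wt P_0^{(3)}$ jumps by $e^{2\pi i\al_1\si_3}$ and the logarithm $m(\lb)$ jumps additively by a constant; combined with the unipotent upper-triangular conjugation built from $m(\lb)$ and $e^{-2\pi i\al_1\si_3}$ that appears in (\ref{def P00half}) in regions II and III --- the analogue here of the conjugation by $\wt G_3$ in the generic case --- a short computation gives $P_{0,-}^{-1}P_{0,+}=\wh J_2$. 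The jump on $\widehat\Gamma_3$ is handled identically with the branch cut of $(\lb-s)^{\al_2\si_3}$; the scalar factor $(-1)^{2(\al_1+\al_2)}$ in (\ref{def P00half}) is the value $e^{2\pi i(\al_1+\al_2)}$ takes in the integer case and is exactly what produces $P_{0,-}^{-1}P_{0,+}=\wh J_3$.

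The computational core is the jump on $\widehat\Gamma_6=[s,0]$, where the constants $e_1,e_2$ get pinned down, and then condition (c). By (\ref{Jjump}) the additive jump of $J$ across $[0,s]$ is $2|\lb|^{2\al_1}|\lb-s|^{2\al_2}$, and differentiating (\ref{J}) in $\al_1$ shows that the additive jump of $\wt J=\tfrac12\,\partial_{\al_1}J$ is $2|\lb|^{2\al_1}|\lb-s|^{2\al_2}\ln|\lb|$. Substituting both into $P_{0,-}^{-1}P_{0,+}$ computed from (\ref{Phalf}), (\ref{def P00half}) and demanding the result equal the constant matrix $\wh J_6$ produces two scalar equations: matching the coefficient of $\ln|\lb|$ fixes $e_1$ as in (\ref{c1}), and matching the part without the logarithm fixes $e_2$ as in (\ref{c2}), with $c_1$ there the integration constant of Proposition~\ref{prop F2}. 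For (c), writing $M\equiv M^{(\al_1+\al_2,\bt_1+\bt_2)}=\widetilde L\,\lb^{(\al_1+\al_2)\si_3}$ times the $m$-triangular matrix from (\ref{M3half})--(\ref{mhalf}), the triangular factors carrying $m(\lb)$ cancel, so in region III one is left with $P_0M^{-1}$ equal to $\widetilde L(\lb)$ times a unipotent upper-triangular matrix with $(1,2)$-entry $e_1\wt J(\lb;s)+e_2J(\lb;s)$ times $\lb^{-\al_2\si_3}(\lb-s)^{\al_2\si_3}\widetilde L(\lb)^{-1}$, which extends analytically off $[0,s]$; as $s\to -i0_+$, $\lb^{-\al_2\si_3}(\lb-s)^{\al_2\si_3}=I+\bigO(s)$ and, from (\ref{J})--(\ref{Jhalf}), $J=\bigO(|s|^{1+2(\al_1+\al_2)})$, $\wt J=\bigO(|s|^{1+2(\al_1+\al_2)}\ln|s|)$, so $P_0M^{-1}=I+\bigO(|s\ln|s||)$ uniformly on $\partial\mathcal U_0$, which is (\ref{asPN1half}).

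Condition (d) is the step I expect to be the real obstacle. The singularities of $\widehat\Psi P_0^{-1}$ at $0$ and $s$ are isolated and non-essential (from the local behavior of $\widehat\Psi$ and $P_0$), so it suffices to check removability; near $0$ I would use the expansion (\ref{whPsi0}) of $\widehat\Psi$ (with $\ell$ equal to $g$ or to $g_{int}\ln(2\lb/|s|)$, $g_{int}$ from (\ref{gint}), according to whether $2\al_1$ is a nonnegative integer) together with the local form of $J$, $\wt J$ near $0$ obtained from the transformation (\ref{z to 1 over z})--(\ref{Jsa}) and its $\al_1$-derivative, which exhibits singular parts of types $\lb^{2\al_1}(\lb-s)^{2\al_2}$ and $\lb^{2\al_1}(\lb-s)^{2\al_2}\ln\lb$ on top of an analytic remainder, and similarly at $s$ with $\al_2$ in place of $\al_1$ and $F_2$ in place of $F_1$. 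In this integer ($2(\al_1+\al_2)\in\mathbb N\cup\{0\}$) regime one must track simultaneously both the power-type singular terms and the logarithmic terms $\ln\lb$, $\lb^{2\al_1}\ln\lb$ in $\widehat\Psi P_0^{-1}$ near $0$, and their counterparts near $s$, roughly doubling the bookkeeping relative to Proposition~\ref{prop local}, and the point is to confirm that the two constants $(e_1,e_2)$ fixed by the $\widehat\Gamma_6$-jump annihilate all of them at once --- equivalently, that the compatibility between the jump condition on $[s,0]$ and the analyticity requirements at $0$ and $s$, automatic in the generic case, survives the confluence of the parameters.
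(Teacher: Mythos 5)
Your proposal follows essentially the same route as the paper's own treatment, which is only a brief sketch ("the jump condition on $\widehat\Gamma_6$ fixes the values of the constants... one verifies conditions (c) and (d) in a similar way as above"); your write-up is a more detailed elaboration of exactly that plan, and the individual verifications — analyticity off $\widehat\Gamma$, branch-cut jumps on $\widehat\Gamma_2,\widehat\Gamma_3$, the two equations on $\widehat\Gamma_6$ from the log and non-log parts of the jump of $e_1\wt J+e_2J$, the matching estimate $I+\bigO(|s\ln|s||)$ on $\partial\mathcal U_0$, and removability at $0$ and $s$ — are all the right steps. One small note: the constant "$c_1$" appearing in the printed formula (\ref{c2}) for $e_2$ is almost certainly $e_1$ under an incomplete renaming of constants (the surrounding text still calls the constants in (\ref{c1}), (\ref{c2}) "$c_1$ and $c_2$"), not the integration constant $c_1=\tfrac{i}{8}(\beta_1+\beta_2)^2$ of Proposition \ref{prop F2} as you interpreted it — the latter would introduce a quadratic term $\pi(\beta_1+\beta_2)^2$ into an otherwise purely exponential expression; this does not affect your derivation strategy, since you determine $e_1,e_2$ directly from the $\widehat\Gamma_6$ jump.
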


\subsection{Solution of the RH problem and the asymptotics of $\si(s)$ for small $s$}
We define as usual
\be \label{def R0}
H(\lb)=\begin{cases}
\widehat\Psi(\lb) P_0^{-1}(\lb),& \lb\in\mathcal U_0,\\
\widehat\Psi(\lb) M^{-1}(\lb),& \lb\in\mathbb C\setminus\overline{\mathcal U_0}.
\end{cases}
\ee

We then have that $H$ is analytic in $\mathbb C\setminus\partial\mathcal U_0$  
with the jump (see (\ref{asPN1}), (\ref{asPN1half})) 
$P_0(\lb)M^{-1}(\lb)=1+o(1)$ uniformly for $\lb\in\partial\mathcal U_0$ as $s\to -i0_+$. As $\lb\to\infty$, we have $H(\lb)=I+\bigO(\lb^{-1})$ for any fixed $s$.
Therefore, this RH problem for $H$ is a small-norm problem solvable in the standard way by a Neumann series.
Consider the case $2(\alpha_1+\alpha_2)\notin\mathbb N\cup\{0\}$. We have
\begin{align}&\label{Rsol}
H(\lb)=I+H^{(1)}(\lb)+\bigO(|s|^2)+\bigO(|s|^{2+4(\al_1+\al_2)}),\\
& H^{(1)}(\lb)=\frac{1}{2\pi i}\int_{\partial\mathcal U_0}\frac{\Delta_1(\mu)}{\mu-\lb}d\mu,
\end{align}
uniformly in $\mathbb C\setminus\partial\mathcal U_0$ as $s\to -i0_+$, where $\Delta_1$ is given by (\ref{delta}).
Here $\partial\mathcal U_0$ is oriented clockwise.

From the asymptotics for $H$, we can obtain the asymptotics for $\widehat\Psi(\lb)$ as $s\to -i0_+$, and
hence we can compute the asymptotics for the Painlev\'e function $\si(s)$ in this limit.
By (\ref{sigma4b}) and (\ref{C1}), we need to compute
$\Psi_{1,11}(s)$, which is the coefficient of $1/\lb$ in the large $\lb$ expansion (\ref{Psi as}) of
$\Psi(\lb)$.
First, we observe by (\ref{whC1}) that
\be\label{whPsiPsi}
\Psi_{1,11}=\frac{2}{|s|}\widehat \Psi_{1,11}+2i\beta_2.
\ee
To compute $\widehat \Psi_{1,11}$ for small $|s|$, we use the asymptotic solution of the $\widehat\Psi$ problem.
We have as $\lb\to \infty$,
\be\label{whPsi as2}
\widehat \Psi(\lb)=H(\lb)M(\lb)=
\left(I+\frac{H_1}{\lb}+\bigO(\lb^{-2})\right)
\left(I+\frac{M_1}{\lb}+\bigO(\lb^{-2})\right)
  \lb^{-(\bt_1+\bt_2)\si_3} e^{-\frac{1}{2}\lb\si_3},
\ee
where $M_1$ is given by (\ref{M1}) with $\al=\al_1+\al_2$, $\bt=\bt_1+\bt_2$,
and where, by (\ref{Rsol}),
\be\label{C1R}
H_1=-\frac{1}{2\pi i}\int_{\partial\mathcal U_0}\Delta_1(\mu)d\mu+
\bigO(|s|^2)+\bigO(|s|^{2+4(\al_1+\al_2)}).
\ee
Comparing (\ref{whPsi as}) and (\ref{whPsi as2}), we obtain
\be\label{sigma-med}
\widehat \Psi_{1,11}=
(H_1+M_1)_{11}=H_{1,11}+(\al_1+\al_2)^2-(\bt_1+\bt_2)^2.
\ee
Computing the residue of $\Delta_1(\mu)$ at zero, we obtain by
(\ref{C1R}),  (\ref{delta}), and (\ref{ME}) with $\al=\al_1+\al_2$, $\bt=\bt_1+\bt_2$,
\be\begin{aligned}
H_{1,11}=
\al_2 \frac{\bt_1+\bt_2}{\al_1+\al_2}s-
\frac{c_0}{\pi}\frac{\Ga(1+\al_1+\al_2+\bt_1+\bt_2)\Ga(1+\al_1+\al_2-\bt_1-\bt_2)}
{\Ga(1+2(\al_1+\al_2))^2}\\
\times
\frac{\Ga(1+2\al_1)\Ga(1+2\al_2)}
{\Ga(2+2(\al_1+\al_2))}e^{-2\pi i(\al_1+\al_2)}
|s|^{1+2(\al_1+\al_2)}+\bigO(|s|^2)+\bigO(|s|^{2+4(\al_1+\al_2)}).
\end{aligned}
\ee
Therefore, by (\ref{sigma-med}), (\ref{whPsiPsi}), (\ref{C1}), and (\ref{sigma4b}), we finally obtain the small $s$ expansion
(\ref{as0}) of $\si(s)$. In the case where $2(\alpha_1+\alpha_2)\in\mathbb N\cup\{0\}$, the estimate (\ref{as0half}) is obtained similarly by (\ref{asPN1half}).

Using the small $s$ asymptotic expansion (\ref{Rsol}) for $H$ and inverting the transformations (\ref{def R0}) and (\ref{whPsi}), one obtains small $s$ asymptotics
for $F_1$ and $F_2$ defined in (\ref{Psi i}), (\ref{Psi -i}), (\ref{Psi i 2}), and (\ref{Psi -i 2}). These lead to a proof of (\ref{c1c2}).

\section{Asymptotics of the orthogonal polynomials}\label{section: RHP Y}

We now use the model problem for $\Psi$ of Section \ref{secPsi} to obtain asymptotics for the solution
of the $Y$-RH problem for the orthogonal polynomials of Section \ref{secY}, for large $n$
uniformly in $0<t<t_0$ with a fixed sufficiently small $t_0$.
We assume that
$\Re\alpha_1,\Re\alpha_2>-1/2$, 
$\al_k\pm\bt_k\neq -1,-2,\dots$, $k=1,2$, 
and that $|||\beta|||<1$. For the small $s=2int$ asymptotics, we require furthermore that
$\Re(\al_1+\al_2)>-1/2$ and
$(\al_1+\al_2)\pm(\bt_1+\bt_2)\neq -1,-2,\dots$.

\subsection{Normalization of the RH problem}
    Set
    \begin{equation}\label{def T}
    T(z)=\begin{cases}Y(z)z^{-n\sigma_3},&\mbox{ for $|z|>1$},\\
    Y(z),&\mbox{ for $|z|<1$.}
    \end{cases}
    \end{equation}
    Then, by the RH conditions for $Y$, we obtain (recall that $z_1=e^{it}$, $z_2=e^{i(2\pi-t)}$):
    \subsubsection*{RH problem for $T$}
    \begin{itemize}
    \item[(a)] $T:\mathbb C \setminus C \to \mathbb C^{2\times 2}$ is analytic.
    \item[(b)] $T_+(z)=T_-(z)
                \begin{pmatrix}
                    z^n & f_t(z) \\
                    0 & z^{-n}
                \end{pmatrix},$
                \qquad  for $z \in C\setminus\{z_1,z_2\}$.
    \item[(c)] $T(z)=I+\bigO(1/z)$
                \qquad  as $z\to \infty $.
                    \item[(d)] As $z\to z_k$, $T(z)$ has the same singular behavior as $Y(z)$, see Section \ref{secY1}.
    \end{itemize}

\subsection{Opening of lens}
The function $f_t(z)$ admits the following factorization
on the unit circle:
\be\label{ffactoriz}
f_t(e^{i\th})=\mathcal{D}_+(e^{i\th})\mathcal{D}_-^{-1}(e^{i\th}),\qquad \th\neq \pm t,
\ee
where $\mathcal{D}_+$, $\mathcal{D}_-$ are the boundary values from the inside and the outside of the unit circle, respectively, of the Szeg\H o function $\mathcal{D}(z)=\exp {1\over 2\pi i}\int_C
{\ln f_t(s)\over s-z} ds$, which is analytic inside and outside of $C$. We have
(see (4.9)--(4.10) in \cite{DIK2}):
\be\label{Dl1}
\mathcal{D}(z)
=e^{\sum_0^\infty V_j z^j}\prod_{k=1}^2\left({z-z_k\over z_k e^{i\pi}}\right)^{\al_k+\bt_k}\equiv\mathcal D_{in,t}(z)
,\qquad
|z|<1,
\ee
and
\be\label{Dg1}
\mathcal{D}(z)^{-1}
=e^{\sum_{-\infty}^{-1}V_j z^j}\prod_{k=1}^2\left({z-z_k\over z}\right)^{\al_k-\bt_k}\equiv \mathcal D_{out,t}(z)^{-1}
,\qquad
|z|>1.
\ee
The branch of $(z-z_k)^{\al_k\pm\bt_k}$ in (\ref{Dl1}), (\ref{Dg1})
is fixed by the condition that $\arg(z-z_k)=2\pi$ on the line going from $z_k$ to the right parallel
to the real axis, and the branch cut is the line $\th=\th_k$ from $z=z_k=e^{i\th_k}$ to infinity.
In (\ref{Dg1}) for any $k$, the branch cut
of the root $z^{\al_k-\bt_k}$ is the line $\th=\th_k$ from
$z=0$ to infinity, and $\th_k<\arg z<2\pi+\th_k$.

By (\ref{ffactoriz}),
\begin{equation}\label{fDD}
f_t(e^{i\theta})=\mathcal D_{in,t}(e^{i\theta})\mathcal D_{out,t}(e^{i\theta})^{-1},
\end{equation}
and this function extends analytically to the
complex plane with two branch cuts $e^{it}\mathbb R^+$ and
$e^{-it}\mathbb R^+$. Orienting the cuts away from zero, we obtain for the jumps of $f_t$:
\begin{align*}
&f_{t+}=f_{t-}e^{2\pi i(\al_j-\bt_j)},& \mbox{ on $z_j(0,1)$},\\
&f_{t+}=f_{t-}e^{-2\pi i(\al_j+\bt_j)},& \mbox{ on $z_j(1,\infty)$}.
\end{align*}
 We can factorize
the jump matrix for $T$ if $|z|=1$, $t<\arg z<2\pi -t$ as follows:
\begin{equation}
\begin{pmatrix}
                    z^n & f_t(z) \\
                    0 & z^{-n}
                \end{pmatrix}= \begin{pmatrix}
                    1 & 0 \\
                    z^{-n}f_t(z)^{-1} & 1
                \end{pmatrix}\begin{pmatrix}
                    0 & f_t(z) \\
                    -f_t(z)^{-1} & 0
                \end{pmatrix}\begin{pmatrix}
                    1 & 0 \\
                    z^{n} f_t(z)^{-1} & 1
                \end{pmatrix}.
\end{equation}
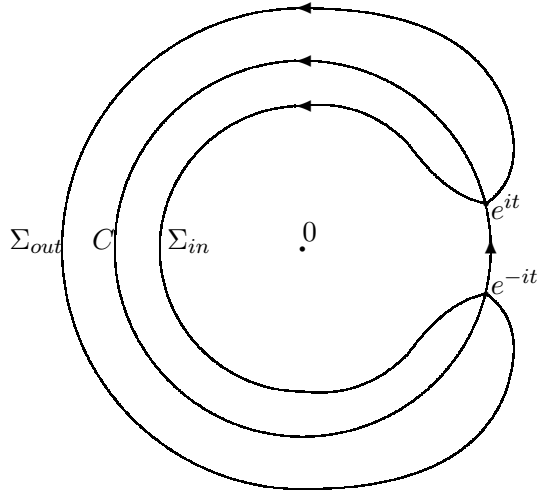
\begin{figure}[t]
    \begin{center}
    \setlength{\unitlength}{1truemm}
    \begin{picture}(100,55)(-5,10)
        \cCircle(50,40){25}[f]
        \put(50,40){\thicklines\circle*{.8}}
        \put(50,41){$0$}
        \put(75,41.5){\thicklines\vector(0,1){.0001}}
        \put(74.4,46){\thicklines\circle*{.8}}
        \put(75,44){$e^{it}$}
        \put(74.4,34){\thicklines\circle*{.8}}
        \put(75,34){$e^{-it}$}
        \put(49,65){\thicklines\vector(-1,0){.0001}}
        \put(49,72){\thicklines\vector(-1,0){.0001}}
        \put(49,59){\thicklines\vector(-1,0){.0001}}
        \put(22,40){$C$}
        \put(11,40){$\Sigma_{out}$}
        \put(32,40){$\Sigma_{in}$}

        \qbezier(74.4,46)(80,50)(77,59)
        \qbezier(77,59)(72,72)(50,72)
        \cCircle(50,40){32}[l]
\qbezier(74.4,34)(80,30)(77,21)
        \qbezier(77,21)(72,8)(50,8)

        \qbezier(74.4,46)(69,47)(64,54)
        \qbezier(64,54)(58,60)(50,59)
        \cCircle(50,40){19}[l]
        \qbezier(74.4,34)(69,33)(64,26)
        \qbezier(64,26)(58,20)(50,21)

%
    \end{picture}
    \caption{The contour $\Sigma_S=\Sigma_{in}\cup C\cup\Sigma_{out}$.}
    \label{fig2}
\end{center}
\end{figure}
We fix a lens-shaped region as shown in Figure \ref{fig2},
and define
\begin{equation}\label{def S}
S(z)=\begin{cases} T(z),&\mbox{outside the lens,}\\
T(z)\begin{pmatrix}
                    1 & 0 \\
                    z^{-n}f_t(z)^{-1} & 1
                \end{pmatrix},&\mbox{in the part of the lens outside the unit circle,}\\
T(z)\begin{pmatrix}
                    1 & 0 \\
                    -z^{n}f_t(z)^{-1} & 1
                \end{pmatrix},&\mbox{in the part of the lens inside the unit circle.}
\end{cases}
\end{equation}
The following RH conditions for $S$ can be verified directly.

\subsubsection*{RH problem for $S$}
\begin{itemize}
    \item[(a)] $S:\mathbb C \setminus \Sigma_S \to \mathbb C^{2\times 2}$ is analytic.
    \item[(b)] $S_+(z)=S_-(z)J_S(z),$
                \qquad  for $z \in \Sigma_S$, where $J_S$ is given
                by
\begin{equation}\label{jump S}
J_S(z)=\begin{cases}\begin{pmatrix}
                    1 & 0 \\
                    z^{-n}f_t(z)^{-1} & 1
                \end{pmatrix},&\mbox{on $\Sigma_{out}$,}\\
\begin{pmatrix}
                    0 & f_t(z) \\
                    - f_t(z)^{-1} & 0
                \end{pmatrix},&\mbox{ on the arc
                $(e^{it},e^{i(2\pi-t)})$,}\\\begin{pmatrix}
                    1 & 0 \\
                    z^{n}f_t(z)^{-1} & 1
                \end{pmatrix},&\mbox{ on $\Sigma_{in}$,}\\
                \begin{pmatrix}
                    z^n & f_t(z) \\
                    0 & z^{-n}
                \end{pmatrix},&\mbox{ on the arc $(e^{i(2\pi-t)},e^{it})$.}
\end{cases}
\end{equation}
    \item[(c)] $S(z)=I+\bigO(1/z)$
                \qquad  as $z\to \infty $.
    \item[(d)] As $z\to z_k$, $k=1,2$, and $z$ in the region outside the lens, we have
   \[S(z)=\begin{pmatrix}\bigO(1)&\bigO(1)+\bigO(|z-z_k|^{2\alpha_k})\\\bigO(1)&\bigO(1)+\bigO(|z-z_k|^{2\alpha_k})\end{pmatrix},\qquad \mbox{if}\quad \al_k\neq 0,
\]
and
\[S(z)=\begin{pmatrix}\bigO(1)&\bigO(|\ln|z-z_k||)\\\bigO(1)&\bigO(|\ln|z-z_k||)\end{pmatrix},\qquad \mbox{if}\quad \al_k=0.
\]
The behavior of $S(z)$, $z\to z_k$ in the other sectors is obtained from these expressions by application of the appropriate jump conditions.
    \end{itemize}
Let us now fix a small complex neighborhood $\mathcal U$ of $1$, for example a small disk such that for $t<t_0$, the singularities $e^{it}$ and $e^{-it}$
are contained in $\mathcal U$. Noting that $z^{n}$, resp.\ $z^{-n}$, is exponentially decaying as $n\to\infty$ for $|z|<1$, resp.\ $|z|>1$, one observes by (\ref{jump S}) that the jump matrix $J_S(z)$ converges to the identity matrix as
$n\to\infty$ for $z\in(\Sigma_{in}\cup\Sigma_{out})\setminus \mathcal U$, uniformly in $z$ and $t<t_0$.

\subsection{Global parametrix}\label{secN}
Define the function
\begin{equation}\label{def N}
N(z)=\begin{cases}\mathcal D_{in,t}(z)^{\sigma_3}\begin{pmatrix}0&1\\-1&0\end{pmatrix},&\mbox{
for $|z|<1$,}\\
\mathcal D_{out,t}(z)^{\sigma_3},&\mbox{ for $|z|>1$.}
\end{cases}
\end{equation}
It is straightforward to verify that $N$ satisfies the following RH conditions:
\subsubsection*{RH problem for $N$}
\begin{itemize}
    \item[(a)] $N:\mathbb C \setminus C \to \mathbb C^{2\times 2}$ is analytic.
    \item[(b)] $N_+(z)=N_-(z)
                \begin{pmatrix}
                    0 & f_t(z) \\
                    -f_t(z)^{-1} & 0
                \end{pmatrix},$
                \qquad  for $z \in C\setminus\{e^{\pm it}\}$.
    \item[(c)] $N(z)=I+\bigO(1/z)$
                \qquad  as $z\to \infty $.
    \end{itemize}

\subsection{$0<t\le \om(n)/n$. Local parametrix near $1$}\label{sec74}
Let $\om(x)$ be a positive, smooth for large $x$ function such that $\om(n)\to\infty$,
$\om(n)=o(n)$ as $n\to\infty$. 
In order to obtain the asymptotic solution to the $Y$-RH problem with ``good'' uniformity properties
in $0<t<t_0$, we will have to construct different local parametrices for the cases $0<t\le 1/n$,
$1/n<t\le\om(n)/n$, and $\om(n)/n<t<t_0$.
First, we consider $0<t\le 1/n$ and $1/n<t\le\om(n)/n$. For general $\bt$'s we have not excluded the possibility that
there is a finite set $\Omega$ of the points $s=-2int$ away from zero where the RH problem for $\Psi$ is not solvable. In order to integrate the differential
identity for $D_n(f)$ later on, we will need uniform asymptotics for the polynomials in a complex
neighborhood of the interval $0<t\le \om(n)/n$ away from $\Omega$. For simplicity of notation,
we consider only the case of real $t$, $0<t\le \om(n)/n$, in this section, assuming that this interval is disjoint from $\Omega$. The extension to a neighborhood of $0<t\le \om(n)/n$ with small
neighborhoods of the poles removed can be carried out easily by the reader.

For  $0<t\le 1/n$ and $1/n<t\le\om(n)/n$,  we will now construct a local parametrix in $\mathcal U$ which satisfies
the same jump conditions as $S$ inside $\mathcal U$, and which
matches with the global parametrix on the boundary of $\mathcal U$ for large $n$. More precisely, we will construct $P$ satisfying the following conditions.
\subsubsection*{RH problem for $P$}
\begin{itemize}
    \item[(a)] $P:{\mathcal U}\setminus \Sigma_S \to \mathbb C^{2\times 2}$ is analytic.
    \item[(b)] $P_+(z)=P_-(z)J_S(z),$
                \qquad  for $z \in \mathcal U\cap\Sigma_S$.
    \item[(c)] As $n\to\infty$, we have
    \begin{equation}\label{matching PN0}P(z)=\widetilde n^{-(\beta_1+\beta_2)\sigma_3}(I+o(1))\widetilde n^{(\beta_1+\beta_2)\sigma_3}N(z)
                \qquad  \mbox{ for $z\in\partial\mathcal U$,}
                \end{equation}
                uniformly for $0<t<t_0$, where \begin{equation}\label{wtn}\widetilde n=\min\{n,\sqrt{n/t}\}.\end{equation}
                \item[(d)] As $z\to z_k$, $k=1,2$, $S(z)P(z)^{-1}=\bigO(1)$.
    \end{itemize}

 \subsubsection{Modified model RH problem}
The RH problem for $\Psi$ was convenient to prove solvability, to
derive the Lax pair, and to obtain the asymptotics for $\Psi$ as $s\to
-i\infty$ and $s\to -i0_+$. In order to construct the local parametrix
near $1$ for the RH problem for the orthogonal polynomials,
we use an equivalent, but a more convenient form of the model RH
problem for $\Psi$. Set
\begin{equation}\label{def Phi}
\Phi(\zeta;s)\equiv\Phi(\zeta)=\begin{cases}
\Psi(\zeta),& -1<\Im\zeta<1,\\
\Psi(\zeta)e^{\pi i(\alpha_1-\beta_1)\sigma_3},& \Im\zeta>1,\\
\Psi(\zeta)e^{-\pi i(\alpha_2-\beta_2)\sigma_3},& \Im\zeta<-1.
\end{cases}
\end{equation}

\begin{figure}[t]
\begin{center}
    \setlength{\unitlength}{0.8truemm}
    \begin{picture}(100,65)(0,2.5)

    \put(50,45){\thicklines\circle*{.8}}
    \put(50,60){\thicklines\circle*{.8}}
    \put(50,30){\thicklines\circle*{.8}}
    \put(51,56){\small $i$}
    \put(51,32){\small $-i$}
    \put(50,60){\thicklines\circle*{.8}}
    \put(50,30){\thicklines\circle*{.8}}
    \put(35,60){\thicklines\vector(1,0){.0001}}
    \put(35,30){\thicklines\vector(1,0){.0001}}
    \put(69,60){\thicklines\vector(1,0){.0001}}
    \put(69,30){\thicklines\vector(1,0){.0001}}
    \put(50,60){\line(1,1){25}}
    \put(50,30){\line(1,-1){25}}
    \put(50,60){\line(-1,1){25}}
    \put(50,30){\line(-1,-1){25}}
    \put(50,30){\line(1,0){35}}
    \put(50,60){\line(1,0){35}}
    \put(50,30){\line(-1,0){35}}
    \put(50,30){\line(0,1){30}}
    \put(50,60){\line(-1,0){35}}
    \put(65,75){\thicklines\vector(1,1){.0001}}
    \put(65,15){\thicklines\vector(-1,1){.0001}}
    \put(50,39){\thicklines\vector(0,1){.0001}}
    \put(50,54){\thicklines\vector(0,1){.0001}}
    \put(35,75){\thicklines\vector(-1,1){.0001}}
    \put(35,15){\thicklines\vector(1,1){.0001}}
    \put(74,79){\small $\begin{pmatrix}1&1\\0&1\end{pmatrix}$}
    \put(8,79){\small $\begin{pmatrix}1&0\\-1&1\end{pmatrix}$}
\put(86,29){\small $e^{\pi i(\alpha_2+ \beta_2)\sigma_3}$}
\put(86,59){\small $e^{\pi i(\alpha_1+ \beta_1)\sigma_3}$}
\put(-6,29){\small $e^{\pi i(\alpha_2- \beta_2)\sigma_3}$}
\put(31,43){\small $\begin{pmatrix}0&1\\-1&1\end{pmatrix}$}
\put(-5,59){\small $e^{\pi i(\alpha_1- \beta_1)\sigma_3}$}
    \put(8,9){\small $\begin{pmatrix}1&0\\-1&1\end{pmatrix}$}
    \put(74,9){\small $\begin{pmatrix}1&1\\0&1\end{pmatrix}$}
    \end{picture}
    \caption{The jump contour $\Sigma$ and the jump matrices for $\Phi$.}
    \label{figure: Sigma}
\end{center}
\end{figure}
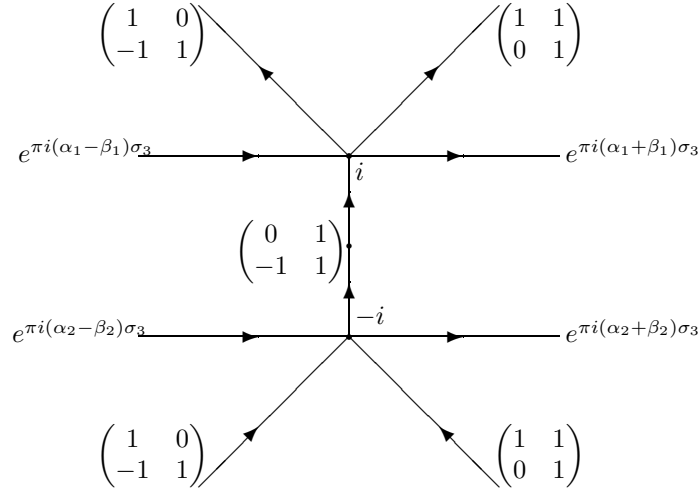
\subsubsection*{RH problem for $\Phi$}
\begin{itemize}
    \item[(a)] $\Phi:\mathbb C\setminus \Sigma \to \mathbb C^{2\times 2}$ is analytic, with
    \begin{align*}&\Sigma=\cup_{k=1}^9\Sigma_k,&& \Sigma_1=i+e^{\frac{i\pi}{4}}\mathbb R^+,
    &&&\Sigma_2=i+e^{\frac{3i\pi}{4}}\mathbb R^+,\\
&\Sigma_3=i-\mathbb R^+,&&\Sigma_4=-i-\mathbb R^+,
    &&&\Sigma_5=-i+e^{-\frac{3i\pi}{4}}\mathbb R^+,\\
    &\Sigma_6=-i+e^{\frac{i\pi}{4}}\mathbb R^+,&&\Sigma_7=-i+\mathbb R^+,&&&\Sigma_8=i+\mathbb
    R^+,\\
    &\Sigma_9=[-i,i].\end{align*}
    \item[(b)] The jump conditions are:
    \begin{equation}\label{jump Phi}\Phi_+(\zeta)=\Phi_-(\zeta)V_k,
                \qquad  \mbox{ for $\zeta \in \Sigma_k$},\end{equation} where
                \begin{align}
                &V_1=\begin{pmatrix}1&1\\0&1\end{pmatrix},
                &&V_2=\begin{pmatrix}1&0\\-1&1\end{pmatrix},\\
                                &V_3=e^{\pi i(\alpha_1-\beta_1)\sigma_3},&&
                V_4=e^{\pi i(\alpha_2-\beta_2)\sigma_3},\\
                &V_5=\begin{pmatrix}1&0\\-1&1\end{pmatrix},
                &&V_6=\begin{pmatrix}1&1\\0&1\end{pmatrix},\\
                &V_7=e^{\pi i(\alpha_2+\beta_2)\sigma_3},
                &&V_8=e^{\pi i(\alpha_1+\beta_1)\sigma_3},\\
                &V_9=\begin{pmatrix}0&1\\-1&1\end{pmatrix}.
                \end{align}
    \item[(c)] As $\zeta\to\infty$, we have
    \begin{equation}\label{Phi as}\Phi(\zeta)=
    (I+\frac{\Psi_1}{\zeta}+\frac{\Psi_2}{\zeta^2}+\bigO(\zeta^{-3}))\widehat P^{(\infty)}(\zeta)e^{-\frac{|s|}{4}\zeta\sigma_3},
    \end{equation}
    where
    \begin{equation}\label{whPinf}\widehat P^{(\infty)}(\zeta)=P^{(\infty)}(\zeta)
      \times \
        \begin{cases}
1,& -1<\Im\zeta<1\\
e^{\pi i(\alpha_1-\beta_1)\sigma_3},& \Im\zeta>1\\
e^{-\pi i(\alpha_2-\beta_2)\sigma_3},& \Im\zeta<-1
\end{cases}    ,
    \end{equation}
    with $P^{(\infty)}$ given by (\ref{Pinfty}).
    \end{itemize}
    $\Phi$ has  singular behavior near $\pm i$ which
    is inherited from $\Psi$. The precise conditions follow from
    (\ref{def Phi}), (\ref{Psi i}), (\ref{Psi -i}), (\ref{Psi i 2}, and (\ref{Psi -i 2}).

\subsubsection{$0<t\le \om(n)/n$. Construction of a local parametrix near $1$ in terms of $\Phi$}
We search for $P$ in the form
\begin{equation}\label{def P}
P(z)=E(z)\Phi(\frac{1}{t}\ln z; -2int)W(z),
\end{equation}
which means, in particular, that we evaluate $\Phi(\zeta;s)$, and thus $\Psi(\zeta;s)$, at $\zeta=\frac{1}{t}\ln z$ and $s=-i|s|=-2int$. The singularities $z=e^{\pm it}$ correspond to the values $\zeta=\pm i$.
In (\ref{def P}), $E$ has to be an analytic matrix-valued function in $\mathcal U$, and $W$ is
given by
\begin{equation}\label{def W}
W(z)=\begin{cases} -z^{\frac{n}{2}\sigma_3}
f_t(z)^{-\frac{\sigma_3}{2}}\sigma_3,&\mbox{ for $|z|<1$,}\\
z^{\frac{n}{2}\sigma_3} f_t(z)^{\frac{\sigma_3}{2}}\sigma_1,&\mbox{
for $|z|>1$.}\end{cases}\qquad \si_1=\begin{pmatrix}0&1\cr 1&0\end{pmatrix}.
\end{equation}
Choose $\Sigma_S$ such that $\frac{1}{t}\ln(\Sigma_S)\subset \Sigma\cup
i\mathbb R$ in $\mathcal U$. Then one verifies, using the jump conditions (\ref{jump Phi}) for $\Phi$ and the jump matrices (\ref{jump S}) for $S$, that $P$ satisfies the jump
relation $P_+=P_-J_S$ on $\mathcal U\cap \Sigma_S$, and that $P$ is
 meromorphic in $\mathcal U\setminus \Sigma_S$ with possible singularities at $z_1,z_2$. By the condition (d) of the RH problem for $S$, (\ref{def Phi}), and the condition (d) of the RH problem for $\Psi$, the singularities of $S(z)P(z)^{-1}$ at $z_1$ and $z_2$ are removable, so the condition (d) of the RH problem for $P(z)$ is satisfied.

It remains to choose $E$ in such a way that the matching condition (\ref{matching PN0}) for $z\in\partial\mathcal U$ as $n\to\infty$ holds, uniformly for $0<t<t_0$.
Define \begin{equation}\label{def E}
E(z)=\sigma_1\left(\mathcal D_{in,t}(z)\mathcal D_{out,t}(z)\right)^{-\frac{1}{2}\sigma_3}\widehat P^{(\infty)}(\frac{1}{t}\ln z)^{-1}.
\end{equation}
It is straightforward to verify that $E$ is analytic in $\mathcal U$ (in particular,  the branch cuts for $\mathcal D_{in,t}$ and $\mathcal D_{out, t}$ cancel out with those of $\widehat P^{(\infty)}$).

\begin{proposition}\label{PropEst}Let $\widetilde n=\min\{n,\sqrt{n/t}\}$, and assume that $|||\beta|||<1$.
We have
\begin{equation}\label{matching P}
P(z)N(z)^{-1}=\widetilde n^{-(\beta_1+\beta_2)\sigma_3}(I+\bigO(t(nt)^{-1+|||\bt|||}))\widetilde n^{(\beta_1+\beta_2)\sigma_3}, \quad \mbox{$n\to\infty$, $z\in\partial\mathcal U$,}
\end{equation}
uniformly for $0<t<t_0$ with $t_0$ sufficiently small and uniformly in $z\in\partial\mathcal U$.
\end{proposition}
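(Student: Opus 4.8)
The plan is to substitute the defining formulas for $P$, $N$, and $E$ into the product $P(z)N(z)^{-1}$ and track how the large-$\zeta$ asymptotics of $\Phi$ combine with the Szeg\H o-function factors. Concretely, for $z\in\partial\mathcal U$ the argument $\zeta=\frac{1}{t}\ln z$ is of order $1/t$, hence large as $t\to 0$ (and also large compared to $1$ when $nt$ is large, so the argument of $\Phi$ is always in the ``far'' regime). By (\ref{def P}), (\ref{def E}), and (\ref{def N}), one computes
\[
P(z)N(z)^{-1}=\sigma_1\left(\mathcal D_{in,t}\mathcal D_{out,t}\right)^{-\frac{\sigma_3}{2}}\widehat P^{(\infty)}(\zeta)^{-1}\Phi(\zeta;-2int)W(z)N(z)^{-1},
\]
and the point is that $\widehat P^{(\infty)}(\zeta)^{-1}\Phi(\zeta;s)$, by the asymptotic condition (\ref{Phi as}), equals $e^{\frac{|s|}{4}\zeta\sigma_3}\bigl(I+\Psi_1/\zeta+\cdots\bigr)^{-1}\!\cdot$ — no, more precisely $\Phi(\zeta)\,e^{\frac{|s|}{4}\zeta\sigma_3}\,[\widehat P^{(\infty)}(\zeta)]^{-1}=I+\bigO(\zeta^{-1})$ up to conjugation by $\widehat P^{(\infty)}$, so the leading behavior of $P N^{-1}$ reduces to elementary factors times $I+\bigO(1/\zeta)=I+\bigO(t)$. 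The first step, then, is to verify that the elementary prefactors — the powers of $z^{n/2}$, $f_t(z)^{\pm 1/2}$, $\mathcal D_{in,t}$, $\mathcal D_{out,t}$, and the $s$-dependent diagonal conjugations hidden in $P^{(\infty)}$ (see (\ref{Pinfty}): a factor $(is/2)^{-(\beta_1+\beta_2)\sigma_3}=(nt)^{\mp(\beta_1+\beta_2)}\times(\text{const})$) — collapse so that $P N^{-1}$ is $\widetilde n^{-(\beta_1+\beta_2)\sigma_3}$ times $I+(\text{error})$ times $\widetilde n^{(\beta_1+\beta_2)\sigma_3}$. This is the routine but bookkeeping-heavy part, essentially the analogue of the matching computation in \cite{CIK2}.

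The second step is to extract the size of the error term, and this is where the two cases $nt\lesssim 1$ versus $nt\gtrsim 1$ — equivalently the two definitions of $\widetilde n=\min\{n,\sqrt{n/t}\}$ — enter. When $nt\le 1$ we have $\widetilde n=\sqrt{n/t}$, and $\Psi_1$ is evaluated at small $s=-2int$, so its off-diagonal entries are governed by the small-$s$ expansion from Section \ref{sec s0}: from (\ref{as0}) and (\ref{identity r}) one reads off that the dominant contributions of $\Psi_1$ behave like powers of $|s|$, and after conjugating the $\bigO(\zeta^{-1})=\bigO(t)$ remainder by the powers of $(nt)^{\pm(\beta_1+\beta_2)}$ (which is bounded since $nt\lesssim 1$) one obtains an error of the form $\bigO(t\cdot(nt)^{-1+|||\beta|||})$ — the claimed bound, where the factor $(nt)^{-1+|||\beta|||}$ comes from the ratio of the $\widetilde n^{\pm(\beta_1+\beta_2)\sigma_3}$ conjugations applied to the off-diagonal $\bigO(t)$ error. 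When $nt\ge 1$ we have $\widetilde n=n$ and $s=-2int$ is large, so one invokes instead the large-$s$ analysis of Section \ref{sec large s}: by (\ref{sigma at infty}) and the explicit formulas for $\gamma(s)$, the relevant entries of $\Psi_1=\Psi_1(s)$ are of order $|s|^{-1+|||\beta|||}=(nt)^{-1+|||\beta|||}$, and since $\zeta=\frac{1}{t}\ln z\asymp 1/t$ on $\partial\mathcal U$ the term $\Psi_1/\zeta$ contributes $\bigO(t(nt)^{-1+|||\beta|||})$, again matching. The subleading terms $\Psi_2/\zeta^2=\bigO(t^2(\cdots))$ are smaller and absorbed. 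The key structural input is that the off-diagonal entries of $\Psi_1$ carry the growth $|s|^{-1+|||\beta|||}$ while the diagonal entries stay bounded, so that conjugation by $\widetilde n^{\pm(\beta_1+\beta_2)\sigma_3}$ leaves the diagonal part contributing a harmless $\bigO(t)$ and the off-diagonal part contributing $\bigO(t(nt)^{-1+|||\beta|||})$.

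The third step is uniformity. One must check that all implied constants are uniform in $t\in(0,t_0)$ and in $z\in\partial\mathcal U$: this requires that $\zeta=\frac{1}{t}\ln z$ stays uniformly bounded away from $\pm i$ and tends to infinity uniformly (true since $\partial\mathcal U$ is a fixed circle around $1$ while $\pm i$ maps to $e^{\pm it}\to 1$), that the small-$s$ and large-$s$ expansions of $\Psi$ hold with error terms uniform in the respective ranges of $s$ (this is exactly what Sections \ref{sec large s} and \ref{sec s0} provide, together with their extensions to a neighborhood of $-i\mathbb R^+$), and that at the transition $nt\asymp 1$ the two estimates are consistent — which they are, since there $\widetilde n\asymp n\asymp\sqrt{n/t}$ and $(nt)^{-1+|||\beta|||}\asymp 1$. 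Finally, one needs that the interval $0<t\le\omega(n)/n$ is disjoint from the exceptional set $\Omega$ (assumed in this subsection), so that $\Phi(\zeta;-2int)$ is genuinely defined there. I expect the main obstacle to be not any single estimate but the careful coordination of the two parametrix regimes through the single formula with $\widetilde n=\min\{n,\sqrt{n/t}\}$: one has to organize the algebra so that the $(nt)$-dependent diagonal conjugations inside $P^{(\infty)}$ and the $\widetilde n$-conjugations on the outside combine to produce exactly the stated form in both regimes without a discontinuity at $nt\asymp 1$, and to see that the ``worst'' error — the off-diagonal entry scaled up by $\widetilde n^{2(\beta_1+\beta_2)}$ when $\Re(\beta_1+\beta_2)>0$ — is still $\bigO(t(nt)^{-1+|||\beta|||})$ rather than something larger.
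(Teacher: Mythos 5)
Your overall plan --- split into regimes according to the size of $s=-2int$ and feed in the small-$s$ and large-$s$ asymptotics for $\Psi$ from Sections \ref{sec s0} and \ref{sec large s} --- is the right one and matches the paper's strategy. However, there are several concrete errors in how you carry it out, and the source of the factor $(nt)^{-1+|||\beta|||}$ is misidentified.

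First, you have the two definitions of $\widetilde n$ swapped. From $\widetilde n=\min\{n,\sqrt{n/t}\}$, one has $n\le \sqrt{n/t}$ iff $nt\le 1$; so $\widetilde n=n$ precisely when $nt\le 1$, and $\widetilde n=\sqrt{n/t}$ when $nt\ge 1$ --- the opposite of what you wrote. Second, and more substantively, you attribute the exponent $|||\beta|||=|\Re(\beta_1-\beta_2)|$ to the outer $\widetilde n^{\pm(\beta_1+\beta_2)\sigma_3}$ conjugations. That cannot be right: conjugating an off-diagonal entry by $\widetilde n^{(\beta_1+\beta_2)\sigma_3}$ multiplies it by $\widetilde n^{\pm 2(\beta_1+\beta_2)}$, which carries the \emph{sum} $\beta_1+\beta_2$, not the difference. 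In the paper's argument (see (\ref{Delta1})--(\ref{Delta2}) and (\ref{as R-1})), the factor $|s|^{-1+|||\beta|||}$ comes from the \emph{inner} conjugation of the local-parametrix corrections $Q_j$ by $(|s|/2)^{\pm\frac{\beta_1-\beta_2}{2}\sigma_3}$; the quantity that is actually bounded by $\bigO(|s|^{-1+|||\beta|||}/(|\zeta|+1))$ is the normalized $R(\zeta)$, not $\Psi_1(s)$. Indeed, the entries of $\Psi_1(s)$ do \emph{not} scale uniformly like $|s|^{-1+|||\beta|||}$: by (\ref{555}) the off-diagonal entries pick up extra factors $|s|^{\mp(\beta_1+\beta_2)}$ which only cancel after the full conjugation by $\widehat E$ and $\widehat P^{(\infty)}$ is accounted for, and this is exactly what passing to the $R$- and $H$-RH problems achieves. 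A direct bound on $\Psi_1(s)$ will not produce the stated uniform estimate. Third, your claim that ``$(nt)^{\pm(\beta_1+\beta_2)}$ is bounded since $nt\lesssim 1$'' is false in general: $|||\beta|||<1$ constrains only $\Re(\beta_1-\beta_2)$, so $\Re(\beta_1+\beta_2)$ can have either sign, and $(nt)^{-(\beta_1+\beta_2)}$ may blow up as $nt\to 0$. The paper avoids this by proving that $\widehat E(z)=n^{(\beta_1+\beta_2)\sigma_3}E(z)$ is bounded uniformly in $t$ (the divergent power of $n$ in $E$ is absorbed by the $(is/2)^{-(\beta_1+\beta_2)\sigma_3}$ and $(\zeta\pm i)^{-\beta_j\sigma_3}$ factors of $\widehat P^{(\infty)}$), which is a nontrivial bookkeeping step you cannot replace by a boundedness claim on $(nt)^{\pm(\beta_1+\beta_2)}$. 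Finally, you should also separate out the intermediate regime $c_0\le nt\le C_0$, where $s$ is bounded and bounded away from zero, $\Psi_1(s)$ is genuinely bounded, and the expansion (\ref{Phi as}) alone gives an error $\bigO(\zeta^{-1})=\bigO(t)=\bigO(n^{-1})$; the small- and large-$s$ analyses are needed only outside this window.
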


\begin{proof}
Let us first consider the case where $c_0\leq nt\leq C_0$, with some $c_0>0$ small and some 
$C_0>0$ large. The constants $c_0$, $C_0$ will be fixed below.
Then $|\frac{1}{t}\ln z|> \delta n$ for $z\in\partial \mathcal U$, so $s=-2int$ remains bounded
and bounded away from zero, and by
(\ref{def P}) and (\ref{Phi as}), we have
\be\label{matchingPN}
P(z)N(z)^{-1}=E(z)(I+\bigO(n^{-1}))\widehat P^{(\infty)}(\frac{1}{t}\ln z) z^{-\frac{n}{2}\sigma_3}W(z)N(z)^{-1},\quad
z\in\partial\mathcal U,\quad n\to\infty.
\ee
Recall that we assume that the problem for $\Psi$ is solvable for $c_0\le nt\le C_0$.
Therefore, by general properties of Painlev\'e RH problems, the estimate for the error term here is valid uniformly for all $c_0\le nt\le C_0$.
By (\ref{def N}) and (\ref{def W}), we obtain
\be\label{WN}
z^{-\frac{n}{2}\sigma_3}W(z)N(z)^{-1}=
\left(\mathcal D_{in,t}(z)\mathcal D_{out,t}(z)\right)^{\frac{1}{2}\sigma_3}\sigma_1,\qquad
\mbox{ for $z\in\partial\mathcal U$.}
\ee
Substituting this into (\ref{matchingPN}), we see the reason for the definition of $E$ in (\ref{def E}). Furthermore, we set
\begin{equation}\label{whE}
\widehat E(z)=n^{(\beta_1+\beta_2)\sigma_3}E(z)
\end{equation}
so that $\widehat E$ is bounded in $n$ (uniformly for $z\in\partial\mathcal U$ and uniformly for $t<t_0$): this follows easily from (\ref{whPinf}) and (\ref{Pinfty}).
In particular,
\be\label{whE1}
\widehat E(z)=\sigma_1\left(\mathcal D_{in,t}(z)\mathcal D_{out,t}(z)\right)^{-\frac{1}{2}\sigma_3}
(\ln z-it)^{\beta_1\sigma_3}(\ln z+it)^{\beta_2\sigma_3},\qquad -1<\zeta<1.
\ee
We can now write (\ref{matchingPN}) as follows:
\begin{eqnarray}
P(z)N(z)^{-1}&=&n^{-(\beta_1+\beta_2)\sigma_3}\widehat E(z)(I+\bigO(n^{-1}))\widehat E(z)^{-1}n^{(\beta_1+\beta_2)\sigma_3}\nonumber\\
\label{matchingPN3}&=&\widetilde n^{-(\beta_1+\beta_2)\sigma_3}(I+\bigO(n^{-1}))\widetilde n^{(\beta_1+\beta_2)\sigma_3},\quad z\in\partial\mathcal U,\quad n\to\infty.
\end{eqnarray}
This proves that (\ref{matching P}) holds (uniformly) for $c_0\leq nt\leq C_0$, 
$z\in\partial\mathcal U$.

Next, suppose $C_0<nt\le\om(n)$. In this case we cannot use the expansion (\ref{Phi as}) since 
the argument $s$ of $\Psi_1(s)$ is not bounded.
Instead we use the large $|s|=2nt$ asymptotics for $\Psi$. For that, we need $C_0$ to be sufficiently large.
The asymptotics will be valid for the whole region $C_0<nt< t_0$.
Note that, for $z\in\partial \mathcal U$ and $t$ sufficiently small (i.e., $t<t_0$),  $\zeta=\frac{1}{t}\ln z$ is sufficiently large in absolute value to lie outside of the regions $\mathcal U_1$ and $\mathcal U_2$ defined in Section \ref{section: local infty}.
By (\ref{def P}), (\ref{def U}), (\ref{def tilde R}), and (\ref{def R}), we have
\begin{eqnarray}
P(z)N(z)^{-1}&=&
E(z)\Phi(\frac{1}{t}\ln z;-2int)W(z)N(z)^{-1}\\
&=&E(z)(nt)^{-\frac{1}{2}(\beta_1+\beta_2)\sigma_3}R(\frac{1}{t}\ln z;-2int)(nt)^{\frac{1}{2}(\beta_1+\beta_2)\sigma_3}\nonumber\\ &&\qquad \times \widehat P^{(\infty)}(\frac{1}{t}\ln z)
z^{-\frac{n}{2}\sigma_3}W(z)N(z)^{-1},\quad z\in\partial\mathcal U.
\end{eqnarray}
Using (\ref{whE}), and (\ref{WN}), we obtain for $z\in\partial\mathcal U$:
\begin{align}
P(z)N(z)^{-1}&=
n^{-(\beta_1+\beta_2)\sigma_3}\widehat E(z)(nt)^{-\frac{1}{2}(\beta_1+\beta_2)\sigma_3}R(\frac{1}{t}\ln z;-2int)(nt)^{\frac{1}{2}(\beta_1+\beta_2)\sigma_3}\nonumber\\
&\qquad\qquad\qquad \times\
\widehat E(z)^{-1}n^{(\beta_1+\beta_2)\sigma_3}\nonumber \\
\label{PNs}
&=\left(\frac{n}{t}\right)^{-\frac{1}{2}(\beta_1+\beta_2)\sigma_3}\widehat E(z)R(\frac{1}{t}\ln z;-2int)
\widehat E(z)^{-1}\left(\frac{n}{t}\right)^{\frac{1}{2}(\beta_1+\beta_2)\sigma_3}.
\end{align}
Therefore, by (\ref{as R-1}), we have (uniformly for $C_0/n<t<t_0$, $z\in \partial\mathcal U$)
\be\label{PNmed}
P(z)N(z)^{-1}=\widetilde n^{-(\beta_1+\beta_2)\sigma_3}(I+\bigO(t(nt)^{-1+|||\bt|||}))
\widetilde n^{(\beta_1+\beta_2)\sigma_3}.\quad n\to\infty,\quad z\in\partial\mathcal U.
\ee

If $nt<c_0$,  we can use the small $|s|$ asymptotics for $\Psi(\zeta;s)$ for large values of $\zeta=\frac{1}{t}\ln z$. We need to consider this case separately from $c_0\le nt \le C_0$ since $s=0$ is a branching point for the Painlev\'e functions. 
By (\ref{whPsi}), (\ref{def R0}), (\ref{whPsi as2}), and (\ref{RHP Ph: c}), we have for $z\in\partial\mathcal U$ and $\arg z\notin (t,2\pi-t)$,
\begin{equation}\Psi(\frac{1}{t}\ln z;-2int)=\left(I+\bigO(n^{-1})\right)\widehat P^{(\infty)}(\frac{1}{t}\ln z)z^{-\frac{n}{2}\sigma_3}.\end{equation}
This implies that
\begin{equation}
P(z)N(z)^{-1}=\widetilde n^{-(\beta_1+\beta_2)\sigma_3}(I+\bigO(n^{-1}))\widetilde n^{(\beta_1+\beta_2)\sigma_3},\quad
\mbox{ as $n\to\infty$.}
\end{equation}
For $2\pi-t<\arg z<2\pi$ and $0<\arg z<t$, the same estimate can be proved similarly.
\end{proof}

 For later use, we note that 
\begin{align}
&\label{Eh}\widehat E^{-1}(z)\widehat E'(z)=h(z)\sigma_3,\\
& h(z)=-\frac{1}{2}\sum_{j=1}^{+\infty}jV_jz^{j-1}+\frac{1}{2}\sum_{j=-1}^{-\infty}jV_jz^{j-1}-\frac{\beta_1}{z-e^{it}}+\frac{\beta_1}{z\ln z - itz}
\nonumber \\
&\label{h}\qquad\qquad\qquad\qquad\qquad -\frac{\beta_2}{z-e^{-it}}+\frac{\beta_2}{z\ln z+itz}-\frac{\alpha_1-\beta_1+\alpha_2-\beta_2}{2z}.
\end{align}

\subsubsection{$0<t\le\om(n)/n$. Final transformation}\label{secUps1}
Let the boundary $\partial \cal U$ be oriented clockwise.
Define
\begin{equation}\label{def Ups}
\Upsilon(z)=\begin{cases}\widetilde n^{(\beta_1+\beta_2)\sigma_3}S(z)N(z)^{-1}
\widetilde n^{-(\beta_1+\beta_2)\sigma_3},&
z\in\mathbb C\setminus (\mathcal
U\cup\Sigma_S),\\
\widetilde n^{(\beta_1+\beta_2)\sigma_3}S(z)P(z)^{-1}\widetilde n^{-(\beta_1+\beta_2)\sigma_3},&
z\in \mathcal U\setminus\Sigma_S,
\end{cases}
\end{equation}
where $\widetilde n$ is given in the Proposition \ref{PropEst}.

Then, from the RH conditions (b) and (d) for $S$, and from the conditions (b) and (d) for $P$, it follows that $\Upsilon$ is analytic inside $\mathcal U$.
Similarly, on $C$, the jumps for $S$ and $N$ are the same, so $\Upsilon$ is analytic on $C$.
 On $(\Sigma_{in}\cup\Sigma_{out})\setminus\overline{\mathcal U}$, we have an exponentially small jump as $n\to\infty$. Indeed, on these contours
\[
\Upsilon_+(z)=\Upsilon_-(z)\widetilde n^{(\beta_1+\beta_2)\sigma_3}N(z)J_S(z)N^{-1}(z)\widetilde n^{-(\beta_1+\beta_2)\sigma_3}.
\]
Therefore, for $z\in \Sigma_{out}\setminus\overline{\mathcal U}$,
\be
\Upsilon_+(z)=\Upsilon_-(z)\begin{pmatrix}
1& 0\cr
z^{-n}/(\mathcal D_{in,t}(z)\mathcal D_{out,t}(z)\wt n^{2(\bt_1+\bt_2)})& 1
\end{pmatrix}=\Upsilon_-(z)(I+\bigO(e^{-\ep n})),
\ee
and, for $z\in \Sigma_{in}\setminus\overline{\mathcal U}$,
\be
\Upsilon_+(z)=\Upsilon_-(z)\begin{pmatrix}
1& -z^{n}\mathcal D_{in,t}(z)\mathcal D_{out,t}(z)\wt n^{2(\bt_1+\bt_2)}\cr
0& 1
\end{pmatrix}=\Upsilon_-(z)(I+\bigO(e^{-\ep n})),
\ee
with some $\ep>0$,
uniformly for $0<t\le t_0$.

For $z\in\partial\mathcal U$, $\Upsilon$ has, as $n\to\infty$, a uniformly in $t$ and $z$ small 
jump by Proposition \ref{PropEst}:
\[\Upsilon_+(z)=\Upsilon_-(z)\widetilde n^{(\beta_1+\beta_2)\sigma_3}P(z)N^{-1}(z)\widetilde n^{-(\beta_1+\beta_2)\sigma_3}=\Upsilon_-(z)(I+\bigO(t(nt)^{-1+|||\beta|||})).\]
By the standard theory for normalized RH problems with small jumps, it follows that
\begin{equation}\Upsilon(z)=I+\bigO(t(nt)^{-1+|||\beta|||}),\qquad \frac{d\Upsilon(z)}{dz}=\bigO(t(nt)^{-1+|||\beta|||}),\label{as R}\end{equation} as
$n\to\infty$, uniformly for $z$ off the jump contour for $\Upsilon$, and uniformly for $0<t<t_0$.

\subsection{$\om(n)/n<t<t_0$. Local parametrices}\label{sec75}

The parametrix of the previous section is valid for the whole region $0<t<t_0$. However, the structure of
the large $n$ expansion for $Y$ which follows from it is too cumbersome for the detailed analysis in
the next section. Therefore, we will now construct a more explicit solution for the case
$\om(n)/n<t<t_0$.
In this case $\zeta={1\over t}\ln z$ is not necessarily large on $\partial\mathcal U$. However,
$|s|=2nt$ is large, and we will construct the large $s$ asymptotic expansion for $Y$. The construction here is very similar to that of Section \ref{sec large s}.

First, we need to modify the $S$-RH problem. Namely, in addition to the lens around the arc $(t,2\pi-t)$,
we now open the lens around the complementary arc in the same way as well. Thus $z_1$, $z_2$ are the end-points of the lenses. The jump conditions on the new lens for $S$ are easily written down. We obtain the same $S$-RH problem as considered for the case of 2 separate FH singularities (see \cite{DIK2}).
We now surround the points $z_1=e^{it}$, $z_2=e^{i(2\pi-t)}$ by small neighborhoods $\wt{\mathcal U_1}$,
$\wt{\mathcal U_2}$, resp.,
which are the images of the
neighborhoods $\mathcal U_1$, $\mathcal U_2$ of the points $\pm i$ from section
\ref{sec large s} under the inverse of the mapping $\zeta={1\over t}\ln z$.
The neighborhoods $\mathcal U_j$ are fixed in the $\zeta$ plane. Thus, in the $z$-plane,
$\wt{\mathcal U_1}$, $\wt{\mathcal U_2}$ contract if $t$ decreases with $n\to\infty$.

For a parametrix outside these neighborhoods, we take $N(z)$ of section \ref{secN} as before.
We now construct parametrices in $\wt{\mathcal U_j}$ matching with $N(z)$ to leading order at the
boundaries.

We look for a parametrix in $\wt{\mathcal U_1}$ in the form:
\be\label{def P1}
\wt P_1(z)=\wt E_1(z)M^{(\al_1,\bt_1)}\left(nt(\zeta(z)-i)\right)\Omega_1(z) W(z),\qquad
\zeta={1\over t}\ln z,
\ee
where
\be
\Omega_1(z)=\begin{cases}
e^{i{\pi\over 2}(\al_1-\bt_1)\si_3},& \Im\zeta>1\cr
e^{-i{\pi\over 2}(\al_1-\bt_1)\si_3},& \Im\zeta<1
\end{cases},
\ee
$W(z)$ is given by (\ref{def W}), and $M^{(\al_1,\bt_1)}(\lb)$ is the solution to the RH-problem of Section
\ref{secaux} with $\al=\al_1$, $\bt=\bt_1$. The matrix $\wt E_1(z)$ is analytic in $\wt{\mathcal U_1}$
and will now be determined from the matching condition.
We now use the large argument expansion (\ref{RHP Ph: c}) for
$M^{(\al_1,\bt_1)}(nt(\zeta(z)-i))$ for $z\in\partial \wt{\mathcal U_1}$. Recalling also (\ref{WN}), we can write for $z\in\partial\wt U_1$
\begin{multline}
\wt P_1(z)N(z)^{-1}=
\wt E_1(z)\left(I+\frac{M_1^{(\al_1,\bt_1)}}{nt(\zeta-i)}+\bigO((nt)^{-2})\right)\\
\times (nt(\zeta-i))^{-\bt_1\si_3}
e^{{i\over 2}nt\si_3}\Omega_1(\mathcal D_{in,t}\mathcal D_{out,t})^{\si_3/2}\si_1.
\end{multline}
We now let
\be
\wt E_1(z)=\si_1(\mathcal D_{in,t}\mathcal D_{out,t})^{-\si_3/2}\Omega^{-1}_1
(nt(\zeta-i))^{\bt_1\si_3}e^{-{i\over 2}nt\si_3}.
\ee
It is easy to check that $\wt E_1(z)$ is analytic in $\wt{\mathcal U_1}$.
Furthermore, since
\[
\mathcal D_t(z)\equiv
 \mathcal D_{in,t}\mathcal D_{out,t}(z-z_1)^{-2\bt_1}(z-z_2)^{-2\bt_2}
 \]
is uniformly bounded in $\wt{\mathcal U_j}$, $j=1,2$, we can write
\be
\wt E_1(z)=t^{\bt_2\si_3}n^{-\bt_1\si_3}\wh E_1(z),
\ee
where
\be\label{hatE_1}
\wh E_1(z)=
\si_1\mathcal D_t(z)^{-\si_3/2}\left(\frac{t}{z-z_2}\right)^{\bt_2\si_3}
\left(\frac{\ln z-it}{z-z_1}\right)^{\bt_1\si_3}\Omega^{-1}_1e^{-{i\over 2}nt\si_3}
\ee
is bounded in $n$ uniformly for $\om(n)/n<t<t_0$, $z\in\wt{\mathcal U_1}$.
Therefore,
\be\label{jumpP_1N}
\wt P_1(z)N(z)^{-1}=
t^{\bt_2\si_3}n^{-\bt_1\si_3}\left(I+\frac{\wh E_1(z)M_1^{(\al_1,\bt_1)}\wh E_1(z)^{-1}}{nt(\zeta-i)}+\bigO((nt)^{-2})
\right)n^{\bt_1\si_3}t^{-\bt_2\si_3}
\ee
uniformly for $\om(n)/n<t<t_0$, $z\in\partial\wt{\mathcal U_1}$.

Furthermore, one easily verifies that $\wt P_1$ has the same jumps as $S$ in
$\wt{\mathcal U_1}$, and $\wt P_1 S^{-1}$ is bounded at $z_1$. Thus,
$\wt P_1$ gives a parametrix for $S$ in $\wt{\mathcal U_1}$ with the matching condition
(\ref{jumpP_1N}) with $N(z)$ at the boundary of that region.

Similarly, we obtain that the following function gives a parametrix for $S$ in $\wt{\mathcal U_2}$:
\be\label{def P2}
\wt P_2(z)=\wt E_2(z)M^{(\al_2,\bt_2)}\left(nt(\zeta(z)+i)\right)\Omega_2(z) W(z),\qquad
\zeta={1\over t}\ln z,
\ee
where
\be
\Omega_2(z)=\begin{cases}
e^{i{\pi\over 2}(\al_2-\bt_2)\si_3},& \Im\zeta>-1\cr
e^{-i{\pi\over 2}(\al_2-\bt_2)\si_3},& \Im\zeta<-1
\end{cases},
\ee
with
the prefactor
\be
\wt E_2(z)=t^{\bt_1\si_3}n^{-\bt_2\si_3}\wh E_2(z),
\ee
where
\be\label{hatE_2}
\wh E_2(z)=\si_1\mathcal D_t(z)^{-\si_3/2}\left(\frac{t}{z-z_1}\right)^{\bt_1\si_3}
\left(\frac{\ln z+it}{z-z_2}\right)^{\bt_2\si_3}\Omega^{-1}_2e^{{i\over 2}nt\si_3}
\ee
is analytic in $\wt{\mathcal U_2}$ and bounded in $n$ uniformly for $\om(n)/n<t<t_0$, $z\in\wt{\mathcal U_2}$.

The matching condition with $N(z)$ is
\be\label{jumpP_2N}
\wt P_2(z)N(z)^{-1}=
t^{\bt_1\si_3}n^{-\bt_2\si_3}\left(I+\frac{\wh E_2(z)M_1^{(\al_2,\bt_2)}
\wh E_2(z)^{-1}}{nt(\zeta+i)}+\bigO((nt)^{-2})
\right)n^{\bt_2\si_3}t^{-\bt_1\si_3}
\ee
uniformly for $\om(n)/n<t<t_0$, $z\in\partial\wt{\mathcal U_2}$.

\subsubsection{$\om(n)/n<t<t_0$. Final transformation}\label{sec 751}
Let the boundaries $\partial \mathcal U_j$, $j=1,2$ be oriented clockwise.
Set
\begin{equation}\label{def tilde Ups2}
\widetilde \Upsilon(z)=\begin{cases}S(z)\wt P_1(z)^{-1},& z\in \wt{\mathcal U_1},\\
S(z)\wt P_2(z)^{-1},&z\in \wt{\mathcal U_2},\\
S(z)N(z)^{-1},& z\in \mathbb C
\setminus\left(\overline{\wt{\mathcal U_1}}\cup \overline{\wt{\mathcal U_2}}\right).\end{cases}
\end{equation}
Next, define $\Upsilon$ by
\begin{equation}\label{def Ups2}
\Upsilon(z)=\left(\frac{n}{t}\right)^{\frac{\beta_1+\beta_2}{2}\sigma_3}\widetilde\Upsilon(z)
\left(\frac{n}{t}\right)^{-\frac{\beta_1+\beta_2}{2}\sigma_3}.
\end{equation}

Then $\Upsilon$ is analytic and, in particular, has no jumps
inside $\wt{\mathcal U_j}$, $j=1,2$, and on $C$.
Exactly as in Section \ref{secUps1}, we see that the jumps of $\Upsilon$ on the rest of the lenses
are identity plus an exponentially small in $nt$ addition
($I+\bigO(e^{-\ep nt})$, $\ep>0$), uniformly in $\om(n)/n<t<t_0$.
Furthermore, using (\ref{jumpP_1N}), (\ref{jumpP_2N}), we obtain:
\begin{multline}
\Upsilon_+(z)=\Upsilon_-(z)J_1(z), \qquad z\in\partial\wt{\mathcal U_1},\label{J1}\\
J_1(z)=(nt)^{-\frac{\bt_1-\bt_2}{2}\si_3}
\left(I+\frac{\wh E_1(z)M_1^{(\al_1,\bt_1)}\wh E_1(z)^{-1}}{nt(\zeta-i)}+\bigO((nt)^{-2})\right)
(nt)^{\frac{\bt_1-\bt_2}{2}\si_3},
\end{multline}
and
\begin{multline}
\Upsilon_+(z)=\Upsilon_-(z)J_2(z), \qquad z\in\partial\wt{\mathcal U_2},\label{J2}\\
J_2(z)=(nt)^{\frac{\bt_1-\bt_2}{2}\si_3}
\left(I+\frac{\wh E_2(z)M_1^{(\al_2,\bt_2)}\wh E_2(z)^{-1}}{nt(\zeta+i)}+\bigO((nt)^{-2})\right)
(nt)^{-\frac{\bt_1-\bt_2}{2}\si_3},
\end{multline}
uniformly for $\om(n)/n<t<t_0$, and for $z\in\partial\wt{\mathcal U_j}$, $j=1,2$, resp.
We also have that $\Upsilon(\infty)=I$.

Again, by the standard theory, but now for RH problems on contracting contours, it follows that
\begin{equation}\Upsilon(z)=I+\bigO((nt)^{-1+|||\beta|||}),\qquad 
\frac{d\Upsilon(z)}{dz}=\bigO(t^{-1}(nt)^{-1+|||\beta|||}),\label{as R2}\end{equation} as
$n\to\infty$, uniformly for $z$ off the jump contour for $\Upsilon$, and uniformly for $\om(n)/n<t<t_0$.

These estimates would be sufficient to obtain the asymptotics for the Toeplitz determinants $D_n(f_t)$ if $|||\beta|||<1/2$, but to extend the results to the full range of $|||\bt|||<1$,
we need a modification of the $\Upsilon$-RH problem similar to the modification of the $R$-RH problem in Section \ref{sec large s}.  We assume for definiteness that $\Re\bt_1>\Re\bt_2$.
Then the jump matrices (\ref{J1}) and (\ref{J2}) behave for large $n$ as
\be
J_1(z)=I+\ell_1(z)\si_-+\bigO((nt)^{-1}),\qquad
J_2(z)=I+\ell_2(z)\si_++\bigO((nt)^{-1}),
\ee
where
\be\label{ell}
\ell_1(z)=\frac{(nt)^{-1+\bt_1-\bt_2}}{{1\over t}\ln z -i}(\wh E_1 M_1^{(\al_1,\bt_1)}\wh E^{-1}_1)_{21},\qquad
\ell_2(z)=\frac{(nt)^{-1+\bt_1-\bt_2}}{{1\over t}\ln z +i}(\wh E_2 M_1^{(\al_2,\bt_2)}\wh E^{-1}_2)_{12}.
\ee
Let (cf. (\ref{XR}))
\be\label{UZ}
\Upsilon(z)=\wh\Upsilon(z)Z(z),
\ee
 where $Z(z)$ is the solution of the normalized at infinity RH problem with jumps $I+\ell_1(z)\si_-$ on
 $\partial\wt{\mathcal U_1}$ and $I+\ell_2(z)\si_+$ on $\partial\wt{\mathcal U_2}$ oriented clockwise.
As in section \ref{sec large s} for $\wh R$, we conclude that
\be\label{whUpsest}
\wh \Upsilon(z)=I+\bigO((nt)^{-1}),\qquad \wh \Upsilon'(z)=\bigO(t^{-1}(nt)^{-1}),
\ee
uniformly for $\om(n)/n<t<t_0$ and $z$ off the contour for $\Upsilon(z)$.

Just as the X-RH problem of Section \ref{sec large s}, the $Z$-RH problem is solved explicitly, and
in particular, we obtain for $z\in \wt{\mathcal U_1}$:
\be\label{Z1}
Z(z)=\left(I+\frac{1}{z-e^{it}}
\begin{pmatrix}a&0\cr b&0\end{pmatrix}+
\frac{1}{z-e^{-it}}
\begin{pmatrix}0&c\cr 0&d\end{pmatrix}\right)
\left(I-\ell_1(z)\si_-\right),
\ee
where
\be\label{abcd}
b=\wh\ell_1(e^{it})\de,\qquad c=\wh\ell_2(e^{-it})\de,\qquad
a=\frac{\wh\ell_1(e^{it})\wh\ell_2(e^{-it})}{2i\sin t}\de,\qquad d=-a,
\ee
with the notation
\[
\wh\ell_1(z)=\ell_1(z)(z-e^{it}),\qquad \wh\ell_2(z)=\ell_2(z)(z-e^{-it}),\qquad
\de=\left(1-\frac{\wh\ell_1(e^{it})\wh\ell_2(e^{-it})}{4\sin^2 t}\right)^{-1}.
\]
We can expand (\ref{Z1}) further as follows:
\be\label{Z2}
Z(z)=I-\tau(z)
\begin{pmatrix}c&0\cr d&0\end{pmatrix}+
\frac{1}{z-e^{-it}}
\begin{pmatrix}0&c\cr 0&d\end{pmatrix}
-\pi(z)\si_-,\qquad z\in \wt{\mathcal U_1},
\ee
where
\[
\pi(z)=\frac{1}{z-e^{it}}(\wh\ell_1(z)-\wh\ell_1(e^{it})),\qquad
\tau(z)=\frac{1}{z-e^{it}}\left(\frac{\wh\ell_1(z)}{z-e^{-it}}-\frac{\wh\ell_1(e^{it})}{2i\sin t}\right).
\]
Note that
\[
\pi(e^{it})={d\over dz}\wh\ell_1(z)_{z=e^{it}},\qquad
\pi'(e^{it})={1\over 2}{d^2\over dz^2}\wh\ell_1(z)_{z=e^{it}},
\]
and similarly, $\tau(z)$ and its derivatives at $e^{it}$ can be expressed in terms of $\wh\ell_1(z)$.

Using (\ref{ell}) we then obtain:
\begin{multline}
\wh\ell_j(z_j)=\bigO(t(nt)^{-1+|||\bt|||}),\qquad \pi(e^{it})=\bigO(t(nt)^{-1+|||\bt|||}),\\
\pi'(e^{it})=\bigO(t(nt)^{-1+|||\bt|||}),\qquad \tau(e^{it})=\bigO(t^{-1}(nt)^{-1+|||\bt|||}),\\
\tau'(e^{it})=\bigO(t^{-2}(nt)^{-1+|||\bt|||}),\qquad \de=1+\bigO((nt)^{-2+2|||\bt|||}),\\
c=\bigO(t(nt)^{-1+|||\bt|||}),\qquad d=\bigO(t(nt)^{-2+2|||\bt|||}),\qquad |||\bt|||<1,
\end{multline}
as $n\to\infty$ uniformly for $\om(n)/n<t<t_0$.
Therefore, using (\ref{Z2}), we can write
\be\label{Z est}
Z(e^{it})=I+\bigO((nt)^{-1+|||\bt|||})
\ee
and
\begin{multline}\label{ZZ est}
(nt)^{\frac{\beta_1-\beta_2}{2}\sigma_3}(Z^{-1}{dZ\over dz})(e^{it})
(nt)^{-\frac{\beta_1-\beta_2}{2}\sigma_3}\\
=\frac{(nt)^{\bt_1-\bt_2}c}{4\sin^2 t}\si_+ +\bigO(n^{-1+|||\bt|||})+\bigO(t^{-1}(nt)^{-2+2|||\bt|||}),
\end{multline}
uniformly for  $\om(n)/n<t<t_0$. Here we explicitly wrote the $12$ matrix element
as we will need to analyse it in more detail below.

\section{Asymptotics for $D_n(f_t)$}\label{section: as Toep}

\subsection{Asymptotic form of the differential identity. Proof of Theorem \ref{theorem: extension}}

In the previous section, we performed a series of transformations
$Y\mapsto T\mapsto S\mapsto \Upsilon$ for $0<t\le\om(n)/n$, where $\Upsilon$ can be expressed explicitly in terms of $S$ (and hence, of $Y$) and in terms of the local parametrix $P$ as in (\ref{def Ups}), for $z$ near $z_1,z_2$. Since the asymptotics for $\Upsilon$ are known, see (\ref{as R}), we can obtain the asymptotics for the right hand side in the differential identity (\ref{differentialidentity}) in terms of the local parametrix $P$, for $0<t\le\om(n)/n$.  

For $\om(n)/n<t<t_0$, we have the series of transformations
$Y\mapsto T\mapsto S \mapsto\widetilde\Upsilon\mapsto \Upsilon$, where the asymptotics are known for $\Upsilon$, and where $\Upsilon$ can be expressed explicitly in terms of $Y$ and the local parametrices $\widetilde P_1, \widetilde P_2$. Thus, we can obtain the asymptotics for the right hand side of 
(\ref{differentialidentity}) in this case as well.

Combining the asymptotic behavior for ${1\over i}\frac{d}{dt}\ln D_n(f_t)$ in those two cases, 
we obtain the following result.

\begin{proposition}\label{prop diff id F}
Suppose that $\Re\alpha_1,\Re\alpha_2,\Re(\alpha_1+\alpha_2)> -{1\over 2}$, 
$\al_k\pm\bt_k\neq -1,-2,\dots$, $k=1,2$, $(\al_1+\al_2)\pm(\bt_1+\bt_2)\neq -1,-2,\dots$,
and that $|||\beta|||<1$. Let $\sigma(s)$ be the solution to (\ref{sigmaPV}) analyzed above, 
and let $\mathcal P$ be an open subset of
the $s=-2int$ plane $\mathbb C$ containing the set $\Omega$ of all the (finitely many) nonzero points 
where the $\Psi$-RH problem is not solvable.
Let $\om(x)\equiv\om(x;|||\bt|||)$ be a positive, smooth function for $x$ sufficiently large, with the following behavior:
\be
\om(n)\to\infty,\quad
\om(n)=o(n^{\ep}),\qquad \ep=\min\left\{1, \frac{2}{1+2|||\bt|||}\right\},\qquad\mbox{as $n\to\infty$}.
\ee
There holds the following asymptotic expansion:
\begin{multline}\label{diF}
{1\over i}\frac{d}{dt}\ln D_n(f_t)=n(\bt_2-\bt_1)+d_1(t;\alpha_1,\beta_1,\alpha_2,\beta_2)
+d_2(n,t;\alpha_1,\beta_1,\alpha_2,\beta_2)\\
+d_3(t;\alpha_1,\beta_1,\alpha_2,\beta_2)+\mathcal E_{n,t},
\end{multline}
as $n\to\infty$, where $\mathcal E_{n,t}$ is such that
\be\label{A est}
\left|\int_0^{t}\mathcal E_{n,x}dx\right|=\bigO(\om(n)^{-1+|||\bt|||})+\bigO(n^{-2}\om(n)^{1+2|||\bt|||})=o(1),
\ee
uniformly for $0<t<t_0$ and $-2int\in \mathbb C\setminus \mathcal P$
(the path of integration in (\ref{A est}) avoids the points of $\Omega$), and where
\begin{align}
&d_1(t;\alpha_1,\beta_1,\alpha_2,\beta_2)=-\alpha_1\sum_{j\neq 0}jV_je^{ijt}+\alpha_2\sum_{j\neq 0}jV_je^{-ijt}+(\alpha_2-\alpha_1)(\beta_1+\beta_2)\nonumber\\
&\qquad\qquad\qquad\qquad+i(\beta_1+\beta_2)\sum_{j=1}^{+\infty}j(V_j-V_{-j})\sin(jt),\label{d1}\\
&\label{d2}d_2(n,t;\alpha_1,\beta_1,\alpha_2,\beta_2)=
((\beta_1+\beta_2)^2-4\alpha_1\alpha_2)\frac{\cos t}{2i\sin t}
+\frac{1}{it}\sigma(-2int),\\
&d_3(t;\alpha_1,\beta_1,\alpha_2,\beta_2)=2\sigma_s\left[-\sum_{j=1}^{+\infty}j(V_j+V_{-j})\cos(jt)\right. \nonumber\\
&\label{d3}\qquad\qquad\qquad\qquad\qquad\qquad\qquad\left. +\frac{\bt_1-\bt_2}{2i}
\left(\frac{\cos t}{\sin t}-\frac{1}{t}\right)-\alpha_1-\alpha_2\right].
\end{align}
\end{proposition}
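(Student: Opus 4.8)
The strategy is to substitute the asymptotic solutions of the orthogonal-polynomial RH problem obtained in Section \ref{section: RHP Y} into the differential identity (\ref{differentialidentity}) of Proposition \ref{prop21}, and to identify each term. Recall that (\ref{differentialidentity}) expresses $\frac1i\frac{d}{dt}\ln D_n(f_t)$ in terms of $z_k(\frac{dY^{-1}}{dz}\widetilde Y)_{22}(z_k)$. Since $Y$ near $z_k$ equals $S$ (up to the explicit transformations $Y\mapsto T\mapsto S$, which are triangular and analytic near $z_k$) and $S=\Upsilon^{(\text{conj})}P$ resp.\ $S=\Upsilon^{(\text{conj})}\widetilde P_j$, the quantity $(\frac{dY^{-1}}{dz}\widetilde Y)_{22}(z_k)$ splits into a contribution from the local parametrix $P$ (or $\widetilde P_j$) and a contribution from $\Upsilon$ (or $\widehat\Upsilon$, $Z$). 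The plan is: (i) rewrite (\ref{differentialidentity}) as a sum $\frac1i\frac{d}{dt}\ln D_n = \sum_k(-1)^k[n(\alpha_k+\beta_k) - 2\alpha_k z_k(\cdots)_{22}]$ where $(\cdots)$ decomposes as $P$-part plus $\Upsilon$-part; (ii) for $0<t\le\omega(n)/n$, use $P(z)=E(z)\Phi(\frac1t\ln z;-2int)W(z)$ from (\ref{def P}), so that the $P$-contribution at $z_k$ is governed by $\Phi$ evaluated at $\pm i$, i.e.\ by $F_1(i)$, $F_2(-i)$ and their $\zeta$-derivatives, which by Propositions \ref{prop F1} and \ref{prop F2} and the constants (\ref{c1c2}) are expressed through $\sigma(-2int)$, $\sigma_s$; the factors $E$, $W$ contribute the $V$-dependent sums, the $\cot t$ and $1/t$ terms, and the $n(\alpha_k+\beta_k)$ pieces; (iii) for $\omega(n)/n<t<t_0$, use instead $\widetilde P_j$ from (\ref{def P1})--(\ref{def P2}), where $M^{(\alpha_j,\beta_j)}$ contributes (via $M_1^{(\alpha_j,\beta_j)}$) the large-$s$ form of $\sigma$, consistent with the $s\to-i\infty$ asymptotics (\ref{asinfty}), while $\widehat E_j$, $W$ again produce $d_1$, the $\cot t$ and $1/t$ pieces; (iv) the $\Upsilon$-part is controlled by the estimates (\ref{as R}) and (\ref{as R2}), and in the delicate range $|||\beta|||\ge1/2$ by the refined estimates (\ref{whUpsest}), (\ref{Z est}), (\ref{ZZ est}) coming from the $Z$-modification.

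\textbf{Key steps in order.}
First I would carefully track the transformations $Y\mapsto T\mapsto S$ near $z_k$: $T(z)=Y(z)z^{-n\sigma_3}$ for $|z|>1$ introduces the $n(\alpha_k+\beta_k)$-type terms when differentiated and combined with the $z^{n/2\sigma_3}$ in $W$; the lens-opening $T\mapsto S$ is analytic near $z_k$ (after a triangular factor). Second, I would compute $z_k\frac{d}{dz}\left[P^{-1}\widetilde P\right]$-free object, i.e.\ express $(\frac{dS^{-1}}{dz}\widetilde S)_{22}(z_k)$ as
\[
z_k\Big(\tfrac{dP^{-1}}{dz}\widetilde P\Big)_{22}(z_k) + z_k\Big(P^{-1}(\Upsilon^{-1}\Upsilon')P\Big)_{22}(z_k),
\]
the second term being the error $\mathcal E_{n,t}$ after integration (using (\ref{as R})/(\ref{as R2}) and, crucially for $|||\beta|||$ near $1$, the $Z$-correction which lowers $\bigO(t(nt)^{-1+|||\beta|||})$ effects to the controllable combination in (\ref{A est})). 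Third, for the main term I would plug in $P=E\Phi W$: the chain rule gives three pieces — $E^{-1}E'$, which by (\ref{Eh})--(\ref{h}) yields $d_1$ together with the $\frac{\beta_1-\beta_2}{2i}(\cot t-1/t)$ and $-\alpha_1-\alpha_2$ parts sitting in $d_3$; $\Phi^{-1}\Phi'$ evaluated at $\zeta=\frac1t\ln z_k=\pm i$, which by the definition (\ref{def Phi}) of $\Phi$ and by Propositions \ref{prop F1}, \ref{prop F2} gives $\sigma_s$ and $\sigma$ terms (this produces the $\frac1{it}\sigma(-2int)$ in $d_2$ and the $2\sigma_s[\cdots]$ in $d_3$); and $W^{-1}W'$, which gives the $n$-term and the $((\beta_1+\beta_2)^2-4\alpha_1\alpha_2)\frac{\cot t}{2i}$ part of $d_2$, using $f_t(z)=\mathcal D_{in,t}\mathcal D_{out,t}^{-1}$ and $\frac{d}{dz}\ln f_t$. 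Fourth, I would verify that in the range $\omega(n)/n<t<t_0$ the same formulae emerge: here $\Phi$ is replaced by $M^{(\alpha_j,\beta_j)}$, and the relevant matrix element of $M_1^{(\alpha_j,\beta_j)}$ (see (\ref{M1})) combined with $W$ and $\widehat E_j$ reproduces exactly $d_1+d_2+d_3$ with $\sigma$ now read off from its large-$s$ asymptotics (\ref{asinfty}); consistency of the two regimes at $t\sim\omega(n)/n$ follows from matching the $s\to-i\infty$ expansion of $\sigma$ against the explicit $M_1$-term. Fifth, assemble the error: the $\Upsilon$-contribution integrates to $\bigO(\omega(n)^{-1+|||\beta|||})$ from (\ref{as R})/(\ref{as R2}) on $(0,\omega(n)/n)$ (after the $t\,dt$ integration converts $t(nt)^{-1+|||\beta|||}$ appropriately) and to $\bigO(n^{-2}\omega(n)^{1+2|||\beta|||})$ from the boundary contributions, with the extra $Z$-refinement needed precisely to avoid a loss when $|||\beta|||\ge 1/2$; summing these gives (\ref{A est}).

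\textbf{Remark on the derivation of Theorem \ref{theorem: extension}.}
Once (\ref{diF})--(\ref{A est}) are in place, Theorem \ref{theorem: extension} follows by integrating (\ref{diF}) over $t$ from $\omega(n)/n$ to a fixed value where (\ref{as FH2}) is known, using the large-$s$ expansion (\ref{asinfty}) of $\sigma$ to evaluate $\int d_2\,dt$, and the identity (\ref{intidentity}) to match the constant term; the error estimate (\ref{A est}) with $\omega(n)=n^{0^+}$-type growth gives the claimed $\bigO(\omega(n)^{-1+|||\beta|||})$.

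\textbf{Main obstacle.}
The principal difficulty is the bookkeeping for $|||\beta|||$ close to $1$: the naive estimate (\ref{as R2}) gives an error of size $t(nt)^{-1+|||\beta|||}$ in $\Upsilon$, and its $z$-derivative of size $t^{-1}(nt)^{-1+|||\beta|||}$, which upon being inserted into (\ref{differentialidentity}) and integrated is \emph{not} $o(1)$ near the endpoint $t\sim\omega(n)/n$. The remedy — carried out in Section \ref{sec 751} via the $\Upsilon=\widehat\Upsilon Z$ splitting, with $Z$ solving an explicitly solvable RH problem capturing the dangerous rank-one jump — must be threaded through the differential identity: one has to show that the explicit $Z$-contribution (\ref{Z1})--(\ref{ZZ est}) to $z_k(\frac{dY^{-1}}{dz}\widetilde Y)_{22}(z_k)$, though individually of borderline size, combines with the $P$-part (equivalently, is absorbed into an exact $t$-derivative of a bounded quantity, cf.\ the analogous manipulation in \cite{DIK22}) so that only the genuinely small remainder $\widehat\Upsilon-I=\bigO((nt)^{-1})$ survives in $\mathcal E_{n,t}$. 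Verifying this cancellation — essentially that the $\frac{(nt)^{\beta_1-\beta_2}c}{4\sin^2 t}\sigma_+$ term in (\ref{ZZ est}) does not contribute to $\frac{d}{dt}\ln D_n$ after integration — is the technical heart of the proof.
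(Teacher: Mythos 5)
Your plan reproduces the paper's proof in its essentials: rewrite the differential identity (\ref{differentialidentity}) in terms of $S$ and then of the parametrices; in the range $0<t\le\omega(n)/n$ expand $P=E\Phi W$ and use Propositions \ref{prop F1}, \ref{prop F2} together with (\ref{Eh})--(\ref{h}) to produce $d_1,d_2,d_3$; in the range $\omega(n)/n<t<t_0$ do the same with $\widetilde P_j$ and $M_1^{(\alpha_j,\beta_j)}$, verifying consistency via the $s\to-i\infty$ expansion of $\sigma$; and control the remainder by (\ref{as R}), (\ref{as R2}) refined by the $\Upsilon=\widehat\Upsilon Z$ split. That is the paper's route.

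One point in your ``Main obstacle'' paragraph should be corrected, because as stated it would send you looking for something that is not there. There is no cancellation of the dangerous $Z$-contribution against the $P$-part; the term you single out, the $\sigma_+$ piece in (\ref{ZZ est}), genuinely appears in the differential identity with the borderline size $\bigO((nt)^{-1+2|||\beta|||}/t)$. What makes its $t$-integral small is the oscillatory factor $e^{-2int}$ it carries, coming from the $e^{\mp\frac{i}{2}nt\sigma_3}$ factors in $\widehat E_1,\widehat E_2$ (see (\ref{hatE_1}), (\ref{hatE_2}), and the explicit form (\ref{B4}) of $B^{(1)}$): a single integration by parts in $t$ (the boundary term being the ``exact $t$-derivative'' you allude to) gains a factor of order $1/(nt)$ and produces the bound (\ref{B5est}), $\bigO(\omega(n)^{-2+2|||\beta|||})$. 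Two smaller bookkeeping remarks: the $\widehat\Upsilon$-remainder after the $Z$-split, once conjugated by $(nt)^{\pm\frac{\beta_1-\beta_2}{2}\sigma_3}$, contributes at rate $\bigO(t^{-1}(nt)^{-1+|||\beta|||})$ (not $\bigO((nt)^{-1})$), which after integration gives the $\bigO(\omega(n)^{-1+|||\beta|||})$ piece of (\ref{A est}); and the $\bigO(n^{-2}\omega(n)^{1+2|||\beta|||})$ piece comes from the subrange $C_0/n<t\le\omega(n)/n$ (see (\ref{A1 first case final})--(\ref{A1 int est})), not from boundary terms in the large-$t$ regime. Finally, note that the paper also has to control a separate oscillatory remainder $\Theta_{n,t}$ arising when it compares the parametrix output with $d_2+d_3$ in the large-$t$ range; that piece is handled by the same integration-by-parts device.
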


\begin{proof}
In the derivation below, we assume $\al_k\neq 0$, $k=1,2$, so that we can use Proposition \ref{prop21}.
Once (\ref{diF}) is proved under this assumption, the general result follows immediately from the uniformity 
of the error term in $\al_1$, $\al_2$. This uniformity is easy to verify from the constructions above. 
Alternatively, one can consider the case $\al_k=0$ separately using the corresponding differential identity: see Remark 
\ref{remark-di}.

For simplicity of the notation, we also assume below that $\Re\al_k\ge 0$, $k=1,2$ (in this case,
$\wt Y=Y$ in the Proposition \ref{prop21}). The extension to the case $-1/2<\Re\al_k<0$ is an easy exercise.

The plan of the proof is as follows. First, we express the differential identity of Proposition \ref{prop21} in
terms of the parametrices of the previous section (separately for the regions $0<t\le \om(n)/n$, 
$\om(n)/n<t<t_0$) and estimate the error terms. The error term estimation is especially involved in the latter
region. To show that (\ref{A est}) is $o(1)$, we use, in particular, large oscillations of $\mathcal E_{n,t}$.  
This difficulty is caused by the presence of $\beta$-singularities, the situation in the case of $\bt_k=0$, $k=1,2$,
and even in the case $|||\bt|||<1/2$, is simpler.
Second, we compute the leading asymptotic terms in the differential identity from the parametrices.

\subsubsection*{Transformation of the differential identity and estimates for the error terms}
Using the transformation $Y\mapsto T\mapsto S$, we can write the differential identity 
(\ref{differentialidentity}) in the form
\be\label{diS}
{1\over i}\frac{d}{dt}\ln D_n(f_t)=
\sum_{k=1}^2 (-1)^k\left[
n(\al_k+\bt_k)
+2\al_k z_k \left(S^{-1}{dS\over dz}\right)_{+,22}(z_k)\right],
\ee
where the limit $\left(S^{-1}{dS\over dz}\right)_{+,22}(z_k)$ is taken as $z\to z_k$ from the inside of the unit circle and outside the lenses.

Consider the case $0<t\le \om(n)/n$. For simplicity, we again assume that there are no points of $\Omega$ on
$0<t\le \om(n)/n$: cf. the first paragraph of section \ref{sec74}).
By (\ref{def Ups}), we obtain for $z\in \mathcal U$:
\begin{multline}\label{exprT2}
\left(S^{-1}{dS\over dz}\right)_{22}(z)=\left(P^{-1}{dP\over dz}\right)_{22}(z)+A_{n,t}(z),\\
A_{n,t}(z)=\left(P^{-1}\wt n^{-(\beta_1+\beta_2)\sigma_3}\Upsilon^{-1}{d\Upsilon\over dz}\wt n^{(\beta_1+\beta_2)\sigma_3}P\right)_{22}(z).
\end{multline}
By (\ref{def P}), (\ref{def W}), (\ref{whE}), and (\ref{h}), this can written for $|z|<1$ as
\begin{align}
\left(P^{-1}{dP\over dz}\right)_{22}(z)&=-\frac{n}{2z}+\frac{1}{2}\frac{f_t'}{f_t}(z)+\left(\Phi^{-1}{d\Phi\over dz}\right)_{22}+h(z)\left(\Phi^{-1}\sigma_3\Phi\right)_{22},\label{exprT3}\\
A_{n,t}(z)&=\left(\Phi^{-1}\left(\frac{n}{\tilde n}\right)^{-(\beta_1+\beta_2)\sigma_3}\widehat E^{-1} \Upsilon^{-1}{d\Upsilon\over dz}\widehat E\left(\frac{n}{\tilde n}\right)^{(\beta_1+\beta_2)\sigma_3}\Phi\right)_{22}(z),\label{Ant}
\end{align}
where $\Phi=\Phi(\frac{1}{t}\ln z;-2int)$.

We now show that the integral $\int_{0}^{t} |A_{n,t}(z_k)|dt$, $0<t\le\om(n)/n$ is small for large $n$ and $k=1,2$
if $|||\bt|||<1$.

By (\ref{as R}), we have, uniformly in $z$ and $0<t<t_0$, that 
\be\label{YY-}
\Upsilon^{-1}(z){d\Upsilon\over dz}(z)=\bigO(t(nt)^{-1+|||\beta|||}), \qquad n\to\infty,
\ee
and we obtain by the fact that $\widehat E(z_k)$ is bounded in $n$ (uniformly for $0<t<t_0$):
\begin{equation}\label{last term1}
A_{n,t}(z_k)=\left(\Phi^{-1}\left(\frac{n}{\tilde n}\right)^{-(\beta_1+\beta_2)\sigma_3}\bigO(t(nt)^{-1+|||\beta|||})\left(\frac{n}{\tilde n}\right)^{(\beta_1+\beta_2)\sigma_3}\Phi\right)_{22}(z_k).
\end{equation}
Now take the constants $c_0$, $C_0$ from the proof of Proposition \ref{PropEst}.

If $c_0<nt\le C_0$, both $n/\widetilde n$ and $\Phi$ are bounded, and we obtain
\begin{equation}
A_{n,t}(z_k)=\bigO(t(nt)^{-1+|||\beta|||}),\qquad k=1,2,\label{A1est}
\end{equation}
uniformly for $\frac{c_0}{n}<t\le \frac{C_0}{n}$.

If $0<t\le c_0/n$ we use the small $s$ asymptotics for $\Phi$. By (\ref{whPsi})--(\ref{whPsi2}), (\ref{def R0}), (\ref{Rsol}), and (\ref{delta}),
\[
A_{n,t}(z_k)=\left(P_0^{-1}(\lambda_k)\bigO(t(nt)^{-1+|||\beta|||})P_0(\lambda_k)\right)_{22},\qquad k=1,2,\qquad \lambda_1=0, \ \lambda_2=s,
\]
uniformly.
By (\ref{def P00}) and (\ref{def P00half}), this implies the estimate (\ref{A1est}) after a straightforward calculation.
Thus, (\ref{A1est}) holds uniformly for $0<t\le C_0/n$ as $n\to\infty$.

Finally, we set $C_0/n<t\le\om(n)/n$.
Let us consider the case $z\to z_1$ as the case of $z\to z_2$ is dealt with similarly.  
Combining (\ref{A}) with (\ref{def U}), (\ref{def tilde R}), and (\ref{def R}), we have
in the neighborhood $z(\mathcal U_1)\subset \mathcal U$,
where $z(\mathcal U_1)$ is the image of $\mathcal U_1$ under the inverse of the map
$\zeta=\frac{1}{t}\ln z$,
\begin{multline}
A_{n,t}(z)=\left(P_1^{-1}(\zeta)(nt)^{-\frac{1}{2}(\beta_1+\beta_2)\sigma_3}R^{-1}(\zeta)
\bigO(t(nt)^{-1+|||\beta|||})
R(\zeta)(nt)^{\frac{1}{2}(\beta_1+\beta_2)\sigma_3}P_1(\zeta)\right)_{22},\\
\zeta=\frac{1}{t}\ln z.
\end{multline}
Note that by (\ref{as R-1}), $R(\zeta)=I+O((nt)^{-1+|||\bt|||})$ for $z\in z(\mathcal U_1)$ uniformly in $t>C_0/n$.
By this observation and (\ref{def P+}), we can write further in $z(\mathcal U_1)$:
\begin{multline}\label{Ainterim}
A_{n,t}(z)=\left(M^{(\al_1,\bt_1)}(nt(\zeta-i))^{-1}E_1^{(0)}(\zeta)^{-1}
(nt)^{-\frac{1}{2}(\beta_1-\beta_2)\sigma_3}
\bigO(t(nt)^{-1+|||\beta|||})
\right.\\
\left.
\times
(nt)^{\frac{1}{2}(\beta_1-\beta_2)\sigma_3} E_1^{(0)}(\zeta)M^{(\al_1,\bt_1)}(nt(\zeta-i))\right)_{22},
\end{multline}
where $E^{(0)}(\zeta)=(nt)^{\bt_2\si_3}E_1(\zeta)$, with $E_1$ given by (\ref{def E1}).
Note that $E^{(0)}(\zeta)$
is uniformly bounded in $n$ for any $t>C_0/n$ and for $z\in z(\mathcal U_1)$.

Substituting (\ref{M3}) if $2\al_1\neq 1,2,\dots$ or (\ref{M3half}) if $2\al_1= 1,2,\dots$ into this expression
gives for $z_1$ uniformly in $t$ (the same estimate for $z_2$ is obtained similarly):
\be\label{A1 first case final}
A_{n,t}(z_k)=\bigO(n^{-1+2|||\bt|||}t^{2|||\bt|||}),\qquad k=1,2,\qquad \frac{C_0}{n}<t\le\frac{\om(n)}{n}.
\ee
Recalling (\ref{A1est}) we conclude that the contribution of this term
to the logarithm of the determinant is, uniformly in $t$,
\be\label{A1 int est}
\int_{0}^{t}|A_{n,t}(z_k)|dt=\bigO(n^{-2}\om(n)^{1+2|||\bt|||}),
\qquad 0<t\le\om(n)/n.
\ee
This is small for $|||\bt|||<1$ if $\om(n)=o(n^{\ep})$ with $\ep=2/(1+2|||\bt|||)$.

\bigskip

If $\om(n)/n<t<t_0$, we consider only the neighborhood of $z_1$ as the contribution of $z_2$ is dealt with similarly. For $\om(n)/n<t<t_0$, $z\in\wt{\mathcal U_1}$,
we obtain instead of (\ref{exprT2}):
\begin{multline}\label{exprT4}
\left(S^{-1}{dS\over dz}\right)_{22}(z)=\left(\wt P^{-1}_1{d\wt P_1\over dz}\right)_{22}(z)+A_{n,t}(z),\\
A_{n,t}(z)=\left(\wt P^{-1}_1
\left(\frac{n}{t}\right)^{-\frac{\beta_1+\beta_2}{2}\sigma_3}\Upsilon^{-1}{d\Upsilon\over dz}
\left(\frac{n}{t}\right)^{\frac{\beta_1+\beta_2}{2}\sigma_3}
\wt P_1\right)_{22}(z)
\end{multline}
with $\Upsilon(z)$ from Section \ref{sec 751}.
By (\ref{def P1}) and (\ref{def W}),  this can written out for $|z|<1$ as
\begin{align}
\left(\wt P^{-1}_1{d\wt P_1\over dz}\right)_{22}(z)&=
-\frac{n}{2z}+\frac{1}{2}\frac{f_t'}{f_t}(z)+\left(M^{-1}{dM\over dz}\right)_{22}+\wt h_1(z)\left(M^{-1}\sigma_3 M\right)_{22},\label{exprT5}\\
A_{n,t}(z_1)&\equiv\lim_{z\to z_1}A_{n,t}(z)\nonumber\\
&=\lim_{z\to z_1}\left(M^{-1}\wh E^{-1}_1 (nt)^{\frac{\beta_1-\beta_2}{2}\sigma_3}
\Upsilon^{-1}{d\Upsilon\over dz}(nt)^{-\frac{\beta_1-\beta_2}{2}\sigma_3}
\wh E_1M\right)_{22},\label{A2}
\end{align}
where $M=M^{(\al_1,\bt_1)}(nt(\frac{1}{t}\ln z-i))$,  $\wh E_1$ is given by (\ref{hatE_1}), and
\be\label{wt h1}
\wt h_1(z)\si_3=\wh E_1^{-1}(z){d\over dz}\wh E_1(z),\qquad
\wt h_1(z)=h(z)-\frac{\bt_2}{z\ln z+itz},
\ee
in terms of $h(z)$ given by (\ref{h}).

We now estimate the error term (\ref{A2}). 
A straightforward estimate by (\ref{as R2}) shows
the smallness of the error term only for $|||\bt|||<1/2$. Therefore, we will use (\ref{UZ}).
We assume (for simplicity only) that $\Re\beta_1>\Re\beta_2$.

Substituting (\ref{UZ}) into (\ref{A2}), we obtain
\[
A_{n,t}=B_{n,t}+C_{n,t},
\]
where
\be
B_{n,t}=\lim_{z\to z_1}\left(M^{-1}\wh E^{-1}_1 (nt)^{\frac{\beta_1-\beta_2}{2}\sigma_3}
Z^{-1}{dZ\over dz}(nt)^{-\frac{\beta_1-\beta_2}{2}\sigma_3}
\wh E_1M\right)_{22}\label{B1}
\ee
and
\be
C_{n,t}=\lim_{z\to z_1}\left(M^{-1}\wh E^{-1}_1 (nt)^{\frac{\beta_1-\beta_2}{2}\sigma_3}
Z^{-1}\wh\Upsilon^{-1}{d\wh\Upsilon\over dz}Z(nt)^{-\frac{\beta_1-\beta_2}{2}\sigma_3}
\wh E_1M\right)_{22}.\label{C1b}
\ee

The estimates (\ref{whUpsest}), (\ref{Z est}) give
\be\label{C2est}
C_{n,t}=\bigO(t^{-1}(nt)^{-1+|||\bt|||}),
\ee
and therefore, uniformly in $t$,
\be\label{C2intest}
\int_{\om(n)/n}^{t}|C_{n,t}|dt=\bigO(\om(n)^{-1+|||\bt|||}),\qquad  \om(n)/n<t<t_0.
\ee

For $B_{n,t}$, we obtain from (\ref{ZZ est}):
\begin{multline}\label{B2}
B_{n,t}=B^{(1)}+B^{(2)},\qquad
B^{(1)}=\frac{c (nt)^{\bt_1-\bt_2}}{4\sin^2 t}\lim_{z\to z_1}
\left(M^{-1}\wh E^{-1}_1\si_+
\wh E_1M\right)_{22}(e^{it}),\\
|B^{(2)}|=
\bigO(n^{-1+|||\bt|||})+\bigO(t^{-1}(nt)^{-2+2|||\bt|||}).
\end{multline}
Here, uniformly in $t$,
\be\label{B2est}
\int_{\om(n)/n}^{t}|B^{(2)}|dt=\bigO(n^{-1+|||\bt|||})+\bigO(\om(n)^{-2+2|||\bt|||}),\qquad  \om(n)/n<t<t_0.
\ee
Now using the definition of $c$ in (\ref{abcd}) and of $\wh E_2(z)$ in (\ref{hatE_2}),
we can write:
\begin{multline}\label{B3}
\frac{c (nt)^{\bt_1-\bt_2}}{4\sin^2 t}=
\frac{(nt)^{-1+2(\bt_1-\bt_2)}}{4\sin^2 t}
t e^{-it}\de(\wh E_2 M_1\wh E^{-1}_2)_{12}(e^{-it})\\=
\frac{(nt)^{-1+2(\bt_1-\bt_2)}}{4\sin^2 t}
t e^{-it}\de
{\mathcal D}(e^{-it})\left(\frac{-2i\sin t}{t}\right)^{2\bt_1}e^{-2it\bt_2}e^{-int}(\si_1\Omega_2^{-1}
M_1\Omega_2\si_1)_{12}\\=
e^{-int}\frac{(nt)^{-1+2(\bt_1-\bt_2)}}{t}\de\ep(t),
\end{multline}
where $\ep(t)$ is independent of $n$ and analytic in $0< t \le t_0$ (as follows
from the uniform boundedness of ${\mathcal D}(z)$).

Let
\be\label{2hatE_1}
\wh{\wh E}_1(t)=e^{-{int\over 2}\si_3}\wh E_1(e^{it})=
\si_1\mathcal D_t(z)^{-\si_3/2}\left(\frac{t}{2i\sin t}\right)^{\bt_2\si_3}
e^{-it\bt_1\si_3}\Omega^{-1}_1,
\ee
where $\wh E_1(z)$ is given by (\ref{hatE_1}). So defined $\wh{\wh E}_1$ is independent of $n$.
Substituting (\ref{B3}), (\ref{2hatE_1}), and (\ref{M3}) (or (\ref{M3half})) into (\ref{B2}), we obtain
\be\label{B4}
B^{(1)}=-e^{-2int}\frac{(nt)^{-1+2|||\bt|||}}{t}\de\ep_1(t),
\ee
where (see (\ref{ME}))
\[
\ep_1(t)=\ep(t)(\wh{\wh E}_1(t)L(0))_{21}(\wh{\wh E}_1(t)L(0))_{22}
\]
is independent of $n$ and analytic in $0< t \le t_0$.

Note that, as is established by an easy calculation using the definition of $\de$ in (\ref{abcd}), 
$\de=1+\bigO((nt)^{-2+2|||\bt|||})$ and its derivative $d\de(t)/dt=\bigO(t^{-1}(nt)^{-2+2|||\bt|||})$, uniformly in $t$. 
Moreover, we obtain from (\ref{2hatE_1}) that both $\wh{\wh E_1}(t)$ and its derivative
are uniformly bounded.
Now the estimate (\ref{B4}) implies by integration by parts that, uniformly in $t$,
\be\label{B5est}
\left|\int_{\om(n)/n}^{t}B^{(1)}dt\right|=\bigO(\om(n)^{-2+2|||\bt|||}),\qquad \om(n)/n<t<t_0.
\ee

Combining (\ref{B2est}), (\ref{B5est}),  and (\ref{C2intest}), we finally see that the integral of (\ref{A2}) is estimated as follows, uniformly for $\om(n)/n<t<t_0$,
\be\label{A2est}
\left|\int_{\om(n)/n}^{t}A_{n,t}(z_1)dt\right|=
\left|\int_{\om(n)/n}^{t}(B^{(1)}+B^{(2)}+C_{n,t})dt\right|=
\bigO(\om(n)^{-1+|||\bt|||}).
\ee

Together with (\ref{A1 int est}), this will imply below that the contributions
of the $\mathcal E$ terms to the integral from $0$ to $t$, $0<t<t_0$ of the differential identity (\ref{diS}) are
uniformly small for $|||\bt|||<1$.

\subsubsection*{Calculation of the main asymptotic terms}

We now turn to computing the contribution of (\ref{exprT3}) and (\ref{exprT5}).
Consider first (\ref{exprT3}). This corresponds to the case $0<t\le\om(n)/n$.
We use (\ref{def Phi}), (\ref{Psi i})--(\ref{Psi -i 2}), and (\ref{Eh}) to obtain
\begin{multline}\label{767}
\left(P^{-1}{dP\over dz}\right)_{22,+}(z_k)=-\frac{n}{2z_k}+\lim_{z\to z_k}\left(\frac{1}{2}\frac{f_t'}{f_t}(z)-\frac{\alpha_k}{z\ln z-z\ln z_k}\right)\\+\frac{1}{tz_k}\left(F_k^{-1}F_k'\right)_{22}(\zeta_k)+h(z_k)\left(F_k^{-1}\sigma_3F_k\right)_{22}(\zeta_k),
\end{multline}
where we wrote $\zeta_k=\frac{1}{t}\ln z_k=\pm i$.
Using (\ref{fDD}) we obtain
\be\label{fpf}
\lim_{z\to z_k}\left(\frac{1}{2}\frac{f_t'}{f_t}(z)-\frac{\alpha_k}{z\ln z-z\ln z_k}\right)=
\frac{1}{2}V'(z_k)-\frac{\alpha_{k'}-\beta_1-\beta_2}{2z_k}+\frac{\alpha_{k'}}{z_k-z_{k'}}.
\ee
Here $k'$ is equal to $1$ if $k=2$, and equal to $2$ if $k=1$.
Substituting (\ref{767}) into (\ref{exprT2}) and that, in turn, into (\ref{diS}), we obtain
\begin{multline}\label{diF2}
{1\over i}\frac{d}{dt}\ln D_n(f_t)=
\sum_{k=1}^2 (-1)^k\left[
n\beta_k +\alpha_kz_kV'(z_k)+\alpha_k(\beta_1+\beta_2)
+\frac{2}{t}\al_k \left(F_k^{-1}F_k'\right)_{22}(\zeta_k)\right. \\
\left.
+2\al_kz_kh(z_k) \left(F_k^{-1}\sigma_3F_k\right)_{22}(\zeta_k)\right]+2i\alpha_1\alpha_2\frac{\cos t}{\sin t} +\wt{\mathcal E}_{n,t},
\end{multline}
where
\be\label{A def}
\wt{\mathcal E}_{n,t}=2\sum_{k=1}^2(-1)^k \al_k z_k A_{n,t}(z_k).
\ee

Now we use the identities from Proposition \ref{prop F1} and Proposition \ref{prop F2} to conclude that
\be\label{diF3}
\begin{aligned}
{1\over i}\frac{d}{dt}\ln D_n(f_t)=n(\beta_2-\beta_1)+2i\alpha_1\alpha_2\frac{\cos t}{\sin t}
+\frac{2}{t}(c_2-c_1)\\
+\frac{1}{it}\sigma+2\sigma_s\left[
z_1h(z_1)+z_2h(z_2)\right] -(\beta_1+\beta_2)\left[
z_1h(z_1)-z_2h(z_2)\right]\\-\alpha_1\sum_{j\neq 0}jV_je^{ijt}+\alpha_2\sum_{j\neq 0}jV_je^{-ijt}+(\alpha_2-\alpha_1)(\beta_1+\beta_2)+\wt{\mathcal E}_{n,t}.
\end{aligned}
\ee
From (\ref{h}), we note that
\begin{multline}\label{h z_k}
h(z_k)z_k=
-\frac{1}{2}\sum_{j=1}^{+\infty}jV_jz_k^j+\frac{1}{2}\sum_{j=-1}^{-\infty}jV_jz_k^j+(-1)^k\frac{\beta_{k'}z_k}{2i\sin t}-(-1)^k\frac{\beta_{k'}}{2it}\\-\frac{\alpha_1+\alpha_2-\beta_k'}{2}.
\end{multline}
Substituting this into (\ref{diF3}), we obtain
\begin{multline}
\label{differentialidentityF4}
{1\over i}\frac{d}{dt}\ln D_n(f_t)=n(\beta_2-\beta_1)+2i\alpha_1\alpha_2\frac{\cos t}{\sin t}+\frac{2}{t}(c_2-c_1)+\frac{1}{it}\sigma\\
-\alpha_1\sum_{j\neq 0}jV_je^{ijt}+\alpha_2\sum_{j\neq 0}jV_je^{-ijt}+(\alpha_2-\alpha_1)(\beta_1+\beta_2)\\
+i(\beta_1+\beta_2)\sum_{j=1}^{+\infty}j(V_j-V_{-j})\sin(jt)
+\frac{(\beta_1+\beta_2)^2}{2i}\left(\frac{\cos t}{\sin t}  -{1\over t}\right)\\
+2\sigma_s\left[
-\sum_{j=1}^{+\infty}j(V_j+V_{-j})\cos(jt)+\frac{\beta_1e^{-it}-\beta_2e^{it}}{2i\sin t}+\frac{\beta_2-\beta_1}{2it}-\alpha_1-\alpha_2+\frac{\bt_1+\bt_2}{2}\right] \\+\wt{\mathcal E}_{n,t}.
\end{multline}
Substituting here the values (\ref{c1c2}) of $c_1$ and $c_2$, and simplifying further, we obtain
the differential identity (\ref{diF}) for $0<t\leq {\om(n)/n}$. Set
\[
\mathcal E_{n,t}=\wt{\mathcal E}_{n,t},\qquad 0<t\leq {\om(n)/n}.
\]
From (\ref{A1 int est}) and (\ref{A def}),
we have the estimate (\ref{A est}) uniformly for $0<t\le \om(n)/n$.

\bigskip

Let us now consider the region $\om(n)/n<t<t_0$, i.e., consider (\ref{exprT5}).
We assume for simplicity that $\al_k$, $k=1,2$ is not half integer. The case $2\alpha_1$ integer can be 
treated similarly by (\ref{M3half})--(\ref{mhalf}).
For $z$ approaching $z_1$ from the inside of the unit circle and outside the lens, we use the representation of
$M=M^{(\al_1,\bt_1)}(n\ln z-int)$ in region III of Figure \ref{Figure: M}, i.e. (\ref{M3}), and obtain:
\be
\left(M^{-1}{dM\over dz}\right)_{22}=
\left[(L(\lb)^{-1}{dL\over d\lb}(\lb))_{22}-\frac{\al_1}{\lb}\right]\frac{n}{z},\qquad \lb=n\ln z-int,\qquad 2\alpha_1\neq 0,1,\ldots 
\ee
up to the terms of order $\lb^{2\al_1}$ that disappear (or would be removed if we considered the case 
$\Re\al<0$) in the (``regularized'' for $\Re\al<0$) limit $z\to z_1$.
Substituting here the explicit formula (\ref{ME}) for $L$ and setting $z=z_1$ in the terms which are not unbounded
as $z\to z_1$, we obtain:
\be
\left(M^{-1}{dM\over dz}\right)_{22}=-\left[\frac{\bt_1}{2\al_1}+\frac{\al_1}{\lb}\right]\frac{n}{z}
\ee
in the vicinity of $z_1$ inside the unit circle, outside the lens. Similarly, we have in the same limit
\be
\left(M^{-1}\sigma_3 M\right)_{22}=\left(L^{-1}(0)\sigma_3 L(0)\right)_{22}=\frac{\bt_1}{\al_1}.
\ee
To evaluate $\wt h_1(z)$ at $z_1$, we use its definition (\ref{wt h1}) and (\ref{h z_k}). Collecting our results
together, we can write
\begin{multline}
z_1\left(M^{-1}{dM\over dz}\right)_{22}+z_1\wt h_1(z_1)\left(M^{-1}\sigma_3 M\right)_{22}=
-\left[\frac{\bt_1}{2\al_1}+\frac{\al_1}{n\ln z-int}\right]n\\
-\frac{\bt_1}{2\al_1}\left(\sum_{j=1}^{+\infty}jV_jz_1^j-\sum_{j=-1}^{-\infty}jV_jz_1^j+
\frac{\bt_2 z_1}{i\sin t}+\al_1+\al_2-\bt_2\right).
\end{multline}
This formula holds
in the vicinity of $z_1$ inside the unit circle, outside the lens. 
Now substituting it into (\ref{exprT5}) and using (\ref{fpf}), we obtain:
\begin{multline}\label{diP1}
-2 \al_1 z_1\left(\wt P_1^{-1}{d\wt P_1\over dz}\right)_{+,22}(z_1)
=n(\al_1+\bt_1)-\al_1z_1{d\over dz}V(z_1)-\frac{\al_1\al_2z_1}{i\sin t}+\al_1(\al_2-\bt_1-\bt_2)\\
+\bt_1\left(\sum_{j=1}^{+\infty}jV_jz_1^j-\sum_{j=-1}^{-\infty}jV_jz_1^j+
\frac{\bt_2 z_1}{i\sin t}+\al_1+\al_2-\bt_2\right).
\end{multline}

The analysis of the neighborhood of $z_2$ is similar, and we obtain
\begin{multline}\label{diP2}
2\al_2 z_2 \left(\wt P_2^{-1}{d\wt P_2\over dz}\right)_{+,22}(z_2)
=-n(\al_2+\bt_2)+\al_2 z_2{d\over dz}V(z_2)-\frac{\al_1\al_2z_2}{i\sin t}-\al_2(\al_1-\bt_1-\bt_2)\\
-\bt_2\left(\sum_{j=1}^{+\infty}jV_jz_2^j-\sum_{j=-1}^{-\infty}jV_jz_2^j+
\frac{\bt_1 z_2}{i\sin t}+\al_1+\al_2-\bt_1\right).
\end{multline}

Substituting (\ref{diP1}) and (\ref{diP2}) into (\ref{diS}), we finally obtain
that, for $\om(n)/n<t<t_0$,
\be\label{di large t}
{1\over i}{d\over dt}\ln D_n=
S_1+S_2+\wt{\mathcal E}_{n,t}
\ee
where $\wt{\mathcal E}_{n,t}$ is given by (\ref{A def}),
\be\label{S1}
S_1=n(\bt_2-\bt_1)-\al_1z_1{d\over dz}V(z_1)+\al_2z_2{d\over dz}V(z_2)
+(\al_2-\al_1)(\bt_1+\bt_2),
\ee
and
\begin{multline}\label{S2}
S_2=-n(\bt_2-\bt_1)+\bt_1\left(\sum_{j=1}^{+\infty}jV_jz_1^j-\sum_{j=-1}^{-\infty}jV_jz_1^j\right)
-\bt_2\left(\sum_{j=1}^{+\infty}jV_jz_2^j-\sum_{j=-1}^{-\infty}jV_jz_2^j\right)\\
+2\left(\bt_1\bt_2-\al_1\al_2\right)\frac{\cos t}{i \sin t}+(\al_1+\al_2)(\bt_1-\bt_2).
\end{multline}
Let us compare these expressions with (\ref{diF}) obtained for $0<t\le\om(n)/n$.
First, we see that $S_1$ coincides with the sum of $n(\bt_2-\bt_1)$ and a part of $d_1$.
Now consider $d_2+d_3$ for large $s=-2int$.
Substituting there the expansion (\ref{asinfty}) for $\si(s)$, we obtain that
\[
n(\bt_2-\bt_1)+d_1+d_2+d_3=S_1+S_2+\Theta_{n,t},\qquad  \om(n)/n<t<t_0.
\]
where $\Theta_{n,t}$ is a term arising from the error term and
from $\gamma(s)$ in (\ref{asinfty}) and which, in particular,
because of the oscillatory factors $e^{\pm i|s|}$, becomes of order $\om(n)^{-2+2|||\bt|||}$ after integration w.r.t. $t$
(cf. (\ref{B5est})):
\be\label{Th2}
\left|\int_{\om(n)/n}^{t}\Theta_{n,\tau}d\tau\right|=\bigO(\om(n)^{-2+2|||\bt|||}),\qquad \om(n)/n<t<t_0,
\ee
uniformly in $t$.
Set
\[
\mathcal E_{n,t}=\wt{\mathcal E}_{n,t}+\Theta_{n,t},\qquad  \om(n)/n<t<t_0.
\]

Thus, expressions (\ref{diF}), (\ref{d1}), (\ref{d2}), (\ref{d3}) remain valid also for the
region $\om(n)/n<t<t_0$. It remains to verify the smallness of the error term in this region.
This, however, follows immediately from (\ref{A def}), (\ref{A2est}),
similar estimates
for $A_{n,t}(z_2)$,  and from (\ref{Th2}).
\end{proof}

\begin{remark}\label{extension}
Integrating (\ref{di large t}) between $t$ and $t_0$ with $\om(n)/n\le t<t_0$ and using the expansion (\ref{as FH2}) for 
$D_n(f_{t_0})$, we obtain the same expansion for $D_n(f_t)$ with the error term 
$\bigO(\om(n)^{-1+|||\bt|||})$, and $\om(x)$ such that
$\om(n)\to\infty$, $\om(n)=o(n)$ as $n\to\infty$. (Further limitations for the order of
$\om(n)$ at infinity in Proposition \ref{prop diff id F} came from 
the interval $0<t\le\om(n)/n$ that we do not need in this case.)
Thus, the formula (\ref{as FH2})
remains valid in the region $\om(n)/n\le t<t_0$. The uniformity of the error term is
easy to verify. 
Furthermore, as is clear from our constructions above, the function $\om(n)$ can be replaced by a 
sufficiently large constant $s_0$, and the error terms containing $\om(n)$ or $nt$, by appropriate 
estimates.
We therefore proved Theorem \ref{theorem: extension}.
\end{remark}

\subsection{Integration of the differential identity. Proof of Theorems \ref{theorem: Toeplitz}, \ref{theorem: Toeplitz 2}.}

Integrating (\ref{diF}) between $0$ and $t$ gives
\begin{multline}\label{intid}
\ln D_n(f_t)=\ln D_n(0)+int(\beta_2-\beta_1)+i\int_0^t d_1(\tau;\alpha_1,\beta_1,\alpha_2,\beta_2)d\tau\\
+i\int_0^t d_2(n,\tau;\alpha_1,\beta_1,\alpha_2,\beta_2)d\tau + i\int_0^t d_3(n,\tau;\alpha_1,\beta_1,\alpha_2,\beta_2)d\tau
+\bigO(n^{-1+|||\bt|||}),
\end{multline}
uniformly for $0<t<t_0$. If $\bt_1$, $\bt_2$ are not zero or purely imaginary, then the contour of
integration in (\ref{intid}) is chosen to avoid possible poles of $\si(s)$. This is the reason
for the remark at the beginning of Section \ref{sec74}. Note that, by Theorem \ref{theorem: PV}, there are
no poles for $0<t<c_0/n$ and $t>C_0/n$ if $c_0$ is sufficiently small, and $C_0$, sufficiently large.

Using the definitions (\ref{WH}), we obtain
\begin{multline}\label{I1}
i\int_0^t d_1(\tau;\alpha_1,\beta_1,\alpha_2,\beta_2)d\tau=
(\alpha_2-\alpha_1)(\beta_1+\beta_2)it
-\al_1(V(e^{it})-V(1))\\-\al_2(V(e^{-it})-V(1))+
{1\over 2}(\bt_1+\bt_2)\ln\frac{b_+(e^{it})b_+(e^{-it})}{b_-(e^{it})b_-(e^{-it})}
+(\bt_1+\bt_2)\ln\frac{b_-(1)}{b_+(1)}.
\end{multline}
To integrate $d_2$, we add and subtract to it $\sigma(0)/(it)$, and recall from (\ref{as0}) that
\[
\sigma(0)=2\al_1\al_2-{1\over 2}(\bt_1+\bt_2)^2.
\]
We then obtain:
\begin{multline}\label{I2}
i\int_0^t d_2(n,\tau;\alpha_1,\beta_1,\alpha_2,\beta_2)d\tau=
\int_{0}^{-2int}\frac{1}{s}\left(\sigma(s)-2\alpha_1\alpha_2+\frac{1}{2}(\beta_1+\beta_2)^2\right)ds\\
+\left({1\over 2}(\beta_1+\beta_2)^2-2\alpha_1\alpha_2\right)\ln\frac{\sin t}{t}.
\end{multline}
To integrate $d_3$, write it first in the form
\be\label{I3}
d_3(n,t;\alpha_1,\beta_1,\alpha_2,\beta_2)=\si_s(s)\Lambda(t)=
\left(\si_s(s)-\frac{\bt_2-\bt_1}{2}\right)\Lambda(t)+\frac{\bt_2-\bt_1}{2}\Lambda(t),
\ee
where the expression for $\Lambda(t)$ is clear from the r.h.s. of (\ref{d3}). Note that
$\Lambda(t)$ is uniformly bounded in $t$. Using this fact and the large $s$ expansion (\ref{asinfty})
(which is differentiable in $s$), we can estimate the integral of the first term in the r.h.s. of (\ref{I3}) as follows
\be
\left|
\int_0^t   \left(\si_s(s)-\frac{\bt_2-\bt_1}{2}\right)\Lambda(t) dt\right|<
\frac{\mbox{Const}}{n}\int_0^{-2int}\left| \si_s(s)-\frac{\bt_2-\bt_1}{2}\right|ds=\bigO(n^{-1}).
\ee
For the second term in the r.h.s. of (\ref{I3}), we easily obtain
\be
\frac{\bt_2-\bt_1}{2}
\int_0^t  \Lambda(\tau) d\tau=(\al_1+\al_2)(\bt_1-\bt_2)it-\frac{(\bt_1-\bt_2)^2}{2}\ln\frac{\sin t}{t}+
\frac{\bt_1-\bt_2}{2}\ln\frac{b_+(e^{it})b_-(e^{-it})}{b_+(e^{-it})b_-(e^{it})}.
\ee

Substituting (\ref{I1}), (\ref{I2}), and (\ref{I3}) into (\ref{intid}), we obtain (\ref{asT}) and thus prove both
Theorem \ref{theorem: Toeplitz} and Theorem \ref{theorem: Toeplitz 2}.

\section{Toeplitz determinant for $|||\bt|||=1$}\label{section: beta1}

\subsection{Proof of Theorem \ref{theorem: Toeplitz 3}}
Let $\Re\al_1, \Re\al_2, \Re(\al_1+\al_2)>-1/2$.
Assume that $s=-2int$ is bounded
away from the set $\Omega$ where the $\Psi$-RH problem is not solvable.
Let
\[\Re\bt_1=\Re\bt_2,\] and define
\[
\bt_1^-=\bt_1,\qquad \bt_2^-=\bt_2-1.
\]
Then for the symbol $f^-\equiv f_t(z;\al_1,\al_2,\bt_1^-,\bt_2^-)$, we have $|||\bt^-|||=1$.
We will now find an asymptotic formula for $D_n(f^{-})$ for large $n$ uniform for $0<t<t_0$.
Our approach is based on the following identity (see \cite{DIK2}, Theorem 1.18):
\[
D_{n}(f^-)=z_2^n\frac{\wh \phi_n(0)}{\chi_n}D_n(f),\qquad n=1,2,\dots,
\]
which, since $D_n(f)=\prod_{j=0}^{n-1}\chi_j^{-2}$, can be written in the form
\be\label{mainsec9}
D_{n-1}(f^-)=z_2^{n-1}\wh \phi_{n-1}(0)\chi_{n-1}D_n(f),
\ee
convenient for us. Here $\wh \phi_{n-1}(0)$, $\chi_{n-1}$, refer to the polynomials orthogonal
w.r.t. $f\equiv f_t(z;\al_1,\al_2,\bt_1,\bt_2)$, i.e., corresponding to $|||\bt|||=0$. Theorem \ref{theorem: Toeplitz 2}  can be used
to write the asymptotics for $D_n(f)$, so it remains to estimate the prefactor in the r.h.s. of
(\ref{mainsec9}) as $n\to\infty$ using the results of Section \ref{section: RHP Y}.

Note that the parametrix/solution constructed in Section \ref{sec74} for $0<t\le 1/n$ remains 
valid for the case $0<t\le C_0/n$, where $C_0$ is a constant. 
Moreover, the parametrix/solution of Section \ref{sec75} for
$\om(n)/n<t<t_0$ remains valid for the case $C_0/n<t<t_0$ where $C_0$ is sufficiently large.
Both give expansions uniform in $t$. In the present section, we adopt this choice of solutions. Accordingly,
fix $C_0$ sufficiently large.

First, consider $0<t\le C_0/n$. Then, using (\ref{def Y}), (\ref{def Ups}) with $\wt n$ replaced by $n$, 
and (\ref{Dg1}),  we have
\begin{multline}\label{phichi}
\wh \phi_{n-1}(0)\chi_{n-1}=-\lim_{z\to\infty}z^{-n+1}Y_{21}(z)\\=
-\lim_{z\to\infty}z^{-n+1}\left(n^{-(\bt_1+\bt_2)\si_3}\Upsilon(z)n^{(\bt_1+\bt_2)\si_3}
\mathcal D_{out,t}(z)^{\si_3}z^{n\si_3}\right)_{21}\\=
-n^{2(\bt_1+\bt_2)}\left(\lim_{z\to\infty} z \Upsilon_{1,21}(z)+\bigO(t^{2}\Psi_2(s))\right).
\end{multline}
Here $\Upsilon_{1,21}(z)$ is the first term in the asymptotic expansion (\ref{as R}),
which (as well as the estimate $\bigO(t^{2}\Psi_2(s))$ for the error term) is obtained by the standard theory from the jump conditions for $\Upsilon$.
Namely, first,
using (\ref{def P}), (\ref{WN}), (\ref{whE}), and eventually (\ref{Psi as}), we write
(as $t\le C_0/n$)
\begin{multline}
n^{(\bt_1+\bt_2)\si_3}P(z)N^{-1}(z)n^{-(\bt_1+\bt_2)\si_3}=
\wh E(z)\Psi(\zeta,s)z^{\frac{n}{2}\si_3}P^{(\infty)}(\zeta)^{-1}\wh E(z)^{-1}\\ =
\wh E(z)\left(I+\frac{\Psi_1(s)}{t^{-1}\ln z}+\bigO(t^2\Psi_2(s))\right)\wh E(z)^{-1},
\qquad z\in\partial U,
\end{multline}
and then
\begin{multline}\label{Upsint}
\lim_{z\to\infty} z \Upsilon_{1,21}(z)=
\lim_{z\to\infty} \frac{z}{2\pi i}\int_{\partial U}
\frac{(\wh E(u)\Psi_1\wh E(u)^{-1})_{21}}{t^{-1}\ln u}\frac{du}{u-z}\\=
-\frac{1}{2\pi i}\int_{\partial U} \frac{(\wh E(u)\Psi_1\wh E(u)^{-1})_{21}}{t^{-1}\ln u}du.
\end{multline}
Here $\Psi_j(s)$ are the coefficients in the large $\zeta$ expansion (\ref{Psi as}).
Computing the residue at the simple pole $z=1$,
we obtain
\be\label{zU}
\lim_{z\to\infty} z \Upsilon_{1,21}(z)=
t\left(\wh E(1)\Psi_1\wh E(1)^{-1}\right)_{21}.
\ee
Thus, we have to evaluate $\wh E(1)$. We use the expression (\ref{whE1}).
First, from (\ref{Dl1}), (\ref{Dg1}), remembering the definition of the branches
given after (\ref{Dg1}), we obtain by a straightforward analysis of triangles that
\be
\mathcal D_{in,t}(1)\mathcal D_{out,t}(1)=b_0\frac{b_+(1)}{b_-(1)}(2\sin t)^{2(\bt_1+\bt_2)}
e^{i(\pi+t)(\al_1-\al_2)}.
\ee
Substituting this into (\ref{whE1}), and recalling the definition of the branches of
$\zeta\pm i$, we obtain
\be\label{E1}
\wh E(1)=\si_1\left(b_0\frac{b_+(1)}{b_-(1)}\right)^{-\si_3/2}
 \left(\frac{t}{\sin t}\right)^{(\bt_1+\bt_2)\si_3}e^{-i(\pi+t)(\al_1-\al_2)\si_3/2}
 e^{i\pi(3\bt_1+\bt_2)\si_3/2}.
 \ee
Therefore, substituting (\ref{E1}) into (\ref{zU}), and that, in turn, into (\ref{phichi}),
and recalling the definition (\ref{C1})
\[
\Psi_{1,12}=r(s),
\]
we obtain uniformly in $t$
\be\label{as phichi 1}
\begin{aligned}
\wh \phi_{n-1}(0)\chi_{n-1}&=
-r(s)b_0^{-1}\frac{b_-(1)}{b_+(1)}
 t\left(\frac{nt}{\sin t}\right)^{2(\bt_1+\bt_2)}
 e^{-i(\pi+t)(\al_1-\al_2)}e^{i\pi(3\bt_1+\bt_2)}\\ &+
 \bigO(\Psi_{2,12}(s)t^2n^{2(\bt_1+\bt_2)}),\qquad 0<t\le C_0/n, \quad s=-2int.
\end{aligned}
\ee

Since $\Psi(s)$ is bounded for $0<t\le C_0/n$, we obtain by (\ref{mainsec9}) the first part of
the r.h.s. of (\ref{asDb1}).

Now, let $C_0/n<t<t_0$ and consider a more general case $|||\bt|||<1$. 
As before, but now using (\ref{def Ups2}), we have
\begin{multline}\label{phichi2}
\wh \phi_{n-1}(0)\chi_{n-1}=-\lim_{z\to\infty}z^{-n+1}Y_{21}(z)\\=
-\lim_{z\to\infty}z^{-n+1}\left(\left(\frac{n}{t}\right)^{-\frac{\bt_1+\bt_2}{2}\si_3}
\Upsilon(z)\left(\frac{n}{t}\right)^{\frac{\bt_1+\bt_2}{2}\si_3}
\mathcal D_{out,t}(z)^{\si_3}z^{n\si_3}\right)_{21}\\=
-\left(\frac{n}{t}\right)^{\bt_1+\bt_2}
\left(\lim_{z\to\infty} z \Upsilon_{1,21}(z)+\bigO(t^{-1}n^{-2}(nt)^{\bt_1-\bt_2})+
\bigO(t^{-1}n^{-2}(nt)^{-\bt_1+\bt_2})\right),
\end{multline}
and we also have from the jump conditions (\ref{J1}), (\ref{J2}) in Section \ref{sec 751} that
\begin{multline}\label{Upsint2}
\lim_{z\to\infty} z \Upsilon_{1,21}(z)=
-\frac{1}{2\pi i}\int_{\partial\wt U_1}
\frac{(\wh E_1(u) M_1^{(\al_1,\bt_1)}\wh E_1(u)^{-1})_{21}}{t^{-1}\ln u-i}du (nt)^{-1+\bt_1-\bt_2}\\
\qquad\qquad-\frac{1}{2\pi i}\int_{\partial\wt U_2}
\frac{(\wh E_2(u) M_1^{(\al_2,\bt_2)}\wh E_2(u)^{-1})_{21}}{t^{-1}\ln u+i}du (nt)^{-1-\bt_1+\bt_2}\\
=z_1\frac{n^{2\bt_1-1}}{t^{2\bt_2}}(\wh E_1(z_1) M_1^{(\al_1,\bt_1)}\wh E_1(z_1)^{-1})_{21}
+z_2\frac{n^{2\bt_2-1}}{t^{2\bt_1}}(\wh E_2(z_2) M_1^{(\al_2,\bt_2)}\wh E_2(z_2)^{-1})_{21}.
\end{multline}
Here $M_1^{(\al_1,\bt_1)}$ is given by (\ref{M1}), and $\wh E_1$, $\wh E_2$,
by (\ref{hatE_1}), (\ref{hatE_2}), resp.
Let $z\to z_1$ in such a way that $\zeta>1$. Then we obtain
\be
\mathcal D_t(z_1)=b_0\frac{b_+(z_1)}{b_-(z_1)}z_1^{-2\bt_1}z_2^{-\al_2-\bt_2}
e^{-i\pi(\al_1+\bt_1+\al_2+\bt_2)}(z_1 e^{2\pi i})^{\al_2-\bt_2},
\ee
and, by (\ref{hatE_1}),
\be\label{E11}
\wh E_1(z_1)=
\si_1\mathcal D_t(z_1)^{-\si_3/2}\left(\frac{t}{2e^{5\pi i/2}\sin t}\right)^{\bt_2\si_3}z_1^{-\bt_1\si_3}
e^{-\frac{i}{2}nt\si_3}e^{-\frac{i}{2}\pi(\al_1-\bt_1)\si_3}.
\ee

Similarly,
\be
\mathcal D_t(z_2)=b_0\frac{b_+(z_2)}{b_-(z_2)}z_2^{-2\bt_2}z_1^{-\al_1-\bt_1}
e^{-i\pi(\al_1+\bt_1+\al_2+\bt_2)}z_2^{\al_1-\bt_1},
\ee
and, by (\ref{hatE_2}),
\be\label{E22}
\wh E_2(z_2)=
\si_1\mathcal D_t(z_2)^{-\si_3/2}\left(\frac{t}{2e^{3\pi i/2}\sin t}\right)^{\bt_1\si_3}z_2^{-\bt_2\si_3}
e^{\frac{i}{2}nt\si_3}e^{-\frac{i}{2}\pi(\al_2-\bt_2)\si_3}.
\ee

Substituting (\ref{E11}), (\ref{E22}), into (\ref{Upsint2}), and that, in turn, into (\ref{phichi2}),
we finally obtain uniformly in $t$
\be
\begin{aligned}\label{as phichi 2}
\wh \phi_{n-1}(0)\chi_{n-1}b_0&=
n^{2\bt_1-1}z_1^{-n+1}\frac{b_-(z_1)}{b_+(z_1)}
\frac{\Gamma(1+\al_1-\bt_1)}{\Gamma(\al_1+\bt_1)}e^{i(\pi-2t)\al_2}(2\sin t)^{-2\bt_2}\\&+
n^{2\bt_2-1}z_2^{-n+1}\frac{b_-(z_2)}{b_+(z_2)}
\frac{\Gamma(1+\al_2-\bt_2)}{\Gamma(\al_2+\bt_2)}e^{i(-\pi+2t)\al_1}(2\sin t)^{-2\bt_1}\\&+
\bigO(n^{-2+2\bt_1}t^{-1-2\bt_2})+\bigO(n^{-2+2\bt_2}t^{-1-2\bt_1}),\qquad C_0/n<t<t_0.
\end{aligned}
\ee

\begin{remark}
We obtained the expansion (\ref{as phichi 2}) only under the condition $|||\bt|||<1$,
$\Re\al_j>-1/2$.
For a fixed $t>0$ this expansion coincides with the result which follows
from Theorem 1.8 in \cite{DIK2}.
\end{remark}

\begin{remark}\label{2beta1}
Let $\Re\bt_1=\Re\bt_2$ as this is the case we need here.
Using the large $s$ expansion for $r(s)$ (obtained similarly to that of $q(s)$ from (\ref{555})):
\be
\begin{aligned}\label{as r}
r(s)=&-2|s|^{-1-\bt_2}e^{-i|s|/2}e^{i\pi(\al_1-3\bt_1-\bt_2)}\frac{\Gamma(1+\al_1-\bt_1)}
{\Gamma(\al_1+\bt_1)}\\
&-2|s|^{-1-\bt_1}e^{i|s|/2}e^{-i\pi(\al_2+3\bt_1+\bt_2)}\frac{\Gamma(1+\al_2-\bt_2)}
{\Gamma(\al_2+\bt_2)}\\
&+
\bigO(|s|^{-2-2\bt_2}),\qquad s\to -i\infty,\qquad\Re\bt_1=\Re\bt_2,
\end{aligned}
\ee
where $\al_k\pm\bt_k\neq -1,-2,\dots$, $\Re\al_k>-1/2$,
and the estimate
\be\label{psi2est}
\bigO(\Psi_{2, 12}(s))=\bigO(|s|^{-2-2\bt_2}),\qquad s\to -i\infty,\qquad\Re\bt_1=\Re\bt_2,
\ee
the reader can easily verify that (\ref{as phichi 2}) agrees with (\ref{as phichi 1})
for $t\in(C_0'/n, C_0''/n)$, $C_0'<C_0<C_0''$.
\end{remark}

Formulae (\ref{as phichi 2}) and (\ref{mainsec9}) yield the second part of (\ref{asDb1}).

\begin{remark}
Similar to the derivation of the small $s$ asymptotics for $\sigma$, we obtain from (\ref{whPsi as2})
that
\begin{multline}\label{r small s}
r(s)=-\frac{2}{|s|}e^{s/2}e^{i\pi(\al_1-\al_2-3\bt_1-\bt_2)}\frac{\Gamma(1+\al_1+\al_2-\bt_1-\bt_2)}
{\Gamma(\al_1+\al_2+\bt_1+\bt_2)}\\
\times\left(1+\bigO(|s\ln|s||)+\bigO(|s|^{1+2(\al_1+\al_2)})\right),\qquad s\to-i0_+,
\end{multline}
for $\al_k\pm\bt_k\neq -1,-2,\dots$, $(\al_1+\al_2)\pm(\bt_1+\bt_2)\neq -1,-2,\dots$,
$\Re\al_k,\Re(\al_1+\al_2)>-1/2$.
\end{remark}

\subsection{Special case $\al_1=\al_2=\bt_1^-=\bt_2^-+1=1/2$}\label{sec sp case}

In the case $\al_1=\al_2=\bt_1=\bt_2=1/2$ (i.e., $\bt_1^-=\bt_2^-+1=1/2$)
relevant for the problem of Toeplitz eigenvalues \cite{DIK3}, the situation simplifies.
The problem for $\Psi$ is then solved in Section \ref{section: degenerate}
in elementary functions , and we obtain from (\ref{def Psi deg}) that
\be
r(s)=-\frac{2i}{s^2}\left(e^{-\frac{s}{2}}  - e^{\frac{s}{2}}\right)=
-\frac{\sin nt}{(nt)^2},\qquad s=-2int.
\ee
Therefore, using (\ref{mainsec9}), (\ref{as phichi 1}), we obtain
\be\label{asd1}
D_{n-1}(f^-)=
e^{-i(n-1)t}{b_0^{-1}\over \sin t}\left[\frac{b_-(1)}{b_+(1)}\sin nt+\bigO(n^{-1})\right]
D_n(f),
\ee
with the error term uniform for $0<t\le C_0/n$.
On the other hand, (\ref{mainsec9}) and (\ref{as phichi 2}) give in this case
\begin{multline}\label{asd2}
D_{n-1}(f^-)=
e^{-i(n-1)t}b_0^{-1}|z_1-z_2|^{-1}\left[
z_1^{-n+1}\frac{b_-(z_1)}{b_+(z_1)}
\left(\frac{z_2}{z_1}\right)^{1/2}e^{-i\pi/2}\right.\\
\left.+
z_2^{-n+1}\frac{b_-(z_2)}{b_+(z_2)}
\left(\frac{z_1}{z_2}\right)^{1/2}e^{i\pi/2}
+\bigO(n^{-1})\right]D_n(f),
\end{multline}
with the error term uniform for $C_0/n<t<t_0$.
Note that for $t$ of order $1/n$ or less, the formula (\ref{asd2}) reduces to (\ref{asd1}).
Therefore, (\ref{asd2}) holds uniformly for {\it all} $0<t<t_0$.

\section*{Acknowledgements}
TC was supported by the European Research Council under the European Union's Seventh Framework Programme (FP/2007/2013)/ ERC Grant Agreement n.\, 307074, by FNRS, and by the Belgian Interuniversity Attraction Pole P07/18. IK was partially supported by EPSRC Platform Grant.

\end{document}